\documentclass[12pt]{article}
\usepackage[utf8]{inputenc}
\usepackage[left=1in,top=0.8in,right=1in,bottom=0.7in,head=.2in]{geometry}

\usepackage[bookmarksopen=true,bookmarks=true,pdfencoding=auto,psdextra,colorlinks,
]{hyperref}
\hypersetup{ 
    colorlinks,
    linkcolor={blue!80!black},
    citecolor={green!40!black},
    urlcolor={red!50!black}
}
\usepackage{bookmark}
\usepackage{mystyle}
\usepackage{natbib}

\allowdisplaybreaks
\usepackage{xspace}

\usepackage{xcolor}

\usepackage[font=small,labelfont=bf]{caption}
\captionsetup{width=.95\textwidth}
\captionsetup{font={small,stretch=1.1}}
\usepackage{setspace}

\usepackage{tabularx}
\usepackage{array}
\newcolumntype{Y}{>{\centering\arraybackslash}X}
\newcolumntype{P}{>{\raggedleft\arraybackslash}X}
\newcolumntype{C}{ >{\centering\arraybackslash} m{4cm} }
\newcolumntype{D}{ >{\centering\arraybackslash} m{1cm} }

\usepackage{multirow}

\usepackage{algorithm}
\usepackage{algorithmic}

\def\spacingset#1{\renewcommand{\baselinestretch}%
{#1}\small\normalsize} \spacingset{1}

\usepackage{tikz}

\makeatletter
\newcommand*\bigcdot{\mathpalette\bigcdot@{.5}}
\newcommand*\bigcdot@[2]{\mathbin{\vcenter{\hbox{\scalebox{#2}{$\m@th#1\bullet$}}}}}
\makeatother

\newcommand{\footremember}[2]{%
    \footnote{#2}
    \newcounter{#1}
    \setcounter{#1}{\value{footnote}}%
}
\newcommand{\footrecall}[1]{%
    \footnotemark[\value{#1}]%
}
\setcounter{footnote}{1}

\usepackage{setspace}

\title{Simultaneous inference for generalized linear models\\with unmeasured confounders}
\author{Jin-Hong Du\footremember{cmustats}{Department of Statistics and Data Science, Carnegie Mellon University, Pittsburgh, PA 15213, USA.}\footremember{cmumld}{Machine Learning Department, Carnegie Mellon University, Pittsburgh, PA 15213, USA.} \and 
Larry Wasserman\footrecall{cmustats} \footrecall{cmumld} \and 
Kathryn Roeder\footrecall{cmustats} \footremember{cmucbd}{Computational Biology Department, Carnegie Mellon University, Pittsburgh, PA 15213, USA.}}
\date{\today}

\begin{document}

\maketitle

\begin{abstract}
    Tens of thousands of simultaneous hypothesis tests are routinely performed in genomic studies to identify differentially expressed genes. However, due to unmeasured confounders, many standard statistical approaches may be substantially biased. This paper investigates the large-scale hypothesis testing problem for multivariate generalized linear models in the presence of confounding effects. Under arbitrary confounding mechanisms, we propose a unified statistical estimation and inference framework that harnesses orthogonal structures and integrates linear projections into three key stages. It begins by disentangling marginal and uncorrelated confounding effects to recover the latent coefficients. Subsequently, latent factors and primary effects are jointly estimated through lasso-type optimization. Finally, we incorporate projected and weighted bias-correction steps for hypothesis testing. Theoretically, we establish the identification conditions of various effects and non-asymptotic error bounds. We show effective Type-I error control of asymptotic $z$-tests as sample and response sizes approach infinity. Numerical experiments demonstrate that the proposed method controls the false discovery rate by the Benjamini-Hochberg procedure and is more powerful than alternative methods. By comparing single-cell RNA-seq counts from two groups of samples, we demonstrate the suitability of adjusting confounding effects when significant covariates are absent from the model.
\end{abstract}

\noindent%
{\it Keywords:} 
Hidden variables;
High-dimensional regression; 
Hypothesis testing;
Multivariate response regression;
Nuisance parameters;
Surrogate variables analysis.
\vfill


\clearpage
\section{Introduction}\label{sec:intro}
To discover genes that are differentially expressed under different experimental conditions or across groups of samples, large numbers of simultaneous hypothesis tests must be performed. These tests are made more challenging by the presence of unmeasured covariates that bias the analyses. In 2007, \citet{Leek:2007,leek2008general} presented their pathbreaking ``surrogate variable" approach to control for unmeasured confounding effects in differential expression (DE) studies using microarray data. These confounders go by various names in the literature, including batch effects, surrogate variables, latent effects, or simply unwanted variations \citep{Leek:2010,Gagnon-Bartsch:2012,sun2012multiple}.
Adjusting for confounding effects is crucial because they may distort the correct null distribution of the test statistics, and consequently, standard statistical approaches can be substantially biased \citep{wang2017confounder,mckennan2019accounting}. 
Due to burgeoning developments in the genomics field, DE testing has been dramatically expanded to include a variety of genomic readouts beyond microarray, in which the normality of the observed counts rarely holds.
Inspired by modern-day omic studies, the concerns about confounding are more urgent than ever, and there is a pressing need to adapt statistical approaches to changing data types.

The problem of confounder adjustment has been an important topic in statistics in recent years.
To characterize the confounding effects, the pioneering work in this field assumes a linear model $\bY=\bX\bB^{\top} + \bZ\bGamma^{\top} + \bE$, where $\bY\in\mathbb{R}^{n\times p}$ is the gene expression matrix, $\bX\in\mathbb{R}^{n\times d}$ is the measured covariate matrix, $\bB\in\mathbb{R}^{p\times d}$ is the direct effect to be estimated, $\bZ\in\mathbb{R}^{n\times r}$ is the latent factor matrix, $\bGamma\in\mathbb{R}^{p\times r}$ is the latent factor loading, and $\bE\in\mathbb{R}^{n\times p}$ is the additive noise.
The early investigations study the statistical inference problem under this model by further imposing a linear relationship between $\bX$ and $\bZ$, assuming either $\bX$ causes $\bZ$ as in \Cref{fig:causal-diagram}(a), i.e., $\bZ$ is a hidden mediator \citep{leek2008general,wang2017confounder,gerard2020empirical}, or $\bZ$ causes $\bX$ as in \Cref{fig:causal-diagram}(b), i.e., $\bZ$ is a hidden confounder \citep{guo2022doubly,sun2022decorrelating}.

In the presence of hidden mediators, where the observed covariates are the cause of the hidden variables, \citet{wang2017confounder} and \citet{gerard2020empirical} study the statistical inference problem for multiple outcomes ($p>1$) by assuming a linear relationship between the observed variables and the hidden variables.
In the context of hidden confounders and a single outcome ($p=1$), \citet{guo2022doubly} propose a doubly debiased lasso estimator and establish asymptotic normality; \citet{cevid2020spectral} propose a spectral de-confounding method; and \citet{sun2022decorrelating} analyze non-asymptotic and asymptotic false discovery control with high-dimensional covariates.
    
    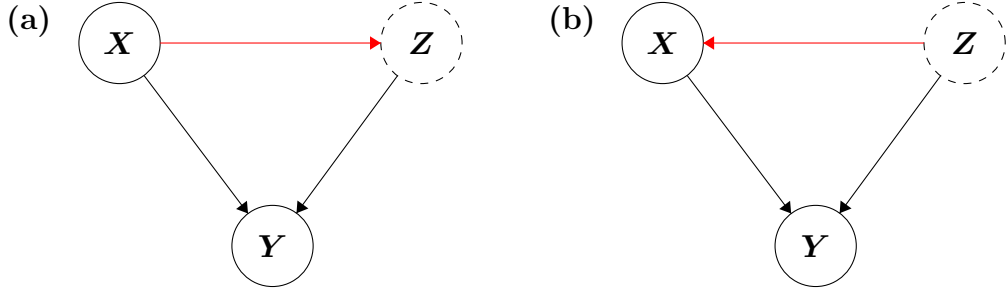
\begin{figure}[!t]
        \centering
        \begin{tikzpicture}[scale=0.18]
        \tikzstyle{every node}+=[inner sep=0pt]
        \node[font=\bf] at (29,-20) {(a)};
        
        \draw [black] (35.7,-21.6) circle (3);
        \draw (35.7,-21.6) node {$\bX$};
        \draw [black,dashed] (58,-21.6) circle (3);
        \draw (58,-21.6) node {$\bZ$};
        \draw [black] (47,-36.6) circle (3);
        \draw (47,-36.6) node {$\bY$};
        \draw [black] (56.23,-24.02) -- (48.77,-34.18);
        \fill [black] (48.77,-34.18) -- (49.65,-33.83) -- (48.84,-33.24);
        \draw [black] (37.51,-24) -- (45.19,-34.2);
        \fill [black] (45.19,-34.2) -- (45.11,-33.26) -- (44.31,-33.87);
        \draw [red] (38.7,-21.6) -- (55,-21.6);
        \fill [red] (55,-21.6) -- (54.2,-21.1) -- (54.2,-22.1);
        \end{tikzpicture}\hspace{1cm}
        \begin{tikzpicture}[scale=0.18]
        \tikzstyle{every node}+=[inner sep=0pt]
        \node[font=\bf] at (29,-20) {(b)};
        \draw [black] (35.7,-21.6) circle (3);
        \draw (35.7,-21.6) node {$\bX$};
        \draw [black,dashed] (58,-21.6) circle (3);
        \draw (58,-21.6) node {$\bZ$};
        \draw [black] (47,-36.6) circle (3);
        \draw (47,-36.6) node {$\bY$};
        \draw [black] (56.23,-24.02) -- (48.77,-34.18);
        \fill [black] (48.77,-34.18) -- (49.65,-33.83) -- (48.84,-33.24);
        \draw [black] (37.51,-24) -- (45.19,-34.2);
        \fill [black] (45.19,-34.2) -- (45.11,-33.26) -- (44.31,-33.87);
        \draw [red] (55,-21.6) -- (38.7,-21.6);
        \fill [red] (38.7,-21.6) -- (39.5,-22.1) -- (39.5,-21.1);
        \end{tikzpicture}
        \caption{Causal diagrams on the generative models illustrating the relationship between the covariate $\bX$, the latent variable $\bZ$, and the response $\bY$.
        \textbf{(a)} $\bZ$ is a hidden mediator when $\bX$ causes $\bZ$.
        \textbf{(b)} hidden confounder when $\bZ$ causes $\bX$.
        Note that we do not require knowledge of the relationship between $\bX$ and $\bZ$ for the analysis in this paper.}\label{fig:causal-diagram}
    \end{figure}

More recently, methods for estimating primary effects extend beyond linear dependence structures between covariates and confounders.
For instance, \citet{jiang2022treatment} model the interaction between the covariates and the confounders, and projection-based methods are employed to estimate the primary effects under arbitrary dependency \citep{lee2017improved,bing2022adaptive,mckennan2022estimating}.
For statistical inference, \citet{mckennan2019accounting} propose an estimator that is asymptotically equivalent to the ordinary least squares estimators obtained when every covariate is observed, and \citet{bing2022inference} establish asymptotic normality, efficiency, and consistency.
    
The applicability of the aforementioned methods to the nonlinear model remains challenging.
Limited research has been done to address adjustments for confounding effects under the setting of \emph{arbitrary confounding mechanisms}, \emph{nonlinear models}, and \emph{multiple outcomes}.
For empirical studies, \citet{salim2022ruv} propose a heuristic algorithm that utilizes a pseudo-replicate design matrix and negative control genes to remove unwanted variations.
For theoretical analysis, to the best of our knowledge, the related literature that explores slightly broader settings is limited to \citet{feng2020causal}, who studies nonlinear factor models concerning treatment effects with a single outcome by PCA-based matching, and \citet{ouyang2022high}, who study the generalized linear models with a single outcome and linear hidden confounders.
However, both of these works assume the covariates are some functions of the unobserved confounders.

Our work is inspired by the rapid developments in the field of genomics, particularly single-cell omics \citep{review:2023}.
For example, CRISPR perturbations with single-cell sequencing readouts have promised extraordinary scientific insight \citep{Kampmann:2020,Hong:2023,Cheng:2023}; due to the sparsity of outcomes and the nature of the molecular readout, these data are not suitable for analysis by linear models under Gaussianity assumptions \citep{sarkar2021separating,barry2023exponential}.
Hence, our development of generalized linear models for confounding is timely.

In this paper, we adopt the term ``confounder'' to encompass a broad category of latent variables, including both mediators and confounders, as defined in the context of causal inference literature.
The purpose of this paper is to derive valid simultaneous inference for multivariate generalized linear models in the presence of unmeasured confounding effects.
Existing methods in this domain typically focus on Gaussian linear models \citep{wang2017confounder,bing2022adaptive,bing2022inference} or necessitate direct modeling of the relationship between covariates and confounders \citep{feng2020causal,ouyang2022high}.
To the best of our knowledge, the proposed method is the first estimation and inference framework capable of (1) accommodating general relationships between observed covariates and unmeasured confounders, allowing for \emph{arbitrary confounding mechanisms}; (2) utilizing generalized linear models, allowing for \emph{nonlinear modeling}; and (3) incorporating information from \emph{multiple outcomes}.
Our approach leverages the orthogonal structures inherent in the problem, incorporating linear projection techniques into both estimation and inference processes to effectively mitigate confounding effects and elucidate primary effects. Notably, it exhibits significant utility in high-dimensional sparse count data, as demonstrated through the analysis of single-cell datasets on systemic lupus erythematosus disease in~\Cref{sec:case-studeis}.

Our proposed procedure \textsc{gcate} (generalized confounder adjustment for testing and estimation) consists of three main steps.
In the first step, we use joint maximum likelihood estimation \citep{chen2019joint,chen2022determining} to obtain the initial estimate of the marginal effects and uncorrelated latent components by projecting the latent factor $\bZ$ to the orthogonal space of $\bX$, from which we recover the column space of $\bGamma$.
In the second step, we use a similar strategy to obtain the estimates of both $\bZ$ and primary effect $\bB$, by constraining the latter to be orthogonal to the estimated latent coefficients $\hat{\bGamma}$ and using $\ell_1$-regularization to encourage sparsity.
Lastly, the valid inference is guaranteed by a bias-corrected estimator of $\hat{\bB}$, which innovates a link-specific weight function, similar to \cite{javanmard2014confidence,cai2021statistical}, while incorporating projection-based score adjustments that combine the information from multivariate responses.

In our theoretical framework of confounded generalized linear models, we establish conditions for identifying the latent coefficients and direct effects. Furthermore, we provide non-asymptotic estimation error bounds for these estimated quantities in high-dimensional scenarios where both the sample size $n$ and response size $p$ tend to be infinity. In particular, we derive element-wise $\ell_2$-norm and $\ell_1$-norm bounds for the estimation error of the primary effects by effectively controlling the column-wise estimation errors of the latent components. Lastly, we demonstrate the asymptotic normality of our proposed bias-corrected estimator and show the proper control of statistical errors, thereby enabling the construction of valid confidence intervals and hypothesis tests.

\paragraph{Organization and Notation.}
In \Cref{sec:background}, we set up our modeling framework, which extends existing results in the literature to the generalized linear model setting. 
In \Cref{sec:estimation}, we describe our strategy for estimation and establish bounds on the estimation error of the parameters of interest.
In \Cref{sec:inference}, we motivate and construct asymptotically valid confidence intervals and hypothesis tests.  
Finally, in \Cref{sec:simulation} and \Cref{sec:case-studeis}, we study the empirical behavior of our estimators in realistic simulations and a study of gene expression in lupus patients.
Technical proof of the results is provided in the supplementary material.

Throughout our exposition, we will use the following notational conventions. 
For any matrix $\bA\in\RR^{n\times p}$, we use $\ba_{i}$, $\bA_{j}$, and $a_{ij}$ to denote its $i$th row, $j$th column, and $(i,j)$-th entry, respectively, for $i=1,\ldots,n$, $j=1,\ldots,p$.
For any matrix $\bA\in\RR^{n\times p}$ with full column rank, let $\cP_{\bA} = \bA(\bA^{\top}\bA)^{-1}\bA^{\top}$ and $\cP_{\bA}^{\perp} = \bI_p - \cP_{\bA}$ be the orthogonal projection matrices on the $\bA$'s column space and its orthogonal space, respectively.
For any square matrix $\bA\in\RR^{n\times n}$, $\lambda_i(\bA)$ denotes its $i$th largest eigenvalue.
The symbol ``$\odot$'' denotes the Hadamard product.
We use ``$o$'' and ``$\cO$'' to denote the little-o and big-O notations and let ``$\op$'' and ``$\Op$'' be their probabilistic counterparts.
For sequences $\{a_n\}$ and $\{b_n\}$, we write $a_n\ll b_n$ or $b_n\gg a_n$ if $a_n=o(b_n)$; $a_n\lesssim b_n$ or $b_n\gtrsim a_n$ if $a_n=\cO(b_n)$; and $a_n\asymp b_n$ if $a_n=\cO(b_n)$ and $b_n=\cO(a_n)$.
Convergence in distribution and probability are denoted by ``$\dto$'' and ``$\pto$''.

\section{Modeling Differential Expression}\label{sec:background}

In the context of DE testing and related applied problems, the outcome variable can be a variety of measures, including gene expression, protein abundance, and open chromatin. For simplicity of exposition, we will describe our methods in the context of tests for differential gene expression. These tests aim to contrast outcomes from case versus control samples, wherein case and control observations may be derived from various study designs, spanning the spectrum from diseased versus healthy subjects to perturbed versus non-targeted cells.

    \subsection{Generalized linear model with hidden confounders}\label{subsec:glm}
        Suppose the gene expression $\by\in\RR^p$ is a $p$-dimensional random vector containing conditional independent entries from a one-dimensional exponential family with density:
        \begin{align*}
            p(y_{j}\mid \theta_j) = h(y_{j} )\;\exp\left(y_{j} \theta_{j} - A(\theta_{j}) \right),
        \end{align*}
        where $\theta_{j}\in\RR$ is the \emph{natural parameter}, and $A(\cdot)$ and $h(\cdot)$ are functions that depend on the member of the exponential family.        
        We restrict ourselves to the regular families whose natural parameter space is a nonempty open set and $A$ is continuously thrice differentiable, which is satisfied by most common exponential families as summarized in \Cref{tab:exp-family}.
        Because the one-parameter exponential family is minimal, the natural parameter space is convex, and the \emph{log-partition function} $A$ is strictly convex.
        If we know the distribution of $\by$, then $\btheta\in\RR^p$ is a unique solution to the equation $\EE[\by\,|\,\btheta] = A'(\btheta)$, where $A'$ is the first derivative of $A$ and applied element-wise to $\btheta$; equivalently, $\btheta = A'^{-1}(\EE[\by\,|\,\btheta])$.
        In other words, we can recover $\btheta$ based on the information of the first moment of $\by$ and the log-partition function~$A$.

        To associate multiple outcomes with both covariates and hidden confounders, one can naturally consider the generalized linear model, where the natural parameters are linear functions of both the observed covariates $\bx\in\RR^d$ and the unmeasured confounder $\bz\in\RR^r$:
        \begin{align*}
            \btheta_{p\times 1} &= \bB_{p\times d}\bx_{d\times 1} + \bGamma_{p\times r}\bz_{r\times 1}.
        \end{align*}
        Here, $\bB$ and $\bGamma$ are the linear coefficients.
        Denote $\bD\bx$ the linear projection of $\bz$ onto $\bx$, where $\bD:=\EE[\bz\bx^{\top}]\EE[\bx\bx^{\top}]^{-1}\in\RR^{r\times d}$ is the projection coefficient and $\bw = \bz -\bD\bx $ is the residual uncorrelated with $\bx$.
        To see how $\bz$ may affect the inference on $\bB$, note that
        \begin{align}
            \btheta &= (\bB + \bGamma\bD)\bx + \bGamma\bw.\label{eq:theta-projz}
        \end{align}        
        When $\by$ is normally distributed, the confounding effects occur even when regressing the mean response $\btheta=\EE[\by\mid\btheta]$ on $\bx$, which yields the \emph{confounded coefficient} $\bB + \bGamma\bD$ while the \emph{direct effect} of interest is $\bB$.
        When $\by$ comes from general exponential families, the confounding effects are more intractable because all moments and cumulants of the response may be affected by the colinearity of $\bx$ and $\bz$.

        In the context of genomic analysis, the problem of confounding is more severe when the number of available covariates is limited.
        In particular, one typically encounters high-dimensional scenarios characterized by a substantial number of genes, often surpassing the available numbers of covariates and hidden confounding factors.
        In this paper, we also consider such a challenging scenario where the number of genes is much larger than the numbers of the observed covariates and the unmeasured confounders, namely, $p\gg d$ and $p\gg r$.
        Under such challenges, the first natural and essential question one may ask is whether there is any hope to disentangle the confounding effects and identify the direct effects.

        The answer to this inquiry is affirmative.
        On the one hand, the column space of $\bGamma$ can be identified up to rotations if $\Cov(\bGamma\bw)=\bGamma\bSigma_w\bGamma^{\top}$ has rank $r$, where $\bSigma_w =\Cov(\bw)$ is the covariance of the uncorrelated latent factors.
        This fact originates from basic principles in linear algebra, frequently employed in factor analysis \citep{bai2003inferential,bai2012statistical}.
        On the other hand, once the column space of $\bGamma$ is known, one can apply the orthogonal projections to remove the confounding effects based on \eqref{eq:theta-projz}:
        \begin{align*}
            \cP_{\bGamma}^{\perp} \btheta &= \cP_{\bGamma}^{\perp}\bB\bx,
        \end{align*}
        where $\cP_{\bGamma}=\bGamma(\bGamma^{\top}\bGamma)^{-1}\bGamma^{\top}$ and $\cP_{\bGamma}^{\perp}=\bI_p-\cP_{\bGamma}$ are the orthogonal projection matrices that project vectors on to the image of $\bGamma$ and the orthogonal complement of $\bGamma$, respectively.
        By regressing $\cP_{\bGamma}^{\perp} \btheta$ on $\bx$, we can further obtain the unconfounded primary effects $\cP_{\bGamma}^{\perp}\bB=\cP_{\bGamma}^{\perp} \EE[\btheta\bx^{\top}]\EE[\bx\bx^{\top}]^{-1}$.
        However, the other component, $\cP_{\bGamma}\bB$, often poses challenges in identifiability unless additional conditions are imposed. 
        Typically, extra assumptions on the spectrum of $\bGamma$ and the sparsity of $\bB$ are necessary, to assert that $\bGamma$ and $\bB$ are asymptotically orthogonal, in the sense that $\cP_{\bGamma}\bB$ is negligible \citep{lee2017improved,wang2017confounder,bing2022inference}.
        In that case, $\bB$ can be well approximated by $\cP_{\bGamma}^{\perp} \bB$. 
        Below, we give one sufficient identification condition.
        \begin{proposition}[Identification of $\bB$]\label{prop:iden-B}
            Suppose there exists a sequence $\{\tau_p\}_{p\in\NN}$ that is uniformly lower bounded away from zero such that the following conditions hold:
            \begin{align}
                \lambda_r(\bGamma\bSigma_w\bGamma^{\top})\geq \tau_p,\qquad \max_{1\leq j\leq p}(\bGamma\bSigma_w\bGamma^{\top})_{jj}=\cO(1),\qquad \max_{1\leq \ell\leq d}\|\bB_{\ell}\|_1=o(\tau_p). \label{eq:cond-iden-B}
            \end{align}
            Then as $p$ tends to infinity, it follows that $\bB = \cP_{\bGamma}^{\perp}\bB + \bo(1)$ and $\|\cP_{\bGamma}\bB\|_{\fro} \lesssim \sqrt{p}\|\bB\|_{1,1}/\tau_p $, where $\|\cdot\|_{1,1}$ is the element-wise $\ell_1$-norm.
            Further, $\cP_{\bGamma }$ and $\bB$ can be identified from the first two moments of $\bx,\by$ asymptotically.
        \end{proposition}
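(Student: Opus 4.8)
The plan is to treat the two displayed conclusions as a deterministic linear-algebra computation and the final identification claim separately. Throughout, write $\bA := \bGamma\bSigma_w\bGamma^{\top}$, a symmetric positive semidefinite matrix. The first condition in \eqref{eq:cond-iden-B} forces $\lambda_r(\bA)\ge\tau_p>0$, so $\bA$ has exactly rank $r$; since its column space is contained in, and has the same dimension as, that of $\bGamma$, the two column spaces coincide and hence $\cP_{\bGamma}=\cP_{\bA}=\bA^{+}\bA$, where $\bA^{+}$ is the Moore--Penrose inverse. Two elementary facts drive everything: (i) $\|\bA^{+}\|_{\mathrm{op}}=1/\lambda_r(\bA)\le 1/\tau_p$; and (ii) every entry of $\bA$ is bounded, since positive semidefiniteness gives $|a_{jk}|\le\sqrt{a_{jj}a_{kk}}\le\max_{1\le j\le p}a_{jj}=\cO(1)$ by the second condition.

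For the Frobenius bound I would argue column by column. Fixing a column $\bB_{\ell}$ of $\bB$, the $j$th entry of $\bA\bB_{\ell}$ is $\ba_j^{\top}\bB_{\ell}$, bounded by $\max_k|a_{jk}|\cdot\|\bB_{\ell}\|_1\le C\|\bB_{\ell}\|_1$ via fact (ii); summing the squares over the $p$ rows yields $\|\bA\bB_{\ell}\|_2\le C\sqrt{p}\,\|\bB_{\ell}\|_1$. Combining with fact (i) through $\cP_{\bGamma}\bB_{\ell}=\bA^{+}(\bA\bB_{\ell})$ gives $\|\cP_{\bGamma}\bB_{\ell}\|_2\le C\sqrt{p}\,\|\bB_{\ell}\|_1/\tau_p$. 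Squaring, summing over $\ell$, and using $\sum_{\ell}\|\bB_{\ell}\|_1^2\le(\sum_{\ell}\|\bB_{\ell}\|_1)^2=\|\bB\|_{1,1}^2$ then delivers the stated $\|\cP_{\bGamma}\bB\|_{\fro}\lesssim\sqrt{p}\,\|\bB\|_{1,1}/\tau_p$. The approximation $\bB=\cP_{\bGamma}^{\perp}\bB+\bo(1)$ is immediate from the third condition: dividing this bound by $\sqrt{p}$ gives $p^{-1/2}\|\cP_{\bGamma}\bB\|_{\fro}\lesssim\|\bB\|_{1,1}/\tau_p\le d\max_{\ell}\|\bB_{\ell}\|_1/\tau_p=o(1)$ (treating $d$ as fixed), so the confounded part $\cP_{\bGamma}\bB$ is negligible in the root-mean-square sense and $\cP_{\bGamma}^{\perp}\bB$ recovers $\bB$ asymptotically.

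For the identification claim I would proceed in three moves. First, since $A$ is strictly convex, $A'$ is strictly increasing and invertible, so the natural parameter is recovered from the first moment through $\btheta=A'^{-1}(\EE[\by\mid\btheta])$; this reduces identification to the latent-factor representation \eqref{eq:theta-projz}, namely $\btheta=(\bB+\bGamma\bD)\bx+\bGamma\bw$ with $\bw$ uncorrelated with $\bx$. Second, the population regression of $\btheta$ on $\bx$ identifies the confounded coefficient $\bB+\bGamma\bD=\EE[\btheta\bx^{\top}]\EE[\bx\bx^{\top}]^{-1}$, and the residual $\btheta-(\bB+\bGamma\bD)\bx=\bGamma\bw$ has covariance exactly $\bA=\bGamma\bSigma_w\bGamma^{\top}$; because its nonzero eigenvalues are separated from zero by $\tau_p$, its top-$r$ eigenspace identifies the column space of $\bGamma$, hence $\cP_{\bGamma}$. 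Third, applying $\cP_{\bGamma}^{\perp}$ annihilates $\bGamma\bD$, so $\cP_{\bGamma}^{\perp}(\bB+\bGamma\bD)=\cP_{\bGamma}^{\perp}\bB$ is identified, and the approximation above upgrades this to $\bB$ itself as $p\to\infty$, which is what ``asymptotically'' refers to.

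The main obstacle is the identification step rather than the norm bounds, which are routine. The delicate points are recovering the distribution of the latent $\btheta$ from observable moments of $(\bx,\by)$ through the exponential-family link, and cleanly separating the rank-$d$ signal $(\bB+\bGamma\bD)\bx$ from the rank-$r$ factor part $\bGamma\bw$ in the second moment. The eigengap condition $\lambda_r(\bA)\ge\tau_p$ is precisely what makes this separation well-posed, while the sparsity condition $\max_{\ell}\|\bB_{\ell}\|_1=o(\tau_p)$ is what controls the otherwise unidentifiable component $\cP_{\bGamma}\bB$.
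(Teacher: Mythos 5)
Your identification argument (invert $A'$ on the first moment, regress $\btheta$ on $\bx$ to isolate the residual covariance $\bGamma\bSigma_w\bGamma^{\top}$, read off $\cP_{\bGamma}$ from its top-$r$ eigenspace, then project) is exactly the paper's, and your derivation of $\|\cP_{\bGamma}\bB\|_{\fro}\lesssim\sqrt{p}\,\|\bB\|_{1,1}/\tau_p$ via the Moore--Penrose identity $\cP_{\bGamma}=\bA^{+}\bA$ is a valid, slightly cleaner repackaging of the paper's column-by-column H\"older computation.

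The gap is in the first displayed conclusion. The paper's $\bB=\cP_{\bGamma}^{\perp}\bB+\bo(1)$ is an \emph{entrywise} statement: its proof shows $|(\cP_{\bGamma}\bB)_{i\ell}|=\cO(\|\bB_{\ell}\|_1/\tau_p)=o(1)$ uniformly in $i$ and $\ell$ (display \eqref{eq:proof-prop-norm}), and this entrywise/column-wise control is what gets reused downstream (e.g.\ the bound on $\|\bB^{\top}\cP_{\bGamma}\be_{p,j}\|_2$ in the $T_{3j}$ term of the normality proof). You only establish $p^{-1/2}\|\cP_{\bGamma}\bB\|_{\fro}=o(1)$, which does not rule out individual entries of order one; and the entrywise bound your route directly yields, $|(\cP_{\bGamma}\bB)_{i\ell}|\le\|\cP_{\bGamma}\bB_{\ell}\|_2\lesssim\sqrt{p}\,\|\bB_{\ell}\|_1/\tau_p$, is off by a factor of $\sqrt{p}$. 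The fix stays entirely inside your framework: writing $\bA=\bU_r\bLambda_r\bU_r^{\top}$ with $\bLambda_r\succeq\tau_p\bI_r$ gives $\bA\succeq\tau_p\bU_r\bU_r^{\top}=\tau_p\cP_{\bGamma}$, hence $(\cP_{\bGamma})_{ii}\le a_{ii}/\tau_p=\cO(1/\tau_p)$ by the second condition in \eqref{eq:cond-iden-B}; positive semidefiniteness of $\cP_{\bGamma}$ then gives $|(\cP_{\bGamma})_{ij}|\le\sqrt{(\cP_{\bGamma})_{ii}(\cP_{\bGamma})_{jj}}=\cO(1/\tau_p)$, and H\"older yields $|(\cP_{\bGamma}\bB)_{i\ell}|\le\max_{j}|(\cP_{\bGamma})_{ij}|\cdot\|\bB_{\ell}\|_1=o(1)$, which is the statement the paper proves and uses.
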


        As hinted above, the lower bound condition of $\bGamma\bSigma_w\bGamma^{\top}$'s spectrum in \eqref{eq:cond-iden-B} ensures that the column space of $\bGamma$ can be identified up to rotations.
        The second condition guarantees that the diagonal entries of $\bGamma\bSigma_w\bGamma^{\top}$ are balanced.
        Finally, the last condition in \eqref{eq:cond-iden-B} can hold when $\bB$ is sparse, and its entry is bounded.
        Compared to \citet[Theorem 1]{bing2022inference} where the response $\by$ is normally distributed, the identifiability condition of \Cref{prop:iden-B} applies for exponential families, which is of much broader generality.
        Furthermore, the smallest eigenvalue of $\bGamma\bSigma_w\bGamma^{\top}$ can grow at a specific rate $\tau_p$ in \Cref{prop:iden-B}.
        When $\tau_p=p$, we can recover the result in \citet[Theorem 1]{bing2022inference}.
        Lastly, we also provide a norm bound for the residual $\cP_{\bGamma}\bB$, which is helpful for later analysis of the estimation errors.

    \subsection{Random samples}\label{subsec:random-samples}
    While the preceding identification outcomes are applicable when population moments are known, practical scenarios involve the observation of independent and identically distributed (i.i.d.) samples.
    Consequently, statistical estimation becomes imperative to disentangle direct effects from the confounding effects based on samples.
    Suppose $(\bx_i,\by_i)$ for $i=1,\ldots,n$ are $n$ i.i.d. samples coming from the same distribution as $(\bx,\by)$, and let $\bX\in\RR^{n\times d}$ and $\bY\in\RR^{n\times p}$ denote the design matrix and the gene expression matrix, respectively.
    The expression $y_{ij}$ of the $i$th observation and the $j$th gene has the density:
    $$p(y_{ij}\mid \theta_{ij}) = h(y_{ij} )\exp\left(y_{ij} \theta_{ij} - A(\theta_{ij})\right),$$        
    where $\theta_{ij}$ is the natural parameter.
    In matrix form, the natural parameters decompose as
    \[\bTheta = \bX\bB^{\top} + \bZ\bGamma^{\top} ,\]
    where $\bB\in\RR^{p\times d}$, $\bZ\in\RR^{n\times r}$, and $\bGamma\in\RR^{p\times r}$ are unknown.
    Note that $y_{ij}$'s are conditionally independent given the natural parameter $\bTheta$.

    One natural way to estimate the unknown variable $\bZ$ and parameters $(\bB,\bGamma)$ is to perform maximum likelihood estimation.
    Ignoring the constant terms, the negative log-likelihood function of $\bY$ is given by
    \begin{align}
        \cL(\bTheta) = \cL(\bB,\bZ,\bGamma) &= -\frac{1}{n}\sum_{i=1}^n\sum_{j=1}^p (y_{ij}\theta_{ij} - A(\theta_{ij})).\label{eq:loglikelihood}
    \end{align}
    The second notation $ \cL(\bB,\bZ,\bGamma)$ reflects the dependence of $\bTheta$ on the model parameters $\bB$ and unknown quantities $\bGamma,\bZ$.
    Addressing the challenges of nonconvexity and high dimensionality requires developing efficient algorithms to estimate these unknown quantities and analyze their statistical properties.

    To overcome the difficulty of estimation, one critical observation comes from the projection-based decomposition and \Cref{prop:iden-B}:
    \begin{align*}
        \bTheta &= \bX\bB^{\top} + \bZ\bGamma^{\top} \\
        &= (\bX\bB^{\top}\cP_{\bGamma}^{\perp} + \bX\bB^{\top}\cP_{\bGamma} )+ (\bX\bD^{\top}\bGamma^{\top} + \bW \bGamma^{\top})\\
        &= \bX\bB^{\top}\cP_{\bGamma}^{\perp} + \cP_{\bX}\bZ\bGamma^{\top} + \cP_{\bX}^{\perp}\bW\bGamma^{\top} + \bo_{\PP}(1),
    \end{align*}
    where we replace the best linear projection $\bX\bD$ with its empirical counterpart $\cP_{\bX}\bZ$ in finite samples, which yield negligible terms that contribute to $\bo_{\PP}(1)$.
    It is worth noting that $\bX\bB^{\top}\cP_{\bGamma}^{\perp}$ and $\cP_{\bX}\bZ\bGamma^{\top}+\bW \bGamma^{\top}$ have orthogonal columns, while $\bX\bB^{\top}\cP_{\bGamma}^{\perp} + \cP_{\bX}\bZ\bGamma^{\top} $ and $\cP_{\bX}^{\perp}\bW\bGamma^{\top}$ have orthogonal rows.
    Our analysis will then take advantage of such two-way structural orthogonality to perform both estimation and inference for the parameters of interest, as detailed in the following sections.

\section{Estimation}\label{sec:estimation}

        From now on, we will use an asterisk on the upper subscript to indicate the population parameters and the true latent factors.
        Specifically, we denote the underlying parameter as
        \[\bTheta^* = \bX\bB^{*\top} + \bZ^*\bGamma^{*\top} =  \bX(\bB^*+\bGamma^*\bD^*)^{\top} + \bW^*\bGamma^{*\top}.\]
        Let $\cR\subseteq\RR$ be an open domain of $\theta$ such that $A(\theta)<\infty$ for all $\theta\in\cR$.
        For a given $C>0$, define $\cR_{C} = \cR\cap[-C,C]$ for Gaussian, Binomial and Poisson distributions and
        $\cR_{C} = \cR\cap[-C,-1/C]$ for Negative Binomial distributions.
        For our theoretical results, we assume the existence of constant $C>1$ such that the following common assumptions hold.
        
        \begin{assumption}[Model parameters]\label{asm:model}
            Assume that $\bTheta^*\in \cR_{C}^{n\times p}$ with probability $\iota_n$ for some deterministic sequence $\iota_n$ tending to one as $n$ tends to infinity.
            The primary coefficient satisfies that $\max_{1\leq \ell\leq d}\|\bB_{\ell}^*\|_0\leq s$ for some $1\leq s\leq p$ and $\max_{1\leq j\leq p}\|\bb_j^*\|_2\leq C$.
        \end{assumption}

        \begin{assumption}[Covariates]\label{asm:covariate}
            Assume that $\bx_{1},\ldots,\bx_{n}$ are i.i.d. $\nu$-sub-Gaussian random vectors with second moment $\bSigma_x:=\EE[\bx_1\bx_1^{\top}]$ such that $C^{-1}\leq \lambda_p(\bSigma_x)\leq \lambda_1(\bSigma_x) \leq C$.     
        \end{assumption}
        
        \begin{assumption}[Latent vectors]\label{asm:latent}
            Assume that the uncorrelated latent factors $\bw_{1}^*,\ldots,\bw_{n}^*$ are i.i.d. $\nu$-sub-Gaussian random vectors with zero means and covariance $\bSigma_w$, such that $C^{-1}\leq \lambda_r(\bSigma_w)\leq \lambda_1(\bSigma_w) \leq C$; and the factor loadings $\bGamma^*$ satisfy that $C^{-1} \leq \lambda_r(p^{-1}\bGamma^*\bGamma^{*\top}) \leq C$ and $\max_{1\leq j\leq p}\|\bgamma_j^*\|_2\leq C$.
        \end{assumption}        

        Like all nonlinear (nonconvex) analyses, the rows of $\bB^*$ and $\bGamma^*$ are assumed to be in a bounded set, as in \Cref{asm:model,asm:latent}.
        The boundedness of the natural parameter $\bTheta^*$ is required to control the tail probability of the response $y$ conditional on observed covariates and latent factors.
        In \Cref{asm:covariate}, the sub-Gaussian assumptions admit the particular case when $x_{i1}=1$ for all $1\leq i\leq n$ so that the intercept can be incorporated into our model.
        In \Cref{asm:latent}, the zero-mean condition on $\bW^*$ is to simplify the theoretical analysis, which can be guaranteed if we include the intercept and project $\bZ^*$ onto the linear span of the columns of $[\one_n,\bX]$.
        Finally, the sparsity and boundedness assumptions of $\bB^*$ in \Cref{asm:model}, and the bounded spectrum assumptions of $\bSigma_w$ and $\bGamma^*$ in \Cref{asm:latent} imply the conditions of \Cref{prop:iden-B} with $\tau_p=p$ therein.
        These assumptions are relatively lenient on the projection coefficient $\bD^*$, provided they ensure that $\bTheta^*$ remains within a bounded set with high probability.

        \begin{remark}[The number of latent factors]\label{rm:JIC}
            For our theoretical results, we assume the number of latent factors $r$ is known in advance.
            Note that the joint-likelihood-based information criterion (JIC) proposed by \citet{chen2022determining} can be utilized to select the number of latent factors.
            The JIC value is the sum of deviance and a penalty on model complexity:
            \begin{align}
                \JIC(\hat{\bTheta}^{(r)}) &= - 2 \sum_{i\in[n],j\in[p]}\log\ p(y_{ij}\mid \hat{\theta}_{ij}^{(r)}) + \nu(n,p,d+r), \label{eq:JIC}
            \end{align}
            where $\hat{\bTheta}^{(r)}$ is the joint maximum likelihood estimator of the natural parameter matrix that minimizes \eqref{eq:loglikelihood} with $r$ latent factors and $d$ observed covariates, and $\nu(n,p,r)=c_{\JIC}\cdot{r\log(n\wedge p)}{(n\wedge p)}^{-1}$ is the complexity measure with penalty level $c_{\JIC}>0$.            
            As shown by \citet{chen2022determining}, minimizing the empirical JIC yields a consistent estimate for the number of factors in generalized linear factor models with an intercept parameter.
            The utility of this metric in our problem setting is also empirically examined for both the simulation in \Cref{sec:simulation} and the real data analysis in \Cref{sec:case-studeis}.
        \end{remark}

        As motivated in \Cref{subsec:glm}, we consider the following optimization problem:
        \begin{align}
            \begin{split}
                \hat{\bB},\hat{\bZ},\hat{\bGamma} &= \argmin_{\bB\in\RR^{p\times d},\bZ\in\RR^{n\times r}, \bGamma\in\RR^{p\times r}}\cL(\bB,\bZ,\bGamma) + \lambda\|\bB\|_{1,1}\\
                \text{s.t.}&\qquad \bX\bB^{\top}+\bZ\bGamma^{\top}\in\cR_C^{n\times p},\qquad\cP_{\bGamma}\bB=\zero.
            \end{split}            \label{opt:3}
        \end{align}
        where the unregularized loss function $\cL(\bB,\bZ,\bGamma)$ is defined in \eqref{eq:loglikelihood} and $\|\cdot\|_{1,1}$ denotes the element-wise $\ell_1$-norm.
        It is worth noting that for any feasible $\hat{\bGamma}$ fixed, \eqref{opt:3} reduces to a convex optimization problem in variables $\bB$ and $\bZ$:
        \begin{align}
            \begin{split}
                \hat{\bB},\hat{\bZ} &= \argmin_{\bB\in\RR^{p\times d},\bZ\in\RR^{n\times r}}  \cL(\bB,\bZ,\hat{\bGamma}) + \lambda\|\bB\|_{1,1}\\
                \text{s.t.}&\qquad \bX\bB^{\top}+\bZ\hat{\bGamma}^{\top}\in\cR_C^{n\times p},\qquad \cP_{\hat{\bGamma}}\bB=\zero.
            \end{split}\label{opt:4}
        \end{align}
        This motivates us to solve optimization problem \eqref{opt:3} in two steps: (1) firstly obtaining a good estimate of $\hat{\bGamma}$ and (2) then based on $\hat{\bGamma}$, obtaining good estimates for $\bB^*$ and $\bZ^*$.
        In \Cref{alg:deconfounder}, we incorporate the two-step procedure by solving two sub-problems \eqref{opt:1} and \eqref{opt:2} consecutively.
        We next analyze the statistical properties of estimators in each step of \Cref{alg:deconfounder}.

    \begin{algorithm}[!t]
    \caption{\textsc{gcate} (generalized confounder adjustment for testing and estimation)}
        \label{alg:deconfounder}
        \begin{algorithmic}[1]
        \REQUIRE
        A data matrix $\bY\in\RR^{n\times p}$,
        a design matrix $\bX\in\RR^{n\times d}$,
        a natural number $r\geq 1$ (the number of latent factors)
        \smallskip
        
        \STATE \textbf{Estimation of uncorrelated latent components $\bW\bGamma^{\top}$:} Solve optimization problem \eqref{opt:1} to obtain $\hat{\bW}_0\hat{\bGamma}_0^{\top}$ and the initial estimate $\hat{\bTheta}_0=\bX\hat{\bF}^{\top}+\hat{\bW}_0\hat{\bGamma}_0^{\top}$ by alternative maximization (\Cref{alg:jmle}) with initialization given in \Cref{app:subsec:opt}:
        \begin{align}
            \begin{split}
                \hat{\bF},\hat{\bW}_0,\hat{\bGamma}_0 &\in \argmin_{\bF\in\RR^{p\times d},\bW\in\RR^{n\times r},\bGamma\in\RR^{p\times r}} \cL(\bX\bF^{\top} + \bW\bGamma^{\top})   \\
            \text{subject to}&\quad \bX\bF^{\top} + \bW\bGamma^{\top}\in\cR_C^{n\times p},\qquad  \cP_{\bX}\bW =\zero.
            \end{split}\label{opt:1} 
        \end{align}

        \STATE \textbf{Estimation of latent coefficients $\bGamma$:} Set $\hat{\bW}:=\sqrt{n}\bU\bSigma^{1/2}$ and $\hat{\bGamma}:=\sqrt{p}\bV\bSigma^{1/2}$, where $\hat{\bW}_0\hat{\bGamma}_0^{\top} =\sqrt{np} \bU\bSigma\bV^{\top}$ is the condensed SVD with $\bU\in\RR^{n\times r}$, $\bSigma\in\RR^{r\times r}$, $\bV\in\RR^{p\times r}$.

        \STATE \textbf{Estimation of direct effects $\bB$ and latent factors $\bZ$:}
        Solve optimization problem \eqref{opt:2} to obtain $(\hat{\bB},\hat{\bZ})$ by \Cref{alg:jmle} with initialization given in \Cref{app:subsec:opt}:
        \begin{align}
            \begin{split}
                \hat{\bB},\hat{\bZ} &= \argmin_{\bB\in\RR^{p\times d},\bZ\in\RR^{p\times r}}\cL(\bX\bB^{\top}+\bZ\hat{\bGamma}^{\top}) + \lambda\|\bB\|_{1,1}\\
            \text{subject to}&\qquad \bX\bB^{\top}+\bZ\hat{\bGamma}^{\top}\in\cR_C^{n\times p},\qquad \cP_{\hat{\bGamma}}\bB=\zero.
            \end{split}            \label{opt:2}
        \end{align}

        \STATE \textbf{Debiasing:} Construct the debiased estimate \eqref{eq:debias} and its estimated variance \eqref{eq:est-var}, based on $(\hat{\bB},\hat{\bZ},\hat{\bGamma})$.
        Compute $p$-values according to the asymptotic distribution \eqref{eq:normality}.
        
        \ENSURE Return the $p$-values.
        \end{algorithmic}
    \end{algorithm}

    \subsection{Estimation of uncorrelated latent components}
    To estimate the marginal effects $\bF^*= \bB^* + \bGamma^*\bD^*$ and the uncorrelated latent components $\bW^*\bGamma^{*\top}$, we first solve optimization problem \eqref{opt:1}.
    This is also known as the joint maximum likelihood estimation 
    \citep{chen2019joint,chen2020structured}, which is statistically optimal in the minimax sense when both the sample size $n$ and the response dimension $p$ grow to infinity.
    From optimization problem \eqref{opt:1}, we obtain the initial estimates of the natural parameter matrix $\hat{\bTheta}_0 = \bX\hat{\bF}^{\top} + \hat{\bW}_0 \hat{\bGamma}_0^{\top}$.
    The following theorem characterizes the estimation error of the initial maximum likelihood estimate $\hat{\bTheta}_0$.

    \begin{theorem}[Estimation error of $\hat{\bTheta}_0$]\label{thm:est-error-bTheta}
        Under \Crefrange{asm:model}{asm:latent}, let $\hat{\bTheta}_0$ be any estimator such that $\cL(\hat{\bTheta}_0)\leq \cL(\bTheta^*)$. For any constant $\delta>1$, when $np\geq 3$, it holds with probability at least $1-(n+p)^{-\delta}-(np)^{-\delta} - \iota_n$ that
        \begin{align*}
            \|\hat{\bTheta}_0 - \bTheta^*\|_{\fro} = \cO\left( \sqrt{(d+r)((n\vee p)\vee \delta^3)} \right),\  \max_{1\leq j\leq p} \|(\hat{\bTheta}_0)_{j} - \bTheta^*_{j}\|_2 = \cO\left( \sqrt{(d+r)(n \vee \delta^3)}\right).
        \end{align*}
    \end{theorem}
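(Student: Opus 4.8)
The plan is to argue from the single hypothesis $\cL(\hat\bTheta_0)\le\cL(\bTheta^*)$ by a basic inequality combined with a local curvature (strong convexity) bound. Writing $\Delta := \hat\bTheta_0 - \bTheta^*$ and expanding $A$ to second order entrywise, the hypothesis rearranges to
\begin{equation*}
\frac{1}{2n}\sum_{i,j} A''(\tilde\theta_{ij})\,\Delta_{ij}^2 \;\le\; \frac{1}{n}\sum_{i,j}\bigl(y_{ij}-A'(\theta_{ij}^*)\bigr)\Delta_{ij} \;=\; \frac{1}{n}\langle \bE, \Delta\rangle,
\end{equation*}
where $\tilde\theta_{ij}$ lies between $\hat\theta_{ij}$ and $\theta_{ij}^*$, and $\bE := \bY - A'(\bTheta^*)$ is the matrix of score residuals $e_{ij}=y_{ij}-A'(\theta^*_{ij})$. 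On the event $\{\bTheta^*\in\cR_C^{n\times p}\}$ (probability $\iota_n$), feasibility of $\hat\bTheta_0$ together with convexity of $\cR_C$ forces $\tilde\theta_{ij}\in\cR_C$; since $\cR_C$ is compact and $A$ is strictly convex and thrice continuously differentiable on the open natural-parameter space, $A''$ is bounded below there by a constant $c_{\min}=c_{\min}(C)>0$. Hence the left side is at least $\tfrac{c_{\min}}{2n}\|\Delta\|_\fro^2$. Conditionally on the design $(\bX,\bW^*)$, the $e_{ij}$ are independent, mean zero, and sub-exponential with parameters depending only on $C$, so the analysis reduces to controlling the linear term $\tfrac1n\langle\bE,\Delta\rangle$.

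For the Frobenius bound I would exploit that $\Delta$ is low rank: both $\hat\bTheta_0=\bX\hat\bF^\top+\hat\bW_0\hat\bGamma_0^\top$ and $\bTheta^*=\bX\bF^{*\top}+\bW^*\bGamma^{*\top}$ lie in the model class, so $\mathrm{rank}(\Delta)\le 2(d+r)$. Using duality of the operator and nuclear norms, $\langle\bE,\Delta\rangle\le\|\bE\|_{\mathrm{op}}\|\Delta\|_{*}\le\sqrt{2(d+r)}\,\|\bE\|_{\mathrm{op}}\|\Delta\|_\fro$, and combining with the curvature bound and cancelling one factor of $\|\Delta\|_\fro$ gives $\|\Delta\|_\fro\le (2\sqrt2/c_{\min})\sqrt{d+r}\,\|\bE\|_{\mathrm{op}}$. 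It then remains to show $\|\bE\|_{\mathrm{op}}=\cO(\sqrt{(n\vee p)\vee\delta^3})$ with probability $1-(n+p)^{-\delta}-(np)^{-\delta}$. I would obtain this by truncating each $e_{ij}$ at a level of order $\delta\log(np)$, bounding the contribution of the bounded part by matrix Bernstein together with an $\varepsilon$-net over the unit spheres of $\RR^n$ and $\RR^p$, and controlling the rare heavy tail by a union bound over the $np$ entries (the source of the $(np)^{-\delta}$ term, and, after optimizing the truncation level, of the $\delta^3$ inflation). This operator-norm estimate for sub-exponential rather than sub-Gaussian entries is the main technical obstacle; the rest is comparatively mechanical.

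Finally, the column-wise bound follows by combining the Frobenius bound with the box constraint. On the feasible event $|\Delta_{ij}|=|\hat\theta_{ij}-\theta_{ij}^*|\le 2C$, so $\|\Delta_j\|_2\le 2C\sqrt n$ for every $j$; simultaneously $\|\Delta_j\|_2\le\|\Delta\|_\fro=\cO(\sqrt{(d+r)((n\vee p)\vee\delta^3)})$. Taking whichever is smaller in each regime—the Frobenius bound when $p\le n$ and the boundedness bound when $p>n$—yields $\max_{1\le j\le p}\|\Delta_j\|_2=\cO(\sqrt{(d+r)(n\vee\delta^3)})$, since $d+r\ge 1$. Collecting the failure probabilities, namely $\iota_n$ for feasibility of $\bTheta^*$ and $(n+p)^{-\delta}+(np)^{-\delta}$ from the operator-norm concentration, produces the stated high-probability guarantee. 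I expect the sub-exponential operator-norm estimate—and in particular the appearance of $\delta^3$—to be where essentially all of the real work lies.
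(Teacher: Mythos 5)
Your argument for the Frobenius bound is essentially the paper's: both start from the basic inequality $\cL(\hat{\bTheta}_0)\le\cL(\bTheta^*)$, lower-bound the Bregman remainder by $\tfrac{\kappa}{2}\|\hat{\bTheta}_0-\bTheta^*\|_{\fro}^2$ using $\inf_{\theta\in\cR_C}A''(\theta)>0$, and upper-bound the linear term via $\mathrm{rank}(\hat{\bTheta}_0-\bTheta^*)\le 2(d+r)$ and operator/nuclear duality, reducing everything to $\|\bY-A'(\bTheta^*)\|_{\oper}$. For the column-wise bound you genuinely diverge: the paper reruns the likelihood argument column by column and union-bounds $\|\bY_j-A'(\bTheta^*_j)\|_2$ over $j$, whereas you simply invoke the box constraint $|\hat\theta_{ij}-\theta^*_{ij}|\le 2C$ to get the deterministic bound $2C\sqrt{n}$, which already matches the stated rate since $d+r\ge 1$. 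Your route is more elementary and avoids the delicate point that the global inequality $\cL(\hat{\bTheta}_0)\le\cL(\bTheta^*)$ does not by itself localize to columns; its only cost is that it visibly requires $\hat{\bTheta}_0\in\cR_C^{n\times p}$, a hypothesis the theorem statement omits but which the algorithm's constraint supplies and which the curvature step needs in any case.

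The one place your sketch would not close is the operator-norm estimate, which you correctly identify as the crux. After truncating the sub-exponential entries at level $\lambda\asymp\delta\log(np)$, an $\varepsilon$-net over the unit spheres combined with scalar Bernstein (or Hoeffding for bounded variables), or equivalently the Tropp-style matrix Bernstein inequality, yields at best $\|\bY-A'(\bTheta^*)\|_{\oper}\lesssim \nu\sqrt{(n\vee p)\log(n+p)}+\lambda\,\mathrm{polylog}(n+p)$: the net has cardinality $e^{c(n+p)}$ and the truncated entries have sub-Gaussian norm of order $\lambda$ rather than $\nu$, so the leading term picks up at least a $\sqrt{\log(n+p)}$ factor. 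That extra factor propagates into the Frobenius bound and contradicts the stated rate $\cO(\sqrt{(d+r)(n\vee p)})$ for fixed $\delta$. The paper avoids this by symmetrizing, embedding the truncated matrix into a symmetric one, and invoking the Bandeira--van Handel sharp bound, whose leading term is the maximal row/column standard deviation $\nu\sqrt{n\vee p}$ with the truncation level $\lambda$ entering only through a $\lambda\sqrt{\log(n+p)}$ correction; that is where the $\delta^{3/2}$ (hence $\delta^3$ inside the square root) comes from. To complete your proof you would need to replace "$\varepsilon$-net plus matrix Bernstein" with a variance-sensitive bound of this type (Bandeira--van Handel, Lata{\l}a, or Seginer); the rest of your argument goes through as written.
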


    In our specific setting, where the dimensions represented by $d$ and $r$ are orders of magnitude smaller compared to $n$ and $p$, the estimation error is primarily dominated by the scale of the larger dimensions $n$ and $p$. 
    Because the dimensions of natural parameters expand with both $n$ and $p$, the bound implies that the error associated with each entry is approximately on the order of $\sqrt{(n\vee p) / np}$. As demonstrated in the upcoming subsection, these results empower us to attain robust estimates of the confounding effects.

    \subsection{Estimation of latent coefficients}\label{subsec:est-confounding}
        
        From optimization problem \eqref{opt:1}, we obtain the initial estimates $\hat{\bE}=\hat{\bW}_0\hat{\bGamma}_0^{\top}$ of the latent components that are uncorrelated to the observed covariates.
        Though $\bW^*$ and $\bGamma^*$ are only identified up to rotations, we can use the condensed singular value decomposition of the normalized latent components $\hat{\bE}/\sqrt{np} = \bU\bSigma\bV^{\top}$ to obtain the final estimates through the following optimization problem:
        \begin{align}
            \begin{split}
                \hat{\bW},\hat{\bGamma}\in&\argmin_{\bW\in\RR^{n\times r},\bGamma\in\RR^{p\times r}}\frac{1}{np}\|\hat{\bE} - \bW\bGamma^{\top}\|_{\fro}^2\\
            \text{subject to}&\quad \frac{1}{n}\bW^{\top}\bW=\frac{1}{p}\bGamma^{\top}\bGamma\text{ is diagonal.}
            \end{split}\label{eq:W-Gamma}
        \end{align}
        A simple derivation yields the solution $\hat{\bW}=\sqrt{n}\bU\bSigma^{1/2}$ and $\hat{\bGamma}=\sqrt{p}\bV\bSigma^{1/2}$ to the above problem.
        The above procedure is also commonly used in the factor analysis literature for estimating factor loadings from regression residuals \citep{bai2003inferential,bing2022inference}.
        The following theorem guarantees that the above estimate of the latent coefficients is provably accurate in recovering the column space of the true coefficients.
    
    \begin{theorem}[Estimation error of $\cP_{\hat{\bGamma}}$]\label{thm:est-error-proj-bGamma}
        Under \Crefrange{asm:model}{asm:latent}, as $n,p\rightarrow\infty$, it holds that
        \begin{align*}
            \|\cP_{\hat{\bGamma}}-\cP_{\bGamma^*}\|_{\oper}= \Op\left(\sqrt{\frac{1}{n\wedge p}}\right),\qquad \max_{1\leq i,j\leq p}|(\cP_{\hat{\bGamma}}-\cP_{\bGamma^*})_{ij}|=\Op\left(\sqrt{\frac{1}{p^2(n\wedge p)}}\right).
        \end{align*}        
    \end{theorem}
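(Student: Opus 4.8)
The plan is to recognize that $\cP_{\hat{\bGamma}}$ and $\cP_{\bGamma^*}$ are the rank-$r$ right-singular projections of two nearby matrices, reduce the claim to a singular-subspace perturbation problem, and then treat the operator-norm bound (routine) and the entrywise bound (delicate) separately.

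\emph{Reduction and error of the latent component.} First I would exploit the Step-1 constraint $\cP_{\bX}\hat{\bW}_0 = \zero$, which forces $\cP_{\bX}\hat{\bE} = \zero$ for $\hat{\bE} := \hat{\bW}_0\hat{\bGamma}_0^{\top}$, so that $\hat{\bE} = \cP_{\bX}^{\perp}\hat{\bTheta}_0$. Writing $\bE^* := \bW^*\bGamma^{*\top}$ and using $\bTheta^* = \bX\bF^{*\top} + \bE^*$ with $\bF^* = \bB^* + \bGamma^*\bD^*$, I obtain $\hat{\bE} - \cP_{\bX}^{\perp}\bE^* = \cP_{\bX}^{\perp}(\hat{\bTheta}_0 - \bTheta^*)$. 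Since $\cP_{\bX}^{\perp}$ is a contraction, \Cref{thm:est-error-bTheta} transfers to $\|\hat{\bE} - \cP_{\bX}^{\perp}\bE^*\|_{\fro} = \cO(\sqrt{(d+r)(n\vee p)})$ and $\max_{j}\|(\hat{\bE}-\cP_{\bX}^{\perp}\bE^*)_{j}\|_2 = \cO(\sqrt{(d+r)n})$. The leftover $\cP_{\bX}\bE^* = \cP_{\bX}\bW^*\bGamma^{*\top}$ is negligible: because $\EE[\bw\bx^{\top}] = \zero$, $\bX^{\top}\bW^*$ is a sum of $n$ mean-zero sub-Gaussian terms, giving $\|\cP_{\bX}\bW^*\|_{\fro} = \Op(\sqrt{dr})$ and hence $\|\cP_{\bX}\bE^*\|_{\fro} \le \|\cP_{\bX}\bW^*\|_{\fro}\|\bGamma^*\|_{\oper} = \Op(\sqrt{drp})$, with each column of order $\sqrt{dr}$. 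Collecting terms, $\bDelta := \hat{\bE} - \bE^*$ satisfies $\|\bDelta\|_{\fro} = \Op(\sqrt{n\vee p})$ and $\max_{j}\|\bDelta_{j}\|_2 = \Op(\sqrt{n})$ for fixed $d,r$.

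\emph{Signal strength and the operator-norm bound.} Since $\cP_{\hat{\bGamma}} = \bV\bV^{\top}$ is the rank-$r$ right-singular projection of $\hat{\bE}$ and $\cP_{\bGamma^*}$ is exactly that of $\bE^*$ (its row space equals the column space of $\bGamma^*$), I would lower bound the signal from \Cref{asm:latent}: factoring $\bE^* = \bW^*\bGamma^{*\top}$ and using $\sigma_{r}(\bA\bB)\ge\sigma_{r}(\bA)\sigma_{r}(\bB)$ for full-rank factors together with $\lambda_r(\bW^{*\top}\bW^*/n)\gtrsim 1$ and $\lambda_r(\bGamma^*\bGamma^{*\top}/p)\gtrsim 1$ yields $\sigma_r(\bE^*)\gtrsim\sqrt{np}$ with high probability, while $\sigma_{r+1}(\bE^*)=0$. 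Wedin's $\sin\Theta$ theorem then gives $\|\cP_{\hat{\bGamma}}-\cP_{\bGamma^*}\|_{\oper} \lesssim \|\bDelta\|_{\oper}/\sigma_r(\bE^*) \le \|\bDelta\|_{\fro}/\sqrt{np} = \Op(\sqrt{(n\vee p)/(np)}) = \Op(1/\sqrt{n\wedge p})$, which is the first claim.

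\emph{Entrywise bound and the main obstacle.} For the entrywise rate I would use the first-order spectral-projector expansion $\cP_{\hat{\bGamma}} - \cP_{\bGamma^*} = \cP_{\bGamma^*}^{\perp}\bDelta^{\top}\tilde{\bU}\tilde{\bSigma}^{-1}\tilde{\bV}^{\top} + \tilde{\bV}\tilde{\bSigma}^{-1}\tilde{\bU}^{\top}\bDelta\,\cP_{\bGamma^*}^{\perp} + \bR_2$, where $\bE^* = \tilde{\bU}\tilde{\bSigma}\tilde{\bV}^{\top}$ is the compact SVD, $\cP_{\bGamma^*} = \tilde{\bV}\tilde{\bV}^{\top}$, and $\bR_2$ collects the quadratic-in-$\bDelta$ remainder together with the SVD rotation-alignment correction. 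Two structural facts control the leading term: the loadings are delocalized, $\max_j(\cP_{\bGamma^*})_{jj} = \cO(1/p)$ (from $\|\bgamma_j^*\|_2\le C$ and $\sigma_r(\bGamma^*)\gtrsim\sqrt{p}$), and $\|\tilde{\bSigma}^{-1}\|_{\oper} = \cO(1/\sqrt{np})$; combining these with the row-delocalization on both index sides reduces the entrywise estimate to controlling $\max_{j}\|\tilde{\bU}^{\top}\bDelta_{j}\|_2$, the worst-case size of a column error of $\bDelta$ after projection onto the $r$-dimensional left-singular space $\tilde{\bU}$. \textbf{This is where the real difficulty lies.} The columnwise bound from \Cref{thm:est-error-bTheta} only gives $\|\bDelta_{j}\|_2 = \Op(\sqrt{n})$, and the naive estimate $\|\tilde{\bU}^{\top}\bDelta_{j}\|_2 \le \|\bDelta_{j}\|_2$ loses a factor of $\sqrt{n\wedge p}$ and produces only the useless rate $\cO(1/p)$. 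I expect to need a leave-one-out (decoupling) argument establishing that each column error $\bDelta_{j}$ is nearly independent of $\tilde{\bU}$ (equivalently of $\bW^*$), so that its projection onto the range of $\tilde{\bU}$ has squared length $\asymp (r/n)\|\bDelta_{j}\|_2^2$ and is balanced across $j$, giving $\max_{j}\|\tilde{\bU}^{\top}\bDelta_{j}\|_2 = \Op(\sqrt{(n\vee p)/p})$. Substituting this, together with $\|\tilde{\bSigma}^{-1}\|_{\oper}\asymp (np)^{-1/2}$ and $\max_j(\cP_{\bGamma^*})_{jj}^{1/2}\asymp p^{-1/2}$, yields $\max_{i,j}|(\cP_{\hat{\bGamma}}-\cP_{\bGamma^*})_{ij}| = \Op\!\big(\tfrac{1}{p}\cdot\tfrac{1}{\sqrt{n\wedge p}}\big) = \Op(\sqrt{1/(p^2(n\wedge p))})$, after verifying that the transpose term obeys the same bound and that $\bR_2$ (dominated by $\cP_{\bGamma^*}^{\perp}\bDelta^{\top}\bDelta\,\tilde{\bV}\tilde{\bSigma}^{-2}\tilde{\bV}^{\top}$) contributes only lower-order entries.
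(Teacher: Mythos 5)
Your reduction to the perturbation $\bDelta=\hat{\bE}-\bE^*$ with $\|\bDelta\|_{\fro}=\Op(\sqrt{n\vee p})$ and $\max_j\|\bDelta_j\|_2=\Op(\sqrt{n})$ matches the paper's first step, and your operator-norm bound via Wedin's $\sin\Theta$ theorem is a correct alternative to what the paper does: the paper instead introduces the rotation-aligned target $\tilde{\bGamma}=\bGamma^*\bH_0$ with $\cP_{\tilde{\bGamma}}=\cP_{\bGamma^*}$ exactly, derives $\|\hat{\bGamma}-\tilde{\bGamma}\|_{\oper}=\Op(\sqrt{(n\vee p)/n})$ from the closed-form identity $\hat{\bGamma}=(n\sqrt{p})^{-1}\hat{\bE}^{\top}\hat{\bE}\bV\bSigma^{-3/2}$, and then expands $\cP_{\hat{\bGamma}}-\cP_{\tilde{\bGamma}}$ by a three-term triangle inequality. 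Either route delivers $\Op(1/\sqrt{n\wedge p})$ for the operator norm.

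The genuine gap is in your entrywise bound. You correctly identify that the naive estimate $\|\tilde{\bU}^{\top}\bDelta_j\|_2\le\|\bDelta_j\|_2=\Op(\sqrt{n})$ loses a factor of $\sqrt{n\wedge p}$, but your proposed fix --- a leave-one-out argument showing $\bDelta_j$ is nearly independent of $\tilde{\bU}$ so that $\|\tilde{\bU}^{\top}\bDelta_j\|_2^2\asymp(r/n)\|\bDelta_j\|_2^2$ --- is only asserted, and it is the entire difficulty. Since $\hat{\bTheta}_0$ is a joint maximum-likelihood fit over all $p$ columns, the column error $\bDelta_j$ is statistically entangled with the estimated factor space (equivalently with $\bW^*$) through the shared factors, so the decoupling is not automatic; making it rigorous would require a leave-one-out analysis of the alternating-maximization estimator that is a substantial piece of work absent from your sketch. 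The paper avoids this entirely: it never projects $\bDelta_j$ onto $\tilde{\bU}$. Instead, the small entrywise rate is assembled from $\|(\tilde{\bGamma}^{\top}\tilde{\bGamma})^{-1}\|_{\oper}=\cO(1/p)$ together with the delocalization $\max_j\|\tilde{\bgamma}_j\|_2=\cO(1)$ (supplying the $1/p$ factor on both index sides) and the operator-norm error $\|\hat{\bGamma}-\tilde{\bGamma}\|_{\oper}$ (supplying the $1/\sqrt{n\wedge p}$ factor), so only norms that are already controlled by the global Frobenius/operator bounds on $\bDelta$ are ever needed. If you want to complete your route, you would either have to carry out the leave-one-out analysis or switch to an argument of the paper's type in which the column index enters only through $\tilde{\bgamma}_j$ rather than through $\bDelta_j$.
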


    \Cref{thm:est-error-proj-bGamma} implies that the image of $\bGamma^*$ can be estimated well by $\hat{\bGamma}$.
    Furthermore, the column-wise error decays at a fast rate.
    The precise error control of each column individually enables us to disentangle the intricate relationships within the confounder-adorned high-dimensional dataset.
    However, these do not directly extend to error control for the latent factors $\bZ^*$ or the latent coefficients $\bGamma^*$ themselves.

    Under multivariate linear models, \citet[Theorem 4]{bing2022inference} show the concentration of the latent coefficients $\max_{1\leq j\leq p}\|\bgamma_j^*-\hat{\bgamma}_j\|_2$.
    Their results rely on the special structure of the regression problem $\bY_j= \bX(\bb_j^*+\bD^{*\top}\bgamma_j^*)+\bepsilon_j$, where the regression coefficient can be decomposed into a sparse component $\bb_j^*$ and a dense component $\bD^{*\top}\bgamma_j^*$.
    By using the lava estimator \citep{chernozhukov2017lava}, they can derive the estimation error of the residual $\bepsilon_j$, from whose covariance structure, $\|\bgamma_j^*-\hat{\bgamma}_j\|_2$ can be further bounded.
    In the generalized linear model setting, we don't have the flexibility to utilize the additive noises' covariance structure to directly estimate $\bgamma_j$ well.
    Instead, we need to rely on joint maximum likelihood estimation to estimate them, as we illustrate below.

    \subsection{Estimation of latent factors and direct effects}\label{subsec:est-direct}
        Once the column space of confounding effects becomes distinguishable, the subsequent phase entails retrieving direct effects by mitigating the influence of confounding variables and solving optimization problem \eqref{opt:4}.
        Through this, we can simultaneously obtain the estimates of latent factors $\bZ^*$ and direct effects $\bB^*$.
        In the high-dimensional scenarios when $p$ can be larger than $n$, one natural approach to estimate the sparse coefficient $\bB^*$ is via $\ell_1$-regularization, as employed in the optimization problem \eqref{opt:2}, which aims to obtain a sparse and consistent estimator $\hat{\bB}$ by $\ell_1$-regularization while simultaneously removing the unmeasured effects.
        Next, we analyze the properties of the two estimators $\hat{\bZ}$ and $\hat{\bB}$ in turn.

        \noindent\textbf{Latent factors.}
        As previously alluded to, estimating latent factor $\bZ^*$ demands special consideration.
        To bypass the technical difficulty, we will use the estimation errors of $\hat{\bTheta}_0$ and $\cP_{\hat{\bGamma}}$ provided by \Cref{thm:est-error-bTheta} and \Cref{thm:est-error-proj-bGamma}, respectively, coupled with one extra identifiability condition for the latent factors $\bZ^*$ and their coefficients $\bGamma^*$.
        From \citet[Proposition 5.1]{lin2021exponential}, there exists an invertible matrix $\bR\in\RR^{r\times r}$ with bounded operator norm, such that $\Var(\bR\bz_1^*)$ and $\bR^{-\top}\bgamma_1^*\bgamma_1^{*\top}\bR^{-1}$ are the same diagonal matrix.
        The following assumption from \citet{lin2021exponential} restricts the spacing of $\Var(\bR\bz_1^*)$'s eigenvalues.
        
        \begin{assumption}[Identifiability of latent factors]\label{asm:iden-latent-eigs}
            Assume there exists positive numbers $c_1\leq c_2$ and $1 < k_1\leq k_2$ such that for all $\ell \in\{ 1,\ldots , r \}$, the eigenvalues of $\Var(\bR\bz_1^*)$ satisfy $c_1 \ell^{-k_1} \leq \lambda_{\ell}\leq c_2 \ell^{-k_1}$, and $\lambda_{\ell}-\lambda_{\ell+1} \geq c_1 \ell^{-k_2} $, with the convention that $\lambda_{r + 1}= 0$.
        \end{assumption}
    
        Intuitively, \Cref{asm:iden-latent-eigs} guarantees that $\bZ^*\bR^{\top}$ can be recovered up to sign from the matrix product $\bZ^*\bGamma^{*\top}$.
        This implies that, if one can consistently estimate $\bZ^*\bGamma^{*\top}$ with $\hat{\bZ}\hat{\bGamma}^{\top}$, then $\bZ^*$ and $\bGamma^{*}$ can also be consistently estimated by appropriate transformations of $\hat{\bZ}$ and $\hat{\bGamma}$, respectively.
        A simple consequence from \citet[Proposition 5.2]{lin2021exponential}, coupled with \Cref{thm:est-error-bTheta,thm:est-error-proj-bGamma}, is the following error bound on the columns of latent components.
        
        \begin{corollary}[Estimation of latent components]\label{cor:est-confound-col}
            Let $\hat{\bGamma}$ and $\hat{\bZ}$ be solutions to optimization problems \eqref{eq:W-Gamma} and \eqref{opt:2}, respectively.
            Under \Crefrange{asm:model}{asm:iden-latent-eigs}, suppose $\min_{\{\bB\in\RR^{p\times d}\mid \cP_{\hat{\bGamma}}\bB=\zero\}}$ $\cL(\bX\bB^{\top} + \hat{\bZ}\hat{\bGamma}^{\top}) \leq \cL(\bTheta^*)$ with probability tending to one.
            Then, as $n,p\rightarrow\infty$, it holds that,
            \[\max_{1\leq j\leq p} \frac{1}{\sqrt{n}}\|\hat{\bZ}\hat{\bgamma}_j - \bZ^*\bgamma_j^*\|_2 =  \Op\left(\sqrt{\frac{\log n}{n}} \vee \sqrt{\frac{r^{4k_2-k_1+4} \log n}{n\wedge p }}\right). \]
        \end{corollary}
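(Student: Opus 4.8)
\emph{Reduction to a product-level, rotation-invariant target.} My plan is to avoid estimating $\bZ^*$ and $\bGamma^*$ separately and instead control the product $\hat\bZ\hat\bGamma^\top$ directly, exploiting the orthogonality constraint built into \eqref{opt:2}. Since any solution satisfies $\cP_{\hat\bGamma}\hat\bB=\zero$, equivalently $\hat\bGamma^\top\hat\bB=\zero$, right-multiplying $\hat\bTheta=\bX\hat\bB^\top+\hat\bZ\hat\bGamma^\top$ by $\hat\bGamma$ annihilates the covariate part and yields the exact identity $\hat\bZ\hat\bGamma^\top=\hat\bTheta\,\cP_{\hat\bGamma}$. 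On the population side, \Cref{prop:iden-B} gives $\cP_{\bGamma^*}\bB^*=\bo(1)$, so $\bTheta^*\cP_{\bGamma^*}=\bX\bB^{*\top}\cP_{\bGamma^*}+\bZ^*\bGamma^{*\top}=\bZ^*\bGamma^{*\top}+\text{(negligible)}$, the negligible piece being of Frobenius order $\sqrt p\,\|\bB^*\|_{1,1}/\tau_p=\cO(ds/\sqrt p)$ and hence $o(1)$ per column after dividing by $\sqrt n$. This reduces the target $\frac1{\sqrt n}\|\hat\bZ\hat\bgamma_j-\bZ^*\bgamma_j^*\|_2$ to controlling the $j$-th column of $\hat\bTheta\,\cP_{\hat\bGamma}-\bTheta^*\cP_{\bGamma^*}$, which is invariant to the rotation ambiguity in the factorization.

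\emph{Telescoping and the crude column-wise bounds.} Next I would telescope
\[
\hat\bTheta\,\cP_{\hat\bGamma}-\bTheta^*\cP_{\bGamma^*}
=(\hat\bTheta-\bTheta^*)\,\cP_{\hat\bGamma}+\bTheta^*\,(\cP_{\hat\bGamma}-\cP_{\bGamma^*}),
\]
and bound the two columns separately. For the first term I use $\|[(\hat\bTheta-\bTheta^*)\cP_{\hat\bGamma}]_{\cdot j}\|_2\le\|\hat\bTheta-\bTheta^*\|_{\oper}\,\|(\cP_{\hat\bGamma})_{\cdot j}\|_2$, where $\|(\cP_{\hat\bGamma})_{\cdot j}\|_2^2=(\cP_{\hat\bGamma})_{jj}=\cO(1/p)$ under \Cref{asm:latent} (since $\|\bgamma_j^*\|_2\le C$ and $\lambda_r(\bGamma^*\bGamma^{*\top})\asymp p$), and the operator norm is controlled through the Frobenius bound of \Cref{thm:est-error-bTheta}. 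For the second term I use $\|[\bTheta^*(\cP_{\hat\bGamma}-\cP_{\bGamma^*})]_{\cdot j}\|_2\le\|\bTheta^*\|_{\oper}\,\|(\cP_{\hat\bGamma}-\cP_{\bGamma^*})_{\cdot j}\|_2$, with $\|\bTheta^*\|_{\oper}=\Op(\sqrt{np})$ and the column norm of the projection difference bounded by $\sqrt p$ times the entry-wise rate of \Cref{thm:est-error-proj-bGamma}. After dividing by $\sqrt n$, both contributions collapse to $\Op(\sqrt{1/(n\wedge p)})$, which already recovers the correct dependence on $n\wedge p$ but neither the explicit $r$-power nor the logarithmic factor.

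\emph{Sharpening via factor recovery.} To upgrade this into the stated bound, I would invoke \citet[Proposition 5.2]{lin2021exponential} under \Cref{asm:iden-latent-eigs}. The role of that assumption is that the canonical rotation $\bR$ makes $\Var(\bR\bz_1^*)$ and $\bR^{-\top}\bgamma_1^*\bgamma_1^{*\top}\bR^{-1}$ the same diagonal matrix with a prescribed, polynomially decaying and separated spectrum; this lets a Davis--Kahan-type perturbation argument convert the product-level and projection-level errors of \Cref{thm:est-error-bTheta,thm:est-error-proj-bGamma} into column-wise control of the aligned factors, with the internal rotation canceling in the product. Each inversion of an eigenvalue ($\asymp\ell^{-k_1}$) or an eigengap ($\asymp\ell^{-k_2}$) contributes a power of $r$, whose accumulation through the nested perturbation bounds yields the exponent $4k_2-k_1+4$; meanwhile a maximal/Bernstein inequality over the $n$ rows, used to pass from an $n$-dimensional average to its fluctuation uniformly in $j$, produces the $\sqrt{\log n/n}$ term, whence the final rate is the maximum of the two.

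\emph{Main obstacle.} I expect the decisive difficulty to be precisely this last step: obtaining \emph{column-wise} rather than Frobenius control while simultaneously tracking the rotation alignment under a spectrum that is only polynomially separated. The crude $\|\cdot\|_{\oper}\le\|\cdot\|_{\fro}$ substitution used above is too lossy to deliver the logarithmic refinement, so the sharp argument must control the relevant directional fluctuations of $\hat\bTheta-\bTheta^*$ and of $\cP_{\hat\bGamma}-\cP_{\bGamma^*}$ against the eigengaps of $\Var(\bR\bz_1^*)$. Keeping the $r$-dependence explicit through the chain of perturbation inequalities, rather than absorbing it into constants, is the delicate bookkeeping that drives the final exponent $4k_2-k_1+4$.
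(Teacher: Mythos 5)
Your first two paragraphs track the paper's opening move exactly: the identity $\hat{\bZ}\hat{\bGamma}^{\top}=\hat{\bTheta}\cP_{\hat{\bGamma}}$ forced by the constraint $\cP_{\hat{\bGamma}}\hat{\bB}=\zero$, the telescoping into $(\hat{\bTheta}-\bTheta^*)\cP_{\hat{\bGamma}}+\bTheta^*(\cP_{\hat{\bGamma}}-\cP_{\bGamma^*})$, and the inputs from \Cref{thm:est-error-bTheta,thm:est-error-proj-bGamma}. The problem is your third paragraph, which rests on reading the target inequality backwards. The claimed rate satisfies $\sqrt{r^{4k_2-k_1+4}\log n/(n\wedge p)}\gtrsim \sqrt{1/(n\wedge p)}$, so the $\Op(1/\sqrt{n\wedge p})$ you extract from the ``crude'' column-wise argument is already \emph{stronger} than what the corollary asserts; the $r$-power and the logarithm are slack to be absorbed, not refinements to be earned. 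There is nothing to ``upgrade,'' and the Davis--Kahan/eigengap program you sketch --- and then concede you cannot complete --- is chasing a phantom obstacle. Had you stopped at the end of your second paragraph, after verifying that $(\cP_{\hat{\bGamma}})_{jj}=\Op(1/p)$ uniformly in $j$ (which follows from \Cref{asm:latent} together with the entrywise bound of \Cref{thm:est-error-proj-bGamma}) and after accounting for the term $\bX\bB^{*\top}\cP_{\bGamma^*}$ that your reduction discards --- its $j$-th column is $\cO(\sqrt{nd}\,s/p)$, so fitting it inside the rate needs $s\sqrt{d(n\wedge p)}\lesssim p$, a loose end the paper's own proof also leaves implicit --- you would have had a complete and arguably cleaner argument.

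For comparison, the paper takes a less direct route, which is exactly where its extra factors come from: it first establishes the \emph{Frobenius} bound $\|\hat{\bZ}\hat{\bGamma}^{\top}-\bZ^*\bGamma^{*\top}\|_{\fro}=\Op(\sqrt{n\vee p})$ from the same telescoping, then imports \citet[Proposition 5.2]{lin2021exponential} wholesale to get $n^{-1}\|\hat{\bZ}-\bZ^*\bR^{\top}\|_{\fro}^2\lesssim r^{4k_2-k_1+4}/(n\wedge p)$ --- the exponent $4k_2-k_1+4$ is taken ready-made from that proposition, not rederived through a chain of nested perturbation bounds as you propose --- and only then splits column-wise via $\hat{\bZ}\hat{\bgamma}_j-\bZ^*\bgamma_j^*=\tilde{\bZ}^*(\hat{\bgamma}_j-\tilde{\bgamma}_j^*)+(\hat{\bZ}-\tilde{\bZ}^*)\hat{\bgamma}_j$ with $\tilde{\bZ}^*=\bZ^*\bR^{\top}$, the $\log n$ arising from $\max_{i}\|\tilde{\bz}_i^*\|_{\infty}$ over the $n$ sub-Gaussian rows. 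Your rotation-invariant, product-level reduction bypasses factor recovery entirely, which is precisely why it yields a bound free of those factors; recognize that as a feature rather than a deficiency.
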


        It's important to note that, unlike the analysis for linear models in prior work \citep{bing2022adaptive,bing2022inference} that projects the responses onto the orthogonal column space of $\hat{\bGamma}$ and removes the effects of latent factors $\bZ$, estimating $\bZ$ is unavoidable under generalized linear models.
        In \Cref{cor:est-confound-col}, we require the joint maximum likelihood based on the estimated latent components to be higher than the likelihood evaluated at the truth.
        This requires the estimated latent components derived from \eqref{opt:2} to exhibit stability and ensures that the maximum likelihood with the estimated latent factors remains close to the joint maximum likelihood from \eqref{opt:1}.     
        In the presence of nuisance parameters $\bZ^*$ and $\bGamma^*$, the sharp control on estimation error of the column $\bZ^*\bgamma_j^*$ provided by \Cref{cor:est-confound-col} helps control the estimation error of $\hat{\bB}$.

        \noindent\textbf{Direct effects.}
        In high-dimensional scenarios, controlling the estimation error of $\cP_{\bGamma^*}\bB^*$ requires the projection $\cP_{\bGamma^*}$ does not excessively densify the primary effects $\bB^*$.
        To this end, we require the ratio $\|\cP_{\bGamma^*}\bB^*\|_{1,1}/\|\cP_{\bGamma^*}\bB^*\|_{\fro}$ to be of smaller order than $\sqrt{p}$.
        Coupled with the previous assumptions and results, the estimation error of $\hat{\bB}$ returned by problem \eqref{opt:2} can be controlled.

    \begin{theorem}[Estimation error of $\hat{\bB}$]\label{thm:est-err-B}
        Suppose the assumptions in \Cref{cor:est-confound-col} hold and $\|\cP_{\bGamma^*}\bB^*\|_{1,1}= \cO(p^{k/2}\|\cP_{\bGamma^*}\bB^*\|_{\fro})$ for some constant $k\in[0,1)$.
        Then, as $n,p\rightarrow\infty$ such that $\sqrt{n}/\log(nd) = o(p)$ and $\log (p) = o(n)$, the estimate $\hat{\bB}$ of optimization problem \eqref{opt:2} with $\lambda \asymp 8\nu^2\log(nd)n^{-1/2}$ satisfies that
        \begin{align*}
            \|\hat{\bB}-\bB^*\|_{\fro}=\Op( r_{n,p}),\qquad \|\hat{\bB} - \bB^*\|_{1,1} = 
            \Op(r_{n,p}'),
        \end{align*}
        where the sequences $r_{n,p}$ and $r_{n,p}'$ are defined as:
        \begin{align*}
            r_{n,p}&:=
            \sqrt{\frac{(sd\log^2(nd))\vee\log(np)}{n}}  + \sqrt{\frac{n^{1/2}}{(n\wedge p)^{3/2}\log(nd)}} +\sqrt{\frac{sd}{n\wedge  p^{1-k}}}, \\
            r_{n,p}' &:=  \sqrt{sd}\ r_{n,p} + \frac{\sqrt{n}}{(n\wedge p)\log(nd)}.
        \end{align*}
    \end{theorem}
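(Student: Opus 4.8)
The plan is to treat optimization problem \eqref{opt:2} as a constrained, $\ell_1$-penalized M-estimation problem and to combine the standard restricted-strong-convexity argument for the lasso with the sharp column-wise control of the estimated latent components from \Cref{cor:est-confound-col}. First I would record the local curvature of the loss: since $A$ is thrice continuously differentiable and strictly convex, on the bounded region $\cR_C$ its second derivative obeys $0<\kappa_-\le A''(\theta)\le \kappa_+<\infty$. Because $\nabla^2\cL$ is diagonal with entries $A''(\theta_{ij})/(np)$, this yields a uniform quadratic lower bound $\cL(\bTheta)-\cL(\bTheta^*)-\langle\nabla\cL(\bTheta^*),\bTheta-\bTheta^*\rangle\ge \frac{\kappa_-}{2np}\|\bTheta-\bTheta^*\|_{\fro}^2$ valid for every feasible $\bTheta\in\cR_C^{n\times p}$, which will convert likelihood gaps into Frobenius error.

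Next I would exploit the structural orthogonality that the feasibility constraint $\cP_{\hat{\bGamma}}\bB=\zero$ (equivalently $\hat{\bGamma}^{\top}\hat{\bB}=\zero$) enforces. Right-multiplying the natural-parameter error by $\cP_{\hat{\bGamma}}^{\perp}$ annihilates the latent block, since $\hat{\bTheta}\cP_{\hat{\bGamma}}^{\perp}=\bX\hat{\bB}^{\top}$, and produces the identity
\begin{align*}
\bX(\hat{\bB}-\bB^*)^{\top}=(\hat{\bTheta}-\bTheta^*)\cP_{\hat{\bGamma}}^{\perp}-\bX\bB^{*\top}\cP_{\hat{\bGamma}}+\bZ^*\bGamma^{*\top}\cP_{\hat{\bGamma}}^{\perp}.
\end{align*}
The three summands are exactly the three error sources. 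The last two are pure nuisance: $\bGamma^{*\top}\cP_{\hat{\bGamma}}^{\perp}=\bGamma^{*\top}(\cP_{\bGamma^*}-\cP_{\hat{\bGamma}})$ and $\bB^{*\top}\cP_{\hat{\bGamma}}=\bB^{*\top}\cP_{\bGamma^*}+\bB^{*\top}(\cP_{\hat{\bGamma}}-\cP_{\bGamma^*})$, so they are governed by $\|\cP_{\hat{\bGamma}}-\cP_{\bGamma^*}\|_{\oper}=\Op((n\wedge p)^{-1/2})$ from \Cref{thm:est-error-proj-bGamma} together with the identification bias $\cP_{\bGamma^*}\bB^*$, whose size is pinned down by \Cref{prop:iden-B} and the hypothesis $\|\cP_{\bGamma^*}\bB^*\|_{1,1}=\cO(p^{k/2}\|\cP_{\bGamma^*}\bB^*\|_{\fro})$; this is what produces the $\sqrt{sd/(n\wedge p^{1-k})}$ term of $r_{n,p}$.

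For the leading estimation term I would run the usual lasso argument. The score $\nabla_{\bB}\cL(\bTheta^*)$, proportional to $\bX^{\top}(\bY-A'(\bTheta^*))$, has mean-zero sub-exponential entries on $\cR_C$, so a maximal inequality over its $nd$ coordinates gives an $\ell_\infty$-bound $\lesssim \nu^2\sqrt{\log(nd)/n}$ with high probability and justifies the penalty level $\lambda\asymp 8\nu^2\log(nd)n^{-1/2}$. With $\lambda$ dominating the score, the basic inequality obtained by comparing $\hat{\bB}$ to the feasible reference $\cP_{\hat{\bGamma}}^{\perp}\bB^*$ forces the cone condition $\|(\hat{\bB}-\bB^*)_{S^c}\|_{1,1}\le 3\|(\hat{\bB}-\bB^*)_{S}\|_{1,1}$ on the support $S$ of $\bB^*$ (of size at most $sd$). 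A restricted-eigenvalue bound for $\bX^{\top}\bX/n$, available from the sub-Gaussian covariate \Cref{asm:covariate}, then converts the Frobenius curvature into the $\ell_2$- and $\ell_1$-bounds; the conversion $\|\cdot\|_{1,1}\le\sqrt{sd}\,\|\cdot\|_{\fro}$ on the cone explains the $\sqrt{sd}\,r_{n,p}$ factor in $r_{n,p}'$, while the dense nuisance contributions do not compress and leave the residual additive term.

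The main obstacle is the jointly-estimated nuisance latent factors. Unlike the Gaussian linear case, where one simply projects the response onto the orthogonal complement of $\hat{\bGamma}$ and eliminates $\bZ$ exactly, here $\hat{\bZ}$ enters nonlinearly through the GLM link and cannot be projected away; in the basic inequality it couples with $\hat{\bB}$ through $\hat{\bTheta}-\bTheta^*$. Controlling this coupling is precisely where the sharp column-wise bound $\max_{j}n^{-1/2}\|\hat{\bZ}\hat{\bgamma}_j-\bZ^*\bgamma_j^*\|_2$ of \Cref{cor:est-confound-col} is indispensable: it supplies the remaining $\sqrt{n^{1/2}/((n\wedge p)^{3/2}\log(nd))}$ contribution and keeps $\hat{\bTheta}$ inside $\cR_C$ via a union bound over the $np$ entries (the source of the $\log(np)/n$ alternative inside the first term of $r_{n,p}$). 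The delicate part of the argument is balancing these three heterogeneous error sources—lasso variance, latent-factor estimation, and identification bias—so that they combine into the stated composite rates without any one of them dominating spuriously.
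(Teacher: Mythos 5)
Your overall architecture --- curvature bounds on $A''$, a score bound justifying the choice of $\lambda$, a cone condition, restricted strong convexity, and special treatment of the latent nuisance via \Cref{cor:est-confound-col} --- is the same as the paper's proof in \Cref{app:subsec:thm:est-err-B}. However, two of your steps, as stated, would not go through. First, the \emph{exact} cone condition $\|(\hat{\bB}-\bB^*)_{\cS^c}\|_{1,1}\le 3\|(\hat{\bB}-\bB^*)_{\cS}\|_{1,1}$ is not available here. Your comparison point $\cP_{\hat{\bGamma}}^{\perp}\bB^*$ is feasible but is not $sd$-sparse (it differs from $\bB^*$ by the dense matrix $\cP_{\hat{\bGamma}}\bB^*$), and $(\hat{\bB},\hat{\bZ})$ solves \eqref{opt:2} with the \emph{estimated} $\hat{\bGamma}$, so the basic inequality against the truth only holds up to a slack. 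The paper first proves the approximate optimality condition of \Cref{lem:optimality}, $\cL(\hat{\bbeta},\hat{\bzeta})+\lambda\|\hat{\bbeta}\|_1\le\cL(\bbeta^*,\bzeta^*)+\lambda\|\bbeta^*\|_1+\tau_{n,p}$ with $\tau_{n,p}=\cO((n\wedge p)^{-1}+\lambda\sqrt{(sd)^2/(n\wedge p^{1-k})})$ --- using $\|\cP_{\hat{\bGamma}}^{\perp}\bGamma^*\|_{\fro}^2=\cO(r/(n\wedge p))$ and the $p^{k/2}$ hypothesis --- and only then derives an \emph{approximate} cone $\|\bDelta_{\beta,\cS^c}\|_1\le 3\|\bDelta_{\beta,\cS}\|_1+\tau^*_{n,p}$. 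That additive slack is not cosmetic: it is precisely what generates the second term of $r_{n,p}$ and the $\sqrt{n}/((n\wedge p)\log(nd))$ term of $r'_{n,p}$, neither of which an exact-cone argument produces.

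Second, a restricted-eigenvalue bound for $\bX^{\top}\bX/n$ alone does not deliver the lower bound, because the first-order approximation error of $\cL$ couples $\bDelta_{\beta}$ with the latent-component error $\hat{\bZ}\hat{\bgamma}_j-\bZ^*\bgamma_j^*$ through the nonlinearity of $A'$. The paper's \Cref{lem:rsc} establishes strong convexity only over an \emph{augmented} cone that additionally constrains $n^{-1}\|\hat{\bZ}-\bZ^*\bR^{\top}\|_{\fro}^2\le\eta_n$, and the resulting bound carries a penalty $-C\sqrt{n^{-1}\vee\eta_n}\,\|\bDelta_{\beta}\|_1$ (the Neyman-orthogonality term of \Cref{remark:neyman-ortho}); this is where \Cref{cor:est-confound-col} actually enters, feeding the $\sqrt{sd/(n\wedge p)}$ portion of the last term of $r_{n,p}$ --- not, as you suggest, the $\sqrt{n^{1/2}/((n\wedge p)^{3/2}\log(nd))}$ term, which traces back to the optimality slack above. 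Relatedly, your projection identity $\bX(\hat{\bB}-\bB^*)^{\top}=(\hat{\bTheta}-\bTheta^*)\cP_{\hat{\bGamma}}^{\perp}-\bX\bB^{*\top}\cP_{\hat{\bGamma}}+\bZ^*\bGamma^{*\top}\cP_{\hat{\bGamma}}^{\perp}$ is algebraically correct but cannot drive the rate: combined with $\|\hat{\bTheta}-\bTheta^*\|_{\fro}=\Op(\sqrt{n\vee p})$ from \Cref{thm:est-error-bTheta} it yields only $\|\hat{\bB}-\bB^*\|_{\fro}=\Op(\sqrt{(n\vee p)/n})$, which does not even vanish when $p\gg n$, so the likelihood-based cone/RSC route is indispensable rather than a refinement.
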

    In \Cref{thm:est-err-B}, the parameter $k$ captures the deviation of the projected $\ell_1$-norm $\|\cP_{\bGamma^*}^{\perp}\bB^*\|_{1,1}$ from $\|\bB^*\|_{1,1}$.
    The smaller $k$, the more information of $\bB^*$ is retained after projection, and the signal-noise-ratio is larger.
    In the high-dimensional scenarios when $n<p$, the $\ell_1$-norm and $\ell_2$-norm of the estimation error scale in $\Op(n^{-1/2}\wedge p^{{(k-1)}/{2}})$ when ignoring lower order factors.
    The appearance of the response dimension $p$ in the denominator reflects the blessing of dimensionality; namely, having more responses than the sample size is not detrimental to consistency, provided that $p$ does not grow exponentially larger than $n$, as we numerically demonstrate in \Cref{fig:est-err}.
    
    To prove \Cref{thm:est-err-B}, we establish the (approximate) optimal condition for ($\hat{\bB},\hat{\bZ},\hat{\bGamma})$ from the two-step procedure to the joint optimization problem \eqref{opt:3}, as shown in \Cref{lem:optimality}.
    This relies on the optimality condition of optimization problem \eqref{opt:2} and the convergence rate of $\cP_{\hat{\bGamma}}^{\perp}$ provided by \Cref{thm:est-error-proj-bGamma}.
    It then allows us to establish the cone condition, obtain the upper and lower bounds of the first-order approximation error of the loss function, and derive the error rate in \Cref{app:est-direct}.
    Compared to double machine learning in the presence of high-dimensional nuisance parameters \citep{chernozhukov2018plug,chernozhukov2018double}, our estimation procedure does not require sample splitting. 
    To establish consistency, we only need the convergence rate of $\max_{1\leq j\leq p}\|\bZ^*\bgamma^*_j-\hat{\bZ}\hat{\bgamma}_j\|_2/\sqrt{n}$ to be the parametric rate $(n\wedge p )^{-1/2}$, as shown in the proof of \Cref{thm:est-err-B}; see also \Cref{remark:neyman-ortho} for discussion on the connection to Neyman orthogonality.
    However, to derive the asymptotic distribution for inference, one may need more stringent conditions or sample splitting, as illustrated next.

\section{Inference}\label{sec:inference}

    \subsection{Projected and weighted bias correction}
    When evaluating uncertainty in high-dimensional inference, confidence intervals and statistical hypothesis tests are required.
    After obtaining the initial estimate $\hat{\bB}$, we need to remove the bias caused by $\ell_1$-regularization to have valid inferences on the estimated coefficients. 
    Without loss of generality, we focus on testing the coefficients of the first covariate $b_{j1}$ for $j=1,\ldots,p$.
    We consider the following debiased estimator for each of them:
    \begin{align}
        \hat{b}_{j1}^{\de} &= \hat{b}_{j1} + {\bu}^{\top}\frac{1}{n}\sum_{i=1}^n  \bx_{i} (\by_{i} - A'(\hat{\btheta}_{i}))^{\top} \bv_i,\label{eq:debias}
    \end{align}
    where $\hat{\bTheta} :=\bX\hat{\bB}^{\top}+\hat{\bZ}\hat{\bGamma}^{\top}$ is the estimated natural parameter matrix, and $\bu\in\RR^d$ and $\bv_i\in\RR^p$ are projection vectors to be specified later, such that the correction term $n^{-1} \bu^{\top}  \sum_{i=1}^n\bx_{i} (\by_{i} - A'(\hat{\btheta}_{i}))^{\top}\bv_{i}$ is a reasonable estimate of the bias $b_{j1}^*-\hat{b}_{j1}$.
    
    By Taylor expansion of $A'(\theta_{ij}^*)$ at $ \hat{\theta}_{ij} := \bx_i^{\top}\hat{\bb}_j + \hat{\bz}_i^{\top}\hat{\bgamma}_j$, we have
    \begin{align*}
        A'(\theta_{ij}^*) &= A'(\hat{\theta}_{ij}) + A''(\hat{\theta}_{ij}) ( \theta_{ij}^*-\hat{\theta}_{ij}) + \frac{1}{2}A'''(\psi_{ij}) (\theta_{ij}^* - \hat{\theta}_{ij})^2,
    \end{align*}
    for some $\psi_{ij}$ between $\hat{\theta}_{ij}$ and $\theta_{ij}^*$.
    Then, the residual of the $i$th sample can be decomposed into three sources of errors:
    \begin{align}
        \by_{i} - A'(\hat{\btheta}_{i})=&\underbrace{\bepsilon_{i}}_{\text{stochastic error}} + \underbrace{\bp_i}_{\text{remaining bias}}   + \underbrace{\bq_{i}}_{\text{ approximation error}} \label{eq:error-decomp}
    \end{align}
    where the three terms of errors are given by
    \begin{align*}
        \bepsilon_{i} &= \by_{i}- A'(\btheta_{i}^*)\\
        \bp_i &=A''(\hat{\btheta}_{i})\odot(\hat{\btheta}_{i}-\btheta_{i}^* )  \\
        \bq_i &= -\frac{1}{2}[A'''(\psi_{ij})(\theta_{ij}^* - \hat{\theta}_{ij})^2]_{1\leq j\leq p}.
    \end{align*}
    If we let $\bv_i={\omega}_{i}\diag(A''(\hat{\btheta}_i))^{-1}\cP_{\hat{\bGamma}}^{\perp}\be_j $ where $\omega_{i}$ is the sample-specific weight and $\be_j\in\RR^{p}$ is the unit vector with $j$th entry being one, then substituting \eqref{eq:error-decomp} into \eqref{eq:debias} yields that
    \begin{align}
        \hat{b}_{j1}^{\de} - b_{j1}^* &=    (\hat{b}_{j1} - b_{j1}^*) + \bu^{\top}\frac{1}{n}\sum_{i=1}^n \bx_{i}\bepsilon_{i}^{\top}\bv_i + \bu^{\top}\frac{1}{n}\sum_{i=1}^n \bx_{i}\bp_{i}^{\top}\bv_i  + \bu^{\top}\frac{1}{n}\sum_{i=1}^n \bx_{i}\bq_{i}^{\top}\bv_i \notag\\
        &= \bu^{\top}\frac{1}{n}\sum_{i=1}^n \bx_{i}\bepsilon_{i}^{\top}\bv_i+ \left(\bu^{\top}\frac{1}{n}\sum_{i=1}^n \omega_{i}\bx_{i} \bx_{i}^{\top} -\be_1^{\top}\right) (\bb_{j}^* - \hat{\bb}_{j}) + \mathrm{Rem}. \label{eq:diff-debias}
    \end{align}
    The estimation error rates provided by \Cref{thm:est-error-proj-bGamma,thm:est-err-B} guarantee that $\|\bb_j^*-\hat{\bb}_j\|_1=\Op(r_{n,p}')$ and the remaining term is $\mathrm{Rem}=\Op(\max_{1\leq i\leq n}|\bu^{\top}\bx_i|^3 r_{n,p}^{2})$ for $r_{n,p}$ and $r_{n,p}'$ defined in \Cref{thm:est-err-B}.
    Based on \eqref{eq:diff-debias}, the idea of debiasing is to choose $\bu$ and $\omega_{i}$'s such that the second term and the remaining term is of order $\op(n^{-1/2})$, while enabling the convergence of the average of primary stochastic errors to a normal distribution by central limit theorem.

    To facilitate our theoretical analysis, suppose we split the dataset into two parts $\cD_1=\{(\bx_{i},\by_{i}),1\leq i\leq n\}$ and $\cD_2=\{(\bx_{i},\by_{i}),n+1\leq i\leq 2n\}$, where $\cD_2$ is used to obtain the estimates $\hat{\bB}$ and $\hat{\bGamma}$, and $\cD_1$ is used to remove the bias for $\hat{\bB}$ induced by $\ell_1$-regularization.
    There are also latent factors $\{\bz_i^*\}_{i=1}^n$ and $\{\bz_i^*\}_{i=n+1}^{2n}$ associated with $\cD_1$ and $\cD_2$, respectively.
    Further, if $\bepsilon_i$'s are independent of the projection vectors $\bu$ and $\bv_i$ (or equivalently ${\omega}_{i}$) conditional on $(\bx_i,\bz_i^*)$'s and $\cD_2$, then we can approximate the scaled conditional variance of the stochastic errors as:
    \begin{align*}
        \sigma_j^2&=\Var\left(\bu^{\top}\frac{1}{\sqrt{n}}\sum_{i=1}^n \bx_{i} \bepsilon_{i}^{\top}\bv_i \,\middle|\, \{(\bx_{i},\bz_i^*)\}_{i=1}^n, \cD_2\right)\\
        &= \bu^{\top}\frac{1}{n}\sum_{i=1}^n \omega_i^2\bx_{i}\be_j^{\top}\cP_{\hat{\bGamma}}^{\perp}\diag(A''(\hat{\btheta}_i))^{-1}\Cov(\bepsilon_i\mid \btheta_{i}^*)\diag(A''(\hat{\btheta}_i))^{-1}\cP_{\hat{\bGamma}}^{\perp}\be_j\bx_{i}\bu\\
        &= \bu^{\top}\frac{1}{n}\sum_{i=1}^n \omega_{i}^2
        (\be_j^{\top}\cP_{\hat{\bGamma}}^{\perp}\diag(A''(\hat{\btheta}_i))^{-1}\diag(A''(\btheta_{i}^*))\diag(A''(\hat{\btheta}_i))^{-1}\cP_{\hat{\bGamma}}^{\perp}\be_j)\bx_{i}\bx_{i}^{\top}\bu\\
        &\approx \bu^{\top}\frac{1}{n}\sum_{i=1}^n \omega_{i}\bx_{i}\bx_{i}^{\top}\bu=:\hat{\sigma}_j^2,
    \end{align*}
    by using a proper data-dependent weight ${\omega}_{i}=\hat{\omega}_{i}$.
    Then, the projection vector $\hat{\bu}$ is constructed by minimizing the variance proxy while controlling the bias and remaining terms in \eqref{eq:diff-debias}:
    \begin{align}
        \begin{split}        
        \hat{\bu}\quad \in\quad &\argmin_{\bu\in\RR^d} \bu^{\top}\frac{1}{n}\sum_{i=1}^n \hat{\omega}_i \bx_{i}\bx_{i}^{\top}\bu\\
        \text{s.t.}\qquad& \left\|\frac{1}{n}\sum_{i=1}^n \hat{\omega}_i\bx_{i} \bx_{i}^{\top}\bu -\be_1\right\|_{\infty} \leq \lambda_n \\
        &\max_{1\leq i\leq n}|\bx_{i}^{\top}\bu|\leq \tau_n,
        \end{split} \label{opt:min-var}
    \end{align}
    where $\lambda_n\asymp \sqrt{\log(nd)/n}$ and $\tau_n\asymp\sqrt{\log n}$.
    Based on $\hat{\omega}_{i}$ and $\hat{\bu}$, the resulting bias-corrected estimator \eqref{eq:debias} is similar to those used by \citet{van2014asymptotically,javanmard2014confidence,cai2021statistical}; however, we need to incorporate information from multiple responses with projection operator $\cP_{\hat{\bGamma}}^{\perp}$ to de-confound, in the spirit of proximal gradient descent.
    Under mild regularity conditions, the following theorem shows that the debiased estimator $\hat{b}_{j1}^{\de}$ is asymptotically normal.

    \begin{theorem}[Asymptotical normality of $\hat{\bB}^{\de}$]\label{thm:normality}
        Under the same conditions in \Cref{thm:est-err-B}, for $j=1,\ldots,p$, additionally assume the following conditions hold: (i) $n/\log(nd)=o(p^{3/2})$ and $n=o(p^{2(1-k)})$; and (ii) $\hat{\omega}_i=\omega(\bx_i,\bz_i^*,\cD_2)$ for some real-valued function $\omega$ that is uniformly bounded away from 0 and $\infty$.
        Then as $n,p\rightarrow\infty$, it holds that
        \begin{align}
            \sqrt{n}\frac{ \hat{b}_{j1}^{\de} - b_{j1}^* }{\sigma_j} \dto \cN(0, 1). \label{eq:normality}
        \end{align}        
    \end{theorem}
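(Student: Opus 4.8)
The plan is to begin from the master decomposition \eqref{eq:diff-debias}, rescale by $\sqrt{n}/\sigma_j$, and split the result into a leading stochastic term that obeys a central limit theorem plus two asymptotically negligible terms. Concretely, I would write
\begin{align*}
\sqrt{n}\frac{\hat{b}_{j1}^{\de}-b_{j1}^*}{\sigma_j}
&= \underbrace{\frac{1}{\sigma_j}\hat{\bu}^{\top}\frac{1}{\sqrt n}\sum_{i=1}^n \bx_i\bepsilon_i^{\top}\bv_i}_{=:S_n}\\
&\quad + \underbrace{\frac{\sqrt n}{\sigma_j}\Bigl(\hat{\bu}^{\top}\frac1n\sum_{i=1}^n\hat{\omega}_i\bx_i\bx_i^{\top}-\be_1^{\top}\Bigr)(\bb_j^*-\hat{\bb}_j)}_{=:B_n}
+ \frac{\sqrt n}{\sigma_j}\mathrm{Rem},
\end{align*}
and then prove $B_n=\op(1)$, $\sqrt{n}\,\mathrm{Rem}/\sigma_j=\op(1)$, and $S_n\dto\cN(0,1)$, concluding by Slutsky's theorem.

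For the two negligible terms, the constraints in \eqref{opt:min-var} do the work. The feasibility constraint $\|n^{-1}\sum_i\hat{\omega}_i\bx_i\bx_i^{\top}\hat{\bu}-\be_1\|_{\infty}\le\lambda_n$ and H\"older's inequality give $|B_n|\le (\sqrt n/\sigma_j)\lambda_n\|\bb_j^*-\hat{\bb}_j\|_1$; using $\lambda_n\asymp\sqrt{\log(nd)/n}$, the rate $\|\bb_j^*-\hat{\bb}_j\|_1=\Op(r_{n,p}')$ noted after \eqref{eq:diff-debias}, and a lower bound $\sigma_j\gtrsim 1$, this yields $|B_n|=\Op(\sqrt{\log(nd)}\,r_{n,p}')$. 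Similarly, $\mathrm{Rem}=\Op(\max_i|\hat{\bu}^{\top}\bx_i|^3 r_{n,p}^2)$ together with the constraint $\max_i|\hat{\bu}^{\top}\bx_i|\le\tau_n\asymp\sqrt{\log n}$ gives $\sqrt n\,\mathrm{Rem}/\sigma_j=\Op(\sqrt n(\log n)^{3/2}r_{n,p}^2)$. Substituting the definitions of $r_{n,p}$ and $r_{n,p}'$ and invoking the additional scaling conditions (i), $n/\log(nd)=o(p^{3/2})$ and $n=o(p^{2(1-k)})$, alongside the conditions already imposed in \Cref{thm:est-err-B}, both quantities vanish; this is a routine but careful summand-by-summand verification.

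The heart of the argument is $S_n\dto\cN(0,1)$, which I would establish by a conditional central limit theorem relative to $\mathcal{F}:=\sigma(\{(\bx_i,\bz_i)\}_{i=1}^n,\cD_2)$. Because $\hat{\bB},\hat{\bGamma}$ are fit on the independent fold $\cD_2$ and, by condition (ii), $\hat{\omega}_i=\omega(\bx_i,\bz_i,\cD_2)$, the vectors $\hat{\bu}$ and $\bv_i=\hat{\omega}_i\diag(A''(\hat{\btheta}_i))^{-1}\cP_{\hat{\bGamma}}^{\perp}\be_j$ are (modulo the point below) $\mathcal{F}$-measurable, while the $\bepsilon_i=\by_i-A'(\btheta_i^*)$ are conditionally independent with $\EE[\bepsilon_i\mid\mathcal{F}]=\zero$ and $\Cov(\bepsilon_i\mid\btheta_i^*)=\diag(A''(\btheta_i^*))$. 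Thus $S_n=\sum_i\xi_i$ with $\xi_i=(\sigma_j\sqrt n)^{-1}\hat{\bu}^{\top}\bx_i\bepsilon_i^{\top}\bv_i$ a triangular array of conditionally independent, mean-zero summands whose conditional variance equals $\sum_i\EE[\xi_i^2\mid\mathcal{F}]=1$ by the definition of $\sigma_j^2$. I would then verify a conditional Lyapunov condition: since $|\hat{\bu}^{\top}\bx_i|\le\tau_n$ and $\bepsilon_i^{\top}\bv_i=\sum_k v_{ik}\epsilon_{ik}$ is a weighted sum of entries that are independent across $k$ and sub-exponential with bounded variance $A''(\theta_{ik}^*)$ (as $\btheta^*\in\cR_C$ and $A''$ is bounded above and below on $\cR_C$), its conditional $(2+\delta)$-moments are bounded, whence $\sum_i\EE[|\xi_i|^{2+\delta}\mid\mathcal{F}]\lesssim \tau_n^{2+\delta}/(\sigma_j^{2+\delta}n^{\delta/2})\to 0$. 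A conditional Lyapunov CLT gives $\EE[\exp(\mathrm{i}tS_n)\mid\mathcal{F}]\to e^{-t^2/2}$ in probability for each $t$, and bounded convergence upgrades this to the unconditional statement $S_n\dto\cN(0,1)$.

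The main obstacles both lie inside the CLT step. First, one must show $\sigma_j$ is bounded away from zero and infinity with high probability; this is not automatic and requires analyzing the well-posedness of \eqref{opt:min-var}---its feasibility, the boundedness of the optimizer $\hat{\bu}$, and the relation $\sigma_j^2\asymp\hat{\bu}^{\top}(n^{-1}\sum_i\hat{\omega}_i\bx_i\bx_i^{\top})\hat{\bu}\asymp\hat{u}_1$---combined with \Cref{asm:covariate,asm:latent} and the boundedness of $A''$ and $\hat{\omega}_i$. Second, although $\bv_i\in\RR^p$, the moments of $\bepsilon_i^{\top}\bv_i$ must stay bounded in $p$; the key is that $\cP_{\hat{\bGamma}}^{\perp}\be_j$ is dominated by its $j$th coordinate with off-diagonal entries of order $1/p$ controlled by \Cref{thm:est-error-proj-bGamma}, so the effective dimension of $\bv_i$ remains $\cO(1)$. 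A final delicate point is that $\bv_i$ contains $\diag(A''(\hat{\btheta}_i))^{-1}$, and $\hat{\btheta}_i$ depends on the fitted $\hat{\bz}_i$ and hence possibly on $\by_i$, which would compromise $\mathcal{F}$-measurability; I would handle this by replacing $A''(\hat{\btheta}_i)$ with its population counterpart $A''(\btheta_i^*)$ and absorbing the discrepancy into the remainder using the stability guarantees of \Cref{cor:est-confound-col} and \Cref{thm:est-err-B}, which is exactly where the column-wise control of the estimated latent components is needed.
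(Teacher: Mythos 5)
Your proposal is correct and follows essentially the same route as the paper's proof: the same decomposition of $\hat{b}_{j1}^{\de}-b_{j1}^*$ into a leading stochastic term plus bias and remainder terms killed by the constraints of \eqref{opt:min-var} and the rates of \Cref{thm:est-error-proj-bGamma,thm:est-err-B}, followed by a conditional CLT (the paper verifies Lindeberg rather than Lyapunov, and establishes $\sigma_j\asymp 1$ exactly as you anticipate in \Cref{lem:asym-normal}). The one sub-step where you diverge slightly is the measurability of $\bv_i$: the paper resolves the dependence of $\hat{\btheta}_i$ on $\by_i$ via the gene-level sample-splitting scheme of \Cref{alg:split} rather than by swapping in $A''(\btheta_i^*)$ and absorbing the difference into the remainder, but both devices lead to the same conclusion.
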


    With fewer assumptions on the correlation between the covariate $\bX$ and the confounder $\bZ^*$, removing unmeasured confounders is only possible by utilizing multiple outcomes to disentangle the primary effect $\bB^*$ and the latent coefficient $\bGamma^*$.
    In particular, because the estimation error rates of $\bB^*$ and $\bGamma^*$ are related to $(n\wedge p)^{-1}$, the number of outcomes $p$ is expected to be larger than $n$, so that these errors are primarily affected by the sample size $n$.

    In \Cref{thm:normality}, condition (i) requires that the response dimension $p$ grows faster than $n^{2/3} \vee n^{1/(2(1-k))}$, which ensures the remainder term $\mathrm{Rem}$ in \eqref{eq:diff-debias} vanishes in the limit.
    Specifically, $\mathrm{Rem}$ has a magnitude associated with the convergence rate of $\|\hat{\bB}-\bB^*\|_{\fro}^2$, as provided by \Cref{thm:est-err-B}.
    To derive the asymptotic normality, $\mathrm{Rem}=\op(n^{-1/2})$ is required; however, if $n$ is too large compared to $p$, the convergence rate of $\hat{\bB}$ from the first two steps of the proposed procedure is insufficient to establish the desired asymptotic normality.
    In this case, having a much larger sample size does not help.
    When $k\leq 1/2$, condition (i) is satisfied with $n=o(p)$, which is reasonable in most scientific scenarios of cohort-level differential expression analysis, as we shall see later from the real data example in \Cref{sec:case-studeis}.

    In terms of condition (ii), a proper sample-specific and link-specific weight function is required.
    One can construct such weights $\hat{\omega}_i$ by sample splitting to fulfill this condition.
    For instance, using sample splitting procedure in \Cref{alg:split}, one valid choice is $\hat{\omega}_{i}=A''(\hat{\theta}_{ij})$.
    In \Cref{lem:hsigma}, we show that such a choice of $\hat{\omega}_i$ satisfies the condition (ii) in \Cref{thm:normality} with probability tending to one and the resulting variance estimator
    \begin{align}
        \hat{\sigma}^2_j &=\hat{\bu}^{\top}\frac{1}{n}\sum_{i=1}^n \hat{\omega}_{i}\bx_{i}\bx_{i}^{\top}\hat{\bu}, \label{eq:est-var}
    \end{align}
    is also consistent with $\sigma_j^2$.
    Hence, \Cref{thm:normality} implies that $t_j=\sqrt{n}(\hat{b}_{j1}^{\de} - b_{j1}^* )/\hsigma_j \dto \cN(0, 1).$ We reject the null hypothesis $H_{0j}:b_{j1}^*=0$ at level-$\alpha$ if $|t_j|>z_{\alpha/2}:= \Phi^{-1}(1-\alpha/2)$, where $\Phi$ is the cumulative distribution function of standard normal.    
    Numerically, we show that the efficiency loss of sample splitting is negligible in \Cref{app:subsec:simu-sample-split}; and the proposed method performs well without sample splitting and is statistically more efficient than the alternative methods in \Cref{sec:simulation}.

    \begin{remark}[Inference without unmeasured confounders]
        In the special case when there are no unmeasured confounders, the matrix $\cP_{\hat{\bGamma}}^{\perp}$ reduces to the identity matrix.
        Also, the projection vector $\hat{\bu}$ is the $j$th column of $(\bX^{\top}\diag(A''(\bX\hat{\bB}_1))\bX)^{-1}$, and $\hsigma_j$ is the asymptotic variance of $\hat{b}_{j1}$ under well-specified generalized linear models.
        In this case, \eqref{eq:debias} is simply a one-step adjustment based on the score function.
        For Bernoulli distributed binary outcomes without unmeasured confounders, the above choice of the weight function $\omega(\theta) = A''(\theta)$ for optimization problem \eqref{opt:min-var} coincide with $f'(\theta)^2/(f(\theta)(1-f(\theta)))$, the one used in \citet{cai2021statistical} with $f=A'$ being the link function.
        For Gaussian outcomes when $A'$ is the identity link with a choice of weight $\hat{\omega}_i\equiv 1$, the procedure above reduces to the debias method by \citet{javanmard2014confidence}.

        When there are unmeasured confounders, the main difficulty lies in taking account of the rates of convergence of $\|\hat{\bB} - \bB^*\|_{1}$, $\|(\hat{\bGamma} - \bGamma^*)\be_j\|_{2}$, and $\|{\bB^*}^{\top}\cP_{\bGamma^*}\be_j\|_2$ such that the remainder term in \eqref{eq:diff-debias} is $\op(n^{-1/2})$, which is essentially the idea of the proof for \Cref{thm:normality}, as we have alluded to after \eqref{eq:diff-debias}.
    \end{remark}

    \begin{remark}[Incorporate information from latent factors]
        In \eqref{eq:debias} and \eqref{opt:min-var}, we only use the covariate $\bX$ to adjust for the estimation bias.
        However, including the estimated latent factors $\bZ$ to construct a projection vector $\bu$ of dimension $d+r$ is also feasible.
        The validity of this extension is also guaranteed by the sample splitting procedure in \Cref{alg:split}.
    \end{remark}

    \begin{remark}[Estimation and inference with non-canonical links]
        Through \Cref{sec:estimation,sec:inference}, we discuss the methodology to conduct inference on confounded generalized linear models (GLM) with canonical link functions, as outlined in \Cref{tab:exp-family}.
        However, in practical scenarios, non-canonical link functions may also be employed.
        For instance, the log link function is commonly used with Negative Binomial GLMs. Fortunately, our method extends its applicability to GLMs with non-canonical link functions, as exemplified in the case of the Negative Binomial GLMs in \Cref{app:subsec-nb-log-link}. Establishing theoretical guarantees for these scenarios may follow a similar framework with suitable assumptions to address the non-convexity of the objective functions, as elaborated in \Cref{app:subsec-nb-log-link}.
    \end{remark}

    \subsection{Simultaneous inference}
        
    For $j=1,\ldots,p$, the asymptotic normality provided in \Cref{thm:normality} provides Type-I error controls for individual hypothesis tests $H_{0j}:b_{j1}^*=0$.
    The following proposition shows that we can also control the overall Type-I error and family-wise error rate (FWER) using the statistic $t_j=\sqrt{n}(\hat{b}_{j1}^{\de} - b_{j1}^* )/\hsigma_j$. 

    \begin{proposition}[Simultaneous inference]\label{prop:simul-inference}
        Let $\cN_p=\{j\mid b_{j1}^*=0,j=1,\ldots,p\}$ be the true null hypotheses.
        Under the assumptions of \Cref{thm:normality}, as $n,p,|\cN_p|\rightarrow\infty$, it holds that
        \begin{align*}
            \frac{1}{|\cN_p|}\sum_{j\in\cN_p} \ind\{|t_j|>z_{\frac{\alpha}{2}}\} \pto \alpha,\qquad \text{and}\qquad\limsup\  \PP\left(\sum_{j\in\cN_p} \ind\{|t_j|>z_{\frac{\alpha}{2p}}\} \geq 1\right) \leq \alpha.
        \end{align*}
    \end{proposition}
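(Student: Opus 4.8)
The plan is to handle the two displays separately, reducing each to the univariate asymptotics of \Cref{thm:normality} plus a control of the dependence across the index $j$. On the null set $\cN_p$ we have $b_{j1}^*=0$, so the pivotal statistic coincides with the test statistic, $t_j=\sqrt{n}\,\hat b_{j1}^{\de}/\hat\sigma_j$. Using the decomposition \eqref{eq:diff-debias} I would write $t_j = S_j + \Delta_j$, where $S_j := \sigma_j^{-1}\hat\bu^\top n^{-1/2}\sum_{i=1}^n \bx_i\bepsilon_i^\top\bv_i$ is the leading stochastic term and $\Delta_j$ collects the bias term, the remainder $\mathrm{Rem}$, and the effect of replacing $\sigma_j$ by $\hat\sigma_j$. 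By \Cref{thm:normality} and \Cref{lem:hsigma}, $S_j\dto\cN(0,1)$, $\Delta_j=\op(1)$, and $\hat\sigma_j/\sigma_j\pto 1$; these hold uniformly over $j\in\cN_p$ because the error rates $r_{n,p},r_{n,p}'$ entering \Cref{thm:est-err-B} are stated as maxima over $j$.

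For the first (average) statement, set $\ind_j := \ind\{|t_j|>z_{\frac{\alpha}{2}}\}$ and $R_p := |\cN_p|^{-1}\sum_{j\in\cN_p}\ind_j$. First I would show $\EE[R_p]\to\alpha$: uniform convergence in distribution gives $\sup_{j\in\cN_p}|\PP(|t_j|>z_{\frac{\alpha}{2}})-\alpha|\to 0$, so the Cesàro average of the $\PP(|t_j|>z_{\frac{\alpha}{2}})$ converges to $\alpha$. I would then bound $\Var(R_p)=|\cN_p|^{-2}\big[\sum_{j}\Var(\ind_j)+\sum_{j\ne k}\Cov(\ind_j,\ind_k)\big]$. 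The diagonal is at most $|\cN_p|^{-1}\to 0$. For the off-diagonal, conditioning on $\{(\bx_i,\bz_i)\}_i$ and $\cD_2$ fixes $\hat\bu,\bv_i,\hat\omega_i$ and leaves only the conditionally independent entries of $\bepsilon_i$ random, so the conditional correlation $\rho_{jk}$ of $(S_j,S_k)$ is a quadratic form governed by $\be_j^\top\cP_{\hat\bGamma}^\perp\,\diag(\cdot)\,\cP_{\hat\bGamma}^\perp\be_k$; for $j\ne k$ its size is controlled by the off-diagonal entries of $\cP_{\hat\bGamma}^\perp$, which are of order $p^{-1}$ up to the error of \Cref{thm:est-error-proj-bGamma}. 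Since for jointly normal pairs the first-order term in $\rho$ cancels by the symmetry of the two-sided event, $|\Cov(\ind_j,\ind_k)|=\cO(\rho_{jk}^2)=\cO(p^{-2})$, whence $\sum_{j\ne k}|\Cov(\ind_j,\ind_k)|=o(|\cN_p|^2)$. Thus $\Var(R_p)\to 0$, and Chebyshev's inequality yields $R_p\pto\alpha$.

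For the second (FWER) statement, a union bound gives $\PP\big(\sum_{j\in\cN_p}\ind\{|t_j|>z_{\frac{\alpha}{2p}}\}\ge 1\big)\le\sum_{j\in\cN_p}\PP(|t_j|>z_{\frac{\alpha}{2p}})$, so since $|\cN_p|\le p$ it suffices to prove $\PP(|t_j|>z_{\frac{\alpha}{2p}})\le(1+o(1))\alpha/p$ uniformly in $j$. Because $z_{\frac{\alpha}{2p}}\asymp\sqrt{2\log p}$ and $\log p=o(n)$, this is a moderate-deviation statement rather than a fixed-threshold central limit statement. I would establish it through a self-normalized, Cramér-type moderate-deviation bound for $S_j$, valid uniformly for thresholds up to $o(n^{1/2})$ under the sub-Gaussian assumptions on $\bx,\bw$ and the boundedness of $\bTheta^*$ (so that the summands $\bx_i\bepsilon_i^\top\bv_i$ are sub-exponential), giving $\PP(|S_j|>t)=(1+o(1))\,2(1-\Phi(t))$ uniformly in $t\le z_{\frac{\alpha}{2p}}$ and hence $=(1+o(1))\alpha/p$ at $t=z_{\frac{\alpha}{2p}}$. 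It then remains to show $\Delta_j$ does not inflate this tail, i.e. $z_{\frac{\alpha}{2p}}\cdot\Delta_j=\op(1)$ uniformly in $j$; this is exactly where condition (i) of \Cref{thm:normality} enters, forcing the bias and remainder to be small relative to $(\log p)^{-1/2}$ rather than merely $\op(1)$.

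I expect the main obstacle to be the moderate-deviation control in the FWER part. A crude sub-Gaussian tail would only give $\PP(|t_j|>z_{\frac{\alpha}{2p}})$ of order $p^{-c}$ with an imprecise constant $c$, for which the union bound fails unless $c=1$ with the sharp Gaussian constant; obtaining that sharp constant uniformly over $j$, while simultaneously absorbing the non-negligible estimation of $\bGamma^*$, $\bZ^*$, and $\sigma_j$ at the growing threshold $\sqrt{2\log p}$, is the delicate step. By contrast, the weak-dependence bound in the averaging part is comparatively routine once the near-diagonal structure of $\cP_{\hat\bGamma}^\perp$ from \Cref{thm:est-error-proj-bGamma} is invoked.
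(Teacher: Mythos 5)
Your overall architecture is sound, but you take a genuinely different route from the paper, and the difference matters most in the FWER step. The paper's proof does not keep the full projected statistic as its leading term: it splits $t_j=\vartheta_j+\varsigma_j$, where $\vartheta_j$ retains only the \emph{identity part} of $\cP_{\hat{\bGamma}}^{\perp}=\bI_p-\cP_{\hat{\bGamma}}$, i.e.\ $\vartheta_j$ depends on the noise solely through $\epsilon_{1j},\ldots,\epsilon_{nj}$ and is therefore \emph{exactly} independent across $j$ conditional on $\{(\bx_i,\bz_i)\}_{i=1}^n$ and $\cD_2$; all cross-coordinate coupling through $\cP_{\hat{\bGamma}}$ is pushed into $\varsigma_j$ and shown to satisfy $\max_{1\leq j\leq p}|\varsigma_j|=\op(1)$ using \Cref{thm:est-error-proj-bGamma}. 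With exact conditional independence, the averaging statement follows from a second-moment computation in which the cross terms factorize identically (no covariance expansion needed), and the FWER statement reduces to Bonferroni for independent, asymptotically standard normal variables after shifting the threshold by $\max_j|\varsigma_j|$. Your weak-dependence argument for the averaging part (correlation $\rho_{jk}=\cO(p^{-1})$ from the off-diagonal of $\cP_{\hat{\bGamma}}^{\perp}$, hence $\Cov(\ind_j,\ind_k)=\cO(\rho_{jk}^2)$ by the vanishing first-order term for two-sided events) is a valid alternative and in fact only needs $\Cov(\ind_j,\ind_k)=o(1)$, though it requires a bivariate CLT for each pair that the paper's decomposition renders unnecessary. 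Your FWER route is where the extra cost appears: a term-by-term union bound on the weakly dependent $t_j$ forces you to prove a self-normalized Cram\'er-type moderate deviation bound at threshold $z_{\alpha/(2p)}\asymp\sqrt{2\log p}$, which typically requires $\log p=o(n^{1/3})$ — strictly stronger than the paper's standing assumption $\log p=o(n)$ — and also requires $\max_j|\Delta_j|=\op((\log p)^{-1/2})$ rather than merely $\op(1)$. The paper's decomposition is precisely the device that avoids this machinery, so you should adopt it (or at least its independence structure) before attempting the FWER bound.
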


    When $p$ is large, controlling for the false discovery rate (FDR) is more desirable when performing simultaneous testing.
    In that regard, \citet[Section 2.3]{cai2021statistical} provides insights on FDR controls using different techniques.
    From simulations in \Cref{sec:simulation}, we also show that FDR is usually well controlled by the Benjamini–Hochberg procedure empirically.

\section{Numerical experiments}\label{sec:simulation}
    DE and related tests are frequently performed in two distinct settings in the genomic field.
    One relies on counts of gene expression to contrast the expression of each gene in case versus control observations.
    Typically, observations are either samples from RNA sequencing (RNA-seq) \citep{love2014moderated} or pseudo-bulk cells obtained from single-cell sequencing by aggregating the expressions of single cells in the same homogeneous groups \citep{Squair:2021}.
    Another setting is single-cell RNA-sequence (scRNA-seq) CRISPR screening \citep{Dixit2016,barry2023exponential}, where the fundamental task is to test for association between a designed genetic perturbation and gene expression \citep{Dixit2016}.
    In both settings, the measured gene expression is often assumed to approximately follow a Poisson or Negative Binomial (NB) distribution \citep{sarkar2021separating}.
    However, in the former, the mean expression per sample is much larger due to molecular design, and the distribution is often approximated by a normal distribution with an appropriate transformation.
    In the latter case, the observational unit is a single cell. 
    Hence, the mean of the gene expression is near zero, and the data is not well approximated with a normal distribution.

    Before we turn to the simulation details, we present a simulated bulk-cell dataset and a simulated single-cell dataset corresponding to the above two distinct scenarios, respectively (\Cref{fig:data_dist}).
    The Poisson distribution can often model the former scenario, while the NB distribution is a better option for the latter because the counts are sparser and typically exhibit strong overdispersion (\Cref{fig:data_dist}(a)-(b)).
    Furthermore, for single-cell data, the lower-expressed genes are typically more dispersed, and this feature is captured in our simulated data set (\Cref{fig:data_dist}(c)). 
    In practice, both Poisson and NB models are available for analysis of either type of experiment; however, to simplify exposition, we use a Poisson distribution for bulk samples in \Cref{subsec:simu-poisson} and a NB distribution for single-cell samples in \Cref{subsec:simu-splatter}.
    In the subsequent experiments, we adhere to the protocol described in \Cref{app:subsec:hyperparam} for selecting both the hyperparameters and the number of factors pertinent to the proposed methods.

    \begin{figure}[!t]
        \centering
        \includegraphics[width=0.8\textwidth]{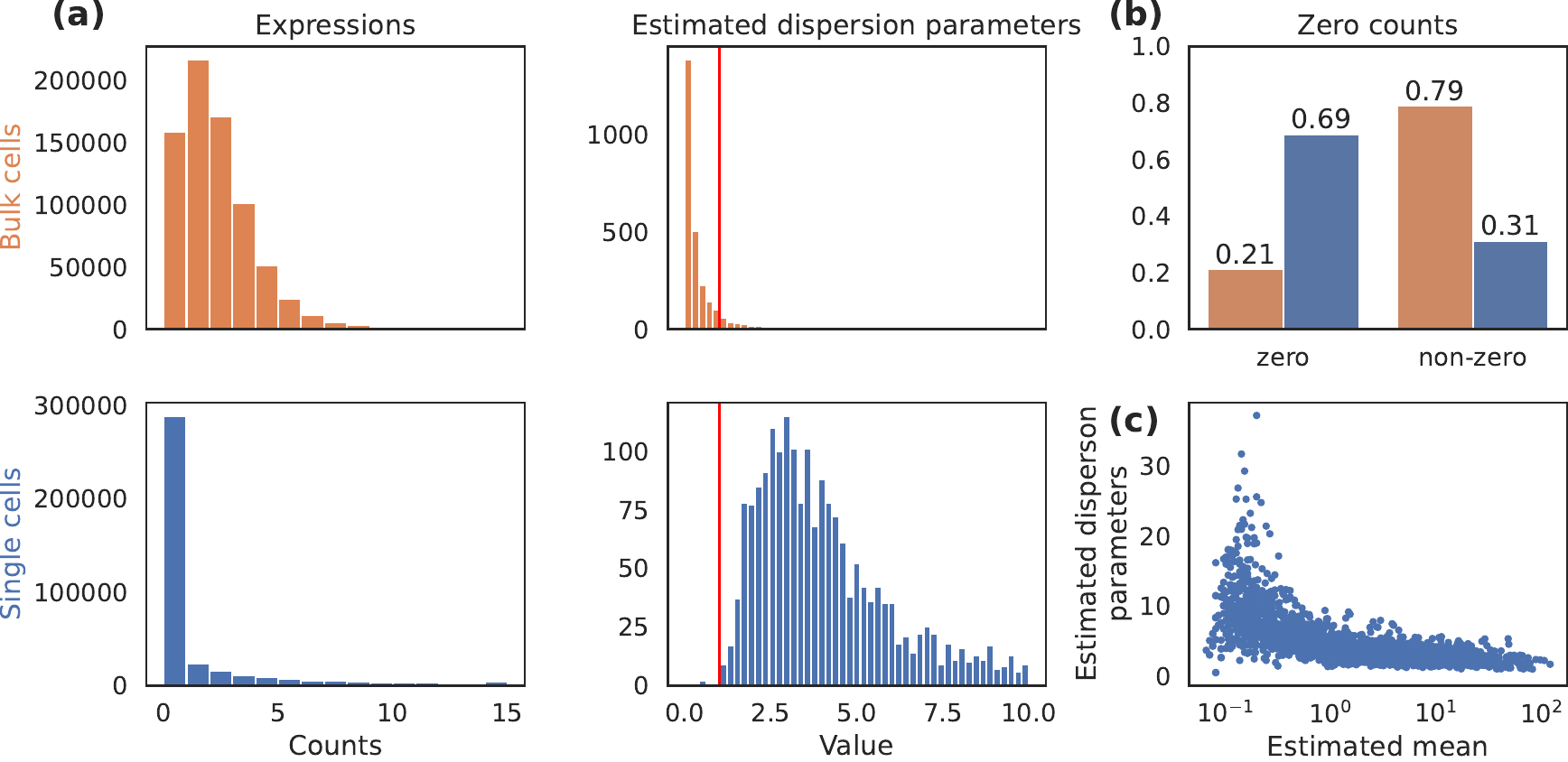}
        \caption{Overview of the simulated data.
        \textbf{(a)} The first and second rows show the summary of one simulated dataset for bulk cells (Poisson) in \Cref{subsec:simu-poisson} and single cells (Negative Binomial) by Splatter in \Cref{subsec:simu-splatter}, respectively.
        The first column shows the overall distribution of the generated counts; the second column shows the estimated dispersion parameters by methods of moments using the mean estimates from GLM with Poisson likelihood.
        \textbf{(b)} The proportions of zero and non-zero counts in the two datasets, colored in orange and blue, respectively.
        \textbf{(c)} The estimated dispersion parameter versus the estimated mean for the simulated single-cell dataset.}
        \label{fig:data_dist}
    \end{figure}

    \subsection{Well-specified simulated datasets}\label{subsec:simu-poisson}

    We simulate expression data $\bY$ that consists of $n\in\{100,250\}$ cells and $p=3,000$ genes based on the Poisson likelihood with natural parameter $\bTheta$.
    More specifically, we generate the covariate $x_1$ to be a centered binary variable, i.e., $(x_1+1)/2\sim \text{Bernoulli}(0.5)$.
    We also include an intercept $x_2=1$, so that the covariate vector $\bx=[x_1,x_2]^{\top}$ has dimension $d=2$.
    To allow for the most general confounding scenarios without assuming causal relationships as in \Cref{fig:causal-diagram}, we directly generate the latent factor matrix using $\bZ = \bX\bD^{\top} + \bW \in\RR^{n\times r}$ with the number of latent factors being $r\in\{2,10\}$.
    Here, to generate $\bD$ and $\bW$, we first sample their entries independently from $\cN(0,1)$ and further modify the singular values to be $s_1,\ldots,s_r$ where $s_k=a\cdot(2 - (k-1)/(r-1))$, with $a=n^{-3/2}$ for $\bD$ and $a=(n/2)^{1/2}$ for $\bW$.
    For the latent loading matrix $\bGamma$, we follow \citet{wang2017confounder} to take $\bGamma=\tilde{\bGamma}\bLambda$ where $\tilde{\bGamma}$ is a $p\times r$ orthogonal matrix sampled uniformly from the set of all $p\times r$ orthogonal matrix and $\bLambda=(p/2)^{1/2}\diag(\lambda_1,\ldots,\lambda_r)$ where $\lambda_k=2-(k-1)/(r-1)$.
    The primary effect of $x_1$ on gene $j$ is sampled from $(b_{j1}+0.2)/0.4\sim \text{Bernoulli}(0.5)$ with probability 0.05 and set to be zero with probability 0.95.
    The coefficient for the intercept is set to be $b_{j2}=0.5$.

    \begin{figure}[!t]
        \centering
        \includegraphics[width=\textwidth]{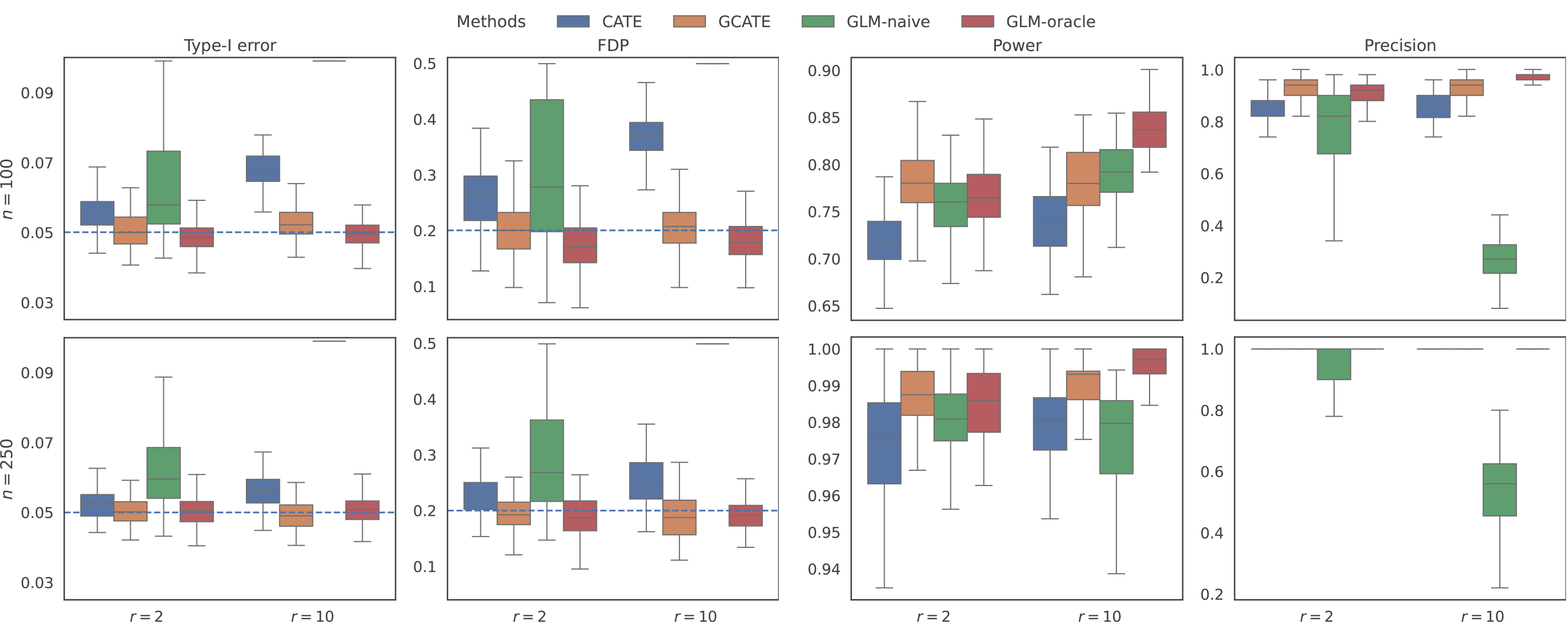}
        \caption{The Type-I errors, false discovery proportions (FDPs), powers, and precision of different methods on the simulated datasets over 100 runs, with varying numbers of samples $n\in\{100,250\}$ and numbers of latent factors $r\in\{2,10\}$.
        For \textsc{glm}, the maximum values of Type-I errors and FDPs are clipped at 0.1 and 0.5, respectively.
        The blue dashed lines indicate the desired cutoffs.}
        \label{fig:ex2-err}
    \end{figure}

     \begin{figure}
        \centering
        \includegraphics[width=0.7\textwidth]{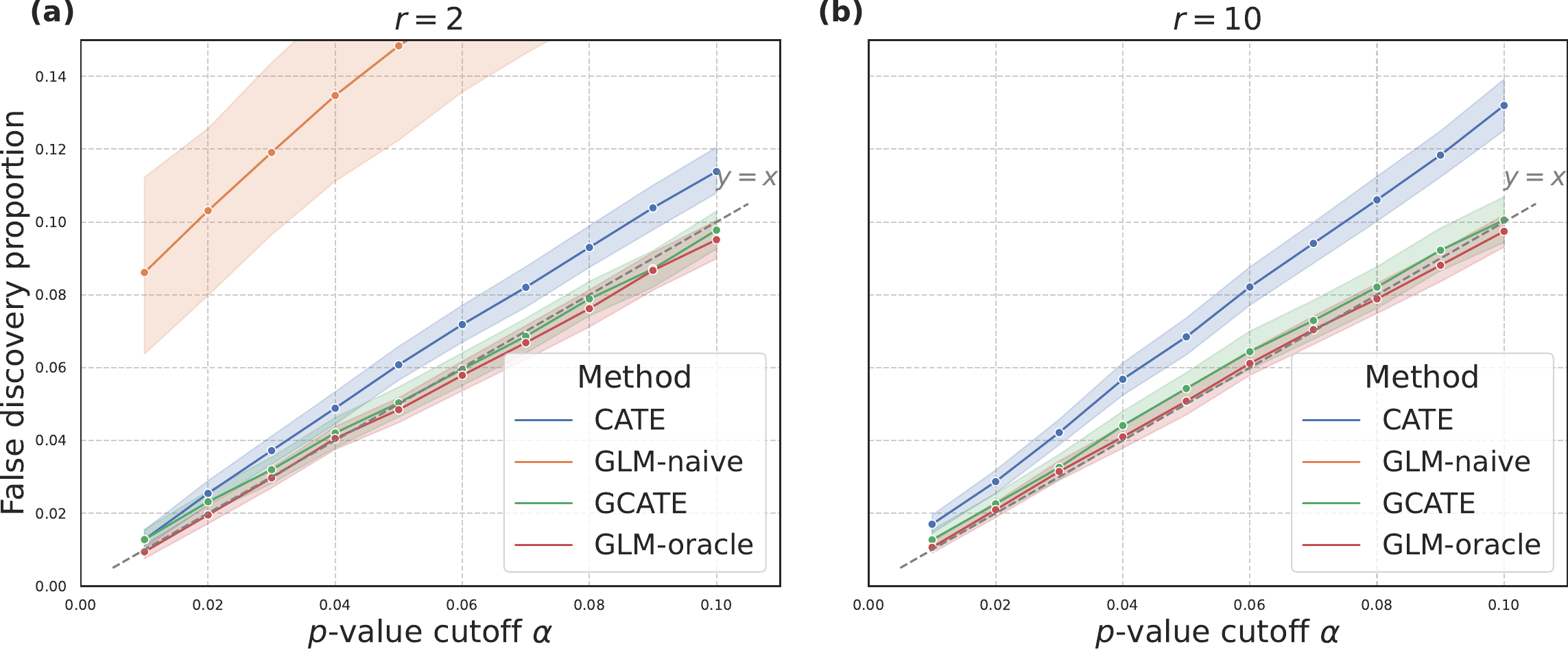}
        
        \caption{False discovery proportion at different $\alpha$ levels for $p$-values adjusted by the Benjamini-Hochberg procedure on 100 simulated datasets when $n=250$.
        The left and right panels show the results for different numbers of latent factors, \textbf{(a)} $r=2$ and \textbf{(b)} $r=10$, respectively.
         When $r=10$, the FDP of \textsc{glm}-naive is above 0.15; hence it is not shown in the figure.}
        \label{fig:ex2-fdp-vs-alpha}
    \end{figure}

    Four methods are applied to the simulated datasets:
    (1) \textsc{cate} (confounder adjustment for testing and estimation), which is a unified approach for surrogate variable analysis under linear models \citep{wang2017confounder} and operates on the log-normalized data.
    (2) \textsc{glm}-naive, which fits generalized linear models with the Poisson likelihood method but only uses the measured covariates $\bX$ without adjusting for unmeasured confounders;
    (3) \textsc{glm}-oracle, which fits generalized linear models with the Poisson likelihood method and uses both observed and unobserved covariates $(\bX,\bZ)$ for estimation and testing;
    (4) \textsc{gcate}, our proposed method with the Poisson likelihood.
    For \textsc{cate}, we use bi-cross-validation (BCV) \citep{owen2016bi} to select the number of factors, as suggested in their original paper \citep{wang2017confounder}.
    For \textsc{gcate}, we use JIC described in \Cref{rm:JIC} to select the number of factors.

    To evaluate different methods, we summarize the type-I error and FDP (false discovery proportion) after the Benjamini-Hochberg procedure in \Cref{fig:ex2-err}, where the desirable thresholds for the two are set to be 5\% and 20\%, respectively.
    From \Cref{fig:ex2-err}, we see that when the sample size $n$ is small, or the latent factor dimension $r$ is large, the performance of all methods gets slightly worse, especially those that are misspecified, which is expected.
    In all setups of $(n,r)$, because the multivariate-Gaussian assumptions of \textsc{cate} are violated, it does not provide proper Type-I error control and FDP control. 
    This suggests that \textsc{cate} may inflate test statistics and cause anti-conservative inference.
    Similarly, \textsc{glm}-naive also fails to control the FDPs because it cannot account for dependencies induced by the latent factors.
    On the other hand, \textsc{gcate} performs as well as \textsc{glm}-oracle that has knowledge of the latent factors $\bZ$.
    This indicates that our modeling helps to accurately remove unwarranted sources of confounding effects. Note that variations of \textsc{cate} may yield improved performance using empirical nulls or negative controls, but \textsc{gcate} requires no such tuning.
    
    We further inspect the FDP control of different methods with varying thresholds.
    In the ideal scenarios, FDP aligns closely with the specified $\alpha$ cutoffs.
    From \Cref{fig:ex2-fdp-vs-alpha}, \textsc{glm}-oracle has FDP aligning closely with the specified $\alpha$ cutoffs and consistently performs admirably across different levels of confounding effects.
    Conversely, the \textsc{glm}-naive approach struggles to control the FDP effectively, and this discrepancy becomes increasingly pronounced as the number of latent factors grows. However, in a commendable contrast to \textsc{cate}, our method \textsc{gcate} consistently outperforms in terms of FDP control at various alpha cutoffs. This superiority can be attributed to our method's ability to model the data distribution accurately and eliminate unwarranted variations.
    
    Lastly, we also evaluate the statistical power and precision of different methods.
    Here, the power is evaluated when the Type-I error threshold is 5\%.
    We anticipate that both \textsc{cate} and the \textsc{glm}-naive approach would yield higher power compared to other methods because they tend to allow more discoveries without adequately controlling the Type-I errors (\Cref{fig:ex2-err}). 
    In \Cref{fig:ex2-err}, we observe that \textsc{cate} exhibits the lowest power among the considered methods. In contrast, the \textsc{glm}-naive approach concurrently registers the most insufficient precision among all the methods. 
    As anticipated, the \textsc{glm}-oracle approach boasts the highest power and precision because it operates in an ideal scenario without confounding effects.
    In contrast, our proposed method, \textsc{gcate}, demonstrates a balanced and robust performance concerning power and precision. It achieves a competitive power level while maintaining a significantly higher precision than the \textsc{glm}-naive method. Moreover, \textsc{gcate} outperforms \textsc{cate} regarding both power and precision.
    This suggests that correct modeling of confounding effects boosts the statistical power and precision in high-dimensional datasets.

    \subsection{Misspecified simulated datasets using scRNA simulators}\label{subsec:simu-splatter}
        To better evaluate the performance of various methods, we use the single-cell RNA sequencing data simulator Splatter \citep{zappia2017splatter} to generate simulated count datasets.
        Splatter explicitly models the hierarchical Gamma-Poisson processes that give rise to data observed in scRNA-seq experiments and can model the multiple-faceted variability.
        Thus, the simulated datasets generated by Splatter are similar to real-world datasets and suitable for benchmarking differential expression testing methods.

        \begin{figure}[!t]
            \centering
            \includegraphics[width=0.6\textwidth]{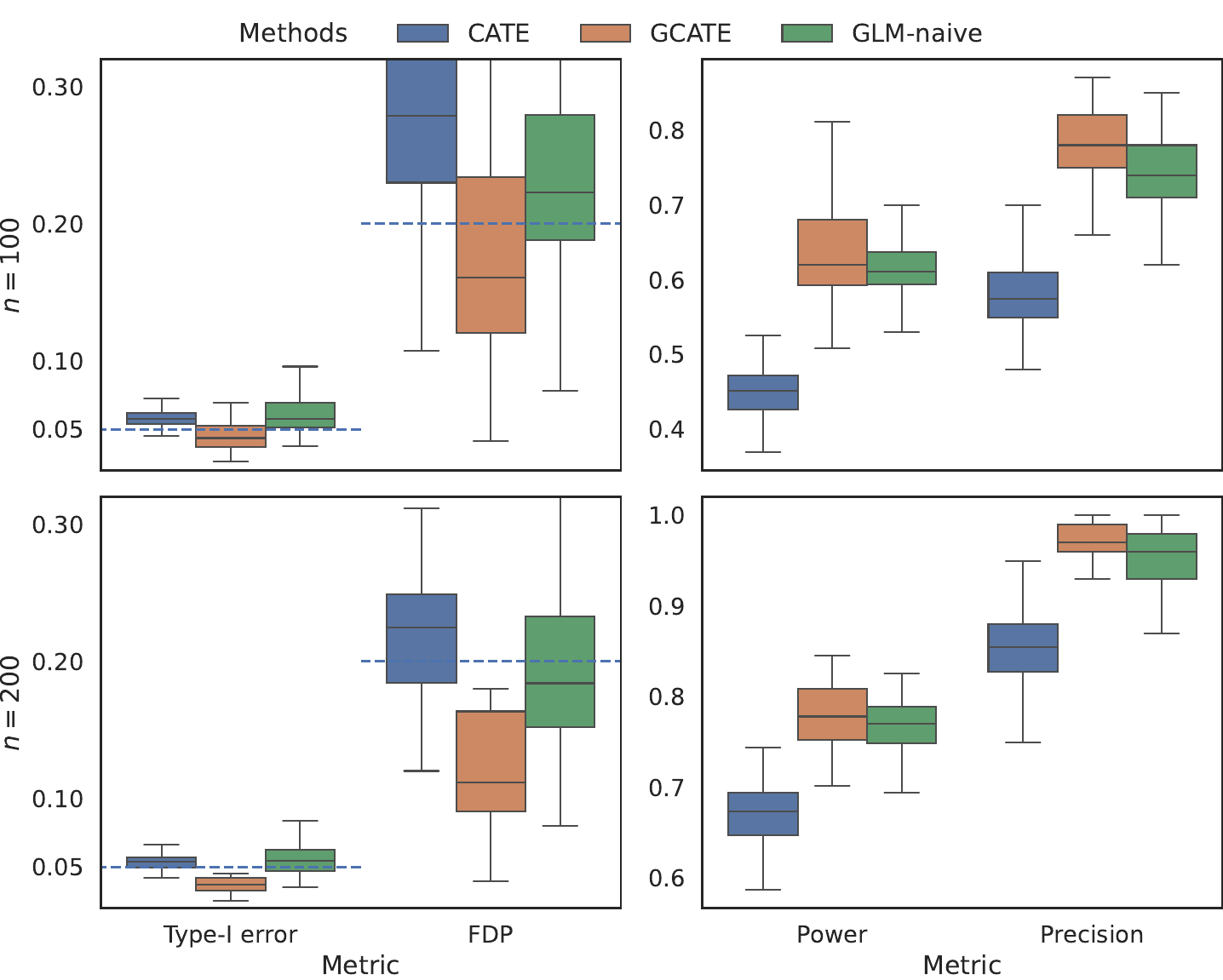}
            \caption{Simulation results on 100 simulated scRNA-seq datasets generated by Splatter with varying numbers of samples $n\in\{100,200\}$.
            The four metrics are shown in four columns respectively.
            The blue dashed lines indicate the desired cutoffs for the statistical errors.
        }
            \label{fig:simu2}
        \end{figure}

    Using Splatter, $n$ cells are sampled from two groups with equal probability for $n\in\{100,200\}$, containing $p=10,000$ genes.
    Because of the sparse nature of the simulated single-cell datasets, about 80\% of the genes are only expressed in 10 cells.
    Hence, we exclude these lowly-expressed genes and evaluate the methods for the remaining genes.    
    We include $d=3$ covariates for each cell: the intercept, the group indicator ($\{\pm 1\}$), and the logarithm of the library sizes, which is the sum of expression across all genes.    
    When simulating the datasets, we use Splatter to generate four batches, introducing three major confounders.
    Because the data is not generated from well-specified GLMs, the oracle model is unknown and hence not included.
    For \textsc{glm}-naive and \textsc{gcate}, we use the NB likelihood with log links to directly model the count data, where the gene-level dispersion parameters are estimated by the method of moments based on the estimated mean returned by using the Poisson likelihood; see \Cref{app:subsec:opt} for more details.
    For \textsc{cate}, we normalize the counts in each cell by its library size, then multiply them by a scale factor of $10^4$ and shift them by one, and finally, apply the logarithm transform, following the standard preprocessing approach of single-cell data.

    Compared to the previous bulk-cell simulation in \Cref{subsec:simu-poisson}, the simulated data from Splatter is sparser and more noisy.
    From \Cref{fig:simu2}, both \textsc{cate} and \textsc{glm} fail to control the Type-I error at level 5\% and have lower power than \textsc{gcate} in this more challenging setting. 
    The primary reason lies in the assumption underlying \textsc{cate} is significantly violated, while the \textsc{glm} approach fails to account for confounding effects. 
    Though \textsc{glm} may have reasonable control over the false discoveries, its power and precision are highly affected by the confounders.
    On the contrary, \textsc{gcate} obtain valid Type-I error and FDP controls and higher power and precision with small sample sizes because of proper distributional modeling.
    The result of \textsc{gcate} is slightly conservative because of model misspecification and zero inflation induced by the Splatter simulator, which could bias the estimates of coefficients $\bB$ towards zero.
    Additionally, the NB distribution involves the additional challenge of estimating the overdispersion parameters.

\section{Lupus data example}\label{sec:case-studeis}
    
    \subsection{The dataset}
    Systemic lupus erythematosus (SLE) is an autoimmune disease predominantly affecting women and individuals of Asian, African, and Hispanic descent.
    \citet{perez2022single} developed multiplexed single-cell RNA sequencing (mux-seq) to capture the complexity of immune cell populations and systematically profile the composition and transcriptional states of immune cells in a large multiethnic cohort. The dataset contains 1.2 million peripheral blood mononuclear cells from 8 major cell types and 261 individuals, including 162 SLE cases and 99 healthy controls of either Asian or European ancestry.
    The cell-type-specific DE analysis aims to provide insights into the diagnosis and treatment of~SLE.

    To remove the genes with small variations, we use the Python package \texttt{scanpy} \citep{wolf2018scanpy} to pre-process the single-cell data and select the top 2,000 highly variable genes (HVGs) within each cell type.
    For each cell type, we aggregate expression across cells from the same subject to obtain gene-level pseudo-bulk counts and then remove genes expressed in less than 10 subjects.
    The basic information of the preprocessed datasets is provided in \Cref{app:subsec-lupus}.
    For each subject, the recorded variables are SLE status (condition), the logarithm of the library size, sex, population, and processing cohorts (4 levels).
    The latter 3 variables, which account for $r=5$ degrees of freedom, are considered to be the measured confounders.

    \subsection{Confounder adjustment}

    For each cell type, we compare four approaches \textsc{glm}, \textsc{gcate}, \textsc{cate}, and \textsc{cate}-mad.    
    The first two approaches are based on the NB GLM model, while the latter two are designed for linear models.
    The last method uses an estimated empirical null \citep{wang2017confounder} based on median absolute deviation (MAD).
    For each method, we consider two variants, using a ``subset'' of covariates and a ``full'' set of covariates, without and with measured confounders included, respectively.
    Only 5 cell types (T4, cM, B cell, T8, NK) contain more than 50,000 single cells and have sufficient power to obtain significant findings using the \textsc{glm}-full approach, so we restrict our comparisons to those types. 
    In particular, we display our results for the largest T4 cell type in this section, and similar results for other cell types are included in \Cref{app:sec:extra-ex-results}.
    To estimate the number of latent factors $r$, we analyze the JIC values according to \Cref{rm:JIC}.
    For a subset of covariates, as shown in \Cref{fig:JIC-r-lupus-T4}, the scree plot reveals a diminishing negative log-likelihood with increasing $r$, which plateaus for $r=4$ to $r=7$, and the decrement becomes marginal beyond $r=7$. Consequently, we recommend selecting $r=7$ for \textsc{gcate}-subset analysis, and similarly, $r=2$ for \textsc{gcate}-full analysis.
    We also conduct the sensitivity analysis for the number of latent factors in \Cref{subsubsec:sensitivity}.

    Our first analysis is to treat \textsc{glm}-full as \textsc{glm}-oracle and inspect the performance of all four methods without measured confounders included.
    The majority of the test statistics obtained for \textsc{glm}-full are well approximated by a standard normal distribution, which suggests that the experiment conducted by \cite{perez2022single} was well controlled, and the impact of unmeasured confounders was negligible (\Cref{fig:lupus-zscores}).
    However, when we excluded the measured confounders, the \textsc{glm}-subset statistics were poorly calibrated, indicating that controlling for these variables is essential to proper analysis, either directly or indirectly. The \textsc{cate} statistics are even more poorly calibrated than \textsc{glm}-subset, suggesting that these sparse data cannot be modeled using a linear model, though restricting the test to the top 250 HVGs yields test statistics closer to the expected distribution (\Cref{fig:lupus-cate-250}).
    With the empirical null adjustment, \textsc{cate}-mad performed somewhat better, but this adaptation is insufficient, suggesting that \textsc{cate} cannot remove the confounding effects when the data are unsuitable for a linear model.
    Finally, the performance of \textsc{gcate} is ideal: the majority of the statistics are well approximated by the standard normal, and a few signals can be captured on the right tail.    
    Similar results were obtained for each of the 5 biggest cell types, as shown in \Cref{fig:lupus-zscores-others}.

    \begin{figure}[!t]
        \centering
        \includegraphics[width=0.9\textwidth]{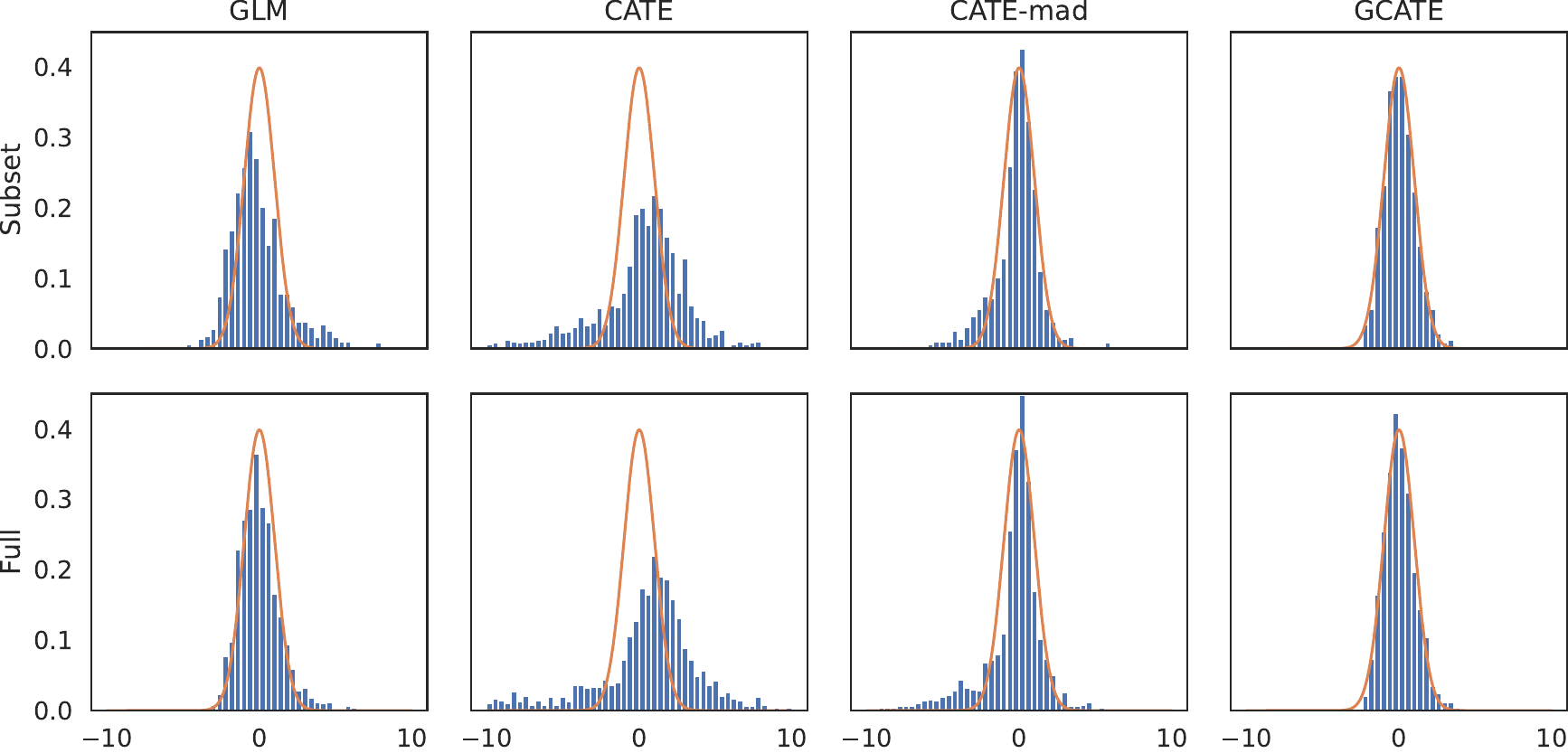}
        \caption{Results on the lupus datasets. 
        Histograms of lupus $z$-statistics of different methods on T4 cell type.
        The first row uses only a subset of the covariates, while the second row uses the full set of covariates for all the methods.
        The orange curves represent the standard normal density.
        }\label{fig:lupus-zscores}
    \end{figure}

    For comparison, we label genes based on the \textsc{glm}-full analysis with FDR control at cutoff 0.2 as ``true positives", resulting in 72 significant genes for the T4 cell type.
    With FDR control at cutoff 0.2, 15 of the 16 \textsc{gcate}'s statistics overlap with the true positives, indicating that the test loses power when the impact of confounders has to be removed using factor analysis.
    Still, the test appears to control the error rate.
    To illustrate the performance of the 4 competing analysis methods across all 5 large cell types, we calculate the precision and specificity using 0.2 as a cutoff for false discovery rate control.
    As shown in \Cref{fig:lupus-violin}, only \textsc{gcate} achieves uniformly high precision and specificity.

    To compare different methods in the biological significance of the discoveries, we conduct gene ontology over-representation analysis to identify the related biological processes.
    As shown in \Cref{fig:treemap}, both \textsc{glm}-full and \textsc{gcate}-subset discover genes that are pertinent to the immune-response-related pathways, which also appear in prior studies on lupus \citep[Fig. 3]{perez2022single}.
    On the other hand, though hundreds of significant genes are claimed by \textsc{cate}-mad, they are not associated with meaningful biological pathways.
    The results indicate that \textsc{gcate} identifies scientifically more relevant genes than \textsc{cate} under unmeasured confounders.

    Our second analysis is to compare the four methods when all the measured covariates are included.
    As shown in the second row of \Cref{fig:lupus-zscores}, we observe similar performance for each of the three methods (\textsc{cate}, \textsc{cate}-mad, and \textsc{gcate}) remains similar whether partial or all covariates are included.
    In particular, we see that the test results of \textsc{cate} and \textsc{cate}-mad get more anti-conservative.
    On the other hand, as shown in \Cref{fig:lupus-T4-upset}, the results of \textsc{gcate}  are consistent and more powerful with added covariates, although they exhibit lower power than \textsc{glm}.
    This is expected as, in general, the estimated latent factors may remove some signals for confounder adjustment methods.
    Furthermore, the GO analysis of the biological processes given in \Cref{fig:treemap-full} aligns closely with the results in the first analysis using a subset of the covariates, suggesting the biological relevance of the findings from \textsc{gcate}. 
    Overall, \textsc{gcate} demonstrates robust performance and consistency across various levels of confounding.

\section{Discussion}
    We presented novel estimation and inference procedures for multivariate generalized linear models with unmeasured confounders in the high-dimensional scenarios when both the sample size $n$ and response size $p$ tend to infinity.
    Our approach consists of three main steps. 
    In the first step, we disentangle the marginal effects from the uncorrelated confounding effects, recovering the column space of latent coefficients $\hat{\bGamma}$ from the latter. 
    We provide non-asymptotic estimation error bounds for both the estimated natural parameter matrix $\hat{\bTheta}_0$ and the projection onto the column space of $\hat{\bGamma}$. 
    In the second step, we estimate both latent factors $\bZ$ and primary effects $\bB$ by solving a constrained lasso-type problem that confines $\bB$ to the orthogonal space of $\hat{\bGamma}$. 
    From the column-wise estimation error of the latent components, we obtain the estimation error for the primary effects in the presence of nuisance parameters. 
    In the third step, we design an inferential procedure to correct the bias introduced by $\ell_1$-regularization and establish Type-I error and family-wise error rate controls.

    Numerically, we demonstrate the usage of the proposed method with Poisson and Negative Binomial likelihoods for bulk-cell and single-cell simulations, respectively.
    Compared to alternative methods, the proposed method effectively controls the Type-I error and false discovery proportion while delivering enhanced statistical power and precision as the count data get sparser and more over-dispersed.
    Furthermore, our analysis of real single-cell datasets underscores the importance of accounting for confounding effects when major covariates are unobserved. 
    Notably, our proposed method consistently outperforms alternative techniques, demonstrating superior precision and specificity, thus establishing its suitability for high-dimensional sparse count data.
 
    The present study, while offering valuable insights, is not without its limitations and opportunities for future exploration. 
    Some of these include the development of hypothesis testing for confounding effects, the theoretical guarantee of the FDR, and more robust criteria for selecting the optimal number of latent factors.
    Recent works by \citet{dai2023scale} and \citet{chen2022determining} offer promising insights that may contribute to resolving some of these challenges.
    Although we have briefly touched upon the applicability of our proposed method under non-canonical link functions in \Cref{app:subsec-nb-log-link}, comprehensive theoretical guarantees remain an area deserving of further research and investigation.

\section*{Acknowledgement}
    We sincerely thank Kevin Lin for generously sharing his carefully prepared manuscript on cohort eSVD, which has proven to be an indispensable asset in the advancement of our current research endeavor.
    We are indebted to Pratik Patil, Catherine Wang, F. William Townes, Zach Branson, Eli Ben Michael, and all members of the CMU GenStats Lab Group and CMU Causal Reading Group for suggestions on the earlier draft and many insightful conversations.
    This work used the Bridges-2 system at the Pittsburgh Supercomputing Center (PSC) through allocations BIO220140 and MTH230023 from the Advanced Cyberinfrastructure Coordination Ecosystem: Services \& Support (ACCESS) program. This project was funded by the National Institute of Mental Health (NIMH) grant R01MH123184.
    We thank the editor, the associate editor, and two anonymous reviewers for their valuable and constructive comments, which significantly improved this paper.


\appendix

\counterwithin{theorem}{section}
\renewcommand{\thesection}{\Alph{section}}
\setcounter{table}{0}
\renewcommand{\thetable}{\thesection\arabic{table}}
\setcounter{figure}{0} 
\renewcommand\thefigure{\thesection\arabic{figure}}
\renewcommand{\thealgorithm}{\thesection.\arabic{algorithm}}

\clearpage
\begin{center}
    {\LARGE\bf Appendix}
\end{center}

\bigskip

The appendix includes the proof for all the theorems, computational details, and extra experiment results.
The structure of the appendix is listed below:

\bigskip

\begin{table}[!ht]
\centering
\begin{tabularx}{0.95\textwidth}{l l l }
    \toprule
    \multicolumn{2}{c}{\textbf{Appendix}} & \textbf{Content} \\
    \midrule \addlinespace[0.5ex] 
    \Cref{app:sec:prop:iden-B} & \cellcolor{lightgray!25} & Proof of \Cref{prop:iden-B}\\ \addlinespace[0.5ex] \cmidrule(l){1-3}\addlinespace[0.5ex] 
    \multirow{2}{*}{\Cref{app:sec:thm:est-error-bTheta}} & \ref{app:subsec:est-Theta} & Proof of \Cref{thm:est-error-bTheta}. \\
    & \ref{thm:est-error-bTheta-tech-lems} & \Crefrange{lem:upper-likelihood-diff}{lem:oper-norm-subexp}, which are used in the proof of \Cref{thm:est-error-bTheta}.\\ \addlinespace[0.5ex] \cmidrule(l){1-3}\addlinespace[0.5ex] 
    \multirow{3}{*}{\Cref{app:est-confound}} & \ref{app:est-confound-pre} & Preparatory definitions. \\
    & \ref{app:subsec:thm:est-error-proj-bGamma} & Proof of \Cref{thm:est-error-proj-bGamma}.\\
    &\ref{app:subsec:thm:est-error-proj-bGamma-tech-lems} & \Crefrange{lem:est-error-bGamma}{lem:bound-latent}, which are used in the proof of \Cref{thm:est-error-proj-bGamma}.\\ \cmidrule(l){1-3}
    \multirow{4}{*}{\Cref{app:est-direct}} & \ref{app:est-direct-pre} & Preparatory definitions. \\
    & \ref{app:subsec:cor:est-confound-col} & Proof of \Cref{cor:est-confound-col}.\\
    &\ref{app:subsec:thm:est-err-B} & Proof of \Cref{thm:est-err-B}\\
    &\ref{app:subsec:est-latent-direct-tech-lems} & \Crefrange{lem:optimality}{lem:boundedness}, which are used in the proof of \Cref{thm:est-err-B}.\\ \addlinespace[0.5ex] \cmidrule(l){1-3}\addlinespace[0.5ex]    
    \multirow{3}{*}{\Cref{app:sec:thm:normality}} & \ref{app:subsec:thm:normality} & Proof of \Cref{thm:normality}.\\
    & \ref{app:subsec:prop:simul-inference} & Proof of \Cref{prop:simul-inference}.\\
    & \ref{app:subsec:inference-tech-lems} & \multicolumn{1}{p{0.68\textwidth}}{\Crefrange{lem:sol-u}{lem:hsigma}, which are used in the proof of \Cref{thm:normality} and \Cref{prop:simul-inference}.}\\ \addlinespace[0.5ex] \cmidrule(l){1-3}\addlinespace[0.5ex] 
    \multirow{4}{*}{\Cref{app:sec:comp}} & \ref{app:subsec:exp-family} & Summary of commonly used exponential families.\\
    &\ref{app:subsec:opt} & \multicolumn{1}{p{0.68\textwidth}}{Initialization procedure, alternative maximization algorithm, and the estimation of dispersion parameters.}\\
    & \ref{app:subsec:hyperparam} &  Choosing hyperparameters in practice.\\
    &\ref{app:subsec-nb-log-link} & Discussion about non-canonical links.\\ \addlinespace[0.5ex] \cmidrule(l){1-3}\addlinespace[0.5ex] 
    \multirow{4}{*}{\Cref{app:sec:extra-ex-results}} & \ref{app:subsec:simu-sample-split} & Efficiency loss of sample splitting. \\
    &\ref{subsec:est-err}&The blessing of dimensionality.\\
     & \ref{app:subsec-lupus} & Information about lupus data. \\
     & \ref{app:subsec-extra} & Extra results on lupus datasets. \\
    \addlinespace[0.5ex] \arrayrulecolor{black}
    \bottomrule
\end{tabularx}
\end{table}
\addcontentsline{toc}{part}{\appendixname}

\clearpage
\section{Proof of \Cref{prop:iden-B}}\label{app:sec:prop:iden-B}

 \begin{proof}[Proof of \Cref{prop:iden-B}]
    Because the one-parameter exponential family is minimal, the natural parameter space is convex, and the log-partition function $A$ is strictly convex.
    Based on the information of the first moment of $\by$ and the log-partition function $A$, we can identify $ \bB\bx + \bGamma\bz = \btheta = A'^{-1}(\EE[\by])$.
    Because $\bGamma\bw$ has zero mean and is uncorrelated to $\bx$, $\Cov(\bGamma\bw)=\bGamma\bSigma_w\bGamma^{\top}$ can be identified as the residual covariance of regression of $\btheta$ on $\bx$.
                
    Because $\lambda_r(\bGamma\bSigma_w\bGamma^{\top})\geq \tau_p$, $\bGamma$ and $\bSigma_w$ have full rank.
    Let $\bU_r\bLambda_r\bU_r^{\top}$ be the reduced eigenvalue decomposition of $\bGamma\bSigma_w\bGamma^{\top}$ where $\bU_r\in\RR^{p\times r}$.
    Note that 
    \begin{align*}
        \cP_{\bGamma} &= \bGamma\bSigma_w^{1/2}(\bSigma_w^{1/2}\bGamma^{\top}\bGamma\bSigma_w^{1/2})^{-1}\bSigma_w^{1/2}\bGamma^{\top} \\
        &= \bU_r\bLambda_r^{\frac{1}{2}} (\bLambda_r^{\frac{1}{2}}\bU_r^{\top}\bU_r\bLambda_r^{\frac{1}{2}})^{-1} \bLambda_r^{\frac{1}{2}}\bU_r^{\top}\\
        &= \bU_r\bLambda_r^{\frac{1}{2}} (\bLambda_r^{\frac{1}{2}}\bLambda_r^{\frac{1}{2}})^{-1} \bLambda_r^{\frac{1}{2}}\bU_r^{\top}\\
        &=\bU_r\bU_r^{\top}.
    \end{align*}
    Thus, $\cP_{\bGamma}$ can be recovered.

    By the orthogonal decomposition, we have $\bB = \cP_{\bGamma}^{\perp}\bB + \cP_{\bGamma}\bB $.
    Let $\be_{p,i}=(\delta_{i\ell})_{1\leq \ell\leq p}$ and $\be_{d,j}=(\delta_{j\ell})_{1\leq \ell\leq d}$.
    We consider the $(i,j)$-th entry of $\cP_{\bGamma}\bB$:
    \begin{align}
        |\be_{p,i}^{\top} \cP_{\bGamma}\bB \be_{d,j}| &= |\be_{p,i}^{\top} \bGamma\bSigma_w^{1/2}(\bSigma_w^{1/2}\bGamma^{\top}\bGamma\bSigma_w^{1/2})^{-1}\bSigma_w^{1/2}\bGamma^{\top}\bB \be_j| \notag\\
        &\leq \| \bGamma\bSigma_w^{1/2}(\bSigma_w^{1/2}\bGamma^{\top}\bGamma\bSigma_w^{1/2})^{-1}\bSigma_w^{1/2}\bGamma^{\top}\be_{p,i}\|_{\infty}\cdot\|\bB \be_{d,j}\|_1 \notag\\
        &= \max_{\ell\in[p]}| \be_{\ell}^{\top}\bGamma\bSigma_w^{1/2}(\bSigma_w^{1/2}\bGamma^{\top}\bGamma\bSigma_w^{1/2})^{-1}\bSigma_w^{1/2}\bGamma^{\top}\be_{p,i}|\cdot\|\bB \be_{d,j}\|_1 \notag\\
        &\leq  \max_{\ell\in[p]}\|\bSigma_w^{1/2}\bGamma^{\top}\be_{p,\ell}\|_{2} \cdot \|(\bSigma_w^{1/2}\bGamma^{\top}\bGamma\bSigma_w^{1/2})^{-1}\bSigma_w^{1/2}\bGamma^{\top} \be_{p,i}\|_2 \cdot \|\bB \be_{d,j}\|_1 \notag\\
        &\leq \max_{\ell\in[p]}\|\bSigma_w^{1/2}\bGamma^{\top}\be_{p,\ell}\|_{2}\cdot
        \|(\bSigma_w^{1/2}\bGamma^{\top}\bGamma\bSigma_w^{1/2})^{-1}\|_{\oper} \cdot\|\bSigma_w^{1/2}\bGamma^{\top} \be_{p,i}\|_2 \cdot\|\bB_j\|_1 \notag\\
        &\leq \max_{\ell\in[p]}\|\bSigma_w^{1/2}\bGamma^{\top}\be_{p,\ell}\|_{2}\cdot
        \lambda_r(\bGamma\bSigma_w\bGamma^{\top})^{-1}\cdot\|\bSigma_w^{1/2}\bGamma^{\top} \be_{p,i}\|_2 \cdot\|\bB_j\|_1 \notag\\
        &=\cO\left(\frac{\|\bB_j\|_1}{\tau_p}\right) \notag\\
        &=o(1), \label{eq:proof-prop-norm}
    \end{align}
    where the first two inequalities are from Holder's inequality;
    the third inequality holds because of the sub-multiplicativity of the operator norm; and the last inequality holds because $\bSigma_w^{1/2}\bGamma^{\top}\bGamma\bSigma_w^{1/2}$ and $\bGamma\bSigma_w\bGamma^{\top}$ have the same non-zero eigenvalues.
    Then we have
    \begin{align*}
        \|\cP_{\bGamma}\bB\|_{\fro} &\leq \sqrt{p} \max_{1\leq i\leq p}\| \bB^{\top}\cP_{\bGamma} \be_{p,i}\|_2 \lesssim \frac{\sqrt{p}\|\bB\|_{1,1}}{\tau_p}. 
    \end{align*}
    Thus, the conclusion follows.
\end{proof}

\section{Estimation error of natural parameters by alternative maximization}\label{app:sec:thm:est-error-bTheta}

In this section, we gather useful results to bound the estimation error for the natural parameter matrix.
Let $E_C = \{\bTheta^*\in \cR_{C}^{n\times p}\}$ be the event that all the natural parameters are bounded.
From \Cref{asm:model}, we know that $\PP(E_C)=\iota_n \rightarrow 1$ as $n\rightarrow \infty$.
Under event $E_C$, because $A$ is strictly convex and trice continuously differentiable, we have that 
\begin{align}
    \kappa_1:=\inf_{\theta\in\cR_C} A''(\theta)>0\text{ and }\kappa_2:=\sup_{\theta\in\cR_C} A''(\theta)<\infty. \label{eq:kappa}
\end{align}
These facts enable us to derive \Cref{thm:est-error-bTheta}, which will be used in \Cref{app:est-confound} for proving \Cref{thm:est-error-proj-bGamma} and in \Cref{app:est-direct} for proving \Cref{thm:est-err-B}.

\subsection{Estimation error of natural parameters}\label{app:subsec:est-Theta}

\begin{proof}[Proof of \Cref{thm:est-error-bTheta}]
    We split the proof into two parts under event $E_C$.
    
    \paragraph{Part (1) Bounding $\|\hat{\bTheta}_0 - \bTheta^*\|_{\fro}$.} 
    From the assumption of \Cref{thm:est-error-bTheta}, we have    
    \begin{align*}
        \cL(\bTheta^*) - \cL(\hat{\bTheta}_0)  \geq 0,
    \end{align*}
    which also holds when $\hat{\bTheta}_0$ is the maximum likelihood estimator.
    From \Cref{lem:upper-likelihood-diff} it further follows that
    \begin{align*}
        0 \leq \sqrt{2(d+r)}\|\bY-A'(\bTheta^*)\|_{\oper} \|\hat{\bTheta}_0-\bTheta^*\|_{\fro} - \frac{\kappa_2}{2} \|\hat{\bTheta}_0-\bTheta^*\|_{\fro}^2.
    \end{align*}
    Thus, we have
    \begin{align*}
        \|\hat{\bTheta}_0-\bTheta^*\|_{\fro} &\leq \frac{2\sqrt{2(d+r)}}{\kappa_2} \|\bY-A'(\bTheta^*)\|_{\oper}.
    \end{align*}
    Next, we bound the operator norm of $\bY-A'(\bTheta^*)$.
    Conditional on $\bX$ and $\bZ^*$, observe that $e_{ij}:=y_{ij}-A'(\theta_{ij}^*)$ ($i\in[n]$ and $j\in[p]$) are independent, zero-mean, and sub-exponential with parameters $\nu=\sqrt{\kappa_2}$ and $\alpha=1/C^2$.
    To see this, note that its moment generating function is $\EE[\exp(te_{ij})]=\exp(A(\theta_{ij}^* +t ) - A(\theta_{ij}^*  )  - t A'(\theta_{ij}^*)) = \exp(A''(\theta_{ij}^* +t' )t^2/2)$ for some $|t'|<|t|$.
    By \Cref{asm:model}, we have $\EE[\exp(te_{ij})]\leq \kappa_2t^2/2$ for all $|t|<C^2$, which shows that $e_{ij}$ is sub-exponential.
    By \Cref{lem:oper-norm-subexp}, for any $\delta>0$, with probability at least $1-(n+p)^{-\delta}-(np)^{-\delta}$, it follows that
    \begin{align*}
        \|\hat{\bTheta}_0-\bTheta^*\|_{\fro} &\leq \frac{2\sqrt{2(d+r)}}{\kappa_2}(4 \nu \sqrt{n\vee p} + 2\delta^{\frac{3}{2}}\sqrt{c}(\alpha\vee \nu) \log(np) \sqrt{\log(n+p)} ) \\
        &\lesssim \sqrt{(d+r)(n\vee p)}.
    \end{align*}

    \paragraph{Part (2) Bounding $\max_{1\leq j\leq p}\|(\hat{\bTheta}_0)_{j} - \bTheta^*_{j}\|_2$.}
    Similarly to Part (1), by union bound, we have
    \begin{align*}
        \max_{1\leq j\leq p} \|(\hat{\bTheta}_0)_j-\bTheta^*_j\|_{2} &\leq \max_{1\leq j\leq p}\frac{2\sqrt{2(d+r)}}{\kappa_2} \|\bY_j-A'(\bTheta^*_j)\|_{2}\\
        &\leq \frac{2\sqrt{2(d+r)}}{\kappa_2}(4 \nu \sqrt{n} + 2(\delta+1)^{\frac{3}{2}}\sqrt{c}(\alpha\vee \nu) \log(n) \sqrt{\log(n+1)} ) ,
    \end{align*}
    with probability at least $1-p(n+p)^{-\delta-1}-p(np)^{-\delta-1}\geq 1-(n+p)^{-\delta}-(np)^{-\delta}$, for any $\delta>0$.
    
    For $\delta>1$, taking union bound over the above two events and $E_C$ finishes the proof.
\end{proof}

\subsection{Technical lemmas}\label{thm:est-error-bTheta-tech-lems}

    \begin{lemma}[Upper bound of likelihood difference]\label{lem:upper-likelihood-diff}
    Suppose that $\bTheta_1\in\cR_{r_1},\bTheta_2\in\cR_{r_2}$ with $r_j=\rank(\bTheta_j)$ for $j=1,2$.
    Define $\kappa_1:=\inf_{\theta\in\cR} A''(\theta)$. Then it holds that
    \begin{align*}
        \cL(\bTheta_2) - \cL(\bTheta_1) \leq \frac{\sqrt{r_1+r_2}}{n}\|\bY-A'(\bTheta_2)\|_{\oper} \|\bTheta_1-\bTheta_2\|_{\fro} - \frac{\kappa_1}{2n} \|\bTheta_1-\bTheta_2\|_{\fro}^2,
    \end{align*}
    and
    \begin{align*}
        \cL(\bTheta_2) -\cL(\bTheta_1) \leq \frac{\sqrt{r_1+r_2}}{n}\|\bY-A'(\bTheta_1)\|_{\oper} \|\bTheta_1-\bTheta_2\|_{\fro} + \frac{\kappa_2}{2n} \|\bTheta_1-\bTheta_2\|_{\fro}^2.
    \end{align*}
\end{lemma}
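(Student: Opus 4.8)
The plan is to treat $\cL$ as a smooth convex function of the matrix argument $\bTheta$ and apply a second-order Taylor expansion with an exact (Lagrange) remainder along the segment joining $\bTheta_1$ and $\bTheta_2$. Writing $\cL(\bTheta) = -\frac1n\sum_{i,j}(y_{ij}\theta_{ij} - A(\theta_{ij}))$, its gradient is $-\frac1n(\bY - A'(\bTheta))$ and its Hessian is entrywise diagonal with second partials $\frac1n A''(\theta_{ij})$. Both displayed inequalities are instances of this one expansion; they differ only in which endpoint I expand around. Throughout, $\langle \bA,\bB\rangle := \sum_{i,j}A_{ij}B_{ij}$ denotes the Frobenius inner product.

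For the first inequality I would set $g(t) = \cL(\bTheta_2 + t(\bTheta_1 - \bTheta_2))$ and use $g(1) - g(0) = g'(0) + \frac12 g''(\xi)$ for some $\xi\in(0,1)$, giving
\[
\cL(\bTheta_1) - \cL(\bTheta_2) = -\frac1n\langle \bY - A'(\bTheta_2),\, \bTheta_1 - \bTheta_2\rangle + \frac1{2n}\sum_{i,j} A''(\psi_{ij})(\bTheta_1 - \bTheta_2)_{ij}^2,
\]
where each $\psi_{ij}$ lies between $(\bTheta_1)_{ij}$ and $(\bTheta_2)_{ij}$, hence in $\cR$ by convexity of the natural parameter space. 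Rearranging for $\cL(\bTheta_2) - \cL(\bTheta_1)$ and lower-bounding the quadratic term by $\frac{\kappa_1}{2n}\|\bTheta_1-\bTheta_2\|_{\fro}^2$ (using $A''\ge \kappa_1$ on $\cR$) yields the stated form once the linear term is controlled. For the second inequality I would instead expand around $\bTheta_1$, i.e.\ take $\tilde g(t) = \cL(\bTheta_1 + t(\bTheta_2 - \bTheta_1))$; this produces the same linear term but with $\bY - A'(\bTheta_1)$ in place of $\bY - A'(\bTheta_2)$, and now the quadratic remainder is \emph{added}, which I upper-bound by $\frac{\kappa_2}{2n}\|\bTheta_1-\bTheta_2\|_{\fro}^2$ via $A''\le \kappa_2$.

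The one step requiring care is converting the linear inner product into a product of an operator norm and a Frobenius norm. Here I would use trace duality, $\langle \bA, \bB\rangle \le \|\bA\|_{\oper}\,\|\bB\|_{\mathrm{nuc}}$ (nuclear norm, the sum of singular values), followed by the rank bound $\|\bB\|_{\mathrm{nuc}} \le \sqrt{\rank(\bB)}\,\|\bB\|_{\fro}$ applied to $\bB = \bTheta_1 - \bTheta_2$. Since $\rank(\bTheta_1 - \bTheta_2)\le \rank(\bTheta_1)+\rank(\bTheta_2) = r_1 + r_2$, this produces exactly the factor $\sqrt{r_1+r_2}$ multiplying $\frac1n\|\bY-A'(\bTheta_2)\|_{\oper}\|\bTheta_1-\bTheta_2\|_{\fro}$ (and the analogue with $\bTheta_1$ for the second inequality). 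Combining the linear-term bound with the curvature bound on the quadratic remainder gives both claims.

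The remaining work is essentially bookkeeping rather than anything deep: I must check that the intermediate Taylor points $\psi_{ij}$ stay inside $\cR$, which is guaranteed by convexity of $\cR$ and ensures $\kappa_1 \le A''(\psi_{ij}) \le \kappa_2$ applies pointwise, and that rank subadditivity delivers the correct $\sqrt{r_1+r_2}$ constant. The trace-duality/rank step is the only place where a careless application could lose the sharp dimension dependence, so that is where I would concentrate the verification.
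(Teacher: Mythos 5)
Your proposal is correct and follows essentially the same route as the paper's proof: the paper likewise splits $\cL(\bTheta_2)-\cL(\bTheta_1)$ into the linear term plus an entrywise second-order Taylor (Bregman) remainder bounded via $\kappa_1\le A''\le\kappa_2$, and controls the linear term with the inequality $|\tr(\bA^{\top}\bB)|\le\sqrt{\rank(\bB)}\,\|\bA\|_{\oper}\|\bB\|_{\fro}$ together with rank subadditivity. Your trace-duality-plus-nuclear-norm phrasing is just this same matrix inequality made explicit, so no gap remains.
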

\begin{proof}[Proof of \Cref{lem:upper-likelihood-diff}]
    Recall that $\cL(\bTheta)=n^{-1}[- \tr(\bY^{\top}\bTheta) + \tr(\one_{p\times n} A(\bTheta))]$.
    Then we have
    \begin{align}
        \cL(\bTheta_2) - \cL(\bTheta_1)  &= \frac{1}{n}\tr((\bY-A'(\bTheta_2))^{\top}(\bTheta_1-\bTheta_2)) \notag\\
        &\qquad - \frac{1}{n}\tr(\one_{p\times n}(A(\bTheta_1) - A(\bTheta_2)) - A'(\bTheta_2)^{\top}(\bTheta_1-\bTheta_2) ).\label{eq:lem:upper-likelihood-diff-eq-1}
    \end{align}
    Next, we analyze the two terms separately.

    For the first term, we have
    \begin{align}
        &\tr((\bY-A'(\bTheta_2))^{\top}(\bTheta_1-\bTheta_2)) \\
        \leq&  \sqrt{\rank(\bTheta_1-\bTheta_2)}\|\bY-A'(\bTheta_2)\|_{\oper}\|\bTheta_1-\bTheta_2\|_{\fro}\notag\\
        \leq& \sqrt{\rank(\bTheta_1)+\rank(\bTheta_2)}\|\bY-A'(\bTheta_2)\|_{\oper}\|\bTheta_1-\bTheta_2\|_{\fro}, \label{eq:lem:upper-likelihood-diff-eq-2}
    \end{align}
    where the first inequality is from the matrix norm inequality $|\tr(\bA^{\top}\bB)|\leq \sqrt{\rank(\bB)}\|\bA\|_{\oper}\|\bB\|_{\fro}$ and the last inequality is due to the fact that $\rank(\bA+\bB)\leq \rank(\bA)+\rank(\bB)$.

    For the second term, note that each entry inside the trace takes the form
    \begin{align*}
        A((\bTheta_{1})_{ij}) - A((\bTheta_{2})_{ij}) - A'((\bTheta_{2})_{ij})((\bTheta_{1})_{ij}-(\bTheta_{2})_{ij}) &= \frac{1}{2}A''(\theta)((\bTheta_{1})_{ij}-(\bTheta_{2})_{ij})^2 \\
        &\geq \frac{\kappa_1}{2}((\bTheta_{1})_{ij}-(\bTheta_{2})_{ij})^2
    \end{align*}
    where the first equality is from Taylor expansion with $\theta$ lies between $(\bTheta_{1})_{ij}$ and $(\bTheta_{2})_{ij}$ and the definition of $\kappa_1:=\inf_{\theta\in\cR} A''(\theta)\geq 0$.
    Thus we have
    \begin{align}
        \tr(\one_{p\times n}(A(\bTheta_1) - A(\bTheta_2)) - A'(\bTheta_2)^{\top}(\bTheta_1-\bTheta_2) ) &\geq \frac{\kappa_1}{2}\|\bTheta_1-\bTheta_2\|_{\fro}^2. \label{eq:lem:upper-likelihood-diff-eq-3}
    \end{align}

    Combining \eqref{eq:lem:upper-likelihood-diff-eq-1}, \eqref{eq:lem:upper-likelihood-diff-eq-2}, and \eqref{eq:lem:upper-likelihood-diff-eq-3} finishes the proof of the first inequality.
    Similarly, we have
    \begin{align*}
        \cL(\bTheta_2) - \cL(\bTheta_1)  &= \frac{1}{n}\tr((\bY-A'(\bTheta_1))^{\top}(\bTheta_1-\bTheta_2)) \notag\\
        &\qquad + \frac{1}{n}\tr(\one_{p\times n}(A(\bTheta_2) - A(\bTheta_1)) - A'(\bTheta_1)^{\top}(\bTheta_2-\bTheta_1) )\\
        &\leq \frac{\sqrt{r_1+r_2}}{n}\|\bY-A'(\bTheta_1)\|_{\oper} \|\bTheta_1-\bTheta_2\|_{\fro} + \frac{\kappa_2}{2n}\|\bTheta_1-\bTheta_2\|_{\fro}^2,
    \end{align*}
    which completes the proof of the second inequality.
\end{proof}

\begin{lemma}[Operator norm of matrices with sub-exponential entries]\label{lem:oper-norm-subexp}
    Let $\bX=(x_{ij})_{i\in[n],j\in[p]}$ be a matrix with independent and centered entries such that $x_{ij}$'s are $(\nu,\alpha)$-sub-exponential random variables\footnote{Here we adopt the definition from \citet[Definition 2.7]{wainwright2019high}. In some literature, the term `sub-gamma' is used interchangeably with `sub-exponential' to refer to this definition.} with parameters $\nu,\alpha>0$:
    \[\EE[\exp(t x_{ij})] \leq \exp(t^2 \nu^2/2),\qquad \forall\ |t|<\frac{1}{\alpha}.\]
    Then for all $\delta>0$, there exists a universal constant $c>0$ such that, with probability at least $1-(n+p)^{-\delta} - (np)^{-\delta}$, when $n,p$ are large enough, it holds that
    \begin{align*}
        \|\bX\|_{\oper} &\leq 4 \nu \sqrt{n\vee p} + 2\delta^{3/2}\sqrt{c}(\alpha\vee \nu) \log(np) \sqrt{\log(n+p)}.
    \end{align*}
\end{lemma}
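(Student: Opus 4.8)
The plan is to avoid a direct $\epsilon$-net union bound on the sub-exponential quadratic forms $u^{\top}\bX v$, since that strategy is lossy in exactly the regime this lemma must control. Instead I would decompose $\bX$ by the magnitude of its entries, reduce the leading term to a sharp operator-norm bound for a matrix with bounded entries, and push the heavy-tailed contribution into the lower-order polylogarithmic term.

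First I would truncate. Fix a threshold $\tau\asymp(\alpha\vee\nu)\,\delta\log(np)$ and set $\tilde{x}_{ij}=x_{ij}\ind\{|x_{ij}|\leq\tau\}$. The hypothesis $\EE[\exp(tx_{ij})]\leq\exp(t^2\nu^2/2)$ for $|t|<1/\alpha$ yields both $\Var(x_{ij})\leq\nu^2$ and the tail bound $\PP(|x_{ij}|\geq s)\leq 2\exp(-\tfrac{1}{2}\min(s^2/\nu^2,s/\alpha))$, so a union bound over the $np$ entries shows that with probability at least $1-(np)^{-\delta}$ no entry is truncated; on that event $\bX=\tilde{\bX}$ where $\tilde{\bX}=(\tilde{x}_{ij})$. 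It thus suffices to bound $\|\tilde{\bX}\|_{\oper}$. Writing $\tilde{\bX}=\bX^{\circ}+\EE\tilde{\bX}$ with $\bX^{\circ}$ centered, the truncation bias $\EE\tilde{x}_{ij}=-\EE[x_{ij}\ind\{|x_{ij}|>\tau\}]$ is exponentially small in $\tau/\alpha$, so $\|\EE\tilde{\bX}\|_{\oper}\leq\|\EE\tilde{\bX}\|_{\fro}$ is negligible. The entries of $\bX^{\circ}$ are centered, bounded by $2\tau$, and have variance at most $\nu^2$.

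Next I would control $\EE\|\bX^{\circ}\|_{\oper}$ by a sharp expectation bound for matrices with independent, mean-zero, bounded entries (via symmetrization together with a non-commutative Khintchine / Bandeira--van Handel estimate), giving $\EE\|\bX^{\circ}\|_{\oper}\lesssim\nu\sqrt{n\vee p}+\tau\sqrt{\log(n\vee p)}$: the row/column variance proxy is $\asymp\nu^2(n\vee p)$, which produces the clean leading term with no logarithmic inflation, while the uniform entry bound $2\tau$ contributes only a lower-order term. Finally, since $\|\cdot\|_{\oper}$ is a convex, $1$-Lipschitz function of the entries in Frobenius norm and those entries are bounded by $2\tau$, Talagrand's convex concentration inequality gives $\PP(\|\bX^{\circ}\|_{\oper}\geq\EE\|\bX^{\circ}\|_{\oper}+t)\leq 2\exp(-ct^2/\tau^2)$; choosing $t\asymp\tau\sqrt{\delta\log(n+p)}$ makes the right-hand side at most $(n+p)^{-\delta}$. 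For $n,p$ large the leading contribution is absorbed into $4\nu\sqrt{n\vee p}$, and the pieces $\tau\sqrt{\log(n\vee p)}$ and $t\asymp\tau\sqrt{\delta\log(n+p)}$ with $\tau\asymp(\alpha\vee\nu)\delta\log(np)$ combine into the stated $2\delta^{3/2}\sqrt{c}\,(\alpha\vee\nu)\log(np)\sqrt{\log(n+p)}$ term; a union bound over the truncation and concentration events gives probability at least $1-(n+p)^{-\delta}-(np)^{-\delta}$.

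The main obstacle is obtaining the leading term with a small constant and no logarithmic factor. A naive discretization of $\|\bX\|_{\oper}=\sup_{u,v}u^{\top}\bX v$ fails here: for sub-exponential summands the Bernstein/MGF bound for $u^{\top}\bX v$ is valid only for $|t|<1/\alpha$, so its tail saturates into a linear regime, and a union bound over an $\epsilon$-net of cardinality $e^{\cO(n+p)}$ then forces a deviation of order $\alpha(n+p)$ rather than the desired $\nu\sqrt{n\vee p}$. Truncating at the $\log(np)$ scale is precisely what circumvents this: it confines the heavy-tailed behaviour to a deterministically controlled event and lets the sharp bounded-entry estimate and Talagrand's inequality supply, respectively, the clean $\sqrt{n\vee p}$ term and the polylogarithmic correction. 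Care is also needed to verify that the truncation bias vanishes faster than any inverse polynomial and that the single hypothesis on the moment generating function simultaneously delivers the variance and tail control used above.
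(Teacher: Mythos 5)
Your proposal is correct in substance and rests on the same engine as the paper's proof---truncation at the $\log(np)$ scale followed by a sharp Bandeira--van Handel-type bound for matrices with bounded entries---but the technical packaging is genuinely different. The paper first symmetrizes with an independent copy $\bX'$ (embedding $\bX-\bX'$ into a Hermitian dilation), which makes every entry a symmetric random variable; truncation then preserves mean zero for free, and \citet[Corollary 3.12]{bandeira2016sharp} supplies expectation control and concentration in a single tail bound, so neither a truncation-bias estimate nor Talagrand's inequality is needed. You instead truncate directly, which forces you to re-center and to verify that $\|\EE\tilde{\bX}\|_{\fro}$ is negligible, and you then assemble the result from an expectation bound plus Talagrand's convex concentration inequality. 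Both routes work; yours is more modular and makes transparent where the leading $\nu\sqrt{n\vee p}$ term and the polylogarithmic correction each come from, while the paper's symmetrization is the slicker way to dispose of the truncation bias. Two points in your write-up need care. First, for the union bound over $np$ entries to leave probability $(np)^{-\delta}$ you need $\PP(|x_{ij}|>\tau)\leq(np)^{-\delta-1}$, so the truncation level must scale like $(\delta+1)(\alpha\vee\nu)\log(np)$ rather than $\delta\log(np)$ (the paper uses exactly this $\delta+1$ factor); this also matters for your bias bound $\sqrt{np}\,\max_{ij}|\EE\tilde{x}_{ij}|$, which is only $o(1)$ for \emph{all} $\delta>0$ once the exponent exceeds $1/2$. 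Second, since the lemma asserts the explicit constant $4\nu\sqrt{n\vee p}$, your ``$\lesssim$'' in the expectation step must be instantiated with a version of the bounded-entry bound whose leading constant is at most $4$ after accounting for the Hermitian dilation; the sharp $(2+\epsilon)\sigma$ forms of the Bandeira--van Handel expectation bound do deliver this, so the claim survives, but it should be stated rather than absorbed into an implicit constant.
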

\begin{proof}[Proof of \Cref{lem:oper-norm-subexp}]
    We define a symmetric matrix
    \[\bZ=\left(z_{ij}\right)=\left(\begin{array}{cc}0 & \tilde{\bX} \\ \tilde{\bX}^{\top} & 0\end{array}\right) \in \RR^{(n+p)\times(n+p)},\]
    where $\tilde{\bX}:=(\tilde{x}_{ij})=\bX-\bX^{\prime}$ and $\bX^{\prime}=\left(x_{i j}^{\prime}\right)$ is an independent copy of $\bX$.
    Because $\tilde{x}_{ij}$'s have symmetric distribution and are independent, it follows that $z_{ij}$'s are also independent and symmetric random variables, and $\|\bZ\|_{\oper} = \|\tilde{\bX}\|_{\oper}$.
    Because the tail event $\ind\{\|\bX-\bX'\|_{\oper} \geq t\}$ is a convex function on $\bX'$, by Jensen's inequality we have 
    \[\ind\{\|\bX\|_{\oper} \geq t\} = \ind\{\|\bX-\EE_{\bX'}[\bX']\|_{\oper} \geq t\} \leq   \EE_{\bX'}[\ind\{\|\bX-\bX'\|_{\oper} \geq t\}] = \EE_{\bX'}[\ind\{\|\tilde{\bX}\|_{\oper} \geq t\}].\]
    By Fubini's theorem, it follows that
    \[\PP(\|\bX\|_{\oper} \geq t) \leq \EE_{\bX,\bX'}[\ind\{\|\tilde{\bX}\|_{\oper} \geq t\}] = \PP(\|\tilde{\bX}\|_{\oper} \geq t) = \PP(\|\bZ\|_{\oper} \geq t).\]
    Then, it suffices to bound the tail probability of $\|\bZ\|_{\oper}$.

    We define a truncated random matrix $\bZ(\lambda)$ of $\bZ$,
    \[
    \bZ(\lambda)=\left(z_{ij}(\lambda)\right)_{1 \leq i \leq n, 1 \leq j \leq p}=\left(z_{ij} \ind\left(\left|z_{ij}\right| \leq \lambda\right)\right)_{1 \leq i \leq n, 1 \leq j \leq p}
    \]
    whose entries are independent, symmetric random variables bounded by $\lambda$. 
    By \citet[Corollary 3.12]{bandeira2016sharp}, there exists a universal constant $c>0$ such that
    \[
    \PP\left(\|\bZ(\lambda)\|_{\oper} \geq 2^{\frac{3}{2}} \max _{1 \leq i \leq n+p}\left(\sum_{j=1}^{n+p} \EE[z_{ij}^2(\lambda)]\right)^{\frac{1}{2}}+t\right) \leq(n+p) \exp\left(- \frac{t^2}{c \lambda^2}\right).
    \]
    Note that
    \begin{align*}
        \max _{1 \leq i \leq n+p}\left(\sum_{j=1}^{n+p} \EE[z_{ij}^2(\lambda)]\right)^{\frac{1}{2}} & \leq \max_{1 \leq i \leq n+p}\left(\sum_{j=1}^{n+p} \EE[z_{ij}^2]\right)^{1 / 2} \\
        & = \max \left\{\max_{1 \leq i \leq n}\left(\sum_{j=1}^p \EE[\tilde{x}_{ij}^2]\right)^{\frac{1}{2}}, \max _{1 \leq j \leq p}\left(\sum_{i=1}^N \EE[\tilde{x}_{ij}^2]\right)^{\frac{1}{2}}\right] \\
        &\leq \max\{\sqrt{p},\sqrt{n}\} \max_{i,j}\EE[\tilde{x}_{ij}^2]^{\frac{1}{2}} \\
        & \leq \sqrt{2(n\vee p)}\max_{i,j}\EE[x_{ij}^2]^{\frac{1}{2}}\\
        &\leq \nu\sqrt{2(n\vee p)} ,
    \end{align*}
    where the first inequality is from the definition of the truncated variable, the second and the third inequality is by Cauchy-Schwartz inequality, and the last inequality is because $x_{ij}$ is $(\nu,\alpha)$-sub-exponential.
    Thus, the above two inequality yields that
    \begin{align*}
        \PP\left(\|\bZ(\lambda)\|_{\oper} \geq 4\nu\sqrt{n\vee p} +t\right) &\leq (n+p) \exp\left(-\frac{t^2}{c \lambda^2}\right).
    \end{align*}
    Then we have
    \begin{align*}
        \PP\left(\|\bZ\|_{\oper} \geq 4\nu\sqrt{n\vee p} +t\right) &\leq \PP\left(\|\bZ(\lambda)\|_2 \geq 4\nu\sqrt{n\vee p} +t\right)+\PP\left(\max _{1 \leq i, j \leq n+p}\left|z_{ij}\right|>\lambda\right)    \\
        &\leq  (n+p) \exp\left(-\frac{t^2}{c \lambda^2}\right) + \sum_{1\leq i\leq n}\sum_{1\leq j\leq p} \PP(|\tilde{x}_{ij}|>\lambda)\\
        &\leq (n+p) \exp\left(-\frac{t^2}{c \lambda^2}\right) + np \left(\exp\left(-\frac{\lambda^2}{4\nu^2}\right) \vee \exp\left(-\frac{\lambda}{2\alpha}\right)\right).
    \end{align*}
    where the last inequality follows because $\tilde{x}_{ij} = x_{ij}-x_{ij}'$ is $(2\nu^2,\alpha)$-sub-exponential.
    For all $\delta>0$, let $\lambda  = 2(\delta+1)(\alpha\vee \nu) \log (np)$ and $np\geq 3$, the second term is bounded by $(np)^{-\delta}$.
    Let $t = \lambda ((\delta+1) c\log (n+p))^{1/2}$, the first term is bounded by $(n+p)^{-\delta}$.
    Combining these yields that
    \begin{align*}
        \PP\left(\|\bZ\|_{\oper} \geq 4\nu\sqrt{n\vee p} + 2\delta^{3/2}\sqrt{c}(\alpha\vee \nu) \log (np) \sqrt{\log (n+p)} \right) &\leq (n+p)^{-\delta} + (np)^{-\delta},
    \end{align*}
    which completes the proof.
\end{proof}

\section{Estimation of latent coefficients}
\label{app:est-confound}

\subsection{Preparatory definitions}\label{app:est-confound-pre}
Recall that $\hat{\bW}$ and $\hat{\bGamma}$ are derived from the SVD of $\hat{\bW}_0\hat{\bGamma}_0$ from the first-stage optimization and we have $\hat{\bW}\hat{\bGamma}=\hat{\bW}_0\hat{\bGamma}_0$.
Analogous to \citet{bing2022inference}, we define $\bH_0 := (np)^{-1} \bW^{*\top}\bW^*\bGamma^{*\top}\hat{\bGamma}\bSigma^{-3/2}$ and 
\begin{align}
    \tilde{\bGamma} :=\bGamma^*\bH_0 = (np)^{-1} \bGamma^*\bW^{*\top}\bW^*\bGamma^{*\top}\hat{\bGamma}\bSigma^{-3/2}, \label{eq:tGamma}    
\end{align}
which is identifiable because it depends on both the data $\hat{\bGamma}\bSigma^{-3/2}$ and the identifiable quantity $\bGamma^*\bW^{*\top}\bW^*\bGamma^{*\top}$.
Note that 
\begin{align}
    \cP_{\tilde{\bGamma}} &= \tilde{\bGamma}(\tilde{\bGamma}^{\top}\tilde{\bGamma})^{-1}\tilde{\bGamma}^{\top} \notag\\
    &= \bGamma^*(\bW^{*\top}\bW^*\bGamma^{*\top}\bV)(\bV^{\top}\bGamma^*\bW^{*\top}\bW^*\bGamma^{*\top}\bGamma^*\bW^{*\top}\bW^*\bGamma^{*\top} \bV )^{-1} (\bV^{\top}\bGamma^*\bW^{*\top}\bW^*)\bGamma^{*\top}\notag\\
    &= \bGamma^*(\bGamma^{*\top} \bGamma^* )^{-1} \bGamma^{*\top}\notag\\
    &= \cP_{\bGamma^*} \label{eq:P-tGamma}
\end{align}
because both $\bGamma^{*\top}\bGamma^*$ and $\bW^{*\top}\bW^*\bGamma^{*\top}\bV\in\RR^{r\times r}$ have full rank.
Thus, to quantify the error between $\cP_{\hat{\bGamma}}$ and $\cP_{\bGamma^*}$, we can first analyze the error between $\hat{\bGamma}$ and $\tilde{\bGamma}$.

\subsection{Proof of \Cref{thm:est-error-proj-bGamma}}\label{app:subsec:thm:est-error-proj-bGamma}
\begin{proof}[Proof of \Cref{thm:est-error-proj-bGamma}]
    We split the proof into three parts by bounding the operator norm, column-wise $\ell_2$-norm, and the sup norm consecutively.
    
    \paragraph{Part (1) Bounding the operator norm.}
    From \eqref{eq:P-tGamma}, we have that $\cP_{\tilde{\bGamma}}=\cP_{\bGamma^*}$ for $\tilde{\bGamma}$ defined in \eqref{eq:tGamma}.
    Then we have
    \begin{align*}
        &\|\cP_{\hat{\bGamma}} - \cP_{\bGamma^*}\|_{\oper}\\
        =& \|\cP_{\hat{\bGamma}} - \cP_{\tilde{\bGamma}}\|_{\oper} \\
        =& \|\hat{\bGamma}(\hat{\bGamma}^{\top}\hat{\bGamma})^{-1}\hat{\bGamma}^{\top} - \tilde{\bGamma}(\tilde{\bGamma}^{\top}\tilde{\bGamma})^{-1}\tilde{\bGamma}^{\top}\|_{\oper} \\
        \leq& \| (\hat{\bGamma} - \tilde{\bGamma})(\tilde{\bGamma}^{\top}\tilde{\bGamma})^{-1}\tilde{\bGamma}^{\top}\|_{\oper} + \| \hat{\bGamma}( (\hat{\bGamma}^{\top}\hat{\bGamma})^{-1} - (\tilde{\bGamma}^{\top}\tilde{\bGamma})^{-1} ) \tilde{\bGamma}^{\top}\|_{\oper} + \|\hat{\bGamma}(\hat{\bGamma}^{\top}\hat{\bGamma})^{-1} (\hat{\bGamma} - \tilde{\bGamma})^{\top}\|_{\oper},
    \end{align*}
    where the last inequality is from the triangle inequality.
    Next, we bound the three terms separately. 
    Recall $\tilde{\bGamma}$ is defined in \eqref{eq:tGamma} with $\|\tilde{\bGamma}\|_{\oper} \asymp \|\hat{\bGamma}\|_{\oper} \asymp \sqrt{p}$ by \Cref{asm:latent} and \Cref{lem:spec-D}.
    Then by \Cref{lem:est-error-bGamma} and \Cref{asm:latent}, for all $\delta>0$, the first term in the above display is bounded
    \begin{align*}
        \| (\hat{\bGamma} - \tilde{\bGamma})(\tilde{\bGamma}^{\top}\tilde{\bGamma})^{-1}\tilde{\bGamma}^{\top} \|_{\oper}
        &\leq \| \hat{\bGamma} - \tilde{\bGamma} \|_{\oper} \|(\tilde{\bGamma}^{\top}\tilde{\bGamma})^{-1}\|_{\oper}\|\tilde{\bGamma}\|_{\oper}
        \leq C' (n\wedge p)^{-\frac{1}{2}},
    \end{align*}
    with probability at least $1-2(n+p)^{-\delta}-2(np)^{-\delta} - \exp(-n)$, for some constant $C'>0$ and $\delta>0$.
    Similarly, the third term is also bounded by $\Op\left((n\vee p)^{-1/2}\right)$.
    It remains to bound the second term:
    \begin{align*}
        &\| \hat{\bGamma}( (\hat{\bGamma}^{\top}\hat{\bGamma})^{-1} - (\tilde{\bGamma}^{\top}\tilde{\bGamma})^{-1} ) \tilde{\bGamma}^{\top}\|_{\oper} \\
        =& \| \hat{\bGamma}(\hat{\bGamma}^{\top}\hat{\bGamma})^{-1}( \hat{\bGamma}^{\top}\hat{\bGamma} - \tilde{\bGamma}^{\top}\tilde{\bGamma} ) (\tilde{\bGamma}^{\top}\tilde{\bGamma})^{-1} \tilde{\bGamma}^{\top}\|_{\oper}\\
        \leq& \|\hat{\bGamma}(\hat{\bGamma}^{\top}\hat{\bGamma})^{-1}\|_{\oper} \cdot\| ( \hat{\bGamma}^{\top}\hat{\bGamma} - \tilde{\bGamma}^{\top}\tilde{\bGamma} ) (\tilde{\bGamma}^{\top}\tilde{\bGamma})^{-1} \tilde{\bGamma}^{\top}\|_{\oper}\\
        \lesssim& \frac{1}{\sqrt{p}}\| ( \hat{\bGamma}^{\top}\hat{\bGamma} - \tilde{\bGamma}^{\top}\tilde{\bGamma} ) (\tilde{\bGamma}^{\top}\tilde{\bGamma})^{-1} \tilde{\bGamma}^{\top}\|_{\oper}\\
        \leq & \frac{1}{\sqrt{p}}\| \hat{\bGamma}^{\top}( \hat{\bGamma}  - \tilde{\bGamma}) (\tilde{\bGamma}^{\top}\tilde{\bGamma})^{-1} \tilde{\bGamma}^{\top}\|_{\oper} + \frac{1}{\sqrt{p}}\| ( \hat{\bGamma} - \tilde{\bGamma})^{\top}\tilde{\bGamma}  (\tilde{\bGamma}^{\top}\tilde{\bGamma})^{-1} \tilde{\bGamma}^{\top}\|_{\oper} \\
        \leq & 
        \frac{1}{\sqrt{p}}\|\hat{\bGamma}\|_{\oper}   \| (\hat{\bGamma} - \tilde{\bGamma})(\tilde{\bGamma}^{\top}\tilde{\bGamma})^{-1}\tilde{\bGamma}^{\top} \|_{\oper} + 
        \frac{1}{\sqrt{p}}\|  \hat{\bGamma} - \tilde{\bGamma}\|_{\oper} \|\cP_{\tilde{\bGamma}}\|_{\oper}\\
        \lesssim&  \cO\left((n\wedge p)^{-\frac{1}{2}}\right),
    \end{align*}
    with probability at least $1-(n+p)^{-\delta}-(np)^{-\delta}-\exp(-n)$.
    The proof for the operator norm is completed by combing the above inequality.

    \paragraph{Part (2) Bounding the column-wise $\ell_2$-norm.}
    Let $\be_j\in\RR^p$ be the unit vector such that its $i$-th entry is one if $i=j$ and zero otherwise.
    Similar to Part (1), note that
    \begin{align*}
        &\|(\cP_{\hat{\bGamma}} - \cP_{\bGamma^*})\be_j\|_{2}\\
        \leq& \| (\hat{\bGamma} - \tilde{\bGamma})(\tilde{\bGamma}^{\top}\tilde{\bGamma})^{-1}\tilde{\bGamma}^{\top}\be_j\|_{2} + \| \hat{\bGamma}( (\hat{\bGamma}^{\top}\hat{\bGamma})^{-1} - (\tilde{\bGamma}^{\top}\tilde{\bGamma})^{-1} ) \tilde{\bGamma}^{\top}\be_j\|_{2} + \|\hat{\bGamma}(\hat{\bGamma}^{\top}\hat{\bGamma})^{-1} (\hat{\bGamma} - \tilde{\bGamma})^{\top}\be_j\|_{2}.
    \end{align*}
    The first term can be bounded analogously as
    \begin{align*}
        \max_{1\leq j\leq p}\| (\hat{\bGamma} - \tilde{\bGamma})(\tilde{\bGamma}^{\top}\tilde{\bGamma})^{-1}\tilde{\bGamma}^{\top}\be_j \|_{2}
        &\leq \| \hat{\bGamma} - \tilde{\bGamma} \|_{\oper} \|(\tilde{\bGamma}^{\top}\tilde{\bGamma})^{-1}\|_{\oper} \max_{1\leq j\leq p} \|\tilde{\bGamma}^{\top}\be_j\|_{2}
        \leq C'[ p(n\wedge p)]^{-\frac{1}{2}},
    \end{align*}
    for some constant $C'>0$, by noting that $\|\tilde{\bGamma}^{\top}\be_j\|_{2}=\Op(1)$. 
    The rest of the terms follow a similar argument as in Part (1), under the same probabilistic events therein.

    \paragraph{Part (3) Bounding the sup norm.} The sup norm $\|\cdot \|_{\max}$ can be upper bounded analogously:
    \begin{align*}
        &\|\cP_{\hat{\bGamma}} - \cP_{\bGamma^*}\|_{\max} \\
        =& \max_{i,j\in[p]} |\be_i^{\top}(\cP_{\hat{\bGamma}} - \cP_{\bGamma^*})\be_j| \\
        \leq& \max_{i,j\in[p]} (| \be_i^{\top}(\hat{\bGamma} - \tilde{\bGamma})(\tilde{\bGamma}^{\top}\tilde{\bGamma})^{-1}\tilde{\bGamma}^{\top}\be_j| + | \be_i^{\top}\hat{\bGamma}( (\hat{\bGamma}^{\top}\hat{\bGamma})^{-1} - (\tilde{\bGamma}^{\top}\tilde{\bGamma})^{-1} ) \tilde{\bGamma}^{\top}\be_j| + |\be_i^{\top}\hat{\bGamma}(\hat{\bGamma}^{\top}\hat{\bGamma})^{-1} (\hat{\bGamma} - \tilde{\bGamma})^{\top}\be_j|)
    \end{align*}
    The first term can be bounded as
    \begin{align*}
        \max_{1\leq j\leq p}|\be_i^{\top} (\hat{\bGamma} - \tilde{\bGamma})(\tilde{\bGamma}^{\top}\tilde{\bGamma})^{-1}\tilde{\bGamma}^{\top}\be_j |
        &\leq \| (\hat{\bGamma} - \tilde{\bGamma})\be_i \|_{\oper} \|(\tilde{\bGamma}^{\top}\tilde{\bGamma})^{-1}\|_{\oper} \max_{1\leq j\leq p} \|\tilde{\bGamma}^{\top}\be_j\|_{2}
        \leq C'[ p^2(n\wedge p)]^{-\frac{1}{2}},
    \end{align*}
    by involving both Part (1) and (2). 
    The rest of the terms follow a similar argument as in Part (1), under the same probabilistic events therein.
    This completes the proof.
\end{proof}

\subsection{Technical lemmas}\label{app:subsec:thm:est-error-proj-bGamma-tech-lems}

\begin{lemma}[Estimation error of $\hat{\bGamma}$]\label{lem:est-error-bGamma}
    Under \Crefrange{asm:model}{asm:latent} and event $E_C$, for all $\delta>0$ and sufficiently large $n$ and $p$, there exists a absolute constant $C'>0$ such that
    \begin{align*}
        \max_{1\leq j\leq p}\|\hat{\bgamma}_{j} - \tilde{\bgamma}_{j}\|_2 &\leq C', \qquad\|\hat{\bGamma} - \tilde{\bGamma}\|_{\oper} \leq C'\sqrt{\frac{n\vee p}{n}},
    \end{align*}
    with probability at least $1-2(n+p)^{-\delta}-2(np)^{-\delta} - \exp(-n)$.
\end{lemma}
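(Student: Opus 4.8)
The plan is to follow the classical factor-analysis expansion of \citet{bai2003inferential,bing2022inference}, adapted to the GLM setting, exploiting the self-consistency of the singular subspace of $\hat{\bE}=\hat{\bW}_0\hat{\bGamma}_0^{\top}$. Since $\hat{\bGamma}=\sqrt{p}\,\bV\bSigma^{1/2}$ with $\bV$ the top-$r$ right singular vectors of $\hat{\bE}/\sqrt{np}$, and since $\hat{\bE}=\hat{\bW}\hat{\bGamma}^{\top}$ with $\hat{\bW}^{\top}\hat{\bW}=n\bSigma$, one has the identity
\begin{align*}
    \hat{\bGamma}=\frac{1}{n\sqrt{p}}\,\hat{\bE}^{\top}\hat{\bE}\,\bV\bSigma^{-3/2}.
\end{align*}
The rotated target $\tilde{\bGamma}=\bGamma^*\bH_0$ is, by construction, the population analogue of the right-hand side, obtained by replacing the estimated component $\hat{\bE}$ with its noiseless counterpart $\bW^*\bGamma^{*\top}$. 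Consequently, the whole difference $\hat{\bGamma}-\tilde{\bGamma}$ decomposes into residual terms, each carrying at least one factor of the estimation error of $\hat{\bE}$, which is precisely what \Cref{thm:est-error-bTheta} controls.

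First I would make the decomposition precise. Because $\hat{\bE}=\cP_{\bX}^{\perp}\hat{\bTheta}_0$ while $\cP_{\bX}^{\perp}\bTheta^*=\cP_{\bX}^{\perp}\bW^*\bGamma^{*\top}$, I write $\hat{\bE}=\bW^*\bGamma^{*\top}+\bE_{\Delta}$ with $\bE_{\Delta}:=\cP_{\bX}^{\perp}(\hat{\bTheta}_0-\bTheta^*)-\cP_{\bX}\bW^*\bGamma^{*\top}$. The first piece is bounded in Frobenius norm by $\|\hat{\bTheta}_0-\bTheta^*\|_{\fro}=\cO(\sqrt{(d+r)(n\vee p)})$ and column-wise by $\max_j\|(\hat{\bTheta}_0)_j-\bTheta^*_j\|_2=\cO(\sqrt{(d+r)(n\vee\delta^3)})$ from \Cref{thm:est-error-bTheta}; the second is lower order, since $\bW^*$ is mean-zero and uncorrelated with $\bX$, so sub-Gaussian concentration (\Cref{asm:covariate,asm:latent}) gives $\|\cP_{\bX}\bW^*\|_{\oper}=\Op(\sqrt{d+r})$ and hence $\|\cP_{\bX}\bW^*\bGamma^{*\top}\|_{\oper}=\Op(\sqrt{p})$. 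Substituting into the identity and subtracting $\tilde{\bGamma}$ expands $\hat{\bGamma}-\tilde{\bGamma}$ into cross terms of the schematic form $(n\sqrt{p})^{-1}\bGamma^*\bW^{*\top}\bE_{\Delta}\bV\bSigma^{-3/2}$ and $(n\sqrt{p})^{-1}\bE_{\Delta}^{\top}\bW^*\bGamma^{*\top}\bV\bSigma^{-3/2}$, together with a remainder quadratic in $\bE_{\Delta}$.

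Each cross term is then bounded by submultiplicativity, using $\|\bW^*\|_{\oper}\asymp\sqrt{n}$, $\|\bGamma^{*\top}\bV\|_{\oper}\asymp\sqrt{p}$, and $\|\bSigma^{-1}\|_{\oper}=\Op(1)$. For the operator-norm bound I would use $\|\bE_{\Delta}\|_{\oper}\le\|\bE_{\Delta}\|_{\fro}=\cO(\sqrt{n\vee p})$, so the dominant term scales like $(n\sqrt{p})^{-1}\sqrt{n\vee p}\cdot\sqrt{n}\cdot\sqrt{p}=\sqrt{(n\vee p)/n}$, matching the claim. For the column-wise bound I would instead pass to the $j$-th row and invoke the sharper column-wise control $\max_j\|(\bE_{\Delta})_j\|_2=\cO(\sqrt{n})$ with $\|\bgamma_j^*\|_2\le C$; the term $(n\sqrt{p})^{-1}(\bE_{\Delta})_j^{\top}\bW^*\bGamma^{*\top}\bV\bSigma^{-3/2}$ is then of exact order $(n\sqrt{p})^{-1}\sqrt{n}\cdot\sqrt{n}\cdot\sqrt{p}=\cO(1)$, giving the constant $C'$, while the remaining rows/terms are $o(1)$. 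Collecting the high-probability events of \Cref{thm:est-error-bTheta} with a sub-Gaussian concentration bound for $\bW^{*\top}\bW^*/n$ (contributing the $\exp(-n)$ term) yields the stated probability $1-2(n+p)^{-\delta}-2(np)^{-\delta}-\exp(-n)$.

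The main obstacle is the spectral control $\|\bSigma^{-1}\|_{\oper}=\Op(1)$, needed so the inverse factors do not inflate the error. This requires a Weyl-type perturbation argument: the $r$ nonzero eigenvalues of $\hat{\bE}^{\top}\hat{\bE}/(np)$ lie within $\cO((n\wedge p)^{-1/2})$ of those of $\bGamma^*\bW^{*\top}\bW^*\bGamma^{*\top}/(np)$, whose smallest is bounded away from zero by $\lambda_r(p^{-1}\bGamma^*\bGamma^{*\top})\asymp1$ and $\lambda_r(\bSigma_w)\asymp1$ in \Cref{asm:latent} together with concentration of $\bW^{*\top}\bW^*/n$. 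The secondary difficulty is keeping the two regimes separate — the coarse $\sqrt{n\vee p}$ Frobenius factor versus the sharper $\sqrt{n}$ column-wise factor — so that the column-wise statement does not inherit the $\sqrt{n\vee p}$ rate; this is exactly the role of the column-wise conclusion of \Cref{thm:est-error-bTheta}, and it is why the latent cross term $\bE_{\Delta}^{\top}\bW^*\bGamma^{*\top}$, rather than the projection residual, sets the $\cO(1)$ ceiling for the column bound.
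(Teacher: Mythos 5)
Your proposal is correct and follows essentially the same route as the paper: the identity $\hat{\bGamma}=(n\sqrt{p})^{-1}\hat{\bE}^{\top}\hat{\bE}\bV\bSigma^{-3/2}$, the decomposition $\hat{\bE}=\bW^*\bGamma^{*\top}+\bDelta$ with $\bDelta$ controlled through \Cref{thm:est-error-bTheta} plus the $\cP_{\bX}\bW^*\bGamma^{*\top}$ residual, the cross-term/quadratic expansion of $\hat{\bGamma}-\tilde{\bGamma}$, Weyl-type control of $\bSigma^{-1}$, and the separation of the coarse $\sqrt{n\vee p}$ operator-norm factor from the sharper $\cO(\sqrt{n})$ column-wise factor. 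The only cosmetic difference is that the paper bounds $\max_j\|\bDelta_j\|_2$ via the boundedness of $\hat{\bgamma}_j$ and $\|\hat{\bE}_j\|_2=\sqrt{n}\|\hat{\bgamma}_j\|_2$ rather than directly through the column-wise conclusion of \Cref{thm:est-error-bTheta}, but both yield the same $\cO(\sqrt{n})$ rate.
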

\begin{proof}[Proof of \Cref{lem:est-error-bGamma}]
    Define $\bE=\bW^*\bGamma^{*\top}$ and $\hat{\bE}=\hat{\bW}\hat{\bGamma}^{\top}$\footnote{Throughout the manuscript, the notation $\be_i$ is reserved for the unit vector and is not the $i$-th row of $\bE$. We will only use the notations of $\bE$ and $\bE_j$ to denote the matrix of latent components and its $j$-th column.}.
    Then we have $\hat{\bE} = \bW^*\bGamma^{*\top} +\bDelta$ where $\bDelta = (\hat{\bTheta}_0 - \bTheta^*) - (\bX\hat{\bF}^{\top} - \bX\bF^{*\top})$.
    By the definition of $\hat{\bE}$ we have
    \begin{align*}
        \frac{1}{np}\hat{\bE}^{\top}\hat{\bE} &= \bV\bSigma^2\bV^{\top}.
    \end{align*}
    Note that $\hat{\bGamma}=\sqrt{p}\bV\bSigma^{1/2}$, we further have
    \begin{align*}
        \frac{1}{np}\hat{\bE}^{\top}\hat{\bE}\hat{\bGamma} &= \hat{\bGamma}\bSigma^2 
    \end{align*}
    and
    \begin{align*}
        \frac{1}{n\sqrt{p}}\hat{\bE}^{\top}\hat{\bE}\bV\bSigma^{-3/2} &= \hat{\bGamma}.
    \end{align*}
    It follows that
    \begin{align*}
        \hat{\bGamma} - \tilde{\bGamma} &= \frac{1}{n\sqrt{p}}(\bE^{\top}\bDelta + \bDelta^{\top}\bE+ \bDelta^{\top}\bDelta)\bV\bSigma^{-3/2}.
    \end{align*}
    Because the operator norm is sub-multiplicative, the $\ell_2$-norm of the $j$th row of $\hat{\bGamma} - \tilde{\bGamma}$ is bounded by
    \begin{align}
        \|\hat{\bgamma}_{j} - \tilde{\bgamma}_{j}\|_2 &\leq \frac{1}{n\sqrt{p}}(
        \|\bE^{\top}\bDelta_{j}\|_2 + \|\bDelta^{\top}\bE_{j}\|_2 + \|\bDelta^{\top}\bDelta_{j}\|_2) \|\bV\|_{\oper}\|\bSigma^{-3/2}\|_{\oper} \notag\\
        &\leq \frac{1}{n\sqrt{p}}(
        \|\bE\|_{\oper}\|\bDelta_{j}\|_2 + \|\bDelta\|_{\oper}\|\bE_{j}\|_2 + \|\bDelta\|_{\oper}\|\bDelta_{j}\|_2) \|\bSigma^{-3/2}\|_{\oper} ,\label{eq:lem:est-error-bGamma-1}
    \end{align}
    and
    \begin{align}
        \|\hat{\bGamma} - \tilde{\bGamma}\|_{\oper} 
        &\leq \frac{1}{n\sqrt{p}}(
        2\|\bE\|_{\oper}\|\bDelta\|_{\oper} + \|\bDelta\|_{\oper}^2) \|\bSigma^{-3/2}\|_{\oper} .\label{eq:lem:est-error-bGamma-2}
    \end{align}
    To proceed, we split the proof into three parts.

    \paragraph{Part (1) Bounding operator noms of $\bSigma$, $\bE$, and $\bDelta$.}
    Note that $\bE=\bW^*\bGamma^{*\top}$ and $\bw_1,\ldots,\bw_n$ are mean-zero sub-Gaussian random vectors from \Cref{asm:latent}.
    From \Cref{lem:spec-D}, for any $\delta>0$, there exists $C_{\Sigma}>0$, such that
    \begin{align*}
        \|\bE\|_{\oper} \leq  2C_{\Sigma}\sqrt{ np},\qquad \frac{1}{C_{\Sigma}}\leq \lambda_r(\bSigma) &\leq \lambda_1(\bSigma) \leq C_{\Sigma},
    \end{align*}
    with probability at least $1-(n+p)^{-\delta}-(np)^{-\delta} - \exp(-n)$.
    
    Because
    \begin{align*}
        \bDelta = \bE - \hat{\bE} = \cP_{\bX}^{\perp}(\bTheta^* - \hat{\bTheta}_0) + \cP_{\bX}\bE,
    \end{align*}
    we have
    \begin{align*}
        \|\bDelta\|_{\oper} &= \|\cP_{\bX}^{\perp}(\bTheta^* - \hat{\bTheta}_0) + \cP_{\bX}\bE\|_{\oper}  \\
        &\leq \|\cP_{\bX}^{\perp}\|_{\oper}\|\bTheta^* - \hat{\bTheta}_0\|_{\oper} + \|\cP_{\bX}\bE\|_{\oper} \\
        &\leq \|\bTheta^* - \hat{\bTheta}_0\|_{\oper} + \|\cP_{\bX}\bE\|_{\oper}.
    \end{align*}
    On the one hand, from \Cref{thm:est-error-bTheta}, it follows that when $n,p$ are large enough,
    \begin{align*}
        \|\bTheta^* - \hat{\bTheta}_0\|_{\oper} \leq \sqrt{c (d+r) (n\vee p)}
    \end{align*}
    with probability at least $1-(n+p)^{-\delta}-(np)^{-\delta}$ for some constant $c>0$.
    On the other hand, notice that
    \begin{align*}
        \cP_{\bX}\bE &= \cP_{\bX}\bW^*\bGamma^{*\top} 
        = \bX\left(\frac{\bX^{\top}\bX}{n}\right)^{-1}\frac{\bX^{\top}\bW^*}{n} \bGamma^{*\top}                
    \end{align*}
    where $\bX^{\top}\bW^* = \sum_{i=1}^n\bx_i\bw_i^{*\top}$ is the sum of $n$ i.i.d. sub-exponential random matrices with zero means.
    By the matrix Bernstein's inequality, $\|\bX^{\top}\bW^*/n \|_{\oper}\lesssim \sqrt{\log (nd)/n}$.
    Thus, the second term can be bounded as 
    \[ \|\cP_{\bX}\bE\|_{\oper} \lesssim \sqrt{n} \cdot 1 \cdot \sqrt{\frac{\log (nd)}{n}} \cdot \sqrt{p}\]
    with probability at least $1-n^{-\delta}$.
    The above results suggest that
    \begin{align*}
        \|\bDelta\|_{\oper}  \leq \sqrt{c (d+r) (n\vee p)}
    \end{align*}    
    Below, we condition on the two events above, which hold with probability at least $1-2(n+p)^{-\delta}-2(np)^{-\delta} - \exp(-n)$ by union bound.

    \paragraph{Part (2) Bounding $\|\bE_j\|_2$ and $\|\bDelta_j\|_2$}
    From \Cref{lem:eigvals-W} and \Cref{asm:latent}, we have
    \begin{align*}
        \max_{1\leq j\leq p}\|\bE_j\|_2=\max_{1\leq j\leq p}\|\bW^*\bgamma_j^*\|_2\leq \|\bW\|_{\oper}\max_{1\leq j\leq p}\|\bgamma_j^*\|_2\leq 2C_{\Sigma}\sqrt{n}.
    \end{align*}
    On the other hand, because $\hat{\bW}^{\top}\hat{\bW}=n\bSigma$ and $\|\hat{\bgamma}_j\|_2\leq C^2C_{\Sigma}^{1/2}$ for all $j\in[p]$ from \Cref{lem:bound-latent}, we also have
    \begin{align*}
        \max_{1\leq j\leq p}\|\hat{\bE}_j\|_2=\max_{1\leq j\leq p}\|\hat{\bW}\hat{\bgamma}_j\|_2=\max_{1\leq j\leq p}\sqrt{n}\|\hat{\bgamma}_j\|_2\lesssim C^2C_{\Sigma}^{3/2}\sqrt{n}.
    \end{align*}
    Thus, by triangle inequality, we have 
    \begin{align*}
        \max_{1\leq j\leq p}\|\bDelta_j\|_2 &\leq \max_{1\leq j\leq p}(\|\bE_j\|_2 + \|\hat{\bE}_j\|_2) \lesssim (2+C^2)C_{\Sigma}\sqrt{n}.
    \end{align*}

    \paragraph{Part (3) Combining the previous results.}
    From \eqref{eq:lem:est-error-bGamma-1}, \eqref{eq:lem:est-error-bGamma-2} and the previous two parts, we have
    \begin{align*}
        &\max_{1\leq j\leq p}\|\hat{\bgamma}_j - \tilde{\bgamma}_j\|_2 \\
        \lesssim& \frac{1}{n\sqrt{p}C_{\Sigma}}( 2C_{\Sigma}\sqrt{ np}(2+C^2)C_{\Sigma}\sqrt{n} + \sqrt{c(d+r)(n\vee p)} C\sqrt{n} + \sqrt{c(d+r)(n\vee p)}(2C_{\Sigma}+C)\sqrt{n})\\
        \lesssim& 2(2+C^2)C_{\Sigma}
    \end{align*}
    and
    \begin{align*}
        \|\hat{\bGamma} - \tilde{\bGamma}\|_2 &\leq \frac{1}{n\sqrt{p}C_{\Sigma}}(4C_{\Sigma}\sqrt{ np}\sqrt{c(d+r)(n\vee p)} + c(d+r)(n\vee p)) \lesssim 4\sqrt{c},
    \end{align*}
    which finishes the proof.
\end{proof}

\begin{lemma}[Spectrum of $\bSigma$]\label{lem:spec-D}
    Under \Crefrange{asm:model}{asm:latent}  and event $E_C$, for all $\delta>0$ and sufficiently large $n$ and $p$, there exists a absolute constant $C_{\Sigma}>1$ such that
    \begin{align*}
        \frac{1}{C_{\Sigma}} \leq \lambda_r(\bSigma) &\leq \lambda_1(\bSigma) \leq C_{\Sigma},
    \end{align*}
    with probability at least $1-(n+p)^{-\delta}-(np)^{-\delta}-\exp(-n)$.
\end{lemma}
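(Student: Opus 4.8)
The plan is to identify $\bSigma$ with the (rescaled) singular values of the initial latent-component estimate and then reduce the claim to a Weyl-type perturbation argument around the population object $\bE=\bW^*\bGamma^{*\top}$. Recall from Step 2 of \Cref{alg:deconfounder} that $\hat{\bE}=\hat{\bW}_0\hat{\bGamma}_0^{\top}=\sqrt{np}\,\bU\bSigma\bV^{\top}$, so that $\lambda_i(\bSigma)=\sigma_i(\hat{\bE})/\sqrt{np}$ for $i=1,\ldots,r$, where $\sigma_i$ denotes the $i$th largest singular value. Writing $\hat{\bE}=\bE+\bDelta$ with $\bDelta=\hat{\bE}-\bE$, it suffices to (i) show that the $r$ nonzero singular values of $\bE/\sqrt{np}$ are bounded above and below by absolute constants, and (ii) show that $\|\bDelta\|_{\oper}/\sqrt{np}=o(1)$; the conclusion then follows from Weyl's inequality $|\sigma_i(\hat{\bE})-\sigma_i(\bE)|\le\|\bDelta\|_{\oper}$.

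For step (i), I would note that the nonzero eigenvalues of $\bE^{\top}\bE/(np)=\bGamma^*(\bW^{*\top}\bW^*)\bGamma^{*\top}/(np)$ coincide with the eigenvalues of the $r\times r$ positive semidefinite matrix $(\bW^{*\top}\bW^*/n)^{1/2}(\bGamma^{*\top}\bGamma^*/p)(\bW^{*\top}\bW^*/n)^{1/2}$. Under \Cref{asm:latent}, standard sub-Gaussian covariance concentration gives $\|\bW^{*\top}\bW^*/n-\bSigma_w\|_{\oper}=o(1)$ with probability at least $1-\exp(-n)$, so $\lambda_r(\bW^{*\top}\bW^*/n)$ and $\lambda_1(\bW^{*\top}\bW^*/n)$ stay within $[C^{-1}/2,2C]$. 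Combined with the bounds $C^{-1}\le\lambda_r(p^{-1}\bGamma^{*\top}\bGamma^*)\le\lambda_1(p^{-1}\bGamma^{*\top}\bGamma^*)\le C$ (which follow because $\bGamma^{*\top}\bGamma^*$ and $\bGamma^*\bGamma^{*\top}$ share nonzero eigenvalues) and the product bound $\lambda_{\min}(A)\lambda_{\min}(B)\le\lambda_i(A^{1/2}BA^{1/2})\le\lambda_{\max}(A)\lambda_{\max}(B)$ for positive semidefinite $A,B$, this pins the nonzero singular values of $\bE/\sqrt{np}$ into an interval $[c,c']$ for absolute constants $0<c\le c'$.

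For step (ii), I would decompose $\bDelta=\cP_{\bX}^{\perp}(\bTheta^*-\hat{\bTheta}_0)+\cP_{\bX}\bE$ and bound the two pieces. The first is controlled by \Cref{thm:est-error-bTheta} (via $\|\cdot\|_{\oper}\le\|\cdot\|_{\fro}$), giving $\|\cP_{\bX}^{\perp}(\bTheta^*-\hat{\bTheta}_0)\|_{\oper}\lesssim\sqrt{(d+r)(n\vee p)}$. The second equals $\bX(\bX^{\top}\bX/n)^{-1}(\bX^{\top}\bW^*/n)\bGamma^{*\top}$; since $\bX^{\top}\bW^*/n$ is an average of $n$ i.i.d.\ mean-zero sub-exponential matrices, the matrix Bernstein inequality yields $\|\bX^{\top}\bW^*/n\|_{\oper}\lesssim\sqrt{\log(nd)/n}$, and with \Cref{asm:covariate} and $\|\bGamma^*\|_{\oper}\asymp\sqrt{p}$ this gives $\|\cP_{\bX}\bE\|_{\oper}\lesssim\sqrt{p\log(nd)}$. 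Hence $\|\bDelta\|_{\oper}/\sqrt{np}\lesssim\sqrt{(d+r)/(n\wedge p)}+\sqrt{\log(nd)/n}=o(1)$, on an event of probability at least $1-(n+p)^{-\delta}-(np)^{-\delta}-\exp(-n)$.

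Combining (i) and (ii) through Weyl's inequality, for $n,p$ large enough every $\lambda_i(\bSigma)$, $i=1,\ldots,r$, lies in $[c/2,2c']$, so the claim holds with $C_{\Sigma}:=\max\{2c',2/c\}$. The main obstacle is step (i): ensuring that the smallest nonzero singular value $\lambda_r(\bSigma)$ does not degenerate, which requires the simultaneous lower eigenvalue control of both $\bSigma_w$ and $p^{-1}\bGamma^{*\top}\bGamma^*$ together with the fact that the perturbation $\bDelta$ is of strictly smaller order than $\sqrt{np}$; once the product-of-positive-semidefinite-matrices spectral identity is set up, the remaining estimates are routine concentration bounds.
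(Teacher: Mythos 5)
Your proposal is correct and follows essentially the same route as the paper: Weyl's perturbation of the singular values of $\hat{\bE}/\sqrt{np}$ around $\bE=\bW^*\bGamma^{*\top}$, with the population spectrum pinned down via sub-Gaussian concentration of $\bW^{*\top}\bW^*/n$ and the spectral bounds on $\bGamma^*$ from \Cref{asm:latent}, and the perturbation controlled by \Cref{thm:est-error-bTheta}. If anything you are slightly more careful than the paper's own proof, which writes $\hat{\bW}_0\hat{\bGamma}_0^{\top}-\bW^*\bGamma^{*\top}=\cP_{\bX}^{\perp}(\hat{\bTheta}-\bTheta^*)$ and silently drops the cross term $\cP_{\bX}\bE$ that you bound explicitly via matrix Bernstein (the paper does include that term in the neighboring \Cref{lem:est-error-bGamma}).
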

\begin{proof}[Proof of \Cref{lem:spec-D}]
    By Weyl's inequality, we have that for all $k\in[r]$,
    \begin{align}
        \left|\lambda_k(\bSigma) - \frac{1}{\sqrt{np}}\lambda_{k}(\bW^*\bGamma^{*\top})\right| & = \frac{1}{\sqrt{np}} \left|\lambda_k(\hat{\bW}\hat{\bGamma}^{\top}) - \lambda_{k}(\bW^*\bGamma^{*\top})\right| \notag\\
        &\leq \frac{1}{\sqrt{np}} \|\hat{\bW}\hat{\bGamma}^{\top} - \bW^*\bGamma^{*\top} \|_{\oper} \notag\\
        &= \frac{1}{\sqrt{np}} \|\cP_{\bX}^{\perp} (\hat{\bTheta}_0 - \bTheta^*) \|_{\oper}\notag\\
        &\leq \frac{1}{\sqrt{np}} \| \hat{\bTheta}_0 - \bTheta^*\|_{\oper}. \label{eq:lem:spec-D-eq-1}
    \end{align}
    We next bound $\lambda_{k}(\bW^*\bGamma^{*\top})$ and $\|\hat{\bTheta}_0 - \bTheta^*\|_{\oper}$ separately.
    Applying \Cref{lem:eigvals-W} under \Cref{asm:latent} yields that
    \begin{align*}
        \frac{1}{C_0'} \leq \lambda_r\left(\frac{1}{n}\bW^{*\top}\bW^*\right) \leq \lambda_1\left(\frac{1}{n}\bW^{*\top}\bW^*\right) \leq C_0'.
    \end{align*}
    with probablity at least $1-\exp(-n)$ for some constant $C_0'>1$.
    From \Cref{asm:latent} we further have
    \begin{align}
        \sqrt{\frac{1}{CC_0'}} \leq \frac{1}{\sqrt{np}}\lambda_{r}(\bW^*\bGamma^{*\top}) \leq \frac{1}{\sqrt{np}}\lambda_{1}(\bW^*\bGamma^{*\top}) \leq  \sqrt{CC_0'}. \label{eq:lem:spec-D-eq-2}
    \end{align}
    On the other hand, from \Cref{thm:est-error-bTheta} we have for all $\delta>0$, there exists $C>0$ such that
    \begin{align}
        \frac{1}{\sqrt{np}} \| \hat{\bTheta}_0 - \bTheta^*\|_{\oper} &\leq \frac{1}{\sqrt{np}} \| \hat{\bTheta}_0 - \bTheta^*\|_{\fro}\leq C\sqrt{\frac{r (n\vee p) }{np}} =: C_{n,p},\label{eq:lem:spec-D-eq-3}
    \end{align}
    with probability at least $1-(n+p)^{-\delta}-(np)^{-\delta}$.

    Condition on the above two events, applying triangle inequality on \eqref{eq:lem:spec-D-eq-1} and combining \eqref{eq:lem:spec-D-eq-2} and \eqref{eq:lem:spec-D-eq-3}, we have
    \begin{align*}
         \sqrt{\frac{1 }{CC_0'}} - C_{n,p}  \leq \lambda_r(\bSigma) &\leq \lambda_1(\bSigma) \leq \sqrt{CC_0'}  + C_{n,p}.
    \end{align*}
    Note that $C_{n,p}=o(1)$ as both $n$ and $p$ tend to infinity.
    When $n$ and $p$ is such that $C_{n,p}<1/2\sqrt{CC_0'}$, setting $C_{\Sigma}=3\sqrt{CC_0'}/2$ gives the desired bound.
    By union bound, this holds with probability at least $1-(n+p)^{-\delta}-(np)^{-\delta}-\exp(-n)$, which finishes the proof.
\end{proof}

\begin{lemma}[Spectrum of $\bW^*$]\label{lem:eigvals-W}
    Under \Cref{asm:latent}, for sufficiently large $n$, there exists a absolute constant $C_0'>0$ such that
    \begin{align*}
        \frac{1}{C_0'} \leq \lambda_r\left(\frac{1}{n}\bW^{*\top}\bW^*\right) \leq \lambda_1\left(\frac{1}{n}\bW^{*\top}\bW^*\right) \leq C_0',
    \end{align*}
    with probability at least $1-\exp(-n)$.
\end{lemma}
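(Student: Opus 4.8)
The plan is to recognize $\frac{1}{n}\bW^{\top}\bW=\frac{1}{n}\sum_{i=1}^{n}\bw_i^*\bw_i^{*\top}$ as the sample covariance matrix of the i.i.d., zero-mean, $\nu$-sub-Gaussian vectors $\bw_1^*,\dots,\bw_n^*$, and to show that it concentrates around the population covariance $\bSigma_w$ in operator norm. Since \Cref{asm:latent} guarantees $C^{-1}\le\lambda_r(\bSigma_w)\le\lambda_1(\bSigma_w)\le C$, it suffices to prove that $\|\frac{1}{n}\bW^{\top}\bW-\bSigma_w\|_{\oper}\le\tfrac12\lambda_r(\bSigma_w)$ on an event of probability at least $1-\exp(-n)$; on that event Weyl's inequality gives $\lambda_r(\frac1n\bW^{\top}\bW)\ge\tfrac12\lambda_r(\bSigma_w)\ge\tfrac1{2C}$ and $\lambda_1(\frac1n\bW^{\top}\bW)\le\lambda_1(\bSigma_w)+\tfrac12\lambda_r(\bSigma_w)\le\tfrac{3C}{2}$, so the claim holds with $C_0'=2C$. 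The essential structural feature I would exploit is that $r$ is a fixed constant while $n\to\infty$, placing us in the benign ``fixed dimension, growing sample size'' regime of sample-covariance concentration.

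To establish the operator-norm bound I would combine the variational characterization with an $\varepsilon$-net argument. For a fixed unit vector $\bu\in\RR^r$, the scalar $\bu^{\top}\bw_i^*$ is $\nu$-sub-Gaussian with mean zero and variance $\bu^{\top}\bSigma_w\bu\le C$, so $(\bu^{\top}\bw_i^*)^2$ is sub-exponential with a parameter uniform in $\bu$. Bernstein's inequality then yields $\PP(|\frac1n\sum_{i=1}^n(\bu^{\top}\bw_i^*)^2-\bu^{\top}\bSigma_w\bu|>t)\le 2\exp(-c_0 n)$ at the fixed deviation level $t=\tfrac14\lambda_r(\bSigma_w)$, for a constant $c_0>0$ depending only on $\nu$ and $C$. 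Taking a $\tfrac14$-net $\cN$ of the unit sphere, whose cardinality $|\cN|\le 9^r$ does not grow with $n$, a union bound controls the quadratic form simultaneously over $\cN$, and the standard net-to-sphere comparison upgrades this to a bound on $\sup_{\|\bu\|_2=1}|\bu^{\top}(\frac1n\bW^{\top}\bW-\bSigma_w)\bu|=\|\frac1n\bW^{\top}\bW-\bSigma_w\|_{\oper}$. Equivalently, one may invoke a ready-made bound for the extreme singular values of a matrix with independent sub-Gaussian rows after preconditioning by $\bSigma_w^{-1/2}$ to reduce to the isotropic case, which packages the same computation.

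The step requiring the most care is the probability bookkeeping that matches the failure event to the stated $\exp(-n)$. The per-point bound is $2\exp(-c_0 n)$ and the net contributes only the constant factor $|\cN|\le 9^r$; since this factor is independent of $n$, the total failure probability is $\exp(\log|\cN|+\log 2-c_0 n)$, which for all sufficiently large $n$ is at most $\exp(-n)$ after absorbing the exponent constant $c_0$ (this is precisely where the hypothesis ``for sufficiently large $n$'' enters, and why $C_0'$ can be taken absolute, depending only on $C$). I would flag two points of attention rather than a deep obstacle: first, the deviation level $t$ must be chosen \emph{strictly below} $\lambda_r(\bSigma_w)$ so that Weyl's inequality returns a lower bound bounded away from zero; and second, the sub-exponential parameter of $(\bu^{\top}\bw_i^*)^2$ must be verified uniform over the sphere, which follows from the uniform $\nu$-sub-Gaussianity and the uniform bound $\bu^{\top}\bSigma_w\bu\le\lambda_1(\bSigma_w)\le C$ in \Cref{asm:latent}.
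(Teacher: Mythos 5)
Your proposal is correct and follows essentially the same route as the paper: operator-norm concentration of the sample covariance $\tfrac1n\bW^{\top}\bW$ around $\bSigma_w$ (the paper simply cites the ready-made sub-Gaussian-rows bound of Vershynin that you mention as the packaged alternative, rather than re-deriving it via the net-plus-Bernstein argument), followed by Weyl's inequality and the bounded spectrum of $\bSigma_w$ from the assumption. The only quibble, shared with the paper's own write-up, is the final constant in the exponent: a per-point bound of $2\exp(-c_0 n)$ with $c_0\le 1$ cannot be absorbed into $\exp(-n)$ by taking $n$ large, so the failure probability should strictly be stated as $\exp(-cn)$ for some constant $c>0$.
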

\begin{proof}[Proof of \Cref{lem:eigvals-W}]
    Note that $\bW^*\in\RR^{n\times r}$ is a random matrix whose rows $\bw_1^*,\ldots,\bw_n^*$ are i.i.d. sub-Gaussian random vectors.
    From the concentration inequality of the operator norm of the random matrices \citep[Theorem 4.6.1, Exercise 4.7.3]{vershynin2018high}, for any $u>0$ we have
    \begin{align*}
        \left\|\frac{1}{n}\bW^{*\top}\bW^* - \bSigma_w\right\|_{\oper} &\leq CK^2\left(\sqrt{\frac{r+u}{n}} + \frac{r+u}{n}\right) \|\bSigma_w\|_{\oper},
    \end{align*}
    with probability at least $1-2\exp(-u)$, where $C>0$ is some absolute constant and $K=\max_{i\in[n]}\|\bw_i^*\|_{\psi_2}$.
    When $n$ is sufficiently large such that $2CK^2n^{-1/4}< 1$, setting $u=n^{1/2}$ and $C'=2CK^2n^{-1/4}C_0$ yields that
    \begin{align*}
        \left\|\frac{1}{n}\bW^{*\top}\bW^* - \bSigma_w\right\|_{\oper} &\leq CK^2\left( n^{-\frac{1}{4}} + n^{-\frac{1}{2}}\right)  \|\bSigma_w\|_{\oper} \leq  C' < C_0.
    \end{align*}
    By Weyl's inequality, we have that,
    \begin{align*}
        \max_{k\in[r]}\left|\lambda_k\left(\frac{1}{n}\bW^{*\top}\bW^*\right) - \lambda_k\left(\bSigma_w\right)\right| &\leq \left\|\frac{1}{n}\bW^{*\top}\bW^* - \bSigma_w\right\|_{\oper} \leq  C'.
    \end{align*}
    By triangle inequality and the boundedness of $\bSigma_w$'s spectrum from \Cref{asm:latent}, it follows that
    \begin{align*}
        \frac{1}{C_0}-C' \leq \min_{k\in[r]}\lambda_k\left(\frac{1}{n}\bW^{*\top}\bW^*\right) \leq \max_{k\in[r]}\lambda_k\left(\frac{1}{n}\bW^{*\top}\bW^*\right) \leq C_0 + C',
    \end{align*}
    with probability at least $1-\exp(-n)$.
    Setting $C_0'= \min\{C_0+C', (C_0^{-1}-C')^{-1}\}$ finishes the proof.
\end{proof}

\begin{lemma}[Boundedness of latent factors and loadings]\label{lem:bound-latent}
     Under \Crefrange{asm:model}{asm:latent} and event $E_C$, for all $\delta>0$ and sufficiently large $n$ and $p$, there exists a absolute constant $C_{\Sigma}>1$ such that $\|\hat{\bgamma}\|_2 \leq C^2C_{\Sigma}^{1/2}$, with probability at least $1-(n+p)^{-\delta}-(np)^{-\delta}-\exp(-n)$.
\end{lemma}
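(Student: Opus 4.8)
The plan is to exploit the orthogonality structure of the first-stage estimator together with the boundedness constraint in \eqref{opt:1}, and then transfer the resulting column-wise control on the fitted latent components to the rows of $\hat{\bGamma}$ via the singular value decomposition. First I would observe that the constraint $\cP_{\bX}\bW=\zero$ in \eqref{opt:1} forces $\cP_{\bX}\hat{\bW}_0=\zero$, so that $\hat{\bE}=\hat{\bW}_0\hat{\bGamma}_0^{\top}$ satisfies $\cP_{\bX}\hat{\bE}=\zero$. Combined with $\cP_{\bX}^{\perp}\bX\hat{\bF}^{\top}=\zero$, this yields the clean identity $\hat{\bE}=\cP_{\bX}^{\perp}\hat{\bTheta}_0$. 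Under event $E_C$ the feasibility constraint gives $\hat{\bTheta}_0\in\cR_C^{n\times p}$, so every entry of $\hat{\bTheta}_0$ is bounded in absolute value by $C$, whence $\|(\hat{\bTheta}_0)_j\|_2\leq C\sqrt{n}$ for each column $j$. Since $\cP_{\bX}^{\perp}$ is an orthogonal projection and hence non-expansive, it follows that $\|\hat{\bE}_j\|_2=\|\cP_{\bX}^{\perp}(\hat{\bTheta}_0)_j\|_2\leq C\sqrt{n}$ uniformly in $j$.

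Next I would relate $\|\hat{\bE}_j\|_2$ to $\|\hat{\bgamma}_j\|_2$ through the definitions in Step 2 of \Cref{alg:deconfounder}. Because $\hat{\bW}=\sqrt{n}\bU\bSigma^{1/2}$ and $\hat{\bGamma}=\sqrt{p}\bV\bSigma^{1/2}$ reconstruct $\hat{\bE}=\hat{\bW}\hat{\bGamma}^{\top}$, the $j$th column reads $\hat{\bE}_j=\hat{\bW}\hat{\bgamma}_j$, where $\hat{\bgamma}_j\in\RR^r$ is the $j$th row of $\hat{\bGamma}$. Using $\hat{\bW}^{\top}\hat{\bW}=n\bSigma^{1/2}\bU^{\top}\bU\bSigma^{1/2}=n\bSigma$, I obtain
\[
\|\hat{\bE}_j\|_2^2=\hat{\bgamma}_j^{\top}\hat{\bW}^{\top}\hat{\bW}\hat{\bgamma}_j=n\,\hat{\bgamma}_j^{\top}\bSigma\hat{\bgamma}_j\geq n\,\lambda_r(\bSigma)\,\|\hat{\bgamma}_j\|_2^2,
\]
so that $\|\hat{\bgamma}_j\|_2\leq \|\hat{\bE}_j\|_2\big/\bigl(\sqrt{n}\,\lambda_r(\bSigma)^{1/2}\bigr)$. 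Invoking the spectral control $\lambda_r(\bSigma)\geq C_{\Sigma}^{-1}$ furnished by \Cref{lem:spec-D}, and substituting the bound $\|\hat{\bE}_j\|_2\leq C\sqrt{n}$ from the previous paragraph, gives $\|\hat{\bgamma}_j\|_2\leq C\,C_{\Sigma}^{1/2}\leq C^2 C_{\Sigma}^{1/2}$ for every $j$, since $C>1$. Taking the maximum over $j$ delivers the claim, and the stated probability $1-(n+p)^{-\delta}-(np)^{-\delta}-\exp(-n)$ is inherited directly from the event on which \Cref{lem:spec-D} holds, intersected with $E_C$.

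I do not expect a genuine obstacle in this argument: once the structural identity $\hat{\bE}=\cP_{\bX}^{\perp}\hat{\bTheta}_0$ is recognized, the remaining steps are elementary operator-norm and eigenvalue manipulations. The only point requiring care is the direction of the inequality linking $\hat{\bE}_j$ and $\hat{\bgamma}_j$: I must use the \emph{lower} bound $\hat{\bgamma}_j^{\top}\bSigma\hat{\bgamma}_j\geq\lambda_r(\bSigma)\|\hat{\bgamma}_j\|_2^2$ (and hence the smallest singular value $\lambda_r(\bSigma)$, controlled below by \Cref{lem:spec-D}), so that the boundedness of $\hat{\bE}_j$ indeed forces $\hat{\bgamma}_j$ to be bounded rather than the reverse. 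Notably, this route avoids any appeal to \Cref{lem:est-error-bGamma}, which is essential since that lemma invokes the present bound, so the two results must be established independently.
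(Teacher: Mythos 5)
Your proof is correct, but it takes a genuinely different route from the paper's. The paper's proof works at the level of the factors themselves: it invokes the row-norm constraints $\|\hat{\bw}_{0,i}\|_2\leq C$ and $\|\hat{\bgamma}_{0,j}\|_2\leq C$ that are imposed on the iterates of the alternating-maximization subproblems, writes $\hat{\bGamma}$ as $\hat{\bGamma}_0\hat{\bW}_0^{\top}\bU\bSigma^{-1/2}/\sqrt{n}$, and bounds each factor in operator norm, using the \emph{upper} bound $\|\bSigma^{1/2}\|_{\oper}\leq C_\Sigma^{1/2}$ from \Cref{lem:spec-D}. You instead work at the level of the fitted latent components: the identity $\hat{\bE}=\cP_{\bX}^{\perp}\hat{\bTheta}_0$ together with the feasibility constraint $\hat{\bTheta}_0\in\cR_C^{n\times p}$ gives $\|\hat{\bE}_j\|_2\leq C\sqrt{n}$ column-wise, and you then invert through $\hat{\bW}^{\top}\hat{\bW}=n\bSigma$ using the \emph{lower} bound $\lambda_r(\bSigma)\geq C_\Sigma^{-1}$. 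Your version buys something real: it relies only on the constraints explicitly stated in \eqref{opt:1} (the natural-parameter box and $\cP_{\bX}\bW=\zero$), not on the implicit per-row domain restrictions $\cD_{l_i},\cD_{r_j}$ of \Cref{alg:jmle}, and it is also slightly more careful than the paper's companion computation $\|\hat{\bW}\hat{\bgamma}_j\|_2=\sqrt{n}\|\hat{\bgamma}_j\|_2$, which silently treats $\bSigma$ as the identity. You correctly flag and avoid the potential circularity with \Cref{lem:est-error-bGamma}, and the probability statement is inherited from \Cref{lem:spec-D} exactly as in the paper.
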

\begin{proof}[Proof of \Cref{lem:bound-latent}]
    Recall that $\hat{\bW}_0$ and $\hat{\bGamma}_0$ are the solutions to the alternative maximization problems which satisfy that $\|\hat{\bw}_{0,i}\|_{2}\leq C$ and $\|\hat{\bgamma}_{0,j}\|_{2}\leq C$ for $i=1,\ldots,n$ and $1,\ldots,p$.
    Let $\hat{\bW}_0\hat{\bGamma}_0^{\top}=\sqrt{np}\bU\bSigma\bV^{\top}$ be the condensed SVD.
    Then $\hat{\bGamma}$ is defined to be $\sqrt{p}\bV\bSigma^{1/2}= \hat{\bGamma}_0\hat{\bW}_0^{\top} \bU \bSigma^{1/2} / \sqrt{n} $.
    From \Cref{lem:spec-D}, with probability at least $1-(n+p)^{-\delta}-(np)^{-\delta}-\exp(-n)$, it holds that
    \begin{align*}
        \|\hat{\bgamma}_j\|_2 &\leq  \|\hat{\bW}_0/\sqrt{n}\|_{\oper}\|\bU\|_{\oper}\|\bSigma^{1/2}\|_{\oper}\|\hat{\bgamma}_{0,j}\|_2\\
        &\leq \max_{1\leq i\leq n}\|\hat{\bW}_{0,i}\|_{2} \cdot1\cdot C_{\Sigma}^{1/2}\cdot C\\
        &\leq C_{\Sigma}^{1/2}C^2,
    \end{align*}
    which finishes the proof.    
\end{proof}

\section{Estimation of latent factors and direct effects}\label{app:est-direct}

\subsection{Preparatory definitions}\label{app:est-direct-pre}
Towards proving \Cref{thm:est-err-B}, we first introduce the following notations.
Recall that optimization problem \eqref{opt:3} is the multivariate lasso with nuisance parameter.
Define the response vector $\tilde{\by}=\vec(\bY)\in\RR^{np}$, the design matrices $\tilde{\bX} = (\bI_p \otimes \bX) \in\RR^{np\times pd}$, $\bbeta=\vec(\bB^*)$, $\bzeta=\vec(\bZ^*\bGamma^{*\top})$, and the projection matrix $\tilde{\cP}_{\bGamma^*} = (\cP_{\bGamma^*}\otimes\bI_d) \in\RR^{pd\times pd}$.
Here, symbol `$\otimes$' denotes the Kronecker product.
With slight abuse of notations, we use $\cL(\bbeta,\bzeta)$ to denote the unregularized loss function of \eqref{opt:3}.
Let $\cF=\{(\bbeta,\bzeta)\in\RR^{pd}\times\RR^{np}\mid \bbeta=\vec(\bB), \bzeta=\vec(\bZ{\bGamma}^{\top})\text{ for }\bB\in\RR^{p\times d},\bZ\in\RR^{n\times r},\bGamma\in\RR^{p\times r} \text{ such that } \cP_{{\bGamma}}\bB=\zero,\tilde{\bX}\bbeta+\bzeta\in\cR_C^{np}\}$ be the feasible set of $\bbeta$ and $\bzeta$.
Then the joint optimization problem \eqref{opt:3} is equivalent to
\begin{align}
    \hat{\bbeta},\hat{\bzeta}&\in \argmin_{(\bbeta,\bzeta)\in\cF} \cL(\bbeta,\bzeta) + \lambda\|\bbeta\|_1.
\end{align}
Let $(\tilde{\bbeta}^*,\tilde{\bzeta}^*)=(\vec(\cP_{\bGamma^*}^{\perp}\bB^*),\vec(\bX\bB^*\cP_{\bGamma^*}+\bZ^*\bGamma^{*\top}))$, $({\bbeta}^*,{\bzeta}^*)=(\vec(\bB^*),\vec(\bZ^*\bGamma^{*\top}))$ denote the tuples of target coefficients and note that $(\tilde{\bbeta}^*,\tilde{\bzeta}^*)\in\cF$.

The organization of the following subsections is summarized as below:
\begin{itemize}
    \item \Cref{app:subsec:cor:est-confound-col} proves \Cref{cor:est-confound-col}, which controls the column-wise $\ell_2$-norm of the estimation error of the latent component.

    \item \Cref{app:subsec:thm:est-err-B} proves \Cref{thm:est-err-B}.
    The proof of \Cref{thm:est-err-B} consists of three main steps:
    \begin{enumerate}[(1)]

        \item Establish cone condition: 
        We involve \Cref{lem:optimality} to show that the estimation from the sequential optimization problems \eqref{opt:1}-\eqref{opt:2} obtain approximately optimality condition to the joint optimization problem \eqref{opt:3}:
        \[
        \cL(\hat{\bB},\hat{\bGamma},\hat{\bZ}) + \lambda\|\hat{\bB}\|_{1,1} \leq \cL(\bB^*,{\bGamma}^*,{\bZ}^*) + \lambda\|{\bB}^*\|_{1,1} + \tau_{n,p},\]
        for some small order term $\tau_{n,p}$ with high probability.
        This enables us to derive the cone condition.

        \item Obtain upper and lower bound of the first-order approximation error: We involve \Cref{lem:est-err-B-grad-inf-norm} to derive the upper bound, and \Cref{lem:rsc} to establish the locally strong convexity and hence the lower bound.

        \item Derive the estimation errors: We compute the $\ell_2$-norm and $\ell_1$-norm estimation error based on the previous two steps.
    \end{enumerate}

    \item \Cref{app:subsec:est-latent-direct-tech-lems} gathers helper lemmas used in the current section.
\end{itemize}

\subsection{Proof of \Cref{cor:est-confound-col}}\label{app:subsec:cor:est-confound-col}
    \begin{proof}[Proof of \Cref{cor:est-confound-col}]
        Define $\hat{\bE}=\hat{\bZ}\hat{\bGamma}^{\top}$ and $\bE^*=\bZ^*\bGamma^{*\top}$.
        Let $\hat{\bB} = \argmin_{\{\bB\in\RR^{p\times d}\mid \cP_{\hat{\bGamma}}\bB=\zero\}}\cL(\bX\bB^{\top} + \hat{\bE})$ and $\hat{\bTheta} = \bX\hat{\bB}^{\top} + \hat{\bE}$.
        Since $\cL(\hat{\bTheta}) \leq \cL(\bTheta^*)$ as assumed in \Cref{cor:est-confound-col}, from \Cref{thm:est-error-bTheta} we have $\|\hat{\bTheta}-\bTheta^*\| \lesssim \Op(\sqrt{n\vee p})$.
        From \Cref{thm:est-error-proj-bGamma}, we further have
        \[\|\hat{\bE}-\bE^*\|_{\fro}\leq \|(\hat{\bTheta}-\bTheta^*)\cP_{\hat{\bGamma}}\|_{\fro} + \|\bTheta^*(\cP_{\hat{\bGamma}}-\cP_{\bGamma^*})\|_{\fro}\lesssim \Op(\sqrt{n\vee p}) .\]
        Then from \citet[Proposition 5.2]{lin2021exponential} we have that, up to sign,
        \[\frac{1}{n}\|\hat{\bZ}-\bZ^*\bR^{\top}\|_{\fro}^2 \lesssim \frac{r^{4k_2-k_1+4}}{n\wedge p } =:\eta_n, \]
        with probability at least $1-n^{-c}- p^{-c}$.
    
        Recall the invertible matrix $\bR$ with $\|\bR\|_{\oper}=\Op(1)$ from \Cref{asm:iden-latent-eigs} and define the transformed parameters $\tilde{\bZ}^*=\bZ^*\bR^{\top}$ and $\tilde{\bGamma}^*=\bGamma^*\bR^{-1}$.
        Because $\bE_j - \bE_j^* = \tilde{\bZ}^*(\bgamma_j - \tilde{\bgamma}_j^*) + (\bZ - \tilde{\bZ}^*)\bgamma_j $, with probability tending to one, it follows that
        \begin{align*}
            \max_{1\leq j\leq p} \frac{1}{\sqrt{n}}\|\hat{\bE}_j - \bE_j^*\|_2 &= \max_{1\leq j\leq p} \frac{1}{\sqrt{n}}\|\tilde{\bZ}^*(\hat{\bgamma}_j - \tilde{\bgamma}_j^*) + (\hat{\bZ} - \tilde{\bZ}^*)\hat{\bgamma}_j \|_2\\
            &\leq \max_{1\leq j\leq p} \frac{1}{\sqrt{n}}\|\tilde{\bZ}^*(\hat{\bgamma}_j - \tilde{\bgamma}_j^*) \|_2 + \max_{1\leq j\leq p} \frac{1}{\sqrt{n}}\|(\hat{\bZ} - \tilde{\bZ}^*)\hat{\bgamma}_j \|_2\\
            &\leq  \max_{1\leq j\leq p,1\leq i\leq n} \frac{1}{\sqrt{n}}\|\tilde{\bz}_i^*\|_{\infty}\|\hat{\bgamma}_j - \tilde{\bgamma}_j^* \|_1 + \max_{1\leq j\leq p} \frac{1}{\sqrt{n}}\|\hat{\bZ} - \tilde{\bZ}^*\|_{\oper}\|\hat{\bgamma}_j \|_2\\
            &\lesssim \frac{\log n}{\sqrt{n}} + \sqrt{\eta_n} \\
            &\lesssim\sqrt{n^{-1}\log n \vee \eta_n} 
        \end{align*}
        where the second last inequality is because that $\|\tilde{\bgamma}_j^*\|_2\leq C$ from \Cref{asm:latent}, $\|\hat{\bgamma}_j - \tilde{\bgamma}_j^*\|_1\leq \sqrt{r} \|\hat{\bgamma}_j - \tilde{\bgamma}_j^*\|_2\leq 2\sqrt{r}C$ from \Cref{lem:est-error-bGamma}, and $\tilde{\bz}^{*}_i$'s are independent $r$-dimensional sub-Gaussian random vectors from \Cref{lem:subgau-Z} so that $\max_{1\leq i\leq n}\|\tilde{\bz}^{*}_i\|_{\infty}$ scales in $\log{(nr)}$.
    \end{proof}

\subsection{Proof of \Cref{thm:est-err-B}}\label{app:subsec:thm:est-err-B}

\begin{proof}[Proof of \Cref{thm:est-err-B}]    
    Define $\bDelta_{\beta}=\hat{\bbeta} - \bbeta^*$, $\bDelta_{\zeta}=\hat{\bzeta} - \bzeta^*$, and $\cS=\supp(\bbeta^*)$.
    Let $\bg_{\beta} = \nabla_{\bbeta}\cL(\hbbeta,\hat{\bzeta})$, and $\bg_{\zeta} = \nabla_{\bzeta}\cL(\hbbeta,\hat{\bzeta}) $ and analogously define $\bg_{\beta}^*,\bg_{\zeta}^*$. 
    \paragraph{(1) Cone condition.} From \Cref{lem:optimality}, we have the optimality condition
    \begin{align*}
        \cL(\hat{\bbeta},\hat{\bzeta}) + \lambda\|\hat{\bbeta}\|_1 &\leq \cL({\bbeta}^*,{\bzeta}^*) + \lambda\|{\bbeta}^*\|_1 + \tau_{n,p},
    \end{align*}
    with $\tau_{n,p}$ defined in \Cref{lem:optimality}.
    Rearranging the above display, it follows that
    \begin{align}
        \lambda\|\hat{\bbeta}\|_1 &\leq \cL(\bbeta^*,\bzeta^*) - \cL(\hat{\bbeta},\hat{\bzeta}) + \lambda\|\bbeta^*\|_1 + \tau_{n,p}\notag\\
        &\leq \bDelta_{\beta}^{\top}\bg_{\beta}^* + \bDelta_{\zeta}^{\top}\bg_{\zeta}^* + \lambda\|\bbeta^*\|_1 + \tau_{n,p}\notag\\
        &\leq \|\bDelta_{\beta}\|_{1}\|\bg_{\beta}^*\|_{\infty} + \bDelta_{\zeta}^{\top}\bg_{\zeta}^* + \lambda\|\bbeta^*\|_1 + \tau_{n,p}, \label{eq:thm:est-err-B-eq-1}
    \end{align}
    where the second inequality is from the convexity of $\cL$, and the last is from Holder's inequality.
    The term involves nuisance parameters can be further bounded as
    \begin{align}
        \bDelta_{\zeta}^{\top}\bg_{\zeta}^* 
        &= \frac{1}{n}\sum_{\ell=1}^{np} [\tilde{y}_{\ell}-A'(\tilde{\bx}_{\ell}^{\top}\bbeta^*+\zeta_{\ell}^*)]\delta_{\zeta,\ell} 
        \leq 2C \|\bg_{\zeta}^*\|_{\infty} \leq  \frac{4cC^2\alpha \log(2np)}{n}, \label{eq:thm:est-err-B-eq-2}
    \end{align}
    where the last inequality is from \Cref{lem:est-err-B-grad-inf-norm}~\eqref{lem:est-err-B-grad-inf-norm-item-2}, which holds with probability at least at least $1-(n+p)^{-c}-(np)^{-c}-\exp(-n)$ for some $c>0$.
    On the other hand, the left hand side of \eqref{eq:thm:est-err-B-eq-1} is lower bounded as
    \begin{align}
        \|\hat{\bbeta}\|_1 &= \|\bbeta^* + \bDelta_{\beta}\|_1 \notag \\
        & = \|\bbeta_{\cS}^*+\bDelta_{\beta,\cS}\|_1 + \|\bDelta_{\beta,\cS^c}\|_1 \notag \\
        & \geq \|\bbeta_{\cS}^*\|_1 - \|\bDelta_{\beta,\cS}\|_1 + \|\bDelta_{\beta,\cS^c}\|_1,\label{eq:thm:est-err-B-eq-3}
    \end{align}
    where $\cS=\{j \in [pd] \mid \beta_j^* \neq 0\}$ is the active set of the true coefficients.
    Combining \eqref{eq:thm:est-err-B-eq-1}, \eqref{eq:thm:est-err-B-eq-2}, and \eqref{eq:thm:est-err-B-eq-3} yields that
    \begin{align*}
        (\lambda - \|\bg_{\beta}^*\|_{\infty}) \|\bDelta_{\beta,\cS^c}\|_1 \leq (\lambda + \|\bg_{\beta}^*\|_{\infty})\|\bDelta_{\beta,\cS}\|_1 + \left(\frac{4cC^2\alpha \log(2np)}{n} + \tau_{n,p}\right).
    \end{align*}
    Note that under the same probabilistic event above, from \Cref{lem:est-err-B-grad-inf-norm}~\eqref{lem:est-err-B-grad-inf-norm-item-1}, we have 
    \begin{align}
        \|\bg_{\beta}^*\|_{\infty} \leq 4\nu^2\sqrt{c\log^2(2nd)/n} . \label{eq:thm:est-err-B-eq-4}
    \end{align}
    When $\lambda^*\asymp 8\nu^2\sqrt{c\log^2(2nd)/n}\geq 2\|\bg_{\beta}^*\|_{\infty}$, this implies the approximate cone condition $\bDelta_{\beta} \in \cC(3, \cS)$, where
    \begin{align}
        \cC(\xi, \cS) &:= \left\{\bDelta_{\beta}\in\RR^{pd}\,\Bigg|\, \|\bDelta_{\beta,\cS^c}\|_1 \leq \xi \|\bDelta_{\beta,\cS}\|_1 +  \tau_{n,p}^*\right\},\label{eq:thm:est-err-B-eq-cone}
    \end{align}
    and
    \begin{align*}
        \tau_{n,p}^* &=\frac{C^2\alpha}{\nu^2}\sqrt{\frac{c\log^2(2np)}{n \log^2(2nd)}}+{\frac{\sqrt{n}}{(n\wedge p)\log(2nd)}} +  \sqrt{\frac{(sd)^2}{n\wedge p^{1-k}}}.
    \end{align*}
    From the cone condition \eqref{eq:thm:est-err-B-eq-cone}, the $\ell_1$-norm bound follows by observing that
    \begin{align}
        \|\bDelta_{\beta}\|_1 &\leq \|\bDelta_{\beta,\cS}\|_1 + \|\bDelta_{\beta,\cS^c}\|_1 \notag\\
        &\leq 4\|\bDelta_{\beta,\cS}\|_1 + \tau_{n,p}^* \notag\\
        &\leq 4\sqrt{sd}\|\bDelta_{\beta,\cS}\|_2 +   \tau_{n,p}^* \notag\\
        &\leq 4\sqrt{sd}\|\bDelta_{\beta}\|_2 +\tau_{n,p}^*.\label{eq:thm:est-err-B-eq-5}
    \end{align}

    \paragraph{(2) Upper and lower bound of the first-order approximation error.}
    To quantify the estimation of the coefficient, we next analyze the first-order approximation error of the normalized likelihood: $\cE(\hbbeta, \bbeta^*;\hat{\bzeta},\bzeta^*) = \bDelta_{\beta}^{\top}(\bg_{\beta} - \bg_{\beta}^*)$.
    By the first-order optimality condition of convex optimization problem \eqref{opt:4}, we have
    \begin{align}
        \cE(\hbbeta, \bbeta^*;\hat{\bzeta},\bzeta^*) &\leq -\bDelta_{\beta}^{\top}\bg_{\beta}^*\notag\\
        &\leq  \|\bDelta_{\beta}\|_{1}  \|\bg_{\beta}^*\|_{\infty}\notag \\
        &\leq \|\bDelta_{\beta}\|_2 
        16\nu^2\sqrt{\frac{csd\log^2(2nd)}{n}} + 4C^2\alpha\sqrt{c}\frac{\log(2np)}{n}, \label{eq:thm:est-err-B-eq-upper}
    \end{align}
    where the second inequality is from Holder's inequality, and the last inequality is from \eqref{eq:thm:est-err-B-eq-4} and \eqref{eq:thm:est-err-B-eq-5}.
    This establishes the upper bound for $\cE(\hbbeta, \bbeta^*;\hat{\bzeta},\bzeta^*)$.

    On the other hand, \Cref{lem:rsc} implies that $\cE(\hbbeta, \bbeta^*;\hat{\bzeta},\bzeta^*)$ is locally restricted strongly convex over an augmented cone $\cC_a(3,\cS,\eta_n)$, defined in \eqref{eq:lem:rsc:eq-aug-cone}:
    \begin{align}
        \inf_{(\bDelta_\beta,\bDelta_\zeta)\in \cC_a(3,\cS,\eta_n)}\cE(\bbeta^* + \bDelta_\beta, \bbeta^*;\bzeta^* + \bDelta_\zeta ,\bzeta^*) &\geq\frac{\kappa_1C'}{2} \|\bDelta_{\beta}\|_2^2-C''\sqrt{\frac{\log n}{n}\vee \eta_n} \|\bDelta_{\beta}\|_1,  \label{eq:thm:est-err-B-eq-lower}
    \end{align}
    for some constant $C'>0$, with probability at least $1 -2 \exp(-3 \log n)$.

    \paragraph{(3) Estimation error.}
    From Part (2), \eqref{eq:thm:est-err-B-eq-upper} and \eqref{eq:thm:est-err-B-eq-lower} imply that
    \begin{align*}
        \frac{\kappa_1C'}{2} \|\bDelta_{\beta}\|_2^2 &\leq \|\bDelta_{\beta}\|_2 
        \left(16\nu^2\sqrt{\frac{csd\log^2(2nd)}{n}} +4C''\sqrt{\frac{sd}{n\wedge p}}\right) \\
        &\qquad + 4C^2\alpha\sqrt{c}\frac{\log(2np)}{n} +\frac{C''r^{4k_2-k_1+4}}{\sqrt{n\wedge p}}\tau_{n,p}^*.
    \end{align*}
    over $\cC_a(3,\cS,\eta_n)$.
    This implies that, with probability at least $1-(n+p)^{-c'}-(np)^{-c'}-\exp(-n)$ for some $c'>0$,
    \begin{align}
        \|\bDelta_{\beta}\|_2 &\leq \frac{2}{\kappa_1C'}\left(16\nu^2\sqrt{\frac{csd\log^2(2nd)}{n}} +C''\sqrt{\frac{sd}{n\wedge p}}\right) \notag\\
        &\qquad 
        + \sqrt{\frac{2}{\kappa_1C'}} \sqrt{4C^2\alpha\sqrt{c}\frac{\log(2np)}{n} + \frac{C''r^{4k_2-k_1+4}\tau_{n,p}^*}{\sqrt{n\wedge p}}} \notag\\
        &\lesssim \sqrt{\frac{(sd\log^2(nd))\vee\log(np)}{n}} + \frac{n^{1/4}}{(n\wedge p)^{3/2}\log^{1/2}(nd)}  + \sqrt{\frac{sd}{n\wedge p^{1-k}}}.\label{eq:thm:est-err-B-eq-l2-bound}
    \end{align}
    To establish the $\ell_1$-norm bound, from \eqref{eq:thm:est-err-B-eq-5} and \eqref{eq:thm:est-err-B-eq-l2-bound}, we have
    \begin{align}
        \|\bDelta_{\beta}\|_1& \leq 4\sqrt{sd}\|\bDelta_{\beta}\|_2 + \tau_{n,p}^* \notag\\
        &\lesssim \sqrt{sd\frac{(sd\log^2(nd))\vee\log(np)}{n}}  +\sqrt{\frac{(sd)^2}{n\wedge  p^{1-k}}} + \frac{\sqrt{n}}{(n\wedge p)\log(nd)}, \label{eq:err-Delta-B}
    \end{align}
    which completes the proof.
\end{proof}

\subsection{Technical lemmas}\label{app:subsec:est-latent-direct-tech-lems}

\begin{lemma}[Sequential and joint optimization]\label{lem:optimality}
    Under the same conditions in \Cref{thm:est-err-B}, for any constant $\delta>0$ it holds with probability at least $1-(n+p)^{-\delta}-(np)^{-\delta}-\exp(-n)$ that 
    \begin{align*}
        \cL(\hat{\bbeta},\hat{\bzeta}) + \lambda\|\hat{\bbeta}\|_1 &\leq \cL({\bbeta}^*,{\bzeta}^*) + \lambda\|{\bbeta}^*\|_1 + \tau_{n,p},
    \end{align*}
    where \begin{align*}
        \tau_{n,p} = \cO\left( {\frac{1}{n\wedge p}} + \lambda \sqrt{\frac{(sd)^2}{n\wedge p^{1-k}}}\right).
    \end{align*}
\end{lemma}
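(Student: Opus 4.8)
The plan is to exploit that the two-step estimator solves the $\hat{\bGamma}$-restricted problem \eqref{opt:2} exactly, and then to compare it against the feasible minimizer whose loss is already controlled by the standing assumption inherited from \Cref{cor:est-confound-col}. Since $(\hat{\bB},\hat{\bZ})$ is the joint minimizer of \eqref{opt:2}, fixing the latent block at its optimal value $\hat{\bZ}$ shows that $\hat{\bB}$ minimizes $\bB\mapsto\cL(\bX\bB^{\top}+\hat{\bZ}\hat{\bGamma}^{\top})+\lambda\|\bB\|_{1,1}$ over $\{\cP_{\hat{\bGamma}}\bB=\zero\}\cap\{\bX\bB^{\top}+\hat{\bZ}\hat{\bGamma}^{\top}\in\cR_C^{n\times p}\}$. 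Hence for every feasible $\bB'$,
\[
\cL(\hat{\bTheta})+\lambda\|\hat{\bB}\|_{1,1}\;\le\;\cL(\bX\bB'^{\top}+\hat{\bZ}\hat{\bGamma}^{\top})+\lambda\|\bB'\|_{1,1},
\]
where $\hat{\bTheta}=\bX\hat{\bB}^{\top}+\hat{\bZ}\hat{\bGamma}^{\top}$, and the whole argument runs on the high-probability event of \Cref{thm:est-error-bTheta,thm:est-error-proj-bGamma} together with $E_C$, on which the comparison points below obey the range constraint.

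First I would pick the unregularized feasible minimizer $\bar{\bB}=\argmin_{\cP_{\hat{\bGamma}}\bB=\zero}\cL(\bX\bB^{\top}+\hat{\bZ}\hat{\bGamma}^{\top})$. By the assumption of \Cref{cor:est-confound-col}, $\cL(\bX\bar{\bB}^{\top}+\hat{\bZ}\hat{\bGamma}^{\top})\le\cL(\bTheta^{*})=\cL(\bbeta^{*},\bzeta^{*})$ with probability tending to one. Taking $\bB'=\bar{\bB}$ above and inserting this bound gives
\[
\cL(\hat{\bTheta})+\lambda\|\hat{\bB}\|_{1,1}\;\le\;\cL(\bTheta^{*})+\lambda\|\bar{\bB}\|_{1,1},
\]
so it remains to control $\lambda(\|\bar{\bB}\|_{1,1}-\|\bB^{*}\|_{1,1})$. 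I would route this through the projected truth $\cP_{\hat{\bGamma}}^{\perp}\bB^{*}$ by the triangle inequality,
\[
\|\bar{\bB}\|_{1,1}-\|\bB^{*}\|_{1,1}\le\|\bar{\bB}-\cP_{\hat{\bGamma}}^{\perp}\bB^{*}\|_{1,1}+\bigl(\|\cP_{\hat{\bGamma}}^{\perp}\bB^{*}\|_{1,1}-\|\bB^{*}\|_{1,1}\bigr)\le\|\bar{\bB}-\cP_{\hat{\bGamma}}^{\perp}\bB^{*}\|_{1,1}+\|\cP_{\hat{\bGamma}}\bB^{*}\|_{1,1}.
\]

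The second term produces the $\lambda\sqrt{(sd)^2/(n\wedge p^{1-k})}$ contribution: writing $\cP_{\hat{\bGamma}}\bB^{*}=\cP_{\bGamma^{*}}\bB^{*}+(\cP_{\hat{\bGamma}}-\cP_{\bGamma^{*}})\bB^{*}$, the first summand is small by \Cref{prop:iden-B} with $\tau_p=p$ (giving $\|\cP_{\bGamma^{*}}\bB^{*}\|_{\fro}\lesssim sd/\sqrt{p}$) together with the hypothesis $\|\cP_{\bGamma^{*}}\bB^{*}\|_{1,1}=\cO(p^{k/2}\|\cP_{\bGamma^{*}}\bB^{*}\|_{\fro})$, so $\|\cP_{\bGamma^{*}}\bB^{*}\|_{1,1}\lesssim sd\,p^{(k-1)/2}$; the second summand is handled by the column-wise and sup-norm bounds of \Cref{thm:est-error-proj-bGamma} together with $\|\bB^{*}\|_{1,1}=\cO(sd)$ from the sparsity and boundedness in \Cref{asm:model}, yielding $\lesssim sd/\sqrt{n\wedge p}$. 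Multiplying by $\lambda$ gives the claimed order. The first term is the crux and the source of the $1/(n\wedge p)$ contribution: I would bound $\|\bar{\bB}-\cP_{\hat{\bGamma}}^{\perp}\bB^{*}\|_{1,1}$ by comparing two fits of the direct effect that share the same latent block $\hat{\bZ}\hat{\bGamma}^{\top}$. Using the first-order optimality of $\bar{\bB}$ (its $\cP_{\hat{\bGamma}}^{\perp}$-projected score vanishes), Taylor-expanding the score at $\cP_{\hat{\bGamma}}^{\perp}\bB^{*}$, and invoking restricted strong convexity in the $d$-dimensional column space of $\bX$ (as in \Cref{lem:rsc}), the coefficient discrepancy is governed by the projected score, whose stochastic part is bounded by \Cref{lem:est-err-B-grad-inf-norm} and whose deterministic part is the latent-estimation error $\hat{\bZ}\hat{\bGamma}^{\top}-\bZ^{*}\bGamma^{*\top}$ pushed through $\cP_{\bX}$.

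The main obstacle is precisely this last step. The Frobenius error $\|\hat{\bZ}\hat{\bGamma}^{\top}-\bZ^{*}\bGamma^{*\top}\|_{\fro}$ is only of order $\sqrt{n\vee p}$ (up to logarithmic and $r$-dependent factors) and is far too large to feed into a naive second-order Taylor bound on the loss gap, which would blow up like $(n\vee p)/n$ when $p\gg n$; indeed no feasible natural parameter can approximate $\bTheta^{*}$ in Frobenius norm better than this, because the $\cP_{\bX}^{\perp}\bW^{*}\bGamma^{*\top}\cP_{\hat{\bGamma}}^{\perp}$ component is unrepresentable. The resolution — and the reason the sequential scheme succeeds — is that only the $\bX$-component of this error enters the direct-effect fit: projecting onto the $d$-dimensional $\mathrm{col}(\bX)$ collapses the ambient dimension $p$ and converts the crude rate into the parametric one, which is exactly where the sharp column-wise control of \Cref{cor:est-confound-col} is indispensable. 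Executing this projection-and-localization carefully, verifying that the error direction stays in the approximate cone so the restricted eigenvalue applies, and checking the range constraint throughout, are the delicate points; the remaining estimates are routine applications of the cited results.
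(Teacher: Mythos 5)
Your overall strategy—invoke the optimality of \eqref{opt:2} against a feasible competitor, then control the difference of the $\ell_1$ penalties—matches the paper's, and your treatment of the term $\|\cP_{\hat{\bGamma}}\bB^*\|_{1,1}$ (splitting off $\cP_{\bGamma^*}\bB^*$ and using \Cref{prop:iden-B} plus \Cref{thm:est-error-proj-bGamma}) is exactly what the paper does. The divergence, and the gap, is in your choice of competitor. You compare against the \emph{unregularized} minimizer $\bar{\bB}=\argmin_{\cP_{\hat{\bGamma}}\bB=\zero}\cL(\bX\bB^{\top}+\hat{\bZ}\hat{\bGamma}^{\top})$, which forces you to bound $\lambda\|\bar{\bB}-\cP_{\hat{\bGamma}}^{\perp}\bB^{*}\|_{1,1}$ by $\tau_{n,p}$. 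That cannot work at the required rate: without an $\ell_1$ penalty there is no cone condition on $\bar{\bB}-\cP_{\hat{\bGamma}}^{\perp}\bB^{*}$, so the error is dense across all $p$ responses; each response contributes an estimation error of order $\sqrt{d/n}$, giving $\|\bar{\bB}-\cP_{\hat{\bGamma}}^{\perp}\bB^{*}\|_{1,1}\gtrsim p\sqrt{d/n}$ up to constants, hence $\lambda\|\bar{\bB}-\cP_{\hat{\bGamma}}^{\perp}\bB^{*}\|_{1,1}\gtrsim pd\log(nd)/n$, which dwarfs $\tau_{n,p}$ whenever $p\gtrsim n$ (precisely the regime of interest). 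Your projection-onto-$\mathrm{col}(\bX)$ remark controls the \emph{score}, but the passage from a small score to a small $\ell_{1,1}$ coefficient error still pays the ambient dimension $pd$ in the absence of sparsity of the error direction, and no restricted eigenvalue argument rescues this because the error is not in any cone.

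The paper sidesteps this entirely by comparing against the feasible point $(\tilde{\bbeta}',\tilde{\bzeta}')=(\vec(\cP_{\hat{\bGamma}}^{\perp}\bB^{*}),\vec(\bTheta^{*}\cP_{\hat{\bGamma}}))$, whose natural parameter is $\bTheta^{*}-\bZ^{*}\bGamma^{*\top}\cP_{\hat{\bGamma}}^{\perp}$ and whose penalty $\lambda\|\cP_{\hat{\bGamma}}^{\perp}\bB^{*}\|_{1,1}$ is already directly comparable to $\lambda\|\bB^{*}\|_{1,1}$ by the arguments you give for the "second term." The $1/(n\wedge p)$ contribution then comes not from any coefficient discrepancy but from the loss increment $\cL(\bTheta^{*}-\bZ^{*}\bGamma^{*\top}\cP_{\hat{\bGamma}}^{\perp})-\cL(\bTheta^{*})$, which a second-order expansion (\Cref{lem:upper-likelihood-diff}) reduces to a Bernstein-controlled cross term plus the quadratic term $\tfrac{\kappa_2}{2n}\|\bZ^{*}\bGamma^{*\top}\cP_{\hat{\bGamma}}^{\perp}\|_{\fro}^2\lesssim\tfrac{1}{n}\|\bZ^{*}\|_{\oper}^2\|\cP_{\hat{\bGamma}}^{\perp}\bGamma^{*}\|_{\fro}^2=\cO(r/(n\wedge p))$ via \Cref{lem:proj-B-PGamma}. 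This bound on $\|\cP_{\hat{\bGamma}}^{\perp}\bGamma^{*}\|_{\fro}^2$ is the key quantitative input that your sketch does not use and that your worry about the crude $\sqrt{n\vee p}$ Frobenius error is actually resolved by: the competitor's deviation from $\bTheta^{*}$ lives only in the small subspace $\cP_{\hat{\bGamma}}^{\perp}\bGamma^{*}$, not in all of $\mathrm{col}(\bX)^{\perp}$. To repair your proof, replace $\bar{\bB}$ by $\cP_{\hat{\bGamma}}^{\perp}\bB^{*}$ with latent block $\bTheta^{*}\cP_{\hat{\bGamma}}$ and follow this loss-increment route; the rest of your argument then goes through.
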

\begin{proof}[Proof of \Cref{lem:optimality}]
    From \Cref{asm:model}, the entry of $\bB^*$ is bounded because $\|\bB^*\|_{\max}=\max_{1\leq i\leq p,1\leq j\leq d}|b_{ij}|\leq \max_{1\leq i\leq p}\|\bb_{i}\|_{\infty}\leq \max_{1\leq i\leq p}\|\bb_{i}\|_{2}\leq C$.
    From \Cref{prop:iden-B} and \Cref{asm:model}, we further have that 
    \begin{align}
        \|\cP_{\bGamma^*}\bB^*\|_{\fro} \lesssim \sqrt{sd/p}\|\bB^*\|_{\max} \lesssim \sqrt{sd/p}. \label{eq:PGamma-B}
    \end{align}
    Thus, from the assumption of \Cref{thm:est-err-B}, we have that
    \begin{align}
        \|\cP_{\bGamma^*}\bB^*\|_{1,1}&= \cO(p^{k/2}\|\cP_{\bGamma^*}\bB^*\|_{\fro})\lesssim \sqrt{sd} p^{(k-1)/2}.\label{eq:PGamma-B-l1-norm}
    \end{align}
    This ensures that $\|\tilde{\bbeta}^*\|_{1,1}$ is close to $\|{\bbeta}^*\|_{1,1}$.

    From optimality condition of optimization problem \eqref{opt:2}, we have
    \begin{align*}
        \cL(\hat{\bbeta},\hat{\bzeta}) + \lambda\|\hat{\bbeta}\|_1 &\leq \cL(\tilde{\bbeta}',\tilde{\bzeta}') + \lambda\|\tilde{\bbeta}'\|_1,
    \end{align*}
    where $\tilde{\bbeta}'= \vec(\cP_{\hat{\bGamma}}^{\perp}\bB^{*})$ and $\tilde{\bzeta}' = \vec(\bTheta^*\cP_{\hat{\bGamma}})$.
    It follows that
    \begin{align}
        \cL(\hat{\bbeta},\hat{\bzeta}) + \lambda\|\hat{\bbeta}\|_1 &\leq \cL(\bTheta^* - \bZ^*\bGamma^{*\top}\cP_{\hat{\bGamma}}^{\perp}) + \lambda\|\bB^{*\top}\cP_{\hat{\bGamma}}^{\perp}\|_{1,1}.\label{eq:lem:optimality-1}
    \end{align}

    We next bound the first term in \eqref{eq:lem:optimality-1}. From the proof of \Cref{lem:upper-likelihood-diff}, we have
    \begin{align*}
        \cL(\bTheta^* - \bZ^*\bGamma^{*\top}\cP_{\hat{\bGamma}}^{\perp}) - \cL(\bTheta^*) &\leq \underbrace{\frac{1}{n}\tr((\bY-A'(\bTheta^*))^{\top}\bZ^*\bGamma^{*\top}\cP_{\hat{\bGamma}}^{\perp})}_{T_1} + \underbrace{\frac{\kappa_2}{2n} \|\bZ^*\bGamma^{*\top}\cP_{\hat{\bGamma}}^{\perp}\|_{\fro}^2}_{T_2}.
    \end{align*}
    Let $\bA = n^{-1}\bZ^*\bGamma^{*\top}\cP_{\hat{\bGamma}}^{\perp}$.
    From \Cref{lem:proj-B-PGamma,lem:subgau-Z}, the second term $T_2$ can be upper bounded as
    \begin{align*}
        T_2 = \frac{\kappa_2}{2}n\|\bA\|_{\fro}^2 = \frac{\kappa_2}{2n}\|\bZ^*\bGamma^{*\top}\cP_{\hat{\bGamma}}^{\perp}\|_{\fro}^2 &\leq \frac{\kappa_2}{2n}\|\bZ^*\|_{\oper}^2\|\bGamma^{*\top}\cP_{\hat{\bGamma}}^{\perp}\|_{\fro}^2
        = \cO\left({\frac{r}{n\wedge p}}\right),
    \end{align*}
    with probability at least $1-(n+p)^{-\delta}-(np)^{-\delta}-\exp(-n)$.
    In the equality, we use the fact that $\bZ^*$ is a matrix with independent sub-Gaussian rows to obtain similar results as \Cref{lem:spec-D}.

    For the first term $T_1$, note that $y_{ij}-A'(\theta^*_{ij})$ is mean-zero $(\nu,\alpha)$-sub-exponential random variable when conditioned on $(\bX^*,\bZ^*)$, where $\nu = \sqrt{\kappa_2}$ and $\alpha = 1/C^2$, as shown in the proof of \Cref{thm:est-error-bTheta}.
    To bound the first term $T_1$, we apply Bernstein’s inequality \citep[Theorem 2.8.2]{vershynin2018high} to obtain
    \begin{align*}
        \PP(|T_1| \geq t \mid \bX^*,\bZ^*) &\leq 2 \exp\left(-\min\left\{\frac{t^2}{2\nu^2 \|\bA\|_{\fro}^2}, 
        \frac{t}{2\alpha \|\bA\|_{\max}}\right\}\right)
    \end{align*}
    From the proof above, we have that $\|\bA\|_{\max} \leq \|\bA\|_{\fro} =\Op( (nr(n\wedge p))^{-1/2})$.
    Therefore, by choosing $t \asymp ({{nr(n\wedge p)}})^{-1/2} \log(np)$, we further have
    \begin{align*}
        |T_1| =\cO\left( \sqrt{\frac{r}{n(n\wedge p)}}\right).
    \end{align*}
    with probability at least $1-(np)^{-\delta}$.
    The above results imply that
    \begin{align}
        \cL(\bTheta^* - \bZ^*\bGamma^{*\top}\cP_{\hat{\bGamma}}^{\perp})  &\leq  \cL(\bTheta^*) + \cO\left({\frac{r}{n\wedge p}}\right),\label{eq:lem:optimality-2-0}
    \end{align}
    with probability at least $1-(n+p)^{-\delta}-(np)^{-\delta}-\exp(-n)$.
    
    Consider the second term of \eqref{eq:lem:optimality-1}, from \Cref{thm:est-error-proj-bGamma} we also have
    \begin{align}
        \|\bB^{*\top}\cP_{\hat{\bGamma}}^{\perp}\|_{1,1} - \|\bB^{*\top}\cP_{{\bGamma}}^{*\perp}\|_{1,1}  & \leq  \|\bB^{*\top}(\cP_{\hat{\bGamma}}^{\perp} - \cP_{{\bGamma}}^{*\perp})\|_{1,1} \notag\\
        &\leq \sum_{\ell=1}^d \|(\cP_{\hat{\bGamma}}^{\perp} - \cP_{{\bGamma}}^{*\perp})\bB_{\ell}^*\|_{1} \notag\\
        &\leq ds C\max_{j\in[d]}\|(\cP_{\hat{\bGamma}}^{\perp} - \cP_{{\bGamma}}^{*\perp})\be_{j}\|_{1}\notag\\
        &=\cO\left(\frac{ds}{\sqrt{p(n\wedge p)}}\right),\label{eq:lem:optimality-2-1}
    \end{align}
    under the same probability event above.

    Finally, combining \eqref{eq:lem:optimality-1}-\eqref{eq:lem:optimality-2-1}, we have
    \begin{align*}
        \cL(\hat{\bbeta},\hat{\bzeta}) + \lambda\|\hat{\bbeta}\|_1 &\leq\cL(\bTheta^*) + \lambda\|\bB^{*\top}\cP_{\hat{\bGamma}}^{\perp}\|_{1,1} + \Op\left( {\frac{1}{n\wedge p}} \right), \\
        &= \cL(\tilde{\bbeta}^*,\tilde{\bzeta}^*) + \lambda\|\tilde{\bbeta}^*\|_1 + \Op\left( {\frac{1}{n\wedge p}} \right)\\
        &= \cL({\bbeta}^*,{\bzeta}^*) + \lambda\|{\bbeta}^*\|_1 +  \Op\left( {\frac{1}{n\wedge p}} + \lambda \sqrt{\frac{(sd)^2}{n\wedge p^{1-k}}}\right),
    \end{align*}
    where the last inequality is from \eqref{eq:PGamma-B-l1-norm}.
    This completes the proof.
\end{proof}

\begin{lemma}[Infinity norm of the gradient]\label{lem:est-err-B-grad-inf-norm}
    Under the same conditions in \Cref{thm:est-err-B}, for any constant $c>0$ it holds that
    \begin{enumerate}[(1)]
        \item\label{lem:est-err-B-grad-inf-norm-item-1} $\|\nabla_{\bbeta}\cL(\bbeta^*,\bzeta^*)\|_{\infty} \leq  4\nu^2\sqrt{\frac{c\log^2(2nd)}{n}} $,
        
        \item\label{lem:est-err-B-grad-inf-norm-item-2}  $\|\nabla_{\bzeta}\cL(\bbeta^*,\bzeta^*)\|_{\infty} \leq  \frac{2c\alpha \log(2np)}{n}$,
    \end{enumerate}   
    with probability at least $1-(2nd)^{-c} - (2np)^{-c}$.
\end{lemma}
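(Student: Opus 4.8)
The plan is to write both gradients evaluated at the truth in terms of the stochastic residuals $\varepsilon_{ij}:=y_{ij}-A'(\theta_{ij}^*)$ and then apply the sub-exponential concentration already established for these residuals, together with a union bound. Since $\tilde{\bX}\bbeta^*+\bzeta^*=\vec(\bTheta^*)$, the residual vector at the truth equals $\tilde{\by}-A'(\tilde{\bX}\bbeta^*+\bzeta^*)=\vec(\bY-A'(\bTheta^*))$, whose entries are exactly the $\varepsilon_{ij}$. Differentiating \eqref{eq:loglikelihood} and using $\tilde{\bX}^{\top}=\bI_p\otimes\bX^{\top}$ gives
\begin{align*}
\nabla_{\bzeta}\cL(\bbeta^*,\bzeta^*)=-\frac{1}{n}\vec\big(\bY-A'(\bTheta^*)\big),\qquad
\nabla_{\bbeta}\cL(\bbeta^*,\bzeta^*)=-\frac{1}{n}\vec\big(\bX^{\top}(\bY-A'(\bTheta^*))\big).
\end{align*}
As in the proof of \Cref{thm:est-error-bTheta}, on the event $E_C$ and conditional on $(\bX,\bZ^*)$, the $\varepsilon_{ij}$ are independent, mean-zero, and $(\nu,\alpha)$-sub-exponential with $\nu=\sqrt{\kappa_2}$ and $\alpha=1/C^2$.

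Part (2) is then immediate: $\|\nabla_{\bzeta}\cL(\bbeta^*,\bzeta^*)\|_{\infty}=n^{-1}\max_{i,j}|\varepsilon_{ij}|$, so I would use the linear-regime sub-exponential tail $\PP(|\varepsilon_{ij}|>t)\le 2\exp(-t/(2\alpha))$ with threshold $t\asymp 2c\alpha\log(2np)$, followed by a union bound over the $np$ entries, to obtain the stated bound with failure probability $(2np)^{-c}$. For part (1), the $(\ell,j)$ entry of $\nabla_{\bbeta}\cL(\bbeta^*,\bzeta^*)$ is $-n^{-1}\sum_{i=1}^n x_{i\ell}\varepsilon_{ij}$. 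Here I would first condition on $\bX$ and reduce to a high-probability event on which the design extremes are controlled: by \Cref{asm:covariate} and a union bound over the $nd$ entries of $\bX$, we have $\max_{i,\ell}|x_{i\ell}|\lesssim\sqrt{\log(2nd)}$ and $\sum_i x_{i\ell}^2\lesssim n$ outside an event of probability $(2nd)^{-c}$. Conditional on $\bX$, the weighted sum $\sum_i x_{i\ell}\varepsilon_{ij}$ is sub-exponential with variance proxy $\nu^2\sum_i x_{i\ell}^2\lesssim\nu^2 n$ and scale $\alpha\max_i|x_{i\ell}|\lesssim\alpha\sqrt{\log(2nd)}$, so Bernstein's inequality \citep[Theorem~2.8.2]{vershynin2018high} applied in the conditional law, with the generous threshold $t\asymp\nu^2\sqrt{n}\,\log(2nd)$ and a union bound over the $pd$ coordinates, yields the claimed $\nu^2\sqrt{c\log^2(2nd)/n}$ bound after absorbing constants.

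The main obstacle is the coupling between the randomness of the design $\bX$ and the residuals $\bY-A'(\bTheta^*)$ in part (1): the residuals are only \emph{conditionally} sub-exponential given $(\bX,\bZ^*)$, so the argument must be staged — first pass to the event on which $\max_{i,\ell}|x_{i\ell}|$ and $\sum_i x_{i\ell}^2$ are controlled, and only then apply Bernstein inside the conditional measure. Some care is also needed in choosing the Bernstein threshold large enough that its Gaussian-regime exponent $\asymp c\log^2(2nd)$ dominates the $\log(pd)$ cost of the union bound over the $pd$ coordinates; this crude but valid choice is precisely what produces the single $\log(2nd)$ factor (equivalently $\sqrt{\log^2(2nd)}$) and the $\nu^2$ scaling appearing in the statement, rather than a product of distinct logarithmic factors.
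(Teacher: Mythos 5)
Your proposal matches the paper's argument in all essentials: both parts rewrite the gradients at the truth in terms of the residuals $y_{ij}-A'(\theta_{ij}^*)$, use their conditional $(\nu,\alpha)$-sub-exponentiality given $(\bX,\bZ^*)$, handle part (2) by a union bound in the linear tail regime over the $np$ entries, and handle part (1) by first controlling $\|\tilde{\bX}\|_{\max}\lesssim\sqrt{\log(2nd)}$ on a high-probability event (the paper's \Cref{lem:boundedness}) and then applying Bernstein's inequality conditionally with a union bound over coordinates. The only difference is cosmetic ordering — the paper runs Bernstein with the symbolic constant $C'=\|\tilde{\bX}\|_{\max}$ and substitutes its bound afterwards, whereas you condition on the design event first — so this is the same proof.
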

\begin{proof}[Proof of \Cref{lem:est-err-B-grad-inf-norm}]
    Recall that
    \begin{align}        
        \nabla_{\bbeta}\cL(\bbeta^*,\bzeta^*) &= \frac{1}{n}\sum_{\ell=1}^{np}[\tilde{y}_\ell-A'(\tilde{\bx}_\ell^{\top} \bbeta^* + \zeta_{\ell}^*)]\tilde{\bx}_{\ell} \\
        \nabla_{\zeta_{\ell}}\cL(\bbeta^*,\bzeta^*) & = \frac{1}{n}[\tilde{y}_\ell-A'(\tilde{\bx}_\ell^{\top} \bbeta^* + \zeta_{\ell}^*)],\qquad \ell\in[np] ,\label{eq:lem:est-err-B-grad-inf-norm-eq-3}
    \end{align}
    where $\tilde{\bx}_{\ell}$ is the $\ell$-th row of $\tilde{\bX}$.
    We split the proof into different parts.
    \paragraph{Part (1).} 
    Conditioned on $\bX$ and $\bZ^*$, the term $\tilde{y}_\ell-A'(\tilde{\bx}_\ell^{\top} \bbeta^* + \zeta_{\ell}^*)$ is a zero-mean $(\nu,\alpha)$-sub-exponential random variable, where $\nu = \sqrt{\kappa_2}$ and $\alpha = 1/C^2$, as shown in the proof of \Cref{thm:est-error-bTheta}.    
    Let $C'=\max_{\ell\in[np]}\|\tilde{\bx}_{\ell}^*\|_{\infty}$.
    Because $\tilde{\bx}_{\ell}^*$ are sparse vectors with $\|\tilde{\bx}_{\ell}^*\|_0=d$, we have that $[\tilde{y}_\ell-A'(\tilde{\bx}_\ell^{\top} \bbeta^* + \zeta_{\ell}^*)]\tilde{x}_{\ell j}$ is a zero-mean $(\nu,C'\alpha)$-sub-exponential random variable for $j=k,k+p,\ldots,k+p(d-1)$ and zero otherwise, where $k=\lfloor p/d\rfloor$.
    By Bernstein's inequality, it follows that 
    \[ \PP\left(\frac{1}{n}\sum_{\ell=1}^{np}[\tilde{y}_\ell-A'(\tilde{\bx}_\ell^{\top} \bbeta^* + \zeta_{\ell}^*)]\tilde{x}_{\ell j} \leq t\right) \geq 1-2\exp\left(- \frac{n}{2}\min\left\{\frac{t^2}{\nu^2}, \frac{t}{C'\alpha}\right\}\right).\]
    Applying union bound over $j=k,k+p,\ldots,k+p(d-1)$ yields that
    \begin{align*}
        \PP\left(\| \nabla_{\bbeta}\cL(\bbeta^*,\bzeta^*)\|_{\infty} \leq t\right) \geq 1-2d\exp\left(-n\min\left\{\frac{t^2}{2\nu^2}, \frac{t}{2C'\alpha}\right\}\right) .
    \end{align*}
    By setting $t = C'\sqrt{2\nu^2 c\log (2nd) / n}$ for some fixed constant $c>1$, we have
    \begin{align*}
        \PP\left(\| \nabla_{\bbeta}\cL(\bbeta^*,\bzeta^*)\|_{\infty} \leq C'\sqrt{\frac{2\nu^2c\log(2nd)}{n}} \right) \geq 1 - (2nd)^{1-c}, 
    \end{align*}
    when $n$ is large enough such that $t<\nu^2/(C'\alpha)$.
    By \Cref{lem:boundedness}, we also have $C'=\|\tilde{\bX}\|_{\max}\leq 2\sqrt{2}\nu\sqrt{c\log(nd)}$ with probability at least $1-(2nd)^{1-c}$.
    It follows that
    \begin{align}
        \PP\left(\| \nabla_{\bbeta}\cL(\bbeta^*,\bzeta^*)\|_{\infty} \leq 4\nu^2c\sqrt{\frac{\log^2(2nd)}{n}} \right) \geq 1 - 2(2nd)^{1-c}, \label{eq:lem:est-err-B-grad-inf-norm-eq-4}
    \end{align}
    which finishes the proof for Part (1).

    \paragraph{Part (2).} Because $\tilde{y}_\ell-A'(\tilde{\bx}_\ell^{\top} \bbeta^* + \zeta_{\ell}^*)$'s are zero-mean $(\nu,\alpha)$-sub-exponential random variables, by union bound, we have
    \[ \PP\left( n\|\nabla_{\bzeta}\cL(\bbeta^*,\bzeta^*)\|_{\infty} \leq t\right) \geq 1-2np\exp\left(- \frac{1}{2}\min\left\{\frac{t^2}{\nu^2}, \frac{t}{\alpha}\right\}\right),\]
    or equivalently,
    \begin{align*}
       \PP\left( \|\nabla_{\bzeta}\cL(\bbeta^*,\bzeta^*)\|_{\infty} \leq t\right) &\geq 1-2np\exp\left(- \frac{1}{2}\min\left\{\frac{(nt)^2}{\nu^2}, \frac{nt}{\alpha}\right\}\right)  \\
       &=  1-2np\exp\left(- \frac{nt}{2\alpha}\right)
    \end{align*}
    when $t\geq \nu^2/(\alpha n)$ is sufficiently large.
    Choosing $t = \max\{2c\alpha \log(2np), \nu^2/\alpha\}/n $ for any constant $c>1$ yields that
    \begin{align*}
       \PP\left( \|\nabla_{\bzeta}\cL(\bbeta^*,\bzeta^*)\|_{\infty} \leq \frac{2c\alpha \log(2np)}{n}\right)
       &\leq 1 - (2np)^{1-c}.
    \end{align*}

    Finally, we obtain the tail probability as the lemma states by taking the union bound over the above events.
\end{proof}

\begin{lemma}[Locally restricted strongly convexity]\label{lem:rsc}
    Under the same conditions in \Cref{thm:est-err-B}, define the augmented cone
    \begin{align}
        \begin{split}
            \cC_a(\xi,\cS,\eta_n) &:= \{(\bDelta_{\beta},\bDelta_{\zeta}) \in \RR^{pd}\times \RR^{np} \mid \bDelta_{\beta}\in\cC(\xi,\cS), \\
            & \max_{j\in[p]}\frac{1}{n}\|\bZ\bgamma_j-\bZ^*\bgamma^{*}_j\|_{2}^2\leq \frac{\log n}{n}\vee\eta_n ,\text{ such that }\bzeta=\vec(\bZ\bGamma^{\top})\},    
        \end{split}        
        \label{eq:lem:rsc:eq-aug-cone}
    \end{align}
    where the cone $\cC(\xi,\cS)$ defined in \eqref{eq:thm:est-err-B-eq-cone} and $\eta_n=o(1)$.  
    Then, it holds that
    \begin{align*}
        \inf_{(\bDelta_\beta,\bDelta_\zeta)\in \cC_a(\xi,\cS,\eta_n)}\cE(\bbeta^* + \bDelta_\beta, \bbeta^*;\bzeta^* + \bDelta_\zeta ,\bzeta^*) 
        & \geq \frac{\kappa_1C'}{2} \|\bDelta_{\beta}\|_2^2 -C''\sqrt{n^{-1} \vee \eta_n}  \|\bDelta_{\beta}\|_1,
    \end{align*}
    for some constant $C'>0$, with probability at least $1 -2 \exp(-\xi \log n)$.
\end{lemma}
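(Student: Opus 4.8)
The plan is to write the first-order approximation error as a curvature term plus a nuisance-coupling term, and then argue that the former dominates on the augmented cone. Writing $\theta_\ell^* = \tilde{\bx}_\ell^\top\bbeta^* + \zeta_\ell^*$ and $\theta_\ell = \theta_\ell^* + \tilde{\bx}_\ell^\top\bDelta_{\beta} + \delta_{\zeta,\ell}$, I would start from the gradient formula \eqref{eq:lem:est-err-B-grad-inf-norm-eq-3} and expand $A'(\theta_\ell) - A'(\theta_\ell^*)$ by the mean value theorem, so that for some $\bar\theta_\ell$ between $\theta_\ell^*$ and $\theta_\ell$,
\begin{align*}
\cE(\bbeta^*+\bDelta_{\beta},\bbeta^*;\bzeta^*+\bDelta_{\zeta},\bzeta^*) = \underbrace{\frac1n\sum_{\ell=1}^{np} A''(\bar\theta_\ell)(\tilde{\bx}_\ell^\top\bDelta_{\beta})^2}_{Q} + \underbrace{\frac1n\sum_{\ell=1}^{np} A''(\bar\theta_\ell)\,\delta_{\zeta,\ell}\,(\tilde{\bx}_\ell^\top\bDelta_{\beta})}_{R}.
\end{align*}
Since the base and perturbed natural parameters are feasible (they lie in $\cR_C^{np}$ under $E_C$ and the constraints of \eqref{opt:2}) and $\cR_C$ is an interval, the intermediate point $\bar\theta_\ell$ also lies in $\cR_C$, so $\kappa_1\le A''(\bar\theta_\ell)\le\kappa_2$ by \eqref{eq:kappa}.

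For the quadratic part $Q$ I would exploit the Kronecker structure $\tilde{\bX}=\bI_p\otimes\bX$: then
\begin{align*}
Q \ge \kappa_1\,\bDelta_{\beta}^\top\Big(\tfrac1n\tilde{\bX}^\top\tilde{\bX}\Big)\bDelta_{\beta} = \kappa_1\,\bDelta_{\beta}^\top\Big(\bI_p\otimes\tfrac1n\bX^\top\bX\Big)\bDelta_{\beta} \ge \kappa_1\,\lambda_{\min}\!\Big(\tfrac1n\bX^\top\bX\Big)\|\bDelta_{\beta}\|_2^2.
\end{align*}
By \Cref{asm:covariate} and a sub-Gaussian sample-covariance concentration argument (as in \Cref{lem:eigvals-W}), $\lambda_{\min}(n^{-1}\bX^\top\bX)\ge C'$ for a constant $C'>0$ with probability at least $1-\exp(-\xi\log n)$ once $n$ is large; hence $Q\ge\kappa_1 C'\|\bDelta_{\beta}\|_2^2$. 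I note that no genuine restricted-eigenvalue condition is needed for $Q$, because the block-diagonal Gram matrix is uniformly positive definite on all of $\RR^{pd}$; the restriction to the cone enters only through the coupling term $R$.

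The crux is to show that $R$ contributes at most a first-order $\ell_1$ penalty. Reindexing $\ell\leftrightarrow(i,j)$ so that $\tilde{\bx}_{(i,j)}^\top\bDelta_{\beta}=\bx_i^\top\bDelta_{\bb_j}$ with $\bDelta_{\bb_j}$ the $j$th row of $\bB-\bB^*$, and $\delta_{\zeta,(i,j)}$ the $(i,j)$ entry of $\bE-\bE^*$ with $\bE=\bZ\hat{\bGamma}^\top$, $\bE^*=\bZ^*\bGamma^{*\top}$, I would use $A''\le\kappa_2$ and Cauchy--Schwarz over $i$ for each fixed $j$, giving
\begin{align*}
|R| \le \frac{\kappa_2}{n}\sum_{j=1}^p\|(\bE-\bE^*)_j\|_2\,\|\bX\bDelta_{\bb_j}\|_2 \le \kappa_2\,\Big(\max_{1\le j\le p}\tfrac1{\sqrt n}\|(\bE-\bE^*)_j\|_2\Big)\,\frac{\|\bX\|_{\oper}}{\sqrt n}\,\|\bDelta_{\beta}\|_1,
\end{align*}
using $\|\bX\bDelta_{\bb_j}\|_2\le\|\bX\|_{\oper}\|\bDelta_{\bb_j}\|_2$ and $\sum_j\|\bDelta_{\bb_j}\|_2\le\|\bDelta_{\beta}\|_1$. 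Since $\|\bX\|_{\oper}\lesssim\sqrt n$, it remains to bound $\rho_n:=\max_j n^{-1/2}\|(\bE-\bE^*)_j\|_2$ by $\sqrt{n^{-1}\vee\eta_n}$. Splitting $\bE-\bE^* = (\bZ-\tilde{\bZ}^*)\hat{\bGamma}^\top + (\tilde{\bZ}^*\hat{\bGamma}^\top-\bZ^*\bGamma^{*\top})$ with $\tilde{\bZ}^*=\bZ^*\bR^\top$, the first piece is bounded column-wise by $n^{-1/2}\|\bZ-\tilde{\bZ}^*\|_{\oper}\|\hat{\bgamma}_j\|_2\lesssim\sqrt{\eta_n}$, using the localization $\tfrac1n\|\bZ-\tilde{\bZ}^*\|_{\fro}^2\le\eta_n$ of $\cC_a(\xi,\cS,\eta_n)$ and the boundedness of $\|\hat{\bgamma}_j\|_2$ (\Cref{lem:bound-latent}); the second, deterministic piece is handled exactly as in the proof of \Cref{cor:est-confound-col} through \Cref{asm:iden-latent-eigs}, \Cref{thm:est-error-proj-bGamma}, and the sub-Gaussianity of $\tilde{\bz}_i^*$, contributing at most $\Op(\sqrt{n^{-1}\log n})$. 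Together these give $\rho_n\lesssim\sqrt{n^{-1}\vee\eta_n}$, so $|R|\le C''\sqrt{n^{-1}\vee\eta_n}\,\|\bDelta_{\beta}\|_1$.

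Combining the lower bound on $Q$ with the upper bound on $|R|$ (and relabelling the constant $C'$) yields the stated inequality on the intersection of the above events, which has probability at least $1-2\exp(-\xi\log n)$. The main obstacle is precisely the control of $\rho_n$: since the GLM likelihood does not allow separate column-wise estimation of $\bGamma^*$ (only the product $\bZ^*\bgamma_j^*$ is identifiable), the two pieces of $\bE-\bE^*$ cannot both be made individually small in a naive way, and the identifiability rotation $\bR$ of \Cref{asm:iden-latent-eigs} together with the $\eta_n$-localization built into the cone is what makes $R$ enter only as a negligible $\ell_1$ term rather than overwhelming the quadratic curvature.
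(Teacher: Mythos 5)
Your overall architecture matches the paper's: a single curvature term bounded below by $\kappa_1\lambda_{\min}(n^{-1}\bX^{\top}\bX)\|\bDelta_{\beta}\|_2^2$ using the block structure of $\bI_p\otimes\bX$ and sub-Gaussian covariance concentration, plus a nuisance cross term that must be shown to contribute only $C''\sqrt{n^{-1}\vee\eta_n}\,\|\bDelta_{\beta}\|_1$. Your treatment of $Q$, and of the $(\bZ-\tilde{\bZ}^*)\bgamma_j$ piece of the cross term via the $\eta_n$-localization of the augmented cone together with $\|\bX\|_{\oper}\lesssim\sqrt{n}$, is essentially identical to the paper's handling of $T_{1j}$ and of the second piece of $T_{2j}$.

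The gap is in the other piece of the cross term, $\tilde{\bZ}^*(\bgamma_j-\tilde{\bgamma}_j^*)$. By applying Cauchy--Schwarz over $i$ you decouple the cross term into $\rho_n:=\max_j n^{-1/2}\|(\bE-\bE^*)_j\|_2$ times $\|\bDelta_{\beta}\|_1$, and you then need $\rho_n\lesssim\sqrt{n^{-1}\vee\eta_n}$. But since $n^{-1}\tilde{\bZ}^{*\top}\tilde{\bZ}^*$ has eigenvalues bounded away from zero (\Cref{asm:latent}, \Cref{lem:eigvals-W}), one has $n^{-1/2}\|\tilde{\bZ}^*(\bgamma_j-\tilde{\bgamma}_j^*)\|_2\asymp\|\bgamma_j-\tilde{\bgamma}_j^*\|_2$, and the only available control on the individual loadings is the $O(1)$ bound of \Cref{lem:est-error-bGamma}: the column space of $\bGamma^*$ is estimated consistently, but the rows are not. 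Hence $\rho_n$ is generically of constant order and your bound on $|R|$ degenerates to $O(\|\bDelta_{\beta}\|_1)$, which destroys the lemma. Appealing to \Cref{cor:est-confound-col} does not repair this: that bound concerns the specific optimizer $\hat{\bZ}$ (its proof uses $\cL(\hat{\bTheta})\leq\cL(\bTheta^*)$), whereas here you need a bound uniform over all $\bZ$ in the cone. The paper's proof avoids the decoupling entirely: it applies H\"older in the form \eqref{eq:lem:rsc:eq-T2-path-deriv}, keeping the weighted inner product $\tilde{\bZ}^{*\top}\diag(A''(\bTheta_j'))\bX_{\ell}$ intact, and uses the $\sqrt{n}$-scale concentration of this sum of $n$ independent sub-exponential terms, so that even an $O(1)$ error in $\bgamma_j$ contributes only $O(n^{-1/2})$ to $T_{2j}$. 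This is precisely the Neyman-orthogonality mechanism discussed in \Cref{remark:neyman-ortho}; your Cauchy--Schwarz step discards the near-orthogonality between the latent factors and the covariates that is the reason the cross term is small.
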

\begin{proof}[Proof of \Cref{lem:rsc}]  
    For all $(\bDelta_{\beta},\bDelta_{\zeta})\in\cC_a(\xi,\cS,\eta_n)$, let $\bbeta=\bbeta^* + \bDelta_{\beta}$ and $\bzeta=\bzeta^* + \bDelta_{\zeta}$.
    Let $\bDelta_{\beta_j}\in\RR^d$ be the sub-vector containing the $(j-1)d$-th to $(jd-1)$-th entries of $\bDelta\in\RR^{pd}$.
    It is also equivalent to $\bb_j-\bb_j^*$, the $j$-th row of $\vec^{-1}(\bbeta-\bbeta^*)\in\RR^{p\times d}$.
    Let $\bE = \vec^{-1}(\bzeta)\in\RR^{n\times p}$ and $\bE^* = \vec^{-1}(\bzeta^*)\in\RR^{n\times p}$.
    
    We begin by decomposing the error into different terms:    
    \begin{align*}
        \cE(\bbeta, \bbeta^*;\bzeta,\bzeta^*) = \sum_{j=1}^p\cE_j(\bbeta, \bbeta^*;\bzeta,\bzeta^*),
    \end{align*}
    where
    \begin{align}
        \cE_j(\bbeta, \bbeta^*;\bzeta,\bzeta^*) &=  \frac{1}{n}\sum_{i=1}^{n}[A'(\bx_{i}^{\top} \bb_j + \bz_i^{\top}\bgamma_j) - A'(\bx_{i}^{\top} \bb_j^* + \bz_i^{*\top}\bgamma_j^*)]\bx_i^{\top}\bDelta_{\beta_j}\notag\\
        &= \frac{1}{n}\sum_{i=1}^{n}[A'(\bx_{i}^{\top} \bb_j + \bz_i^{\top}\bgamma_j) - A'(\bx_{i}^{\top} \bb_j^* + \bz_i^{\top}\bgamma_j)]\bx_i^{\top}\bDelta_{\beta_j} \notag\\
        &\qquad +  \frac{1}{n}\sum_{i=1}^{n}[A'(\bx_{i}^{\top} \bb_j^* + \bz_i^{\top}\bgamma_j) - A'(\bx_{i}^{\top} \bb_j^* + \bz_i^{*\top}\bgamma_j^*)]\bx_i^{\top}\bDelta_{\beta_j}\notag\\
        &=\frac{1}{n}\sum_{i=1}^{n}A''(\theta_{ij})(\bx_i^{\top}\bDelta_{\beta_j})^2 +  \frac{1}{n}\sum_{i=1}^{n}A''(\theta_{ij}')(\bz_i^{\top}\bgamma_j-\bz_i^{*\top}\bgamma_j^*)\bx_i^{\top}\bDelta_{\beta_j}\notag\\
        &= T_{1j} + T_{2j}, \label{eq:lem:rsc:eq-1} 
    \end{align}
    where $\theta_{ij}$ is between $\bx_{i}^{\top} \bb_j + \bz_i^{\top}\bgamma_j$ and $\bx_{i}^{\top} \bb_j^* + \bz_i^{\top}\bgamma_j$, and $\theta_{ij}'$ is between $\bx_{i}^{\top} \bb_j^* + \bz_i^{\top}\bgamma_j$ and $\bx_{i}^{\top} \bb_j^* + \bz_i^{*\top}\bgamma_j^*$.

    For the first term, because by \Cref{asm:model}, $\kappa_1:=\inf_{\theta\in\cR} A''(\theta)>0$, we have
    \begin{align*}
        T_{1j} &\geq \kappa_1 \frac{1}{n}\sum_{i=1}^{n}(\bx_i^{\top}\bDelta_{\beta_j})^2 \\
        &\geq \kappa_1\lambda_p\left(\frac{1}{n}\bX^{\top}\bX\right) \|\bDelta_{\beta_j}\|_2^2 \\
        &\geq \frac{\kappa_1}{C}\lambda_p\left(\frac{1}{n}\bSigma_x^{-\frac{1}{2}}\bX^{\top}\bX\bSigma_x^{-\frac{1}{2}}\right) \|\bDelta_{\beta_j}\|_2^2,
    \end{align*}
    where the last inequality is from \Cref{asm:covariate}.
    From \citet[Lemma 10.6.6]{vershynin2018high}, we further have that when $n\gtrsim \xi\log d$
    \begin{align}
        T_{1j} &\geq \frac{\kappa_1 C'}{2}\|\bDelta_{\beta_j}\|_2^2   \label{eq:lem:rsc:eq-T1}
    \end{align}
    over $\cC(\xi,\cS)$ (which contains $\cC_a(\xi,\cS, C_0)$), with probability at least $1 -2 \exp(-\xi \log n)$ for some absolute constant $C'>0$.

    For the second term, by Holder's inequality, we have
    \begin{align}
        |T_{2j}| &= \left|\frac{1}{n}(\bE_j - \bE_j^*)^{\top}\diag(A''(\bTheta_j')) \bX \bDelta_{\beta_j} \right| \notag\\
        &\leq \frac{1}{n}\|\bX^{\top}\diag(A''(\bTheta_j')) (\bE_j - \bE_j^*)\|_{\infty}\cdot \|\bDelta_{\beta_j}\|_1 \label{eq:lem:rsc:eq-T2-path-deriv}\\
        &=\max_{1\leq \ell\leq d} \frac{1}{n}|(\bE_j - \bE_j^*)^{\top}\diag(A''(\bTheta_j')) \bX_{\ell}|\cdot \|\bDelta_{\beta_j}\|_1, \notag
    \end{align}
    where $\bTheta_j'=(\theta_{1j}',\ldots,\theta_{nj}')^{\top}$.
    Recall that $\bE=\bZ\bGamma^{\top}$ and $\bE^*=\bZ^*\bGamma^{*\top}$ are such that $\max_{1\leq j\leq d}\frac{1}{\sqrt{n}}\|{\bE}_j - {\bE}_j^*\|_2 \lesssim \sqrt{(n^{-1}\log n )\vee\eta_n}$.
    From \Cref{cor:est-confound-col}, we have
    \begin{align}
        |T_{2j}| 
        &\leq \kappa_2 \max_{1\leq j\leq d}\frac{1}{\sqrt{n}}\|{\bE}_j - {\bE}_j^*\|_2\cdot\frac{1}{\sqrt{n}}\|\bX\|_{\oper}\cdot \|\bDelta_{\beta_j}\|_1\notag\\
        &\lesssim \sqrt{(n^{-1}\log n) \vee \eta_n} \|\bDelta_{\beta_j}\|_1, \label{eq:lem:rsc:eq-T2}
    \end{align}
    where the last inequality is because $\|\bX\|_{\oper}\lesssim \sqrt{n}$.

    By combining \eqref{eq:lem:rsc:eq-1}, \eqref{eq:lem:rsc:eq-T1} and \eqref{eq:lem:rsc:eq-T2}, we have that
    \[\cE_j(\bbeta, \bbeta^*;\bzeta,\bzeta^*) \geq \frac{\kappa_1 C'}{2} \|\bDelta_{\beta_j}\|_2^2 - C''\sqrt{n^{-1} \vee \eta_n}  \|\bDelta_{\beta_j}\|_1 ,\]
    for some constant $C''>0$. Thus,
    \[\cE(\bbeta, \bbeta^*;\bzeta,\bzeta^*) \geq \frac{\kappa_1 C'}{2} \|\bDelta_{\beta}\|_2^2  -C''\sqrt{n^{-1} \vee \eta_n}  \|\bDelta_{\beta}\|_1,\]
    over the cone $\cC_a(\xi,\cS,\eta_n)$.
\end{proof}

\begin{remark}[Neyman orthogonality]\label{remark:neyman-ortho}
    By coincidence, the proof above on $T_{2j}$ actually verifies the uniform Neyman orthogonality of the empirical loss in semiparametric models \citep{chernozhukov2018plug,foster2019orthogonal}. 
    This relies on the estimation error rates for the nuisance parameters $\bE_j^*$ by using the consistent estimation for the latent factors $\bZ^*$ with rate $\eta_n$ as assumed in the cone condition \eqref{eq:lem:rsc:eq-1}.
    To see this, recall that the pathwise derivative map of the gradient $\nabla_{\bb_j}\cL(\bbeta^*,\bzeta^*)$ evaluated at the true parameter $\bbeta^*$ and nuisance component value $\bzeta^*$ (when $t=0$) is given by
    \begin{align*}
        \frac{\partial}{\partial t} \nabla_{\bb_j}\cL(\bbeta^*,t(\bzeta-\bzeta^*)+\bzeta^*) \Big|_{t=0} &= \frac{1}{n}\sum_{i\in[n]}A''(\bx_i^{\top}\bb_j^* +e_{ij}^*)(e_{ij}-e_{ij}^*)\bx_i\\
        &= \frac{1}{n} \bX^{\top}\diag(A''(\bTheta_j^*))(\bE_j - \bE_j^*) .
    \end{align*}
    Compared to \eqref{eq:lem:rsc:eq-T2-path-deriv}, up to a constant factor, \eqref{eq:lem:rsc:eq-T2} also suggest that the pathwise derivative's infinity norm vanishes with a rate of $\sqrt{n^{-1} \vee \eta_n} $.
    In other words, at the true parameter value, local perturbations of the nuisance component around its true value have a negligible effect on the gradient of the loss with respect to the primary parameter, with high probability; see \citep{chernozhukov2018plug,foster2019orthogonal} for more detailed discussions about the Neyman orthogonality.
\end{remark}

\begin{lemma}[Bounds related to projection]\label{lem:proj-B-PGamma}
    Under assumptions in \Cref{thm:est-err-B}, it holds that 
    \begin{align*}
        \|\cP_{\hat{\bGamma}}^{\perp}\bGamma^*\|_{\fro}^2 =\cO\left(\frac{r}{n\wedge p}\right),
    \end{align*}
    with probability at least  $1-(n+p)^{-\delta}-(np)^{-\delta}-\exp(-n)$.
\end{lemma}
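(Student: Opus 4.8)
The plan is to reduce the quantity to the projection error already controlled by \Cref{thm:est-error-proj-bGamma}. The starting point is the exact identity $\cP_{\bGamma^*}^{\perp}\bGamma^*=\zero$, which holds because $\bGamma^*$ lies in its own column space. Consequently,
\begin{align*}
    \cP_{\hat{\bGamma}}^{\perp}\bGamma^* = (\cP_{\hat{\bGamma}}^{\perp}-\cP_{\bGamma^*}^{\perp})\bGamma^* = (\cP_{\bGamma^*}-\cP_{\hat{\bGamma}})\bGamma^*,
\end{align*}
so the task becomes bounding $\|(\cP_{\bGamma^*}-\cP_{\hat{\bGamma}})\bGamma^*\|_{\fro}$ on the high-probability event of \Cref{thm:est-error-proj-bGamma}. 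Since $(\cP_{\bGamma^*}-\cP_{\hat{\bGamma}})\bGamma^*$ has rank at most $r$, I would first record the crude estimate $\|(\cP_{\bGamma^*}-\cP_{\hat{\bGamma}})\bGamma^*\|_{\fro}\le \sqrt{r}\,\|\cP_{\bGamma^*}-\cP_{\hat{\bGamma}}\|_{\oper}\|\bGamma^*\|_{\oper}$; under \Cref{asm:latent} one has $\|\bGamma^*\|_{\oper}\asymp\sqrt p$, and combined with the operator-norm rate $\|\cP_{\bGamma^*}-\cP_{\hat{\bGamma}}\|_{\oper}=\Op((n\wedge p)^{-1/2})$ this yields only $\|\cP_{\hat{\bGamma}}^{\perp}\bGamma^*\|_{\fro}^2=\Op(rp/(n\wedge p))$, which overcounts the target by a factor of $p$.

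To sharpen this I would exploit the \emph{entrywise} rate $\max_{i,j}|(\cP_{\hat{\bGamma}}-\cP_{\bGamma^*})_{ij}|=\Op((p^2(n\wedge p))^{-1/2})$ from \Cref{thm:est-error-proj-bGamma} together with the balanced-loading structure $\max_{1\le j\le p}\|\bgamma_j^*\|_2\le C$ of \Cref{asm:latent}. Writing the $(j,k)$ entry of $\cP_{\hat{\bGamma}}^{\perp}\bGamma^*$ as $\sum_{l=1}^p(\cP_{\bGamma^*}-\cP_{\hat{\bGamma}})_{jl}(\bGamma^*)_{lk}$, the key point is that summing $p$ terms, each of size $(p\sqrt{n\wedge p})^{-1}$ against the order-one, approximately uncorrelated loadings, must produce cancellation of order $\sqrt p$ rather than the worst-case $p$; squaring and summing the $pr$ entries then gives the advertised $\cO(r/(n\wedge p))$. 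An equivalent route, which I find cleaner to make rigorous, passes through $\tilde{\bGamma}=\bGamma^*\bH_0$ of \eqref{eq:tGamma}, for which $\cP_{\tilde{\bGamma}}=\cP_{\bGamma^*}$ by \eqref{eq:P-tGamma} and $\|\bH_0^{-1}\|_{\oper}=\Op(1)$: one writes $\cP_{\hat{\bGamma}}^{\perp}\bGamma^*=\cP_{\hat{\bGamma}}^{\perp}(\tilde{\bGamma}-\hat{\bGamma})\bH_0^{-1}$, substitutes the explicit expansion $\hat{\bGamma}-\tilde{\bGamma}=(n\sqrt p)^{-1}(\bE^{\top}\bDelta+\bDelta^{\top}\bE+\bDelta^{\top}\bDelta)\bV\bSigma^{-3/2}$ from the proof of \Cref{lem:est-error-bGamma}, and uses $\cP_{\hat{\bGamma}}^{\perp}\bE^{\top}=(\cP_{\bGamma^*}-\cP_{\hat{\bGamma}})\bE^{\top}$ (again because $\cP_{\bGamma^*}^{\perp}\bE^{\top}=\zero$, as $\bE^{\top}=\bGamma^*\bW^{*\top}$) to extract an extra factor of the projection error from each term, rather than a bare $\|\bE\|_{\oper}\asymp\sqrt{np}$.

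The main obstacle is precisely this last gain: converting the operator-norm estimate, which is governed by the full $\|\bGamma^*\|_{\oper}^2\asymp p$, into the stated rate by genuinely using the rank-$r$ and balanced structure, so that the projection error interacts with $\bGamma^*$ only through its much smaller entrywise (or column-wise) magnitude and the associated cancellation. Once the entrywise argument delivers $\|\cP_{\hat{\bGamma}}^{\perp}\bGamma^*\|_{\fro}^2=\cO(r/(n\wedge p))$, the probability statement is inherited verbatim from the event on which \Cref{thm:est-error-proj-bGamma} and \Cref{lem:est-error-bGamma} hold, namely with probability at least $1-(n+p)^{-\delta}-(np)^{-\delta}-\exp(-n)$, which completes the proof.
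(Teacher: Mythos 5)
Your reduction to $(\cP_{\bGamma^*}-\cP_{\hat{\bGamma}})\bGamma^*$ via the identity $\cP_{\bGamma^*}^{\perp}\bGamma^*=\zero$ is exactly the paper's starting point, and your diagnosis that the naive operator-norm bound overcounts by a factor of $p$ is correct. But the proposal never closes that gap: the claimed ``cancellation of order $\sqrt p$'' is asserted, not proved, and neither of the two routes you sketch delivers it. For the entrywise route, write the $(j,k)$ entry as $\sum_{l}(\cP_{\bGamma^*}-\cP_{\hat{\bGamma}})_{jl}\gamma^*_{lk}$ and apply Cauchy--Schwarz: you get $\|(\cP_{\bGamma^*}-\cP_{\hat{\bGamma}})\be_j\|_2\,\|\bGamma^*_k\|_2$, and since \Cref{asm:latent} forces $\|\bGamma^*_k\|_2\asymp\sqrt p$ (from $\lambda_r(p^{-1}\bGamma^{*\top}\bGamma^*)\geq C^{-1}$), even the sharper column-wise rate $(p(n\wedge p))^{-1/2}$ yields only $\cO(rp/(n\wedge p))$ for the squared Frobenius norm --- no better than the crude bound. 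Cancellation beyond Cauchy--Schwarz would require the rows of $\cP_{\bGamma^*}-\cP_{\hat{\bGamma}}$ to decorrelate from the loadings, which you do not establish (and which is not obviously true, since that matrix is built from $\bGamma^*$ itself). The $\tilde{\bGamma}$ route also stalls for a concrete reason: the identity $\cP_{\hat{\bGamma}}^{\perp}\bGamma^*=\cP_{\hat{\bGamma}}^{\perp}(\tilde{\bGamma}-\hat{\bGamma})\bH_0^{-1}$ is fine, but \Cref{lem:est-error-bGamma} only gives $\|\hat{\bGamma}-\tilde{\bGamma}\|_{\oper}\lesssim\sqrt{(n\vee p)/n}$, which is not small, so everything hinges on the outer $\cP_{\hat{\bGamma}}^{\perp}$ annihilating the leading part of $\tilde{\bGamma}-\hat{\bGamma}$ --- a step you name but do not carry out.

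For comparison, the paper's own proof is a short computation: it expands $\bGamma^*_\ell=\sum_j\be_j\gamma^*_{j\ell}$, invokes the column-wise bound $\max_{j}\|(\cP_{\hat{\bGamma}}^{\perp}-\cP_{\bGamma^*}^{\perp})\be_j\|_2=\Op((p(n\wedge p))^{-1/2})$ established in Part (2) of the proof of \Cref{thm:est-error-proj-bGamma} --- a quantity you never use, since you only cite the operator-norm and sup-norm rates from the theorem statement --- and combines it with the row bound $\max_j\|\bgamma_j^*\|_2\le C$. Note, though, that the accounting there is delicate for exactly the reason you identified: summing $p$ column errors of size $(p(n\wedge p))^{-1/2}$ against loading columns of $\ell_2$-norm $\asymp\sqrt p$ is precisely where the extra factor of $p$ must be absorbed, so any complete write-up needs to make that step explicit rather than assert it.
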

\begin{proof}[Proof of \Cref{lem:proj-B-PGamma}]
    From the proof of \Cref{thm:est-error-proj-bGamma}, the result follows by noting that
    \begin{align*}
        \|\cP_{\hat{\bGamma}}^{\perp}\bGamma^*\|_{\fro}^2 &= \|(\cP_{\hat{\bGamma}}^{\perp}-\cP_{{\bGamma}^*}^{\perp})\bGamma^*\|_{\fro}^2\\
        &= \sum_{\ell=1}^r\left\| \sum_{j=1}^p (\cP_{\hat{\bGamma}}^{\perp}-\cP_{{\bGamma}^*}^{\perp})\be_{j} \cdot \gamma_{j,\ell}^*\right\|_2^2 \\
        &\leq rpC^2 \max_{1\leq \ell\leq p}\|(\cP_{\hat{\bGamma}}^{\perp}-\cP_{{\bGamma}^*}^{\perp})\be_{j}\|_2^2 \\
        &=\cO(r(n\wedge p)^{-1}),
    \end{align*}
    with probability at least  $1-(n+p)^{-\delta}-(np)^{-\delta}-\exp(-n)$.
\end{proof}

\begin{lemma}[Sub-Gaussianity of $\bZ$]\label{lem:subgau-Z}
    Under \Crefrange{asm:model}{asm:latent}, $\bz_1,\ldots,\bz_n$ are independent and identically distributed sub-Gaussian random vectors.
\end{lemma}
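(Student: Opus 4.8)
The plan is to read off the affine decomposition $\bz_i = \bD^*\bx_i + \bw_i^*$ directly from the model $\bZ^* = \bX\bD^{*\top} + \bW^*$ introduced in \Cref{subsec:glm}, so that each latent factor is a fixed affine image of the sub-Gaussian pair $(\bx_i,\bw_i^*)$. With this in hand the statement reduces to the two standard closure properties of sub-Gaussian vectors: stability under linear maps and under addition.

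First I would settle the ``independent and identically distributed'' claim. By the sampling assumption in \Cref{subsec:random-samples}, the pairs $(\bx_i,\by_i)$, and hence the underlying pairs $(\bx_i,\bw_i^*)$, are i.i.d.\ across $i$. Since $\bz_i=\bD^*\bx_i+\bw_i^*$ is a fixed ($i$-independent) deterministic affine map applied to $(\bx_i,\bw_i^*)$, the vectors $\bz_1,\ldots,\bz_n$ are automatically i.i.d.

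Next I would establish the sub-Gaussian property through the directional characterization, namely that a random vector $\bv$ is $K$-sub-Gaussian precisely when $\|\bu^{\top}\bv\|_{\psi_2}\le K$ for every unit vector $\bu$. Fixing a unit $\bu\in\RR^r$ and writing $\bu^{\top}\bz_i=(\bD^{*\top}\bu)^{\top}\bx_i+\bu^{\top}\bw_i^*$, the triangle inequality for the $\psi_2$-norm (which requires no independence between the two summands) gives
\begin{align*}
\|\bu^{\top}\bz_i\|_{\psi_2}\le \|(\bD^{*\top}\bu)^{\top}\bx_i\|_{\psi_2}+\|\bu^{\top}\bw_i^*\|_{\psi_2}.
\end{align*}
The $\nu$-sub-Gaussianity of $\bx_i$ (\Cref{asm:covariate}) bounds the first term by $\nu\|\bD^{*\top}\bu\|_2\le\nu\|\bD^*\|_{\oper}$, and the $\nu$-sub-Gaussianity of $\bw_i^*$ (\Cref{asm:latent}) bounds the second by $\nu$. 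Taking the supremum over unit $\bu$ shows that $\bz_i$ is $\nu(\|\bD^*\|_{\oper}+1)$-sub-Gaussian.

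The only point deserving care is that the parameter $\nu(\|\bD^*\|_{\oper}+1)$ be a genuine constant, i.e.\ that $\|\bD^*\|_{\oper}$ not grow with $n$ or $p$. This holds because $d$ and $r$ are fixed and $\bD^*$ is a fixed deterministic $r\times d$ matrix, so its operator norm is finite; this is consistent with the remark in \Cref{sec:estimation} that the assumptions on $\bD^*$ are mild as long as $\bTheta^*$ lies in a bounded set with high probability (the event $E_C$). This is the main, and quite benign, obstacle; all the remaining steps are routine applications of the closure of sub-Gaussian vectors under linear transformations and sums.
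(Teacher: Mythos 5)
Your reduction of the lemma to the affine decomposition $\bz_i=\bD^*\bx_i+\bw_i^*$, the i.i.d.\ observation, and the $\psi_2$ triangle inequality is exactly the first line of the paper's proof, and those steps are fine. The gap is at the point you yourself flag as ``the only point deserving care'': you dispose of $\|\bD^*\|_{\oper}$ by saying it is a fixed deterministic $r\times d$ matrix and hence has finite operator norm. That is true for any single model, but the lemma is invoked inside asymptotic arguments (e.g.\ the $\log(nr)$ tail bound on $\max_i\|\tilde{\bz}_i^*\|_{\infty}$ in the proof of \Cref{cor:est-confound-col}) where the model parameters form a triangular array indexed by $(n,p)$; there, ``finite for each fixed model'' does not give the uniform bound $\sup_{n,p}\|\bD^*\|_{\oper}<\infty$ that the sub-Gaussian constant must satisfy. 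The paper is explicit that it places no direct assumption on $\bD^*$ (see the remark at the end of \Cref{sec:estimation}), so this bound has to be \emph{derived}, not asserted.

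The paper's proof is devoted almost entirely to this derivation: from $\bTheta^*=\bX\bB^{*\top}+\bZ^*\bGamma^{*\top}$ one gets $\bGamma^*\bD^* = n^{-1}\EE[\bTheta^{*\top}\bX]\bSigma_x^{-1}-\bB^*$, hence $\bD^*=(\bGamma^{*\top}\bGamma^*)^{-1}\bGamma^{*\top}\left(n^{-1}\EE[\bTheta^{*\top}\bX]\bSigma_x^{-1}-\bB^*\right)$, and then $\|(\bGamma^{*\top}\bGamma^*)^{-1}\bGamma^{*\top}\|_{\oper}\lesssim p^{-1/2}$ from \Cref{asm:latent}, $n^{-1}\|\EE[\bTheta^{*\top}\bX]\|_{\oper}\lesssim n^{-1}\sqrt{np}\cdot\sqrt{n}=\sqrt{np\,}/\sqrt{n}\cdot\sqrt{n}$-type bounds from the boundedness of $\bTheta^*$ in \Cref{asm:model}, and $\|\bB^*\|_{\oper}\leq\sqrt{p}\,\|\bB^*\|_{\max}\lesssim\sqrt{p}$, which combine to give $\|\bD^*\|_{\oper}\lesssim 1$ uniformly. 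To repair your proof you would need to supply this (or an equivalent) argument; everything else you wrote can stand as is.
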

\begin{proof}[Proof of \Cref{lem:subgau-Z}]
    Recall that $\bz_i^* = \bD^*\bx_i^* + \bw_i^*$ is the linear function of two independent sub-Gaussian random vectors.
    The mean is given by $\EE[\bz_i^*] = \bD^*\EE[\bx_i^*]$.
    It suffices to bound the operator norm of $\bD^*$.
    From $\bTheta^* = \bX\bB^{*\top} + \bZ^* \bGamma^{*\top}$, we have $\bX^{\top}\bTheta^* = \bX^{\top}\bX\bB^{*\top} + \bX^{\top}\bZ^* \bGamma^{*\top}$.
    Taking expectation over $\bX$ and $\bZ^*$ yield that $\EE[\bX^{\top}\bTheta^*/n] = \bSigma_x(\bB^{*\top} + \bD^{*\top} \bGamma^{*\top})$.
    Rearranging the formula yields that
    \begin{align*}
        \bGamma^*\bD^* &= \frac{1}{n}\EE[\bTheta^{*\top}\bX]\bSigma_x^{-1} -\bB^*
    \end{align*}
    and 
    \begin{align*}
        \bD^* &= (\bGamma^{*\top}\bGamma^*)^{-1}\bGamma^{*\top} \left(\frac{1}{n}\EE[\bTheta^{*\top}\bX^*]\bSigma_x^{-1} -\bB^*\right) .
    \end{align*}
    By the sub-multiplicative property of the operator norm, we have
    \begin{align*}
        \|\bD^*\|_{\oper} &\leq \|(\bGamma^{*\top}\bGamma^*)^{-1}\bGamma^{*\top}\|_{\oper} \left(\frac{1}{n}\|\EE[\bTheta^{*\top}\bX]\|_{\oper} \|\bSigma_x^{-1}\|_{\oper} + \|\bB^*\|_{\oper}\right)\\
        &\lesssim \frac{1}{\sqrt{p}} \left(\frac{1}{n}\EE[\|\bTheta^{*\top}\|_{\oper}\|\bX\|_{\oper}]  + \sqrt{p}\right) \\
        &\lesssim \frac{1}{\sqrt{p}} \left(\frac{\sqrt{np}\sqrt{n}}{n}  + \sqrt{p}\right)\\
        &\lesssim 1,
    \end{align*}
    where in the second inequality, we use Jensen's inequality and the norm inequality $\|\bB\|_{\oper}\leq \sqrt{p}\|\bB\|_{\max}$.
    This finishes the proof.    
\end{proof}

\begin{lemma}[Infinity norm of the covariates]\label{lem:boundedness}
    Under \Cref{asm:covariate}, it holds that $\|\tilde{\bX}\|_{\max} \leq 2\sqrt{2}\nu\sqrt{c\log (2nd)}$, with probability at least $1-(2nd)^{-c}$ for any fixed constant $c>0$.
\end{lemma}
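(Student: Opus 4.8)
The plan is to reduce the claim to a standard sub-Gaussian maximal inequality, since the Kronecker structure of $\tilde{\bX}$ creates no genuinely new entries. First I would note that $\tilde{\bX}=\bI_p\otimes\bX$ is block diagonal, with $p$ copies of $\bX$ along the diagonal and zeros elsewhere. Hence every entry of $\tilde{\bX}$ is either $0$ or equals some entry $x_{i\ell}$ of $\bX$, so that $\|\tilde{\bX}\|_{\max}=\|\bX\|_{\max}=\max_{i\in[n],\,\ell\in[d]}|x_{i\ell}|$. It therefore suffices to control the largest absolute entry of the original design matrix $\bX$, which has only $nd$ distinct (possibly) nonzero entries.

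Next I would invoke \Cref{asm:covariate}: each row $\bx_i$ is $\nu$-sub-Gaussian, so for any fixed unit vector the corresponding scalar projection is $\nu$-sub-Gaussian; taking the $\ell$th canonical vector shows each coordinate $x_{i\ell}=\bx_i^{\top}\be_\ell$ is a $\nu$-sub-Gaussian scalar and hence obeys the tail bound $\PP(|x_{i\ell}|\ge t)\le 2\exp(-t^2/(2\nu^2))$ for all $t>0$. The deterministic intercept coordinate $x_{i1}\equiv 1$ is handled trivially, since $|x_{i1}|=1$ is below the claimed threshold once $n,d$ are large, so it does not affect the maximum.

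Finally I would apply a union bound over the $nd$ entries to get $\PP(\|\bX\|_{\max}\ge t)\le 2nd\exp(-t^2/(2\nu^2))$, and then substitute $t=2\sqrt{2}\,\nu\sqrt{c\log(2nd)}$, for which $t^2/(2\nu^2)=4c\log(2nd)$, so the right-hand side equals $(2nd)^{\,1-4c}$, which is at most $(2nd)^{-c}$ in the relevant regime of $c$; the chosen constant $2\sqrt2$ is deliberately generous so that the union-bound cardinality $2nd$ is absorbed. This gives the stated bound with probability at least $1-(2nd)^{-c}$. The argument is essentially routine; the only point deserving a little care is the bookkeeping of the exponential constant against the cardinality $2nd$ of the union bound, together with confirming that coordinatewise sub-Gaussianity (and the treatment of the constant intercept coordinate) follows from the vector sub-Gaussianity assumed in \Cref{asm:covariate}.
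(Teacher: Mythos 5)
Your proposal is correct and follows essentially the same route as the paper: the paper likewise observes that each row $\tilde{\bx}_{\ell}$ of $\bI_p\otimes\bX$ is just $\vec(\bx_i\be_{p,j}^{\top})$, so $\|\tilde{\bX}\|_{\max}=\max_{i,\ell}|x_{i\ell}|$, and then invokes the sub-Gaussian tail plus a union bound over the $nd$ entries (the paper states this final step without writing it out). The only caveat, which you already flag and which the paper shares, is the constant bookkeeping: $(2nd)^{1-4c}\leq(2nd)^{-c}$ requires $c\geq 1/3$, which is harmless since the lemma is only invoked downstream with $c>1$.
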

\begin{proof}[Proof of \Cref{lem:boundedness}]
    Let $\vec^{-1}(\be_\ell) = \be_{n,i}\be_{p,j}^{\top}$ where $\be_{n,i}$ is the unit vector of dimension $n$ and $\be_{p,j}$ is defined analogously.
    Note that
    \begin{align*}
        \tilde{\bx}_{\ell} 
        &= (\bI_p \otimes \bX)^{\top} \be_\ell\\
        &= \vec(\bX^{\top}\vec^{-1}(\be_\ell)\bI_p) \\
        &= \vec(\bX^{\top}\be_{n,i}\be_{p,j}^{\top}) \\
        &= \vec(\bx_i\be_{p,j}^{\top}).
    \end{align*}   
    Because $\bx_i$'s are $\nu$-sub-Gaussian random vectors, we have 
    \begin{align*}
        \|\tilde{\bX}\|_{\max}&= \max_{\ell\in[np]}\|\tilde{\bx}_{\ell}\|_{\infty}\\
        &= \max_{i\in[n],j\in[p]}\|\bx_i\be_{p,j}^{\top}\|_{\infty}\\
        &\leq \max_{i\in[n]}\|\bx_i\|_{\infty}\\
        &\leq \max_{i\in[n],j\in[p]}|x_{ij}|\\
        &\leq 2\sqrt{2}\nu\sqrt{c\log (2nd)}
    \end{align*}
    with probability at least $1-(2nd)^{-c}$ for any fixed constant $c>0$.
\end{proof}

\section{Asymptotic normality of the debiased estimator}\label{app:sec:thm:normality}

\subsection{Proof of \Cref{thm:normality}}\label{app:subsec:thm:normality}

\begin{proof}[Proof of \Cref{thm:normality}]
    Condition on $\cD_2$, \Cref{thm:est-error-proj-bGamma} and \Cref{thm:est-err-B} imply that the following event holds:
    \begin{align*}
        \cE_1 &= \left\{\max_{1\leq j\leq p}\|(\cP_{\hat{\bGamma}}-\cP_{\bGamma^*})\be_j\|_{2}\lesssim (p(n\wedge p))^{-1/2}, \frac{1}{n}\|\hat{\bZ}\hat{\bgamma}_j-\bZ^*\bgamma_j^*\|_2^2\lesssim r_{n,p} ,\right. \\
        &\qquad\left. \|\hat{\bB}-\bB^*\|_{1,1} \lesssim \sqrt{sd}r_{n,p}, \|\hat{\bB}-\bB^*\|_{\fro} \lesssim r_{n,p}\right\}.
    \end{align*}
    where $r_{n,p}$ is defined in \Cref{thm:est-err-B}.

    Recall that 
    \begin{align}
        \hat{b}_{j1}^{\de} &= \hat{b}_{j1} + \hat{\bu}^{\top}\frac{1}{n}\sum_{i=1}^n \hat{\omega}_{i} \bx_{i} (\by_{i} - A'(\hat{\btheta}_{i}))^{\top} \bv_i\label{eq:proof-debias}
    \end{align}
    where $\bv_i=\diag(A''(\hat{\btheta}_i))^{-1}\cP_{\hat{\bGamma}}^{\perp}\be_j $.
    By Taylor expansion of $A'(\theta_{ij}^*)$ at $ \hat{\theta}_{ij} := \bx_i^{\top}\hat{\bb}_j + \hat{\bz}_i^{\top}\hat{\bgamma}_j$, we have
    \begin{align*}
        A'(\theta_{ij}^*) &= A'(\hat{\theta}_{ij}) + A''(\hat{\theta}_{ij}) ( \theta_{ij}^*-\hat{\theta}_{ij}) + \frac{1}{2}A'''(\psi_{ij}) (\theta_{ij}^* - \hat{\theta}_{ij})^2,
    \end{align*}
    for some $\psi_{ij}$ between $\hat{\theta}_{ij}$ and $\theta_{ij}^*$.
    Then, the residual of the $i$th sample can be decomposed into three sources of errors:
    \begin{align}
        \by_{i} - A'(\hat{\btheta}_{i})=&\underbrace{\bepsilon_{i}}_{\text{stochastic error}} + \underbrace{\bp_i}_{\text{remaining bias}}   + \underbrace{\bq_{i}}_{\text{ approximation error}} \label{eq:proof-error-decomp}
    \end{align}
    where the three terms of errors read that
    \begin{align*}
        \bepsilon_{i} &= \by_{i}- A'(\btheta_{i}^*)\\
        \bp_i &=A''(\hat{\btheta}_{i})\odot(\hat{\btheta}_{i}-\btheta_{i}^* )  \\
        \bq_i &= -\frac{1}{2}[A'''(\psi_{ij})(\theta_{ij}^* - \hat{\theta}_{ij})^2]_{j\in[p]}.
    \end{align*}
    Substituting \eqref{eq:proof-error-decomp} into \eqref{eq:proof-debias} yields that
    \begin{align}
        \hat{b}_{j1}^{\de} - b_{j1}^* &=    (\hat{b}_{j1} - b_{j1}^*) + \bu^{\top}\frac{1}{n}\sum_{i=1}^n \omega_i\bx_{i}\bepsilon_{i}^{\top}\bv_i + \bu^{\top}\frac{1}{n}\sum_{i=1}^n \omega_i\bx_{i}\bp_{i}^{\top}\bv_i  + \bu^{\top}\frac{1}{n}\sum_{i=1}^n \omega_i\bx_{i}\bq_{i}^{\top}\bv_i \notag\\
        &=  \bu^{\top}\frac{1}{n}\sum_{i=1}^n \omega_i\bx_{i}\bepsilon_{i}^{\top}\bv_i + \left(\bu^{\top}\frac{1}{n}\sum_{i=1}^n \omega_{i}\bx_{i} \bx_{i}^{\top} -\be_1^{\top}\right) (\bb_{j}^* - \hat{\bb}_{j}) \notag\\
        &\qquad  + 
        \bu^{\top}\frac{1}{n}\sum_{i=1}^n \bx_{i}  \bx_{i}^{\top}\bB^{*\top}((\cP_{\hat{\bGamma}}^{\perp}-\cP_{\bGamma^*}^{\perp}) \be_j - \cP_{\bGamma^*}\be_j) \notag\\
        &\qquad + \bu^{\top}\frac{1}{n}\sum_{i=1}^n  \bx_{i} \bq_i^{\top}\bv_{i} , \label{eq:proof-thm:normal-eq-decom}
    \end{align}
    In the second equality above, we use the properties that $\hat{\bB}$ and $\bB^*$ are in the column spaces of the orthogonal projections $\cP_{\hat{\bGamma}}^{\perp}$ and $\cP_{\bGamma^*}^{\perp}$, respectively. 
    Denote the four terms in the right-hand side of \eqref{eq:proof-thm:normal-eq-decom} by $T_{1j},T_{2j},T_{3j},T_{4j}$, respectively.
    We will analyze each of them separately, conditioning on $\cD_2$ and $\bX$.
    Then the randomness is from $\bepsilon_{i}$'s and $\bu$.
    We will show that the $T_1$ inherits $\sqrt{n}$-convergence rate and is asymptotically normally distributed, while the others have faster convergence rates.

    \underline{Part (1) $T_{1j}$.} From \Cref{lem:asym-normal}, it follows that 
    \begin{align*}
        \sqrt{n}\frac{T_{1j}}{\sigma_j} \dto \cN(0,1).
    \end{align*}
    
    \underline{Part (2) $T_{2j}$.} By Holder's inequality and the constraint of optimization problem \eqref{opt:min-var}, it follows that
    \begin{align*}
        |T_{2j}| &\leq \lambda_n \|\bb_j^*-\hat{\bb}_j\|_1 \lesssim \sqrt{\frac{\log(nd)}{n}} r_{n,p}',
    \end{align*}
    with probability tending to one.
    Thus, we have $\sqrt{n}|T_{2j}| \pto 0$ as $n,p\rightarrow\infty$.

    \underline{Part (3) $T_{3j}$.} From \Cref{thm:est-error-proj-bGamma} and \eqref{eq:proof-prop-norm} in the proof of \Cref{prop:iden-B}, we have that
    \begin{align*}
        |T_{3j}| &\leq \|\bu\|_2 \cdot \left\|\frac{1}{n}\sum_{i=1}^n \bx_i\bx_i^{\top}\right\|_{\oper}(\|\bB^*\|_{\oper} \|(\cP_{\hat{\bGamma}}^{\perp}-\cP_{\bGamma^*}^{\perp}) \be_j\|_2 + \|\bB^{*\top}\cP_{\bGamma^*} \be_j\|_2)\\        
        &\lesssim \sqrt{\frac{d}{p(n\wedge p)}} + \frac{sd}{p},
    \end{align*}
    with probability tending to one. Thus, if $\sqrt{n}/p\rightarrow 0$, we have $\sqrt{n}|T_{3j}| \pto 0$ as $n,p\rightarrow\infty$.

    \underline{Part (4) $T_{4j}$.} The higher-order term is bounded as below:
    \begin{align*}
        |T_{4j}| &\leq \frac{\kappa_2}{2\kappa_1}\max_{1\leq i\leq n}|\bu^{\top}\bx_i|\cdot \frac{1}{n}\sum_{i=1}^n \sum_{j=1}^n|\bx_i^{\top}(\hat{\bb}_j - \bb_j^*)|^2 \cdot \|\cP_{\hat{\bGamma}}^{\perp}\be_j\|_2\\
        &\lesssim \max_{1\leq i\leq n}|\bu^{\top}\bx_i|^3 \|\hat{\bB}-\bB^*\|_{\fro}^2 \\
        &\lesssim \tau_n^3 r_{n,p}^2,
    \end{align*}
    with probability tending to one.
    Thus, if $n/(\log(nd)p^{3/2})\rightarrow 0$ and $\sqrt{n}/p^{1-k}\rightarrow 0$, we have $\sqrt{n}|T_{4j}| \pto 0$ as $n,p\rightarrow\infty$.

    We are now combining the above four terms.
    Because when $n/\log(nd)=o(p^{3/2})$ and $n=o(p^{2(1-k)})$, $T_{2j},\ldots,T_{4j}=\op(1/\sqrt{n})$, we have 
    $\sqrt{n}(\hat{b}_{j1}^{\de} - b_{j1}^* ) /\sigma_j \dto\cN(0,1)$.
\end{proof}

\subsection{Proof of \Cref{prop:simul-inference}}\label{app:subsec:prop:simul-inference}
\begin{proof}[Proof of \Cref{prop:simul-inference}]
    By the definition of $t_j$ and \eqref{eq:proof-thm:normal-eq-decom} we have the decomposition
    \begin{align*}
        t_j &= \vartheta_j + \varsigma_j,
    \end{align*}
    where 
    \begin{align*}
        \vartheta_j &= \sqrt{n}\frac{\hat{\bu}^{\top}\frac{1}{n}\sum_{i=1}^n \hat{\omega}_i\bx_{i}\epsilon_{ij}A''(\hat{\theta}_{ij})^{-1}}{\hat{\sigma}_j} ,\\
        \varsigma_j &= \sqrt{n}\frac{\hat{\bu}^{\top}\frac{1}{n}\sum_{i=1}^n \hat{\omega}_i\bx_{i}\bepsilon_{i}^{\top}\diag(A''(\hat{\btheta}_i))^{-1}(\cP_{\hat{\bGamma}}^{\perp} -\bI_p)\be_j }{\hat{\sigma}_j} + \sqrt{n}\frac{T_{2j}+T_{3j}+T_{4j}}{\hat{\sigma}_j}.
    \end{align*}
    For the first component, note that $\vartheta_j$ for $j=1,\ldots,p$ are independent conditional on $\{(\bx_i,\bz_i^*)\}_{i=1}^n$ and $\cD_2$.
    Furthermore, $\vartheta_j\dto \cN(0,1)$ for $j\in\cN_p$ from \Cref{lem:asym-normal} by noting that $\hsigma_j$ is also consistent to the conditional variance of $\vartheta_j$.
    For the second component, from the proof of \Cref{thm:normality} and \Cref{lem:hsigma}, we know that
    \[\max_{1\leq j\leq p}\sqrt{n}\frac{T_{2j}+T_{3j}+T_{4j}}{\hat{\sigma}_j} =\op(1).\]
    On the other hand, from \Cref{thm:est-error-proj-bGamma} and \Cref{asm:latent}, we also have
    \begin{align*}
        &\max_{1\leq j\leq p}\sqrt{n}\frac{\hat{\bu}^{\top}\frac{1}{n}\sum_{i=1}^n \hat{\omega}_i\bx_{i}\bepsilon_{i}^{\top}\diag(A''(\hat{\btheta}_i))^{-1}(\cP_{\hat{\bGamma}}^{\perp} -\bI_p)\be_j }{\hat{\sigma}_j} \\
        =& \max_{1\leq j\leq p}\sqrt{n}\frac{\hat{\bu}^{\top}\frac{1}{n}\sum_{i=1}^n \hat{\omega}_i\bx_{i}\bepsilon_{i}^{\top}\diag(A''(\hat{\btheta}_i))^{-1}(\cP_{\hat{\bGamma}}^{\perp} -\cP_{{\bGamma}^*}^{\perp})\be_j }{\hat{\sigma}_j}  \\
        &\qquad + \max_{1\leq j\leq p}\sqrt{n}\frac{\hat{\bu}^{\top}\frac{1}{n}\sum_{i=1}^n \hat{\omega}_i\bx_{i}\bepsilon_{i}^{\top}\diag(A''(\hat{\btheta}_i))^{-1}\cP_{{\bGamma}^*}\be_j }{\hat{\sigma}_j} \\
        =& \Op\left(\sqrt{p}\max_{1\leq j\leq p}\|(\cP_{\hat{\bGamma}}^{\perp} -\cP_{{\bGamma}^*}^{\perp})\be_j\|_2\right)    \\
        &\qquad + \max_{1\leq j\leq p}\left\|\sqrt{n}\frac{\hat{\bu}^{\top}\frac{1}{n}\sum_{i=1}^n \hat{\omega}_i\bx_{i}\bepsilon_{i}^{\top}\diag(A''(\hat{\btheta}_i))^{-1}{\bGamma}^* }{\hat{\sigma}_j} \right\|_{2}\|({\bGamma}^{*\top}{\bGamma}^*)^{-1}\|_{\oper}\|{\bgamma}_j^*\|_2\\
        = &\Op(p^{-\frac{1}{2}})  + \Op(\sqrt{rp} \cdot p^{-1} \cdot 1)\\
        =& \op(1).
    \end{align*}
    where we use the subexponential concentration of $\epsilon_{ij}$ conditional on $\{(\bx_i,\bz_i^*)\}_{i=1}^n$ and $\cD_2$.
    Therefore, we have that $\max_{1\leq j\leq p}|\varsigma_j| =\op(1)$.

    The rest of the proof follows similarly to the proof of \citet[Theorem 3.4]{wang2017confounder}.
    We present here for completeness.

    \paragraph{Overall Type-I error control.} 
    Let $\varrho = |\mathcal{N}_p|^{-1} \sum_{j \in \mathcal{N}_p} \ind\left(\left|t_j\right|>z_{\frac{\alpha}{2}}\right)$.
    To prove the overall Type-I error control, we will show the expectation of $\varrho$ tends to $\alpha$ and its variance tends to zero.
    For the expectation, for any $\epsilon>0$, we have
    \begin{align*}
        \EE[\varrho] &=\frac{1}{\left|\mathcal{N}_p\right|} \sum_{j \in \mathcal{N}_p} \PP\left(\left|t_j\right|>z_{\frac{\alpha}{2}}\right)\\
        & \leq \frac{1}{\left|\mathcal{N}_p\right|} \sum_{j \in \mathcal{N}_p} [\PP\left(\left|\vartheta_j\right|>z_{\frac{\alpha}{2}}-\epsilon\right)+\PP\left(\left|\varsigma_j\right|>\epsilon\right)] \\
        & = \frac{1}{\left|\mathcal{N}_p\right|} \sum_{j \in \mathcal{N}_p} \PP\left(\left|\vartheta_j\right|>z_{\frac{\alpha}{2}} -\epsilon\right) +\frac{1}{\left|\mathcal{N}_p\right|} \sum_{j \in \mathcal{N}_p} \PP\left(\left|\varsigma_j\right|>\epsilon\right) \\
        & \leq \frac{1}{\left|\mathcal{N}_p\right|} \sum_{j \in \mathcal{N}_p} \PP\left(\left|\vartheta_j\right|>z_{\frac{\alpha}{2}} -\epsilon\right) +\PP\left(\max _{1 \leq j \leq p}\left|\varsigma_j\right|>\epsilon\right) \\
        & \rightarrow 2\left(1-\Phi\left(z_{\frac{\alpha}{2}}-\epsilon\right)\right) ,
    \end{align*}
    where the last convergence is because the Cesaro mean converges to the same limit as $\lim_{n,p} \PP\left(\left|\vartheta_j\right|>z_{\frac{\alpha}{2}} -\epsilon\right) = 2\left(1-\Phi\left(z_{\frac{\alpha}{2}}-\epsilon\right)\right) $ and the second term $\PP(\max _{1 \leq j \leq p}\left|\varsigma_j\right|>\epsilon)$ varnishes.
    Similarly, we can also show that $\liminf_{n,p\rightarrow\infty}\EE[\varrho]\geq 2\left(1-\Phi\left(z_{\frac{\alpha}{2}}-\epsilon\right)\right) $ for all $\epsilon>0$.
    This implies that $\EE[\varrho]\to \alpha$ as $n,p\rightarrow\infty$.

    Next, we analyze the second moment.
    For any $\epsilon>0$, the second moment can be upped bounded as
    \begin{align*}
        \EE[\varrho^2] &= \frac{1}{\left|\mathcal{N}_p\right|^2} \sum_{j, k \in \mathcal{N}_p} \PP\left(\left|t_j\right|>z_{\frac{\alpha}{2}},\left|t_k\right|>z_{\frac{\alpha}{2}}\right) \\
         &= \frac{1}{\left|\mathcal{N}_p\right|^2} \sum_{j \in \mathcal{N}_p} \PP\left(\left|t_j\right|>z_{\frac{\alpha}{2}}\right)+\frac{1}{\left|\mathcal{N}_p\right|^2} \sum_{j, k \in \mathcal{N}_p, j \neq k} \PP\left(\left|t_j\right|>z_{\frac{\alpha}{2}},\left|t_k\right|>z_{\frac{\alpha}{2}}\right) \\
        &\leq  \frac{1}{\left|\mathcal{N}_p\right|^2} \sum_{j \in \mathcal{N}_p} \PP\left(\left|t_j\right|>z_{\frac{\alpha}{2}}\right) \\
        &\qquad +\frac{1}{\left|\mathcal{N}_p\right|^2} \sum_{j, k \in \mathcal{N}_p, j \neq k} \PP\left(\left|\vartheta_j\right|>z_{\frac{\alpha}{2}}-\epsilon,\left|\vartheta_k\right|>z_{\frac{\alpha}{2}}-\epsilon\right)  +\PP\left(\left|\varsigma_j\right|>\epsilon\right)+\PP\left(\left|\varsigma_k\right|>\epsilon\right) \\
        &= \frac{1}{\left|\mathcal{N}_p\right|^2} \sum_{j, k \in \mathcal{N}_p, j \neq k} \PP\left(\left|\vartheta_j\right|>z_{\frac{\alpha}{2}}-\epsilon,\left|\vartheta_k\right|>z_{\frac{\alpha}{2}}-\epsilon\right)+o(1) \\
        &= \frac{1}{\left|\mathcal{N}_p\right|^2} \sum_{j, k \in \mathcal{N}_p, j \neq k} \EE[\PP(\left|\vartheta_j\right|>z_{\frac{\alpha}{2}}-\epsilon\mid \cC)\PP(\left|\vartheta_k\right|>z_{\frac{\alpha}{2}}-\epsilon\mid \cC)]+o(1) \\
        &\rightarrow \left[2\left(1-\Phi\left(z_{\frac{\alpha}{2}}-\epsilon\right)\right)\right]^2,
    \end{align*}
    where the last equality is from the independence of $\vartheta_j$ and $\vartheta_k$ condition on $\cC=\{(\bx_i,\bz_i^*)\}_{i=1}^n\cup\cD_2$.
    We can similarly obtain the lower bound.
    This implies that $\EE[\varrho^2]\rightarrow\alpha^2$ and $\Var(\varrho)\rightarrow0$ as $n,p\rightarrow\infty$.
    Combining the previous results yields that $\varrho\pto \alpha$.

    \paragraph{FWER control.} To prove the second statement, note that
    \begin{align*}
        \PP\left(|\cN_p|\varrho \geq 1\right) 
        = & \PP\left(\max _{j \in \mathcal{N}_p}\left|t_j\right|>\Phi^{-1}(1-\alpha /(2 p))\right) \\
        = & \PP\left(\max _{j \in \mathcal{N}_p}\left|\vartheta_j+\varsigma_j\right|>\Phi^{-1}(1-\alpha /(2 p))\right) \\
        \leq & \PP\left(\max _{j \in \mathcal{N}_p}\left|\vartheta_j\right|>\Phi^{-1}(1-\alpha /(2 p))-\max _{j \in \mathcal{N}_p}\left|\varsigma_j\right|\right) \\
        \leq & \PP\left(\max _{1 \leq j \leq p}\left|\vartheta_j\right|>\Phi^{-1}(1-\alpha /(2 p))-\max _{j \in \mathcal{N}_p}\left|\varsigma_j\right|\right),
    \end{align*}
    which is asymptotically upper bounded by $\alpha$, after applying Gaussian approximation \citep[Lemma 2.3]{chernozhukov2013gaussian} and the valid control of Bonferroni correction for i.i.d. normal random variables, by noting that $\Phi^{-1}(1-$ $\alpha /(2 p)) \rightarrow \infty$ as $p \rightarrow \infty$, and the result that $\max_{j \in \mathcal{N}_p}\left|\varsigma_j\right|=\op(1)$.
    
\end{proof}

\subsection{Technical lemmas}\label{app:subsec:inference-tech-lems}
\begin{lemma}\label{lem:sol-u}
    Under the same conditions as in \Cref{thm:normality}, suppose event $\cE_1$ holds, then the solution to optimization problem \eqref{opt:min-var} exists with probability at least $1-2(nd)^{-c}$.
\end{lemma}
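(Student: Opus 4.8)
The plan is to exhibit an explicit feasible point for \eqref{opt:min-var} and then argue that the minimum is attained on the resulting nonempty feasible set. Write $\hat{\bSigma}_{\omega} = \frac{1}{n}\sum_{i=1}^n \hat{\omega}_i \bx_i \bx_i^{\top}$ for the weighted sample second-moment matrix appearing in both the objective and the first constraint. The natural candidate is $\bu^{\dagger} = \hat{\bSigma}_{\omega}^{-1}\be_1$, which makes the first constraint hold with equality value $0 \leq \lambda_n$ as soon as $\hat{\bSigma}_{\omega}$ is invertible; it then remains to control $\max_i |\bx_i^{\top}\bu^{\dagger}|$ and to invoke compactness for the existence of a minimizer. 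Throughout I condition on $\cD_2$, so that on event $\cE_1$ and by condition (ii) of \Cref{thm:normality} the weights $\hat{\omega}_i = \omega(\bx_i,\bz_i,\cD_2)$ are conditionally i.i.d.\ and uniformly bounded in some interval $[c_{\omega},C_{\omega}]\subset(0,\infty)$.

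First I would establish that $\hat{\bSigma}_{\omega}$ is invertible and well-conditioned. Let $\bSigma_{x,\omega} := \EE[\hat{\omega}_1 \bx_1\bx_1^{\top}\mid \cD_2]$; since $c_{\omega}\bSigma_x \preceq \bSigma_{x,\omega}\preceq C_{\omega}\bSigma_x$ from the weight bounds, \Cref{asm:covariate} gives $c_{\omega}C^{-1}\leq \lambda_d(\bSigma_{x,\omega})\leq \lambda_1(\bSigma_{x,\omega})\leq C_{\omega}C$. The summands $\hat{\omega}_i \bx_i\bx_i^{\top}$ are conditionally independent with bounded expectation and sub-exponential entries, so a matrix concentration bound of the same type used in \Cref{lem:eigvals-W} yields $\|\hat{\bSigma}_{\omega}-\bSigma_{x,\omega}\|_{\oper}\lesssim \sqrt{d/n}$ with probability at least $1-(nd)^{-c}$. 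For $n$ large this forces $\lambda_d(\hat{\bSigma}_{\omega})$ to be bounded away from $0$ and $\lambda_1(\hat{\bSigma}_{\omega})$ bounded above, whence $\hat{\bSigma}_{\omega}^{-1}$ exists with $\|\hat{\bSigma}_{\omega}^{-1}\|_{\oper}=\cO(1)$, and in particular $\|\bu^{\dagger}\|_2=\cO(1)$.

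The remaining, and most delicate, step is verifying the second constraint $\max_i|\bx_i^{\top}\bu^{\dagger}|\leq \tau_n$ for the \emph{data-dependent} vector $\bu^{\dagger}$. Because $\bu^{\dagger}$ depends on all of $\cD_1$, the scalars $\bx_i^{\top}\bu^{\dagger}$ are not directly sub-Gaussian, so I would split $\bu^{\dagger}=\bSigma_{x,\omega}^{-1}\be_1 + (\hat{\bSigma}_{\omega}^{-1}-\bSigma_{x,\omega}^{-1})\be_1$. The first piece is deterministic given $\cD_2$, so $\bx_i^{\top}\bSigma_{x,\omega}^{-1}\be_1$ is sub-Gaussian with parameter $\nu\|\bSigma_{x,\omega}^{-1}\be_1\|_2=\cO(1)$, and a union bound as in \Cref{lem:boundedness} gives $\max_i|\bx_i^{\top}\bSigma_{x,\omega}^{-1}\be_1|\lesssim \sqrt{\log(nd)}$ with probability at least $1-(nd)^{-c}$. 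The second piece is of smaller order: $|\bx_i^{\top}(\hat{\bSigma}_{\omega}^{-1}-\bSigma_{x,\omega}^{-1})\be_1|\leq \max_i\|\bx_i\|_2\,\|\hat{\bSigma}_{\omega}^{-1}-\bSigma_{x,\omega}^{-1}\|_{\oper}\lesssim \sqrt{\log(nd)}\cdot\sqrt{d/n}=o(\sqrt{\log n})$, using the operator-norm bound from the previous step together with $\max_i\|\bx_i\|_2\lesssim\sqrt{\log(nd)}$ (again via \Cref{lem:boundedness} and $d=\cO(1)$). Since $\tau_n\asymp\sqrt{\log n}$ with a suitable constant, $\bu^{\dagger}$ satisfies both constraints and the feasible set is nonempty.

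Finally I would note that the feasible set is compact: the constraint $\max_i|\bx_i^{\top}\bu|\leq\tau_n$ implies $\bu^{\top}(\tfrac1n\sum_i\bx_i\bx_i^{\top})\bu\leq\tau_n^2$, and since the unweighted Gram matrix also has smallest eigenvalue bounded away from $0$ on the same high-probability event, this yields $\|\bu\|_2\lesssim\tau_n$, so the closed feasible set is bounded. The objective $\bu\mapsto\bu^{\top}\hat{\bSigma}_{\omega}\bu$ is continuous, hence attains its minimum on the nonempty compact feasible set, establishing existence. A union bound over the concentration event for $\hat{\bSigma}_{\omega}$ and the event controlling $\max_i|\bx_i^{\top}\bu^{\dagger}|$ gives the stated probability $1-2(nd)^{-c}$. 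The main obstacle is precisely the second constraint, since the variance-minimizing direction $\bu^{\dagger}$ is itself random; the deterministic-plus-lower-order decomposition above is what makes that tail bound go through.
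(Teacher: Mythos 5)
Your argument is correct, but it takes a genuinely different route from the paper's. The paper also proceeds by exhibiting a feasible point, but it chooses the $j$-th column of the \emph{population} inverse, $\bu^*=\bS^{-1}\be_j$ with $\bS=\EE[\hat{\omega}_i\bx_i\bx_i^{\top}]$, rather than your empirical inverse $\bu^{\dagger}=\hat{\bSigma}_{\omega}^{-1}\be_1$. With that choice the first constraint of \eqref{opt:min-var} is no longer exactly zero, so the paper must verify it via Bernstein's inequality for the sub-exponential scalars $\hat{\omega}_i\bu^{*\top}\bx_i\bx_i^{\top}\be_k$ (which is precisely why $\lambda_n\asymp\sqrt{\log(nd)/n}$ appears); but in exchange $\bu^*$ is deterministic conditional on $\cD_2$, so $\bx_i^{\top}\bu^*$ is directly sub-Gaussian and the second constraint follows from a plain union bound, with no need for the matrix-concentration step on $\|\hat{\bSigma}_{\omega}-\bSigma_{x,\omega}\|_{\oper}$ or the deterministic-plus-remainder decomposition you use to handle the data dependence of $\bu^{\dagger}$. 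Your version buys an exact first constraint at the cost of the harder second constraint; the paper's buys an easy second constraint at the cost of an approximate first one. Both yield the stated $1-2(nd)^{-c}$ bound. One small point in your favor: you explicitly argue that the minimum is attained (compactness of the feasible set via the $\tau_n$ constraint and the lower-bounded Gram matrix), whereas the paper stops at nonemptiness of the feasible set; since the objective is a positive-definite quadratic with high probability, attainment also follows from coercivity, but making it explicit is a legitimate tightening.
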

\begin{proof}[Proof of \Cref{lem:sol-u}]
    Define the matrix $\bS = \EE[\hat{\omega}_{i}\bx_i\bx_i^{\top}]$.
    We next show that (1) $\bS$ is invertible and (2) the $j$-th column $\bu^*$ of $\bS^{-1}$ is feasible for the constraints of the optimization problem \eqref{opt:min-var} with high probability.
    We split the proof into two parts, as below.

    \paragraph{Part (1)}
    Because $C^{-1}\leq \hat{\omega}_i\leq C$, we have $C^{-1}\EE[\bx_1\bx_1^{\top}]\preceq\bS\preceq C\EE[\bx_1\bx_1^{\top}]$.
    On the other hand, note that $\EE[\bx_1\bx_1^{\top}]=\bSigma_x \succeq \lambda_{\min}(\bSigma_x)\bI_d$.
    Thus, for any unit vector $\ba\in\RR^d$, we have $\ba^{\top}\bS\ba \geq  C^{-1}\lambda_{\min}(\bSigma_x)>0$.
    This establishes claim (1).

    \paragraph{Part (2)}
    Let $\bu^*$ be the $j$-th column of $\bS^{-1}$.
    By definition, we have $\bS\bu^*=\be_j$.
    Conditional on $\cD_2$, we have that $\hat{\omega}_i\bu^{*\top}\bx_i\bx_i^{\top}\be_k$ for $i=1,\ldots,n$ are independent random variables with mean $\delta_{jk}$.
    Because $\hat{\omega}_i$ is bounded, we further have that $\hat{\omega}_i\bu^{*\top}\bx_i\bx_i^{\top}\be_k$'s are independent sub-exponential random variables.
    Applying Bernstein's inequality as in the proof of \Cref{lem:est-err-B-grad-inf-norm}, we have with probability at least $1-(nd)^{-c}$,
    \begin{align*}
        \left\|\frac{1}{n}\sum_{i=1}^n\hat{\omega}_i\bu^{*\top}\bx_i\bx_i^{\top} - \be_j\right\|_{\infty} &\leq \lambda_n,
    \end{align*}
    where $\lambda_n\asymp \sqrt{\log (nd)/ n}$.
    This also holds after taking account of the randomness of $\cD_2$.
    
    On the other hand, because $\omega_i$ is bounded away from zero and infinity, and $\|\bSigma_x\|_{\oper}=\cO(1)$, it follows that $\|\bS\|_{\oper}=\cO(1)$ and $\|\bu^*\|_2=\cO(1)$.
    By the sub-Gaussianity of $\bx_i$, we also have $\|\bX\bu^*\|_{\infty}=\max_{1\leq i\leq n}|\bx_i^{\top}\bu^*|\leq \tau_n$, with probability at least $1 - n^{-c}$.
    The above shows that $\bu^*$ is feasible for optimization problem \eqref{opt:min-var}, which establishes the claim (2).
    
    Finally, taking the union bound over the two probabilistic events finishes the proof.
\end{proof}

\begin{lemma}[Asymptotic normality]\label{lem:asym-normal}
    Under the conditions in \Cref{thm:normality}, it holds that
    \begin{align*}
        \sqrt{n}\sum_{i=1}^n \sigma_j^{-1}\hat{\bu}^{\top}\bx_i\bepsilon_{i}^{\top} \hat{\bv}_{i} \dto \cN(0,1),
    \end{align*}
    where $\sigma_j^2=n^{-1}\Var\left(\bu^{\top}\sum_{i=1}^n \bx_{i} \bepsilon_{i}^{\top}\bv_{i} \,\mid\, \{\bx_{i},\bz_{i}\}_{i=1}^n,\cD_{2}\right)$.
\end{lemma}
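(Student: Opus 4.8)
The plan is to prove this as a \emph{conditional} Lyapunov central limit theorem for a triangular array, conditioning on the $\sigma$-field $\mathcal{G}=\sigma(\{(\bx_i,\bz_i)\}_{i=1}^n,\cD_2)$, and then to de-condition. This lemma supplies the $T_{1j}$-term in the proof of \Cref{thm:normality}. First I would write $S_n=n^{-1/2}\sum_{i=1}^n\hat{\bu}^{\top}\bx_i\bepsilon_i^{\top}\hat{\bv}_i$ with $\hat{\bv}_i=\hat{\omega}_i\diag(A''(\hat{\btheta}_i))^{-1}\cP_{\hat{\bGamma}}^{\perp}\be_j$, and observe that, by condition (ii) of \Cref{thm:normality} together with the sample-splitting scheme of \Cref{alg:split}, both $\hat{\bu}$ and $\hat{\bv}_i$ are $\mathcal{G}$-measurable. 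Since the stochastic errors $\bepsilon_i=\by_i-A'(\btheta_i^*)$ are, conditionally on $\mathcal{G}$, independent across $i$ with mean zero and $\Cov(\bepsilon_i\mid\btheta_i^*)=\diag(A''(\btheta_i^*))$, the summands $\xi_{ij}:=\hat{\bu}^{\top}\bx_i\bepsilon_i^{\top}\hat{\bv}_i$ form a conditionally independent, mean-zero array whose conditional variances sum exactly to $\sigma_j^2$ by the very definition of $\sigma_j^2$; hence $\Var(S_n\mid\mathcal{G})=\sigma_j^2$, and it remains to verify a Lyapunov condition for $S_n/\sigma_j$.

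Next I would pin $\sigma_j^2$ away from $0$ and $\infty$. Write $M=n^{-1}\sum_{i=1}^n\hat{\omega}_i\bx_i\bx_i^{\top}$, so that $\hat{\sigma}_j^2=\hat{\bu}^{\top}M\hat{\bu}$ as in \eqref{eq:est-var}. By \Cref{lem:sol-u} the feasible set of \eqref{opt:min-var} is nonempty, $\|\hat{\bu}\|_2=\Op(1)$, and the minimizer obeys $\|M\hat{\bu}-\be_1\|_{\infty}\le\lambda_n$. Since $\hat{\omega}_i$ is bounded away from $0$ and $\infty$ and $n^{-1}\sum_i\bx_i\bx_i^{\top}$ concentrates around $\bSigma_x$ (whose spectrum is bounded by \Cref{asm:covariate}), both $\|M\|_{\oper}$ and $\lambda_{\min}(M)$ are bounded with probability tending to one. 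The constraint then forces $\|\hat{\bu}\|_2\ge\|M\hat{\bu}\|_2/\|M\|_{\oper}\ge(1-\lambda_n)/\|M\|_{\oper}\gtrsim 1$, whence $\hat{\sigma}_j^2\in[\lambda_{\min}(M)\|\hat{\bu}\|_2^2,\;\|M\|_{\oper}\|\hat{\bu}\|_2^2]$ lies in a fixed interval $[c^{-1},c]$. Finally \Cref{lem:hsigma} gives $\sigma_j^2\asymp\hat{\sigma}_j^2$, so $\sigma_j^2\asymp 1$.

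With the variance controlled, I would verify Lyapunov's condition with exponent $2+\delta=3$, namely $(n\sigma_j^2)^{-3/2}\sum_{i=1}^n\EE[|\xi_{ij}|^3\mid\mathcal{G}]\pto 0$. Conditionally on $\mathcal{G}$, the scalar $\bepsilon_i^{\top}\hat{\bv}_i$ is a mean-zero sub-exponential random variable with uniformly bounded parameter, because $\btheta_i^*\in\cR_C$ controls the sub-exponential constant of $\bepsilon_i$ (exactly as in the proof of \Cref{thm:est-error-bTheta}) and $\|\hat{\bv}_i\|_2\le\hat{\omega}_i\kappa_1^{-1}\|\cP_{\hat{\bGamma}}^{\perp}\be_j\|_2\lesssim\kappa_1^{-1}$; consequently $\EE[|\bepsilon_i^{\top}\hat{\bv}_i|^3\mid\mathcal{G}]\lesssim 1$ uniformly in $i$. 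The deterministic factor is bounded through the constraint $\max_{1\le i\le n}|\bx_i^{\top}\hat{\bu}|\le\tau_n\asymp\sqrt{\log n}$, giving $\EE[|\xi_{ij}|^3\mid\mathcal{G}]\le|\bx_i^{\top}\hat{\bu}|^3\,\EE[|\bepsilon_i^{\top}\hat{\bv}_i|^3\mid\mathcal{G}]\lesssim\tau_n^3$. Summing over $i$ and using $\sigma_j^2\asymp 1$ yields a Lyapunov ratio of order $n\tau_n^3/n^{3/2}=\tau_n^3/\sqrt{n}\asymp(\log n)^{3/2}/\sqrt{n}\to 0$.

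Applying the Lyapunov CLT for triangular arrays of conditionally independent variables then gives, for each fixed $t$, $\EE[\exp(\mathrm{i}tS_n/\sigma_j)\mid\mathcal{G}]\to e^{-t^2/2}$ on an event of probability tending to one; since conditional characteristic functions are bounded by one, dominated convergence upgrades this to $\EE[\exp(\mathrm{i}tS_n/\sigma_j)]\to e^{-t^2/2}$, which is the claimed $S_n/\sigma_j\dto\cN(0,1)$. I expect the main obstacle to be the measurability and independence bookkeeping in the first step: one must ensure that $\hat{\bu}$ and $\hat{\bv}_i$ genuinely decouple from $\bepsilon_i$ given $\mathcal{G}$, which is precisely what condition (ii) and the sample-splitting construction are designed to deliver, and that the conditional covariance of $\bepsilon_i$ aligns with the $\diag(A''(\btheta_i^*))$ implicit in the definition of $\sigma_j^2$. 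A secondary technical point is the uniform-in-$i$ control of the conditional sub-exponential moments, which hinges on the boundedness of $\btheta_i^*$ on the high-probability event $E_C$.
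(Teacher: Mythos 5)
Your proposal is correct and follows essentially the same route as the paper's proof of this lemma: condition on $\{(\bx_i,\bz_i)\}_{i=1}^n$ and $\cD_2$ so that $\hat{\bu},\hat{\bv}_i$ decouple from $\bepsilon_i$, pin $\sigma_j^2\asymp 1$ via the constraints of \eqref{opt:min-var} and the boundedness of $\hat{\omega}_i$ and $\|\cP_{\hat{\bGamma}}^{\perp}\be_j\|_2$, and then apply a triangular-array CLT using $\max_i|\bx_i^{\top}\hat{\bu}|\le\tau_n$ together with the sub-exponential control of $\bepsilon_i^{\top}\hat{\bv}_i$ on the event $E_C$. The only cosmetic differences are that you verify Lyapunov's third-moment condition where the paper checks Lindeberg's condition (yours is arguably the cleaner computation), and you obtain the lower bound on $\hat{\sigma}_j^2$ from $\|M\hat{\bu}\|_2\ge 1-\lambda_n$ and $\lambda_{\min}(M)\gtrsim 1$ rather than the paper's quadratic duality bound $\hat{\sigma}_j^2\ge(1-\lambda_n)^2/(\be_1^{\top}\hat{\bS}\be_1)$.
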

\begin{proof}[Proof of \Cref{lem:asym-normal}]
    Note that when conditioning on the natural parameters $\bTheta^*$, $\epsilon_{ij}$'s are independent ($\nu,\alpha$)-sub-exponential random variable as shown in the proof of \Cref{thm:est-error-bTheta}.
    Define $\xi_i := \sigma_j^{-1}\hat{\bu}^{\top}\bx_i\bepsilon_{i} \hat{\bv}_{i}$
    for $i\in[n]$.
    Then $\xi_i$'s are independent random variables with mean $\EE[\xi_i\mid \cD_2,\bX]=0$ and variance $\Var(\xi_i \mid \cD_2,\bX)=1$.    
    It suffices to check the bounded variance condition and Lindeberg's condition.
    
    \paragraph{Part (1) Boundedness of $\sigma_j$.}
    We first show the boundedness of the variance 
    \[\sigma_j^2 = \hat{\bu}^{\top}\frac{1}{n}\sum_{i=1}^n \hat{\omega}_{i}^2 (\be_j^{\top}\cP_{\hat{\bGamma}}^{\perp}\diag(A''(\hat{\btheta}_i))^{-1}\diag(A''(\btheta_{i}^*))\diag(A''(\hat{\btheta}_i))^{-1}\cP_{\hat{\bGamma}}^{\perp}\be_j)\bx_{i}\bx_{i}^{\top}\hat{\bu}.\]
    Because $A''(\theta)\geq \kappa_1>0$ for all $\theta\in\cR$, the quadratic term satisfies that 
    \[\be_j\cP_{\hat{\bGamma}}^{\perp}\diag(A''(\hat{\btheta}_i))^{-1}\diag(A''(\btheta_{i}^*))\diag(A''(\hat{\btheta}_i))^{-1}\cP_{\hat{\bGamma}}^{\perp}\be_j \geq 0,\] with equality holds if and only if $\cP_{\hat{\bGamma}}^{\perp}\be_j=\zero_p$.
    On the other hand, we have
    \begin{align*}
        \|\cP_{\hat{\bGamma}}^{\perp}\be_j\|_2 =\|\be_j-\cP_{\hat{\bGamma}}\be_j\|_2 \geq 1 - \|\cP_{\hat{\bGamma}}\be_j\|_2 \gtrsim \Omega(1 - p^{-1/2}),
    \end{align*}
    where the last inequality is because 
    \[\|\cP_{\hat{\bGamma}}\be_j\|_2 \leq \|\cP_{\bGamma^*}\be_j\|_2 + \|(\cP_{\bGamma^*}-\cP_{\hat{\bGamma}})\be_j\|_2 \lesssim \frac{1}{\sqrt{p}}\]
    from \Cref{asm:latent} and \Cref{thm:est-error-proj-bGamma}.
    This implies that, when $n,p$ are sufficiently large,
    \[c^{-1} \leq \hat{\omega}_{i} (\be_j^{\top}\cP_{\hat{\bGamma}}^{\perp}\diag(A''(\hat{\btheta}_i))^{-1}\diag(A''(\btheta_{i}^*))\diag(A''(\hat{\btheta}_i))^{-1}\cP_{\hat{\bGamma}}^{\perp}\be_j) \leq c\]
    for some constant $c>1$, under event $\cE_1$.
    Thus, it is equivalent to show the boundedness of $\hsigma_j^2 = \hat{\bu}^{\top}\hat{\bS}\hat{\bu}$, where $\hat{\bS}=n^{-1}\sum_{i=1}^n\hat{\omega}_i\bx_i\bx_i^{\top}$.
    From \Cref{lem:sol-u}, we know that $\bS=\EE[\hat{\bS}]$ has a bounded spectrum with high probability.
    The upper bound that $\hat{\sigma}_j^2\leq \bS_{jj}$ with high probability then follows by the sub-exponential concentration results as in the proof of \Cref{lem:sol-u}.
    
    Next, we proceed to show the lower bound.
    Because $\hat{\bu}$ satisfies the constraint $|\be_j^{\top}\hat{\bS}\hat{\bu} - 1|\leq \lambda_n$, we have that $\sigma_j^2 \geq \hat{\bu}^{\top}\hat{\bS}\hat{\bu} + t ((1-\lambda_n) - \be_j^{\top}\hat{\bS}\hat{\bu}) $ for any $t>0$.
    Note that $\min_{\bv\in\RR^d}\bv^{\top}\hat{\bS}\bv + t ((1-\lambda_n) - \be_j^{\top}\hat{\bS}\bv) = -t^2 \be_j^{\top}\hat{\bS}\be_j / 4 + t (1-\lambda_n)$ where the minimum is obtained when $\hat{\bS}\bv = t\hat{\bS}\be_j/2$. 
    We further have $\hat{\sigma}_j^2\geq \max_{t\geq 0}-t^2 \be_j^{\top}\hat{\bS}\be_j / 4 + t (1-\lambda_n)\geq (1-\lambda_n)^2/(\be_j^{\top}\hat{\bS}\be_j)$.
    By the sub-Gaussianity of $\bx_i$, $\be_j^{\top}\hat{\bS}\be_j \geq \bS_{jj} + \op(1)$.
    We then have $\hat{\sigma}_j^2\geq 0.5/\bS_{jj}$ when $n$ and $p$ are large enough.

    \paragraph{Part (2) Lindeberg's condition.}
    On the other hand, because 
    \[\max_{1\leq i\leq n}|\xi_i| \leq \max_{1\leq i\leq n}|\hat{\bu}^{\top}\bx_i| \|\bepsilon_i\|_2|\sigma_j^{-1}| \|\hat{\bv}_{i}\|_2\lesssim \sqrt{n},\]
    with probability at least $1-2(nd)^{-c}$, the Lindeberg's condition holds that
    \begin{align*}
        \lim_{n\rightarrow\infty}\frac{1}{n}\sum_{i=1}^n \EE[\xi_i^2 \ind\{|\xi_i|\geq \epsilon\sqrt{n}\}] =0
    \end{align*}
    for all $\epsilon>0$.
    Applying Lindeberg’s central limit theorem yields that
    \begin{align*}
        \sqrt{n}\sum_{i=1}^n \sigma_j^{-1}\hat{\bu}^{\top}\bx_i\bepsilon_{i} \hat{\bv}_{i} \dto \cN(0,1),
    \end{align*}
    which finishes the proof.
\end{proof}

\begin{lemma}[Consistent estimators of $\sigma_j$]\label{lem:hsigma}
    Under conditions in \Cref{thm:est-err-B} and condition (i) in \Cref{thm:normality}, $\hat{\omega}_j=A''(\hat{\theta}_{ij})$ satisfies condition (ii) of \Cref{thm:normality}.
    Furthermore, for the variance estimate defined in \eqref{eq:est-var} using sample splitting procedure \Cref{alg:split}, it holds that $\hsigma_j \pto \sigma_j.$
\end{lemma}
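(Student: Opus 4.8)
The plan is to handle the two assertions in turn. For the first, I would verify that the proposed weight $\hat{\omega}_i=A''(\hat{\theta}_{ij})$ meets both requirements of condition (ii): measurability as a function of $(\bx_i,\bz_i,\cD_2)$ and uniform boundedness away from $0$ and $\infty$. Boundedness is immediate, since the feasibility constraint $\bX\hat{\bB}^{\top}+\hat{\bZ}\hat{\bGamma}^{\top}\in\cR_C^{n\times p}$ forces $\hat{\theta}_{ij}\in\cR_C$, so \eqref{eq:kappa} gives $\kappa_1\le A''(\hat{\theta}_{ij})\le\kappa_2$. Measurability is where the sample splitting of \Cref{alg:split} enters: with $(\hat{\bB},\hat{\bGamma})$ estimated from $\cD_2$ and $\hat{\bz}_i$ formed so as to be independent of the fresh residual noise $\bepsilon_i$ entering \eqref{eq:debias}, the composite $\hat{\theta}_{ij}=\bx_i^{\top}\hat{\bb}_j+\hat{\bz}_i^{\top}\hat{\bgamma}_j$ becomes a deterministic function of $(\bx_i,\bz_i,\cD_2)$, which is exactly condition (ii).

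For the consistency $\hat{\sigma}_j\pto\sigma_j$, the strategy is to prove $|\hat{\sigma}_j^2-\sigma_j^2|\pto0$ and then divide by the lower bound $\sigma_j^2\asymp1$ already established in \Cref{lem:asym-normal}. Writing $\bm=\cP_{\hat{\bGamma}}^{\perp}\be_j$ and unfolding the definitions, $\sigma_j^2=\hat{\bu}^{\top}n^{-1}\sum_i\hat{\omega}_i^2 s_i\,\bx_i\bx_i^{\top}\hat{\bu}$ with the scalar $s_i=\sum_{k}m_k^2\,A''(\theta_{ik}^*)/A''(\hat{\theta}_{ik})^2$, whereas $\hat{\sigma}_j^2=\hat{\bu}^{\top}n^{-1}\sum_i\hat{\omega}_i\,\bx_i\bx_i^{\top}\hat{\bu}$. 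The key observation is the algebraic cancellation induced by the choice $\hat{\omega}_i=A''(\hat{\theta}_{ij})$: the leading ($k=j$) contribution to $\hat{\omega}_i s_i$ equals $m_j^2\,A''(\theta_{ij}^*)/A''(\hat{\theta}_{ij})$, which tends to $1$ because $m_j=1-(\cP_{\hat{\bGamma}})_{jj}\to1$ and $A''(\hat{\theta}_{ij})\to A''(\theta_{ij}^*)$, so that $\hat{\omega}_i^2 s_i$ collapses to $\hat{\omega}_i$ up to negligible error.

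To make this rigorous I would bound $\hat{\omega}_i^2 s_i-\hat{\omega}_i$ by splitting off the diagonal defect from the off-diagonal remainder. The diagonal defect $|m_j^2 A''(\theta_{ij}^*)-A''(\hat{\theta}_{ij})|$ is at most $|m_j^2-1|\kappa_2+|A''(\theta_{ij}^*)-A''(\hat{\theta}_{ij})|$; here $|m_j^2-1|=O(1/p)$ follows from $\|\cP_{\hat{\bGamma}}\be_j\|_2\lesssim p^{-1/2}$ (\Cref{thm:est-error-proj-bGamma} with \Cref{asm:latent}, exactly as in \Cref{lem:asym-normal}), and $|A''(\theta_{ij}^*)-A''(\hat{\theta}_{ij})|\le L|\theta_{ij}^*-\hat{\theta}_{ij}|$ by Lipschitz continuity of $A''$ on the compact set $\cR_C$ (valid since $A$ is thrice continuously differentiable). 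The off-diagonal remainder is at most $(\kappa_2^3/\kappa_1^2)\sum_{k\neq j}m_k^2=(\kappa_2^3/\kappa_1^2)(\cP_{\hat{\bGamma}})_{jj}\bigl(1-(\cP_{\hat{\bGamma}})_{jj}\bigr)=O(1/p)$, using $\|\bm\|_2^2=1-(\cP_{\hat{\bGamma}})_{jj}$ and $m_j^2=(1-(\cP_{\hat{\bGamma}})_{jj})^2$. Passing to the operator norm,
\[
|\hat{\sigma}_j^2-\sigma_j^2|\;\le\;\|\hat{\bu}\|_2^2\,\Big\|\frac1n\sum_{i=1}^n(\hat{\omega}_i^2 s_i-\hat{\omega}_i)\bx_i\bx_i^{\top}\Big\|_{\oper}\;\le\;\|\hat{\bu}\|_2^2\,\frac1n\sum_{i=1}^n|\hat{\omega}_i^2 s_i-\hat{\omega}_i|\,\|\bx_i\|_2^2,
\]
and by Cauchy--Schwarz the last sum is $\lesssim d/p+\bigl(n^{-1}\sum_i|\theta_{ij}^*-\hat{\theta}_{ij}|^2\bigr)^{1/2}\bigl(n^{-1}\sum_i\|\bx_i\|_2^4\bigr)^{1/2}$. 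The factor $n^{-1}\|\hat{\bTheta}_j-\bTheta_j^*\|_2^2\pto0$ is obtained from \Cref{cor:est-confound-col} and \Cref{thm:est-err-B} (bounding $\|\bX(\hat{\bb}_j-\bb_j^*)\|_2$ and $\|\hat{\bZ}\hat{\bgamma}_j-\bZ^*\bgamma_j^*\|_2$), the moment factor is $O(1)$ by \Cref{asm:covariate}, and $\|\hat{\bu}\|_2=O(1)$ by \Cref{lem:sol-u}.

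The main obstacle I anticipate is the variance step rather than the weight step: one must show that the sandwiched matrix $\diag(A''(\hat{\btheta}_i))^{-1}\diag(A''(\btheta_i^*))\diag(A''(\hat{\btheta}_i))^{-1}$, after projection by $\cP_{\hat{\bGamma}}^{\perp}$ and reweighting by $\hat{\omega}_i^2$, reduces to $\hat{\omega}_i$, and this must be verified using only the \emph{averaged} ($\ell_2$-in-$i$) estimation rates available for $\hat{\theta}_{ij}$, not pointwise control. The delicate point is therefore confirming that both the off-diagonal projection mass $\sum_{k\neq j}m_k^2$ and the link-curvature mismatch $A''(\hat{\theta}_{ij})/A''(\theta_{ij}^*)-1$ are negligible simultaneously, which is precisely what the $O(1/p)$ projection bound and the Lipschitz continuity of $A''$ on $\cR_C$ deliver.
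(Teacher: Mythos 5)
Your proposal is correct and follows essentially the same route as the paper's proof: boundedness of $\hat{\omega}_i$ from \eqref{eq:kappa}, measurability from the sample splitting, and then a decomposition of $\hat{\sigma}_j^2-\sigma_j^2$ into a curvature-mismatch term controlled via Lipschitz continuity of $A''$ on $\cR_C$ and the averaged rate $n^{-1}\|\hat{\bTheta}_j-\bTheta_j^*\|_2^2$, plus a projection-contamination term of order $O(1/p)$ from $\|\cP_{\hat{\bGamma}}\be_j\|_2\lesssim p^{-1/2}$. The only cosmetic difference is that you control the design factor with $\|\hat{\bu}\|_2^2$ and fourth moments of $\|\bx_i\|_2$, whereas the paper uses the constraint $\max_i|\bx_i^{\top}\hat{\bu}|\leq\tau_n$ from \eqref{opt:min-var} together with H\"older's inequality; both suffice.
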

\begin{proof}[Proof of \Cref{lem:hsigma}]
    The boundedness of $\hat{\omega}_j$ follows from \eqref{eq:kappa}.

    By the sample splitting procedure \Cref{alg:split}, we know that for a given response $j\in[p]$, $\bY_j-A'(\bTheta_j^*)\in\RR^{n}$ is independent of $\hat{\bu}$, $\hat{\bB}$, $\hat{\bGamma}$, and $\hat{\bZ}$, when conditioning on $\bX$.
    However, noted that $\hat{\bB}$, $\hat{\bGamma}$, and $\hat{\bZ}$ may be specific to each $j\in I$.
    For the sake of simplicity, in the following proof, we will assume that $\bY-A'(\bTheta^*)$ is independent of $\hat{\bu}$, and a common set of estimators $\hat{\bB}$, $\hat{\bGamma}$, and $\hat{\bZ}$ when conditioning on $\bX$; or equivalently $I=[p]$.
    Note that, however, the proof still works for the cases in \Cref{alg:split}, except the constructed debiased estimators only use responses in the index set $I$; namely, $\hat{b}_{j1}^{\de} = \hat{b}_{j1} + \hat{\bu}^{\top}\frac{1}{n}\sum_{i=1}^n \hat{\omega}_{i} \bx_{i} ([\by_{i}]_I - [A'(\hat{\btheta}_{i})]_I)^{\top} \cP_{\hat{\bGamma}_I}^{\perp}\be_j$.

    Recall that 
    \[\sigma_j^2 = \hat{\bu}^{\top}\frac{1}{n}\sum_{i=1}^n \hat{\omega}_{i}^2 (\be_j^{\top}\cP_{\hat{\bGamma}}^{\perp}\diag(A''(\hat{\btheta}_i))^{-1}\diag(A''(\btheta_{i}^*))\diag(A''(\hat{\btheta}_i))^{-1}\cP_{\hat{\bGamma}}^{\perp}\be_j)\bx_{i}\bx_{i}^{\top}\hat{\bu},\]
    and
    \[\hat{\sigma}_j^2 = \hat{\bu}^{\top}\frac{1}{n}\sum_{i=1}^n \hat{\omega}_{i}\bx_{i}\bx_{i}^{\top}\hat{\bu}.\]
    Let $a_i = \hat{\omega}_{i}^2 (\be_j^{\top}\cP_{\hat{\bGamma}}^{\perp}\diag(A''(\hat{\btheta}_i))^{-1}\diag(A''(\btheta_{i}^*))\diag(A''(\hat{\btheta}_i))^{-1}\cP_{\hat{\bGamma}}^{\perp}\be_j)$ and $a_i' = \hat{\omega}_i$.
    We begin by bounding the difference between the two:
    \begin{align*}
        |\hsigma_j^2 - \sigma_j^2|  &= \left| \frac{1}{n}\sum_{i=1}^n (a_{i}' - a_{i})\cdot (\hat{\bu}\bx_i)^2\right|\\
        & \leq  \sqrt{\frac{1}{n}\sum_{i=1}^n(\hat{\bu}\bx_i)^4}\cdot \sqrt{\frac{1}{n}\sum_{i=1}^n (a_{i}'- a_{i})^2}\\
        &\lesssim \tau_n^2\cdot \sqrt{\frac{1}{n}\sum_{i=1}^n (a_{i}'- a_{i})^2} ,
    \end{align*}
    where the first inequality is from Holder's inequality and the second inequality is due to the second constraint of the optimization problem \eqref{opt:min-var}.
    For the second factor above, note that
    \begin{align}
        &\frac{1}{n}\sum_{i=1}^n (a_{i}'- a_{i})^2 \notag\\
        =& \max_{i\in[n]}\hat{\omega}_i^4\cdot  \frac{1}{n}\sum_{i=1}^n (\be_j^{\top}\cP_{\hat{\bGamma}}^{\perp}\diag(A''(\hat{\btheta}_i))^{-1}\diag(A''(\btheta_{i}^*) )\diag(A''(\hat{\btheta}_i))^{-1}\cP_{\hat{\bGamma}}^{\perp}\be_j - A''(\hat{\btheta}_i))^2. \label{eq:lem:sigma-eq-0}
    \end{align}
    Each term inside the square can be decomposed into
    \begin{align}
        &\be_j^{\top}\cP_{\hat{\bGamma}}^{\perp}\diag(A''(\hat{\btheta}_i))^{-1}\diag(A''(\btheta_{i}^*) )\diag(A''(\hat{\btheta}_i))^{-1}\cP_{\hat{\bGamma}}^{\perp}\be_j - A''(\hat{\theta}_{ij}) \notag\\
        =& \be_j^{\top}\cP_{\hat{\bGamma}}^{\perp}\diag(A''(\hat{\btheta}_i))^{-1}\diag(A''(\btheta_{i}^*) - A''(\hat{\btheta}_i))\diag(A''(\hat{\btheta}_i))^{-1}\cP_{\hat{\bGamma}}^{\perp}\be_j \notag\\
        &\qquad - [2\be_j^{\top} \cP_{\hat{\bGamma}}A''(\hat{\btheta}_{i}) \cP_{\hat{\bGamma}}^{\perp}\be_j + \be_j^{\top} \cP_{\hat{\bGamma}}A''(\hat{\btheta}_{i}) \cP_{\hat{\bGamma}}\be_j ] \notag\\
        =&: T_1 + T_2. \label{eq:lem:sigma-eq-1}
    \end{align}
    Now note that
    \begin{align}
        \frac{1}{n}\sum_{i=1}^nT_1^2
        \lesssim& \frac{1}{n}\sum_{i=1}^n(A''(\btheta_{ij}^*) - A''(\hat{\btheta}_{ij}))^2\notag \\
        = & \frac{1}{n}(\hat{\bTheta}_j - \bTheta_j^*)^{\top}\diag(A'''(\bTheta_j')) (\hat{\bTheta}_j - \bTheta_j^*) \notag\\
        \lesssim& \frac{1}{n}\|\hat{\bTheta}_j - \bTheta_j^*\|_2^2, \label{eq:lem:sigma-eq-2}
    \end{align}
    where the first inequality is due to the boundedness of $A{''}$ on $\cR$, and the bounded spectral of the projection matrix $\cP_{\bGamma}^{\perp}$, and noting that $\|\cP_{\hat{\bGamma}}\be_j\|_2\lesssim \Op(p^{-1/2})$;
    the second equality is from Taylor expansion with $\bTheta_j'=(\theta_{1j}',\ldots,\theta_{nj}')$ and $\theta_{ij}'$ being between $\hat{\theta}_{ij}$ and $\theta_{ij}^*$ for $i=1,\ldots,n$;
    and the second inequality is from the continuity and boundedness of $A{'''}$ on $\cR_C$.

    On the other hand,
    \begin{align}
        \frac{1}{n}\sum_{i=1}^nT_2^2 &\lesssim p^{-1} \label{eq:lem:sigma-eq-3}
    \end{align}
    by noting that $\|\cP_{\hat{\bGamma}}\be_j\|_2\lesssim \Op(p^{-1/2})$ again.

    By applying triangle inequality on \eqref{eq:lem:sigma-eq-0} and combining \eqref{eq:lem:sigma-eq-1}-\eqref{eq:lem:sigma-eq-3}, we further have
    \begin{align*}
        &\frac{1}{n}\sum_{i=1}^n (a_{i}'- a_{i})^2\\
        \lesssim& \max_{i\in[n]}\hat{\omega}_i^4\cdot\left( \frac{1}{n}\sum_{i=1}^nT_1^2 + \frac{1}{n}\sum_{i=1}^nT_2^2\right)\notag \\
        \lesssim& \frac{1}{n}\|\hat{\bTheta}_j - \bTheta_j^*\|_2^2, \notag
    \end{align*}
    where in the last inequality, we also use the boundedness of $\hat{\omega}_i$.
    This implies that
    \begin{align*}
        |\hsigma_j^2 - \sigma_j^2| 
        &\lesssim \tau_n^2 \sqrt{\frac{1}{n}\|\hat{\bTheta}_j - \bTheta_j^*\|_2^2}\\
        &\lesssim \tau_n^2 \sqrt{\frac{1}{n}\|\bX(\hat{\bb}_j - \bb_j^*)\|_2^2 + \frac{1}{n} \|\hat{\bE}_j - \bE_j^*\|_2^2}  \\
        &\lesssim \tau_n^2 r_{n,p}\\
        &= \op(1),
    \end{align*}
    where $r_{n,p}$ is defined in \Cref{thm:est-err-B}. 
    Here the concentration of $\|\bX(\hat{\bb}_j - \bb_j^*)\|_2^2$ is from \Cref{thm:est-err-B} and the one of $\|\hat{\bE}_j - \bE_j^*\|_2^2$ is from \eqref{eq:lem:rsc:eq-T2} as in the proof of \Cref{lem:rsc}.
    This implies that $\hsigma_j^2 \pto \sigma_j^2$.
    The conclusion then follows by applying the continuous mapping theorem.
\end{proof}

\begin{algorithm}[!t]\caption{Data splitting procedure.}\label{alg:split}
    \begin{algorithmic}[1]
        \REQUIRE Data $(\bx_i,\by_i)\in\RR^{d}\times\RR^{p}$ for $i=1,\ldots,2n$.
        
        \STATE Split the full data into two disjoint datasets $\cD_1=\{(\bx_i,\by_i)\mid i=1,\ldots,n\}$ and $\cD_2=\{(\bx_i,\by_i)\mid i=n+1,\ldots,2n\}$.

        \STATE Apply \Cref{alg:deconfounder} on $\cD_2$ to obtain the estimates $\hat{\bB}$ and $\hat{\bGamma}$.\label{algo:split-B-Gamma}

        \FOR{$j=1,2,\ldots,p$}
            \STATE Select a subset $I\subseteq [p]\cap\{j\}$ and set $I^c=[p]\setminus I$.

            \STATE Based on $\hat{\bB}$ and $\hat{\bGamma}$, use partial data $(\bX,\bY_{I^c})$ to estimate $\hat{\bZ}$, where $\bX=[\bx_1,\ldots \bx_n]^{\top}$, $\bY=[\by_1,\ldots \by_n]^{\top}$ and $\bY_{I^c} = [\bY_{\ell}]_{\ell\in I^c}$.\label{algo:split-Z}

            (Alternatively, Step \ref{algo:split-B-Gamma}-\ref{algo:split-Z} can be combined such that $\hat{\bB}$, $\hat{\bGamma}$, and $\hat{\bZ}$ are estimated jointly for gene $j$.)
            
            \STATE Based on $\hat{\bB}$, $\hat{\bGamma}$, and $\hat{\bZ}$, estimate $\hat{\omega}_i$'s and $\hat{\bu}$ on $(\bX,\bY_{I})$.
            
            \STATE Calculate the test statistics $z_j$ for gene $j$.
        \ENDFOR
        
        \ENSURE A set of test statistics $\{z_j\mid j=1,\ldots,p\}$.
    \end{algorithmic}
\end{algorithm}

\clearpage
\section{Computational aspects}\label{app:sec:comp}
    \subsection{Exponential family}\label{app:subsec:exp-family}
    Some commonly used exponential families, the exact formulas of the log-partition functions and other statistics, are summarized in \Cref{tab:exp-family}.
    \begin{table}[!ht]
        \footnotesize
        \everymath{\displaystyle}
        \centering
        \begin{tabularx}{0.99\textwidth}{*2{>{\centering\arraybackslash}m{1.7cm}} *2{>{\centering\arraybackslash}m{1.5cm}} *1{>{\centering\arraybackslash}m{1.cm}} *1{>{\centering\arraybackslash}m{2.1cm}} *2{>{\centering\arraybackslash}m{1.7cm}}}
            \toprule
            \textbf{Distribution}  &\textbf{Extra parameter} & \textbf{Base measure $h(y)$}  & \textbf{Sufficient statistics $T(y)$}  & \textbf{Domain $\mathrm{dom}(A(\theta))$}  & \textbf{Log-partition $A(\theta)$} &\textbf{Mean $\mu=A'(\theta)$} & \textbf{Variance $A''(\theta)$} \\[0.8em]
            \midrule\\\\[-4\medskipamount] 
            Gaussian   & variance $\sigma^2$ & ${\frac {e^{-{\frac {x^{2}}{2\sigma ^{2}}}}}{{\sqrt {2\pi }}\sigma }}$ & $\frac{y}{\sigma}$ & $\RR$ & $\frac{\theta^2}{2}$ & $\theta$ & $1$  \\[0.8em] \hline\\\\[-4\medskipamount] 
            Bernoulli & & $1$ & $y$ & $\RR$ & $\log(1+e^{\theta})$ & $\frac{1}{1+e^{-\theta}}$ & $\mu(1-\mu)$           \\[0.8em] \hline\\\\[-4\medskipamount] 
            Binomial & number of trials $m$ & $\binom{m}{y}$ & $y$ & $\RR$ & $m\log(1+e^{\theta})$ & $\frac{m}{1+e^{-\theta}}$ & $\mu\left(1-\frac{\mu}{m}\right)$ \\[0.8em] \hline\\\\[-4\medskipamount] 
            Poisson  & & $\frac{1}{y!}$  & $y$  & $\RR$ & $e^{\theta}$ & $e^{\theta}$ & $e^{\theta}$ \\[0.8em] \hline\\\\[-4\medskipamount] 
            Negative Binomial  & number of failures $\phi$ & $\binom{y+\phi-1}{y}$ &  $y$  & $\RR_-$ & $-\phi\log(1-e^\theta)$ & $\phi \frac{e^{\theta}}{1 - e^{\theta}}$ & $\phi \frac{e^{\theta}}{(1 - e^{\theta})^2}$ \\
             [0.8em]
            \bottomrule
        \end{tabularx}
      \caption{Summary of exponential family in canonical form.}\label{tab:exp-family}
    \end{table}
    
    \subsection{Optimization details}\label{app:subsec:opt}
    \paragraph{Initialization.} Our initialization procedure for optimalization problem \eqref{opt:1} is inspired by \citet{lin2023esvd}.

    \begin{itemize}
        \item Initialize the marginal effects $\bF$ by solving a generalized linear model without considering the latent variables.
        When the fitting of GLM is numerically unstable, one can also add a small ridge penalty $\lambda=10^{-5}$.

        \item Initialize $\bW$ and $\bGamma$ using the SVD of the matrix $\log(\bY+1) = \bU_Y\bSigma_Y\bV_Y^{\top}$ for Poisson likelihood or Negative Binomial likelihood with log link.
        Let $\bW=(\cP_{\bX}^{\perp}\bU_Y\bSigma_Y^{1/2})_{1:r}$ and $\bGamma=(\bV_Y\bSigma_Y^{1/2})_{1:r}$ be the first $r$ columns of the corresponding matrices.
        Here the projection $\cP_{\bX}^{\perp}$ ensures that $\bW$ is uncorrelated with $\bX$.
        In particular, when the intercept is included in the covariates, the initial value of $\bW$ also has zero means per column. 
    \end{itemize}

    To initialize variables for optimization problem \eqref{opt:2}:
    \begin{itemize}
        \item Initialize the direct effects $\bB$ as $\cP_{\hat{\bGamma}}^{\perp}\hat{\bF}$.
        \item Initialize $\bZ$ and $\bGamma$ using the SVD of the matrix $\bX\hat{\bF}^{\top}\cP_{\hat{\bGamma}}+\hat{\bW}\hat{\bGamma}^{\top} = \bU'\bSigma'\bV'^{\top}$.
        Let $\bZ = (\bU'\bSigma'^{1/2})_{1:r}$ and $\bGamma=(\bV'\bSigma'^{1/2})_{1:r}$.
        Because the latter has the same column space as $\hat{\bGamma}$, we simply treat the latter as $\hat{\bGamma}$ in optimization problem \eqref{opt:2} with a light abuse of notation.
    \end{itemize}

    \paragraph{Alternative maximization}
    The alternative maximization \Cref{alg:jmle} is used to perform nonconvex matrix factorization.
        In our setup where the objective function is convex in the natural parameter, each iteration of \Cref{alg:jmle} is simply solving two convex optimization subproblems.

    \begin{algorithm}\caption{Joint maximum likelihood estimation by alternative maximization}\label{alg:jmle}
        \begin{algorithmic}[1]
            \REQUIRE Data $\bY\in\RR^{n\times p}$ from exponential family with log-partition function $A$, the regularization parameter $\lambda$, and initial value $\bl_i^{(0)}\in\cD_{l_i}$, $\br_j^{(0)}\in\cD_{r_j}$ for $i\in[n]$ and $j\in[p]$.
            
            \STATE Initialize the iteration number $t=0$.

            \WHILE{not converged}
                \STATE $t\gets t+1$.
                
                \FOR{$i=1,2,\ldots,n$}
                    \STATE $$\bl_i^{(t)} \in \argmax_{\bl\in\cD_{l_i}} \frac{1}{p}\sum_{j=1}^p \left(y_{ij} \bl^{\top}\br_j^{(t-1)} - A(\bl^{\top}\br_j^{(t-1)})\right)$$
                \ENDFOR

                \FOR{$j=1,2,\ldots,p$}
                    \STATE $$\br_j^{(t)} \in \argmax_{\br\in\cD_{r_j}} \frac{1}{n}\sum_{i=1}^n \left(y_{ij} \bl_i^{(t)\top}\br - A(\bl_i^{(t)\top}\br)\right) - \lambda \|\br\|_1$$
                \ENDFOR

            \ENDWHILE
            
            \ENSURE $\bL=[\bl_i^{(t)}]_{i\in[n]}^{\top}$ and $\bR=[\br_j^{(t)}]_{j\in[p]}^{\top}$ with $\bL\bR^{\top}$ being the estimated natural parameters.
        \end{algorithmic}
    \end{algorithm}

    By default, we use the inexact line search algorithm with an initial step size of 0.1 and a shrinkage factor of 0.5 for each iteration.
    The search is stopped if the Armijo rule is satisfied with tolerance $10^{-4}$ or the number of iterations reaches 20.
    We early stop the alternative maximization if the objective value does not increase more than a tolerance of $10^{-4}$ for 20 iterations.

    \paragraph{Estimation of dispersion parameters}
        To estimate the dispersion parameter, we first fit GLMs on the data and obtain the estimated mean expression of gene $j$, denoted as $\hat{\mu}_j$ for $j=1,\ldots,p$.
        Note that when $y_{ij}$ comes from a Negative Binomial distribution, its variance is given by
        \[ \Var(y_{ij}\mid \theta_{ij}) = \mu\left( 1 + \alpha_j\mu\right)\]
        where $\mu=\EE [y_{ij}\mid \theta_{ij}]$ is the conditional mean while $\alpha_j$ is the dispersion parameter of the NB1 form.
        In the form of exponential family in \Cref{tab:exp-family} parameterized by the parameter $\phi_j$, $\alpha_j$ is the reciprocal of $\phi_j$, namely, $\alpha_j = 1/\phi_j$.
        By methods of moments, we can solve the following equation to obtain an estimator $\hat{\phi}_j$ for $\phi_j$:
        \[\frac{1}{n}\sum_{i=1}^n(y_{ij}-\hat{\mu}_j)^2 = \hat{\mu}_j\left( 1 + \alpha\hat{\mu}_j\right).\]
        Finally, we clip $\hat{\alpha}_j$ to be in $[10^{-2},10^2]$ and set $\hat{\phi}_j=1/\hat{\alpha}_j$.

    \subsection{Choice of hyperparameters in practice}\label{app:subsec:hyperparam}

    The main text provides theoretical results for the proposed method under certain assumptions.
    The proposed algorithm \Cref{alg:deconfounder} requires the choice of hyperparameters, such as the rank $r$, the regularization parameters $\lambda$ in optimization problem \eqref{opt:2}, and $(\tau_n,\lambda_n)$ in optimization problem \eqref{opt:min-var}.
    Although the theoretical orders of some parameters are provided for consistency and asymptotic normality, the choice of hyperparameters in practice is crucial for the performance of the proposed method. 
    Below, we discuss the choice of hyperparameters in practice.

    \paragraph{Boundedness constant $C$.}
    The boundedness constant $C$ is a reasonably large constant that ensures a finite solution to optimization problems exists.
    In our simulations, estimating the model parameters is not sensitive to the choice of boundedness constant as long as it is set to be sufficiently large; see also \citet[Appendix D]{chen2022determining} for detailed discussions.
    Therefore, in our implementation, instead of restricting the parameters to be bounded, we project the gradient at each step of the alternative maximization onto the $L_2$-norm ball with radius $2C'$ for some constant $C'$.
    A smaller value of $C'$ equals decreasing the learning rates while improving the numerical stability. 
    We set $C'$ to be $10^5$ and $10^3$ for experiments with Poisson and Negative Binomial likelihoods, respectively.

    \paragraph{Lasso penalty $\lambda$.} For the lasso penalty $\lambda = c_1 \sqrt{\log p / n}$, one can use cross-validation to tune the lasso penalty for optimal log-likelihood.
        However, because the estimation results are insensitive to the choice of this penalty, we simply set $c_1$ to be 0.02 and 0.01 for experiments with Poisson and Negative Binomial likelihoods, respectively.

    \paragraph{The number of factors $r$.}
    For the number of factors $r$, the joint-likelihood-based information criterion (JIC) proposed by \citet{chen2022determining} can be utilized to select a proper number of latent factors.
    The JIC value is the sum of deviance and a penalty on model complexity:
    \begin{align*}
        \JIC(\hat{\bTheta}^{(r)}) &= \mathrm{deviance} + \nu(n,p,d+r) \\
        &= - 2 \sum_{i\in[n],j\in[p]}\log p(y_{ij}\mid \hat{\theta}_{ij}^{(r)}) + c_{\JIC}\cdot\frac{(d+r)\log(n\wedge p)}{n\wedge p},
    \end{align*}
    where $\hat{\bTheta}^{(r)}$ is the estimated natural parameter matrix with $r$ latent factors and $d$ observed covariates, and $c_{\JIC}>0$ is a universal constant set to be one in all our simulations.
    As shown by \citet{chen2022determining}, minimizing the empirical JIC yields a consistent estimate for the number of factors in generalized linear factor models.
    As an illustration, we compute the values of JIC at different numbers of factors on simulated datasets and visualize them in \Cref{fig:JIC-r}.
    When the unmeasured confounding effects are strong, the default choice of $c_{\JIC}=1$ gives reasonable estimates for the number of factors under both Poisson and Negative Binomial likelihoods.
    Because the complexity term is a linear function in $r$, one can also inspect the increment of log-likelihood compared to the increment of the complexity term as a function of $r$, as shown in the right panel of \Cref{fig:JIC-r}.
    For real-world datasets, this can help to adjust the penalty level $c_{\JIC}$ to select a suitable value of $r$ to achieve a sufficient reduction of negative log-likelihood while avoiding overfitting.
    
    \begin{figure}[!ht]
        \centering
        \includegraphics[width=0.9\textwidth]{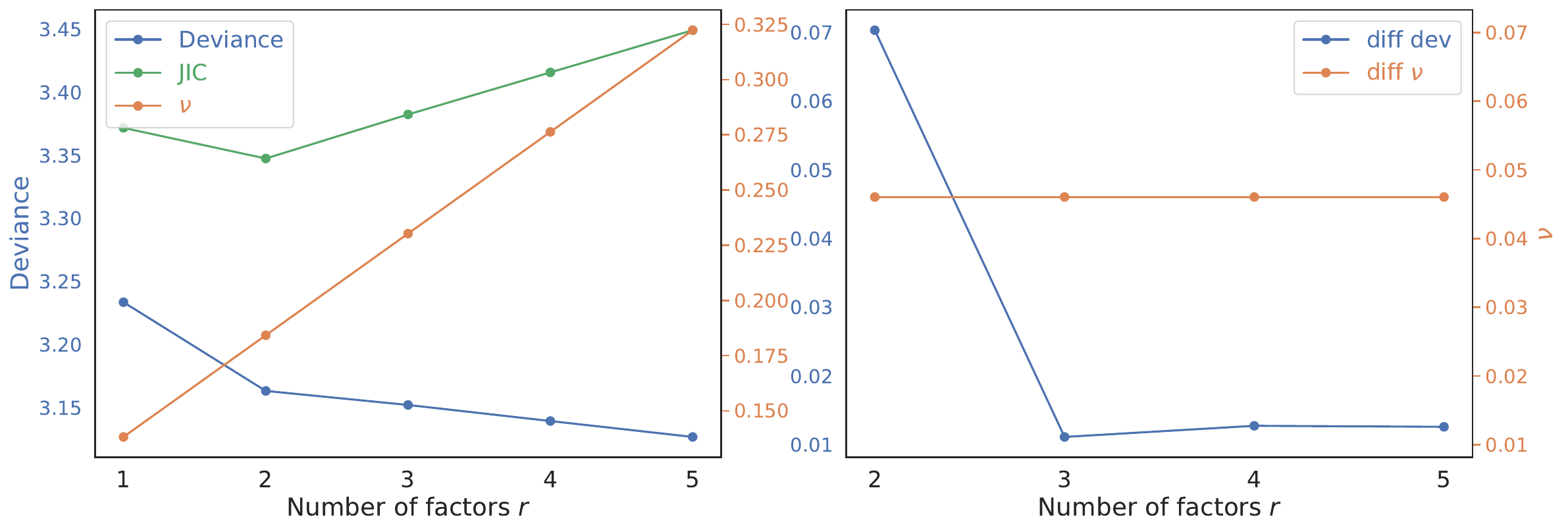}
        \caption{
        The left panel shows the deviance and the complexity penalty $\nu$ at different numbers of factors $r$.
        The JIC is the sum of deviance and $\nu$.
        The right panel shows the decrement of the deviance and the complexity at different numbers of factors $r$.
        The values are computed from one simulation in \Cref{sec:simulation} with $n=100$ and $r^*=2$ underlying factors.}
        \label{fig:JIC-r}
    \end{figure}

    \paragraph{Debiasing parameters $(\tau_n,\lambda_n)$.}
    For inference, two parameters $(\tau_n,\lambda_n)$ are to be specified.
        However, the parameter $\tau_n$ is less important because as long as both the covariate $\bx_i$ and the projection direction $\bu$ are bounded in $L_{\infty}$-norm, the second constraint in Equation (4.4) will always be satisfied.
        For this reason, we can ignore the second constraint and solve the relaxed optimization problem, similar to the implementation of \citet{cai2021statistical}.
        Therefore, one only needs to determine the parameter $\lambda_n = c_2 \sqrt{\log (n) / n}$.

        To address this, \citet{cai2021statistical} only implemented a single value for $c_2$. However, we propose a more effective heuristic method to guide the selection of $c_2$.
        Specifically, we enumerate different values of $c_2\in\{0.001,0.002,\ldots, 0.01,0.02,\ldots,0.1,0.2,\ldots,1\}$ and compute the median and median absolute deviation (MAD) of the corresponding empirical $z$-statistics.
        We then generate the scree plot of the two summarized statistics.
        As shown in \Cref{fig:median-lamn}, as $\lambda_n$ increases, both the median and the MAD of the empirical null distribution change.
        Specifically, as $\lambda_n$ increases, the median decreases, while the MAD increases and then decreases.
        Therefore, when $\lambda_n$ is too small, the empirical null distribution concentrates around 0, and the resulting tests will be conservative.
        On the other hand, when $\lambda_n$ is too large, the tests will be anti-conservative.
        Therefore, a reasonable choice for $\lambda_n$ is such that the absolute value of the median is not too large while the MAD of the corresponding test statistics is near one.
    
    For simulations with Poisson likelihood, according to the scree plot, the adaptive choice of the value $c_2$ would be the largest value that makes the median deviate from 0 by no more than a threshold of 0.1.
    Analogously, we set the median deviation threshold to be 0.025 for the Negative Binomial simulations.
    Note that any value below the selected $\lambda_n$ also provides valid inference results but with lower power.

    \begin{figure}[!ht]
        \centering
        \includegraphics[width=0.5\textwidth]{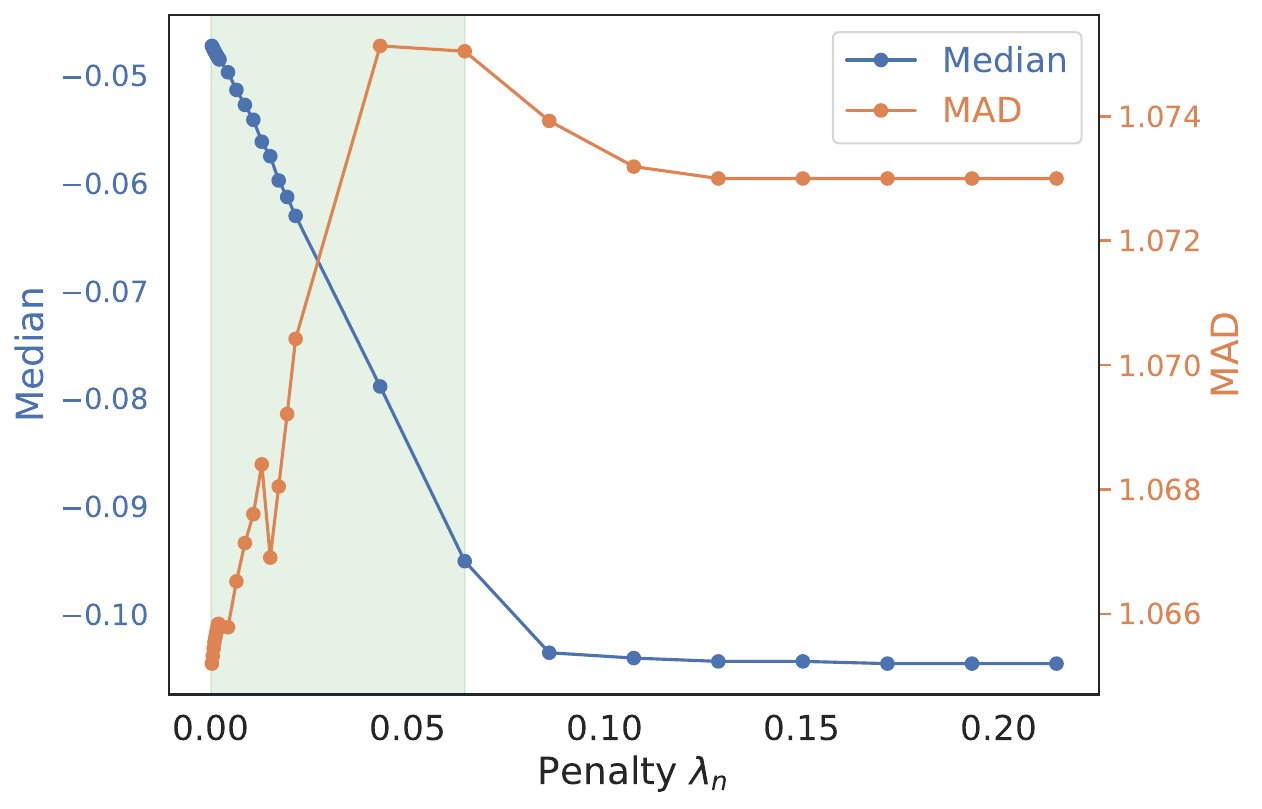}
        \caption{The median and MAD of the $z$-statistics as a function of the regularization parameter $\lambda_n$ computed from one simulation in \Cref{subsec:simu-poisson} with $n=100$ and $r=2$.
        The shaded region indicates feasible values of $\lambda_n$, for which the absolute values of the medians of the corresponding test statistics are less than 0.1.}
        \label{fig:median-lamn}
    \end{figure}

    \subsection{Negative binomial likelihood with non-canonical link}\label{app:subsec-nb-log-link}
    While theoretically nice, the canonical link function for Negative Binomial distributions (NB-C) is not recommended in general because its natural parameter value is always negative, but linear predictors ought to be unbounded in general.
    Numerical instability may occur in the boundary of the natural parameters.
    Furthermore, the NB-C model is sensitive to the initial values and may converge to a local solution.

    The common choice of a link function for generalized linear models with Negative Binomial likelihood is the log link \citep{agresti2015foundations}.
    Below, we show how to incorporate non-canonical ink into our framework.

    For Negative Binomial distribution, recall that $\phi$ is a parameter that represents the number of failures as in \Cref{tab:exp-family}.
    Define the Negative Binomial canonical link $A'^{-1}$ and log link $L^{-1}$, such that $A'(\theta) = \phi e^\theta/(1-e^\theta)$ and $L(\xi)=e^\xi$.

    Let $\theta$ and $\xi$ be the natural parameter and its representation under the log link; namely, the mean $\mu$ can be obtained from them through the corresponding link functions:
    \[\mu = A'(\theta) = L(\xi).\]
    This gives rise to the transformation equations:
    \begin{align*}
        e^{\xi} &= \phi \frac{e^{\theta}}{1 - e^{\theta}},\qquad
        e^{\theta} =  \frac{e^{\xi}}{\phi + e^{\xi}} ,
    \end{align*}
    and
    \begin{align}
        \theta &= A'^{-1}(L(\xi)) = \log \frac{e^{\xi}}{\phi + e^{\xi}}.\label{eq:nb-theta-xi}
    \end{align}
    Note that the negative log-likelihood is given by
    \begin{align*}
        l (\xi) &:= - y\cdot (A')^{-1}(L(\xi)) + A((A')^{-1}(L(\xi))) \\
        &= -y \log \frac{e^{\xi}}{\phi+e^{\xi}} - \phi \log \frac{\phi}{\phi + e^{\xi}},
    \end{align*}
    which has gradient and hessian:
    \begin{align}
        \frac{\partial l}{\partial \xi} &= \frac{\partial l}{\partial \theta} \frac{\partial \theta}{\partial \xi} = -(y - A'(\theta)) \frac{\phi}{\phi + e^{\xi} } \label{eq:nb-grad}\\
        \frac{\partial^2 l}{\partial \xi^2} &= \frac{\partial^2 l}{\partial \theta^2} \frac{\partial \theta}{\partial \xi} + \frac{\partial l}{\partial \theta} \frac{\partial^2 \theta}{\partial \xi^2} \notag\\
        &= A''(\theta) \left(\frac{\partial \theta}{\partial \xi}\right)^2+ \frac{\partial l}{\partial \theta} \frac{\partial^2 \theta}{\partial \xi^2} \notag\\
        &\approx \frac{\phi e^{\xi}}{\phi + e^{\xi}} \label{eq:nb-hess}
    \end{align}
    where the last line is because the conditional expectation on $y-A'(\theta)$ given $\theta$ is zero, $\EE[{\partial l}/{\partial \theta}\mid\theta] =0$, so that the second term is ignorable.  The latter approximation approach is also used in classic GLM to derive the asymptotic variance of the estimates.

    For $n$ i.i.d. samples $(\bx_i,\bz_i,\by_i)$, the linear predictor reads that $\bXi = \bX\bB^{\top}+ \bZ\bGamma^{\top}$ when using the log link.
    Based on the relationship \eqref{eq:nb-theta-xi}, we can perform estimation and inference for the log link function, as described below.
    
    For estimation, the objective function \eqref{eq:loglikelihood} now becomes:
    \begin{align*}
        l(\bXi) =l(\bGamma,\bZ,\bB) &= -\frac{1}{n}\sum_{i\in[n]}\sum_{j\in[p]} (y_{ij} A'^{-1}(L(\xi_{ij})) -  L(\xi_{ij}) ).
    \end{align*}
    Even though the new objective is now nonconvex in the parameter $\bXi$, the alternative maximization algorithm \Cref{alg:jmle} is still applicable to it, because the gradient can be computed based on \eqref{eq:nb-grad}.
    If we initialize $\hat{\bF}$ from GLM estimates and treat it as fixed, then solving optimization problem \eqref{opt:3} reduces to a nonconvex matrix factorization problem.
    Under this setting, there is a rich literature on establishing the estimation error for $\bW^*$ and $\bGamma^*$ given that the initial value is close to the truth; see \citet{wang2017unified,lin2023esvd} among the others.
    In other words, we may also obtain error bounds on $\|\bXi^* - \hat{\bXi}\|_{\fro}^2$ by imposing additional conditions.

    For inference, we simply apply the chain rule and \eqref{eq:nb-grad}-\eqref{eq:nb-hess} to rewrite \eqref{eq:debias} as:
    \begin{align*}
        \hat{b}_{j1}^{\de} &= \hat{b}_{j1} + {\bu}^{\top}\frac{1}{n}\sum_{i=1}^n  \bx_{i} (\by_{i} - A'(\hat{\btheta}_{i}))^{\top} \diag\left(\left\{\frac{\partial \theta}{\partial \xi}\Big|_{\xi=\hat{\xi}_{ij}}\right\}_{j\in[p]}\right) \bv_i,
    \end{align*}
    with
    \begin{align*}
        \bv_i &= {\omega}_{i}\diag\left(\left\{\frac{\partial \theta}{\partial \xi}\Big|_{\xi=\hat{\xi}_{ij}}\right\}_{j\in[p]}\right) ^{-2}  \diag(A''(\hat{\btheta}_i))^{-1}\cP_{\hat{\bGamma}}^{\perp}\be_j,\\
        \omega_i &= \EE\left[\frac{\partial^2 l}{\partial \xi^2} \,\Big|\, \xi=\hat{\xi}_{ij}\right] = A''(\hat{\theta}_{ij}) \left(\frac{\partial \theta}{\partial \xi}\Big|_{\xi=\hat{\xi}_{ij}}\right)^2 = \frac{\phi e^{\xi}}{\phi + e^{\xi}}.
    \end{align*}
    Because when $\cR_C$ is bounded, the derivative function $\partial\theta/\partial\xi$ is Lipschitz continuous, the estimation error of it:
    \[\sum_{j=1}^p(\partial\theta/\partial\xi|_{\xi=\hat{\xi_{ij}^*}} - \partial\theta/\partial\xi|_{\xi=\hat{{\xi}_{ij}}} )^2 \lesssim \|\bxi_i^* - \hat{\bxi}_i\|_2^2\]
    can be bounded if $\bxi_i^*$ can be well estimated.
    Similarly, the estimation error of $\bTheta^*$ can be controlled because $\theta^*_{ij}$ is a Lipschitz continuous function of $\xi^*_{ij}$.
    Thus, \Cref{thm:normality} also applies if
    \begin{align}
        \|\bXi^* - \hat{\bXi}\|_{\fro}^2 &\lesssim \sqrt{n\vee p} \label{eq:nb-cond-1}\\
        \max_{1\leq j\leq p}\frac{1}{\sqrt{n}}\|\bZ^*\bGamma^{*\top}_j - \hat{\bZ}\hat{\bGamma}^{\top}_j\|_{\fro} &\lesssim \frac{1}{\sqrt{n\wedge p}}   \label{eq:nb-cond-2}\\
        \|\hat{\bB}-\bB^*\|_{1,1} &\lesssim \frac{\sqrt{sd}}{\sqrt{n\wedge p}} \label{eq:nb-cond-3}\\
        \|\hat{\bB}-\bB^*\|_{\fro} &\lesssim \frac{1}{\sqrt{n\wedge p}} , \label{eq:nb-cond-4}
    \end{align}
    where \eqref{eq:nb-cond-1} requires an analysis tool from nonconvex matrix factorization, \eqref{eq:nb-cond-2} is a direct consequence from \eqref{eq:nb-cond-1} similar to \Cref{cor:est-confound-col}, and \eqref{eq:nb-cond-3}-\eqref{eq:nb-cond-4} requires non-asymptotic analysis as in the proof of \Cref{thm:est-err-B} but for nonconvex objectives instead.

\clearpage
\section{Extra experiment results}\label{app:sec:extra-ex-results}

    \subsection{Efficiency loss of sample splitting}\label{app:subsec:simu-sample-split}

    To evaluate the efficiency loss caused by sample splitting described in \Cref{alg:split}, we conduct the experiments with different splitting proportions and compare their results.
    To apply \Cref{alg:split}, we split the $p$ genes into 2 groups with equal sizes, so that $I_1=\{1,\ldots,p/2\}$ and $I_2=\{p/2+1,\ldots,p\}$.
    For each of the groups $I$, the optimization is jointly conducted based on $\bX$, $\bY_{I^c}$ and $\cD_2$, and the inference is conducted for genes in $I$.  
    As summarized in \Cref{tab:split}, the performance on Type-I error and FDP control is similar across different splitting ratios.
    However, the power and precision are affected when the ratio of observations reserved for inference is too small.
    This suggests that one should leave more observations to conduct the debias step.
    Lastly, we see similar performance even without sample splitting, suggesting that the validity of the inferential procedure could be true even without sample splitting.

    \begin{table}[!ht]
        \centering\footnotesize
        \begin{tabular}{lrrrr}
        \toprule
        \textbf{ratio split} &  \textbf{type-I error} &  \textbf{FDP} &  \textbf{power} &  \textbf{precision} \\\midrule
        0.2         &      0.050 & 0.200 &  0.454 &      0.610 \\
        0.4         &      0.049 & 0.193 &  0.755 &      0.920 \\
        0.6         &      0.050 & 0.191 &  0.901 &      1.000 \\
        0.8         &      0.051 & 0.195 &  0.963 &      1.000 \\
        no splitting           &      0.051 & 0.219 &  0.987 &      1.000 \\
        \bottomrule
        \end{tabular}
        \caption{
        Performance with varying ratios of observations reserved for inference, under the same data setup in \Cref{subsec:simu-poisson} with $n=250$ and $r=2$.
        The values are medians over 100 simulated datasets.}
        \label{tab:split}
    \end{table}

    \subsection{The blessing of dimensionality}\label{subsec:est-err}
    
    \begin{figure}[!ht]
        \centering
        \includegraphics[width=0.7\linewidth]{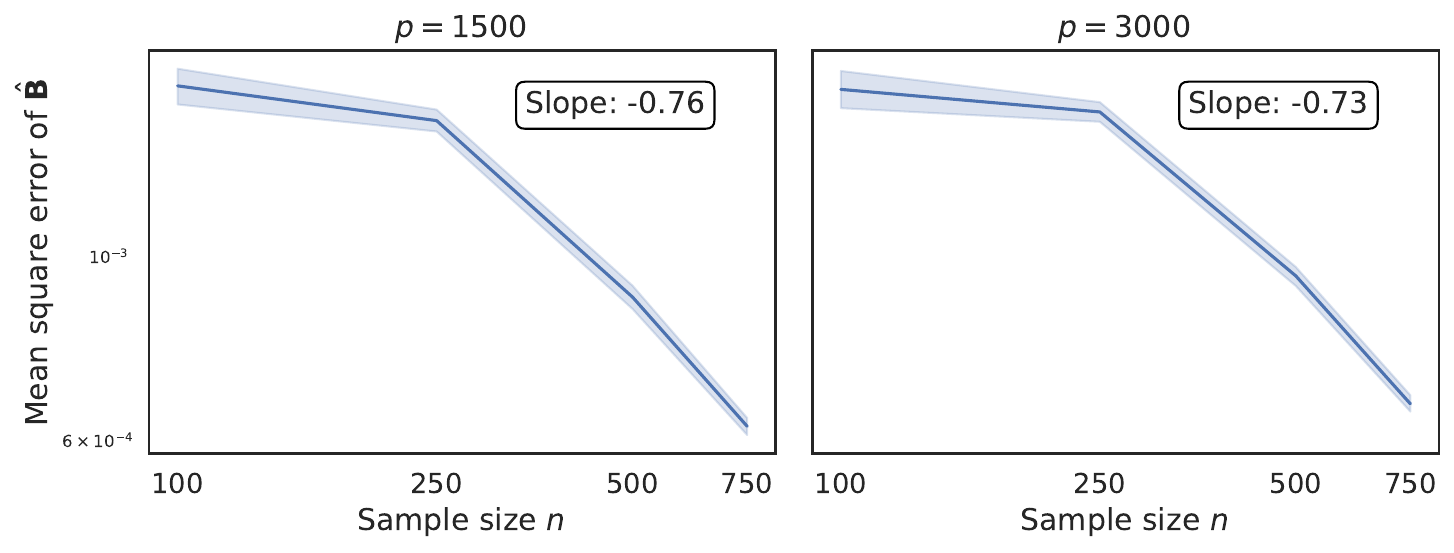}
        \caption{The mean square error of $\hat{\bB}$ with varying outcome dimension $p$ and sample size $n$, displayed on the log-log scale.
        When the outcome dimension $p$ is sufficiently large (not growing exponentially in $n$), the estimation error of $\bB$ is mainly driven by the sample size $n$.
        The slope is estimated using sample sizes larger than 100.
        The data generating process is given in \Cref{subsec:simu-poisson}.}
        \label{fig:est-err}
    \end{figure}

    \clearpage
    \subsection{Information about lupus data}\label{app:subsec-lupus}
    \begin{table}[!ht]\footnotesize
            \centering
            \begin{tabular}{cccc}
            \toprule
            Cell type & Number of samples $n$ & Number of genes $p$ & Proportion of non-zeros \\
            \midrule
            T4 & 256 & 1255 & 0.398 \\
            cM & 256 & 1208 & 0.434 \\
            B & 254 & 1269 & 0.417 \\
            T8 & 256 & 1281 & 0.471 \\
            NK & 256 & 1178 & 0.385 \\
            \bottomrule
            \end{tabular}
            \caption{Summary statistics of the preprocessed lupus datasets in each cell type. The last column represents the proportion of non-zero count in the gene expression matrix.}
            \label{tab:my_label}
        \end{table}

        \begin{figure}[!ht]
            \centering
            \includegraphics[width=0.8\textwidth]{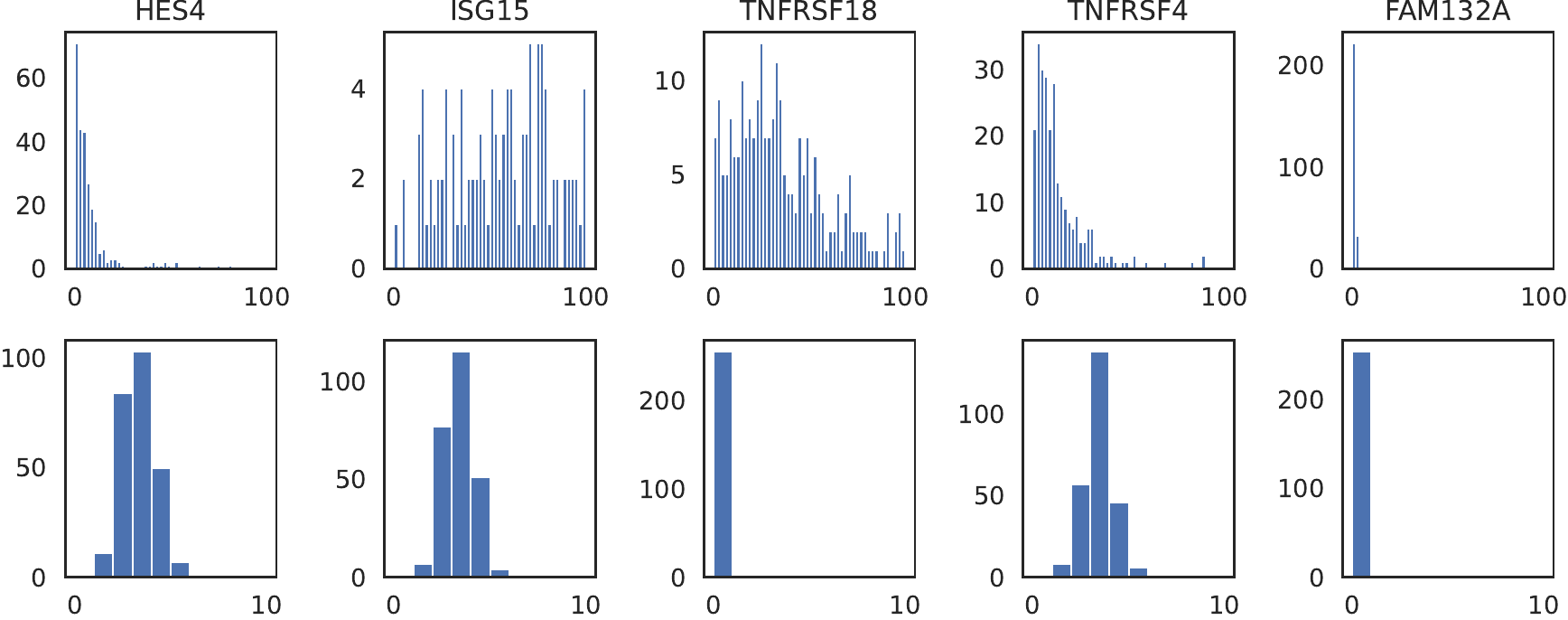}
            \caption{Histograms of expressions of 5 genes on the T4 cell type.
            The first row shows the raw pseudo-bulk counts and the second row shows the counts after library size normalization and log1p transformation, which is used for \textsc{cate}.
            Due to the sparsity of the gene expressions, some genes are not distributed like normal after transformation.}\label{fig:lupus-hist-count}
    \end{figure}

    \subsection{Extra results on lupus datasets}\label{app:subsec-extra}

    \subsubsection{Sensitivity analysis for the number of latent factors}\label{subsubsec:sensitivity}
    We inspect the sensitivity of \textsc{gcate}-subset to the number of latent factors $r$.
    By utilizing JIC \eqref{eq:JIC}, we have selected $r=7$ for the T4 cell type, which is close to the number of major covariates we drop.
    In \Cref{tab:summary-zscores} and \Cref{tab:summary-zscores-full}, we examine the performance of \textsc{gcate}-subset and \textsc{gcate}-full for different values of $r$.
    Remarkably, the resulting distributions of $z$-statistics generated by \textsc{gcate}, across varying numbers of factors $r$, are similar to the standard normal distribution when $r\geq3$ because the MAD is close to one.
    Thus, JIC can serve as a valuable criterion for determining the appropriate number of latent factors for \textsc{gcate}.
    Furthermore, it is noteworthy that the number of discoveries remains consistent when $r$ falls within a reasonable range.
    These observations collectively suggest the stability of \textsc{gcate}'s inferential outcomes within this range of reasonable factor selections.

    \begin{table}[!ht]
        \centering\footnotesize
        \begin{tabular}{rrrrrrr}
        \toprule
         \textbf{$r$} &   \textbf{mean} &  \textbf{median} &  \textbf{mad} &  \textbf{num\_sig} &  \textbf{deviance} &     \textbf{JIC} \\
            \midrule
            1 & 0.348 & -0.018 & 1.551 & 302 & 3.578 & 3.600 \\
            2 & 0.361 & -0.040 & 1.539 & 313 & 3.565 & 3.592 \\
            3 & 0.141 & -0.001 & 1.145 & 57 & 3.553 & 3.586 \\
            4 & 0.140 & 0.063 & 1.087 & 37 & 3.546 & 3.584 \\
            5 & 0.130 & 0.043 & 1.067 & 33 & 3.539 & 3.582 \\
            6 & 0.140 & 0.051 & 1.084 & 39 & 3.532 & 3.581 \\
            7 & 0.128 & 0.057 & 1.033 & 22 & 3.526 & 3.580 \\
            8 & 0.139 & 0.069 & 1.044 & 18 & 3.521 & 3.580 \\
            9 & 0.143 & 0.073 & 1.032 & 18 & 3.516 & 3.581 \\
            10 & 0.155 & 0.095 & 1.038 & 20 & 3.513 & 3.583 \\
        \bottomrule
        \end{tabular}
        \caption{The summary of the $z$-statistics and model fitness for a varying number of latent factors $r$ for \textsc{gcate}-subset analysis.
        The metrics include the mean, median, median absolute deviation (mad), and the total number of significant genes of $q$-value less than 0.2.
        The last two columns show the deviance (2 times the negative log-likelihood) and the JIC model selection criteria \eqref{eq:JIC} with $c_{\JIC}=0.25$.}
        \label{tab:summary-zscores}
    \end{table}

    \begin{table}[!ht]
        \centering\footnotesize
        \begin{tabular}{rrrrrrr}
        \toprule
         \textbf{$r$} &   \textbf{mean} &  \textbf{median} &  \textbf{mad} &  \textbf{num\_sig} &  \textbf{deviance} &     \textbf{JIC} \\
            \midrule
            1 & 0.242 & -0.014 & 1.350 & 200 & 3.552 & 3.650 \\
            2 & 0.163 & 0.013 & 1.192 & 93 & 3.542 & 3.650 \\
            3 & 0.108 & 0.010 & 1.111 & 21 & 3.534 & 3.653 \\
            4 & 0.143 & 0.052 & 1.119 & 20 & 3.528 & 3.658 \\
            5 & 0.151 & 0.066 & 1.071 & 23 & 3.523 & 3.664 \\
            6 & 0.165 & 0.071 & 1.119 & 24 & 3.518 & 3.670 \\
            7 & 0.174 & 0.077 & 1.104 & 29 & 3.514 & 3.676 \\
            8 & 0.170 & 0.092 & 1.067 & 25 & 3.510 & 3.683 \\
            9 & 0.170 & 0.100 & 1.070 & 38 & 3.507 & 3.691 \\
            10 & 0.178 & 0.103 & 1.075 & 38 & 3.502 & 3.697 \\
        \bottomrule
        \end{tabular}
        \caption{The summary of the $z$-statistics and model fitness for a varying number of latent factors $r$ for \textsc{gcate}-subset analysis.
        The metrics include the mean, median, median absolute deviation (mad), and the total number of significant genes of $q$-value less than 0.2.
        The last two columns show the deviance (2 times the negative log-likelihood) and the JIC model selection criteria \eqref{eq:JIC} with $c_{\JIC}=0.5$.}
        \label{tab:summary-zscores-full}
    \end{table}

    \clearpage
    \subsubsection{Selection of hyperparameters}
        \begin{figure}[!ht]
            \centering
            \includegraphics[width=0.9\textwidth]{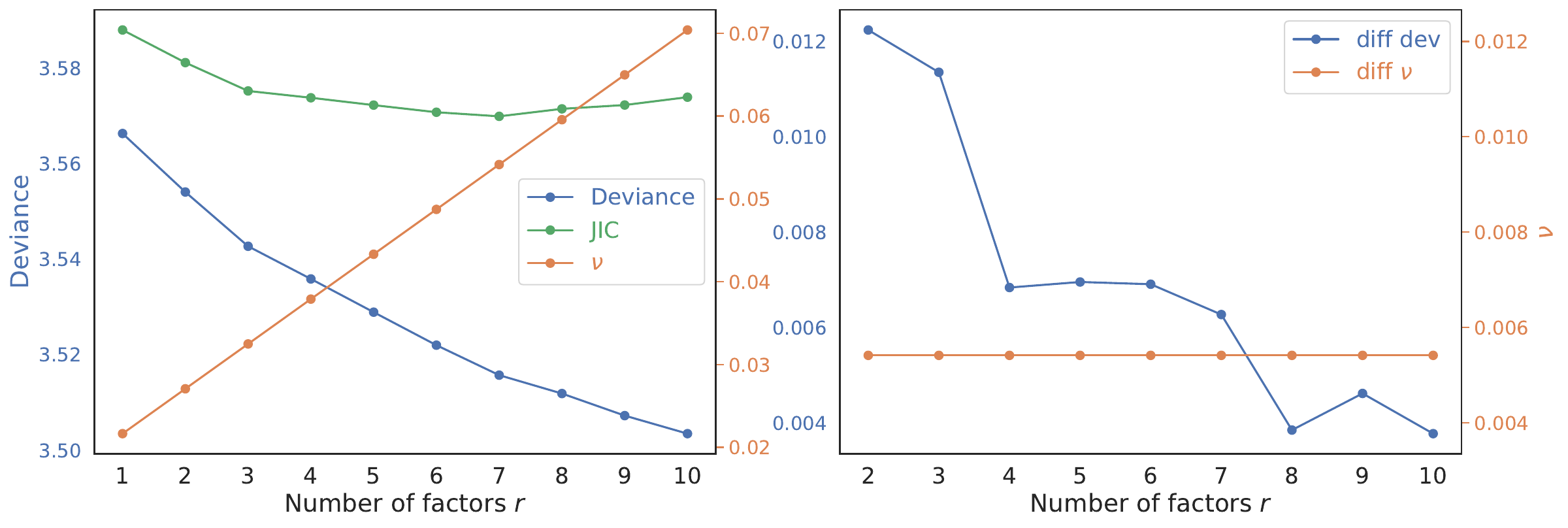}
            \includegraphics[width=0.9\textwidth]{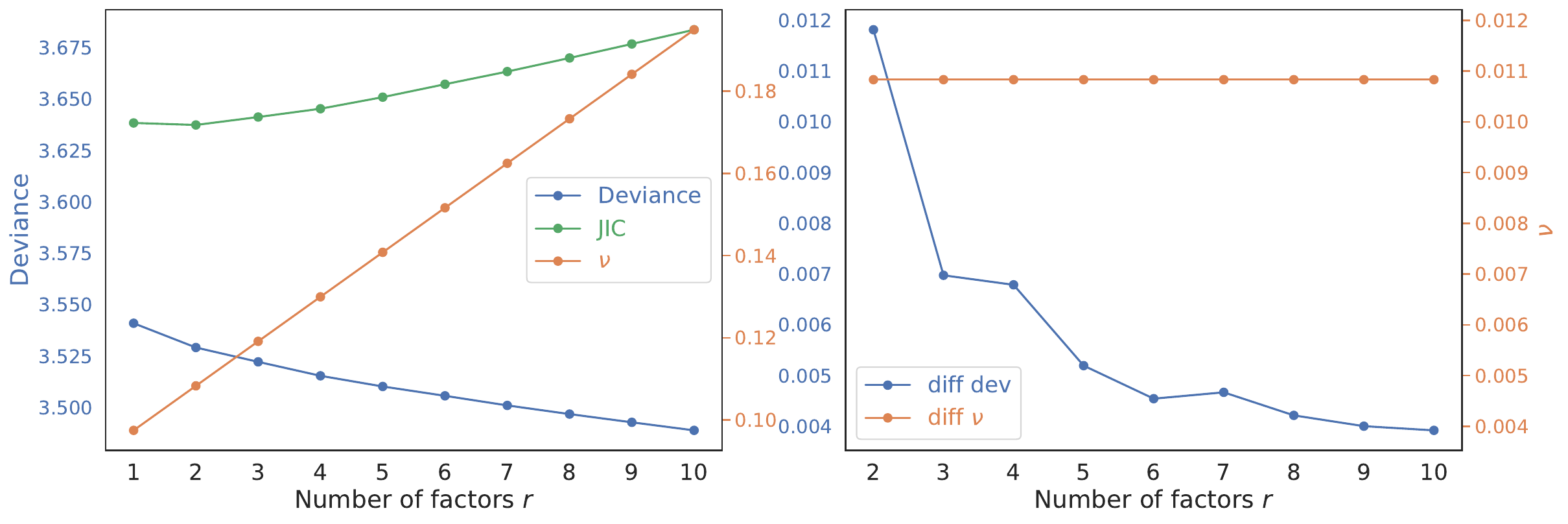}
            \caption{
            The first and second rows show the results for \textsc{gcate}-subset and \textsc{gcate}-full, respectively.
            The right panel shows the deviance and the complexity penalty $\nu$ at different numbers of factors $r$, computed on the T4 cell type of the Lupus dataset.
            The JIC is computed with $c_{\JIC}=0.25$ and 0.5, respectively.
            The right panel shows the decrement of the deviance and the complexity at different numbers of factors $r$.
            }
            \label{fig:JIC-r-lupus-T4}
        \end{figure}

        \begin{figure}[!ht]
            \centering
            \includegraphics[width=0.49\textwidth]{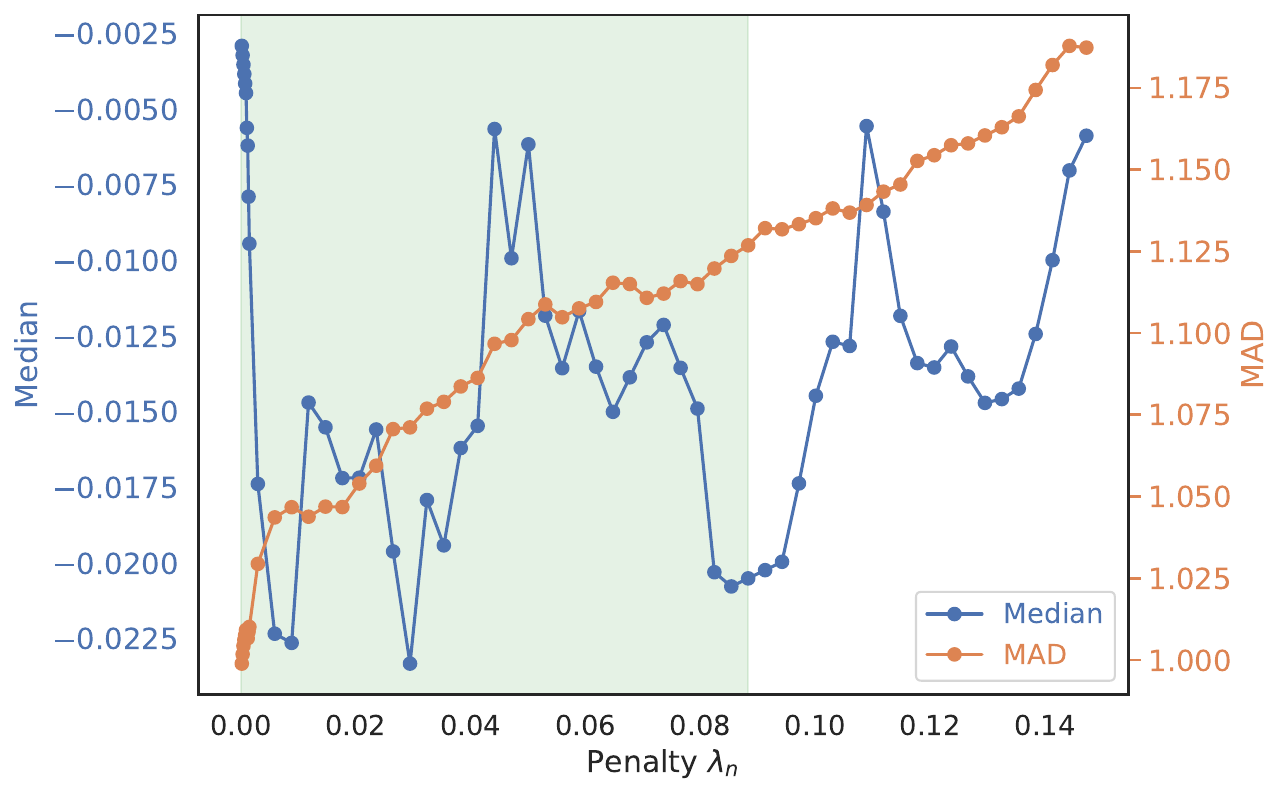}
            \includegraphics[width=0.49\textwidth]{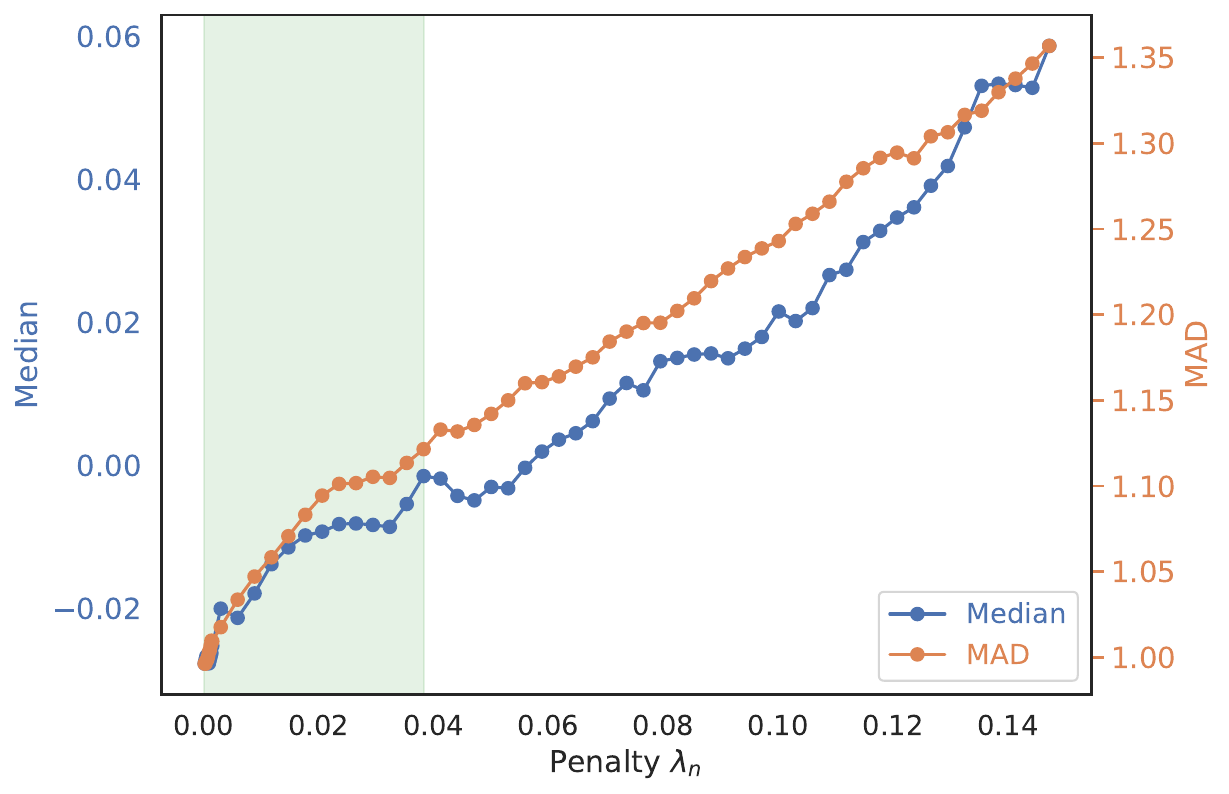}
            \caption{
            The median and MAD of the $z$-statistics as a function of the regularization parameter $\lambda_n$ computed from the T4 cell type of the Lupus dataset for \textsc{gcate}-subset and \textsc{gcate}-full analyses, respectively.
            The shaded region indicates feasible values of $\lambda_n$, for which the MADs of the corresponding test statistics are less than 1.13.}
            \label{fig:lam-lupus-T4}
        \end{figure}

    \clearpage
    \subsubsection{Results on all cell types}
    \begin{figure}[!ht]
            \centering
            \includegraphics[width=0.8\textwidth]{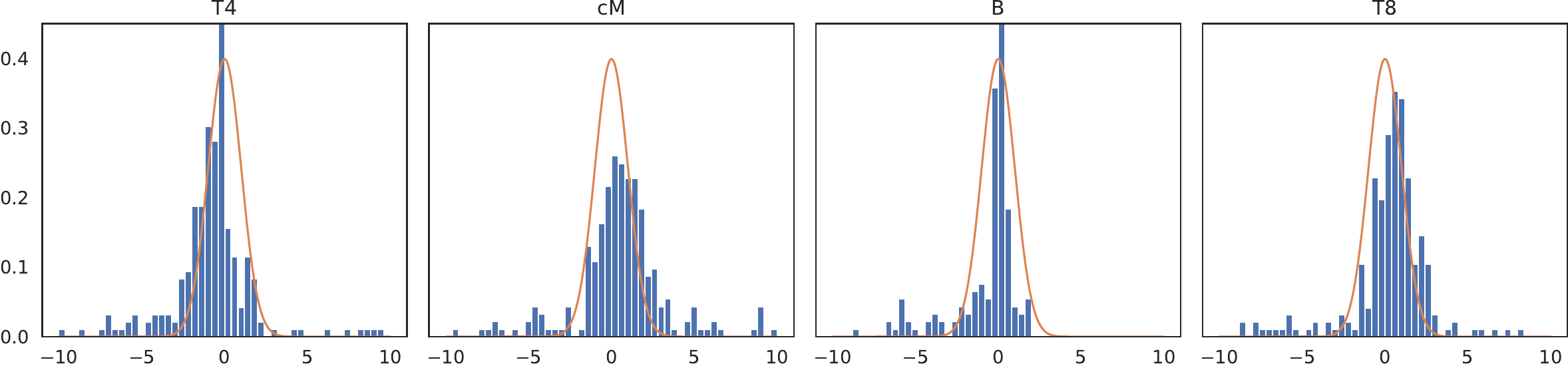}            
            \caption{Histograms of lupus $z$-statistics of \textsc{cate} on T4, cM, B, and T8 cell types, when restricted to the top 250 highly variable genes.
            The preprocessing procedure is as described in \Cref{sec:case-studeis}, but with genes expressed less than 5 subjects excluded.
            The result on the NK cell type is not included because the fitting of \textsc{cate} fails due to sparsity of the gene expressions. }\label{fig:lupus-cate-250}
    \end{figure}

    \begin{figure}[!ht]
            \centering
            \includegraphics[width=0.7\textwidth]{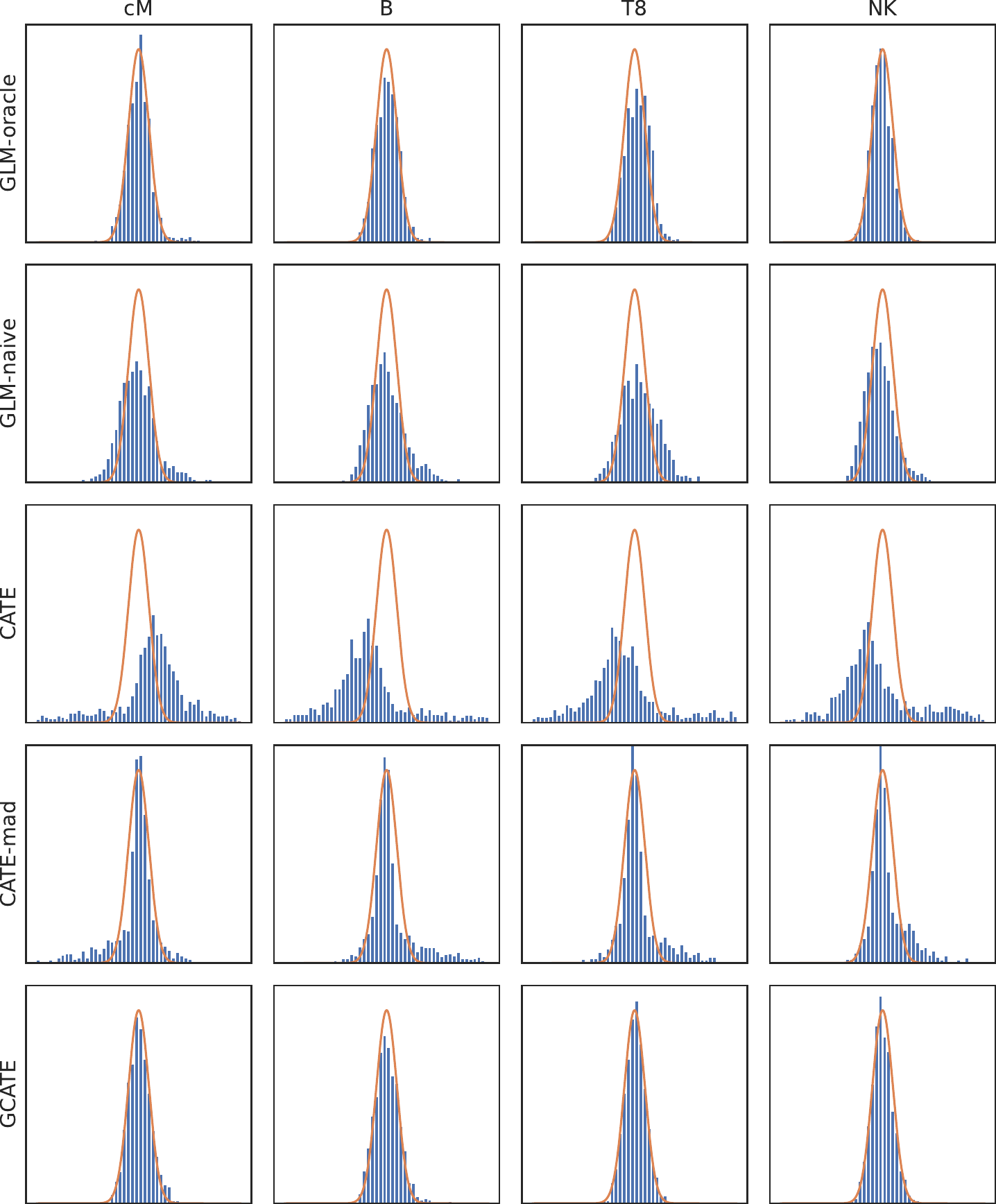}
            \caption{Histograms of lupus $z$-statistics of different methods on cM, B, T8 and NK cell types.}\label{fig:lupus-zscores-others}
    \end{figure}

    \begin{figure}[!ht]
        \centering
        \includegraphics[width=0.4\linewidth]{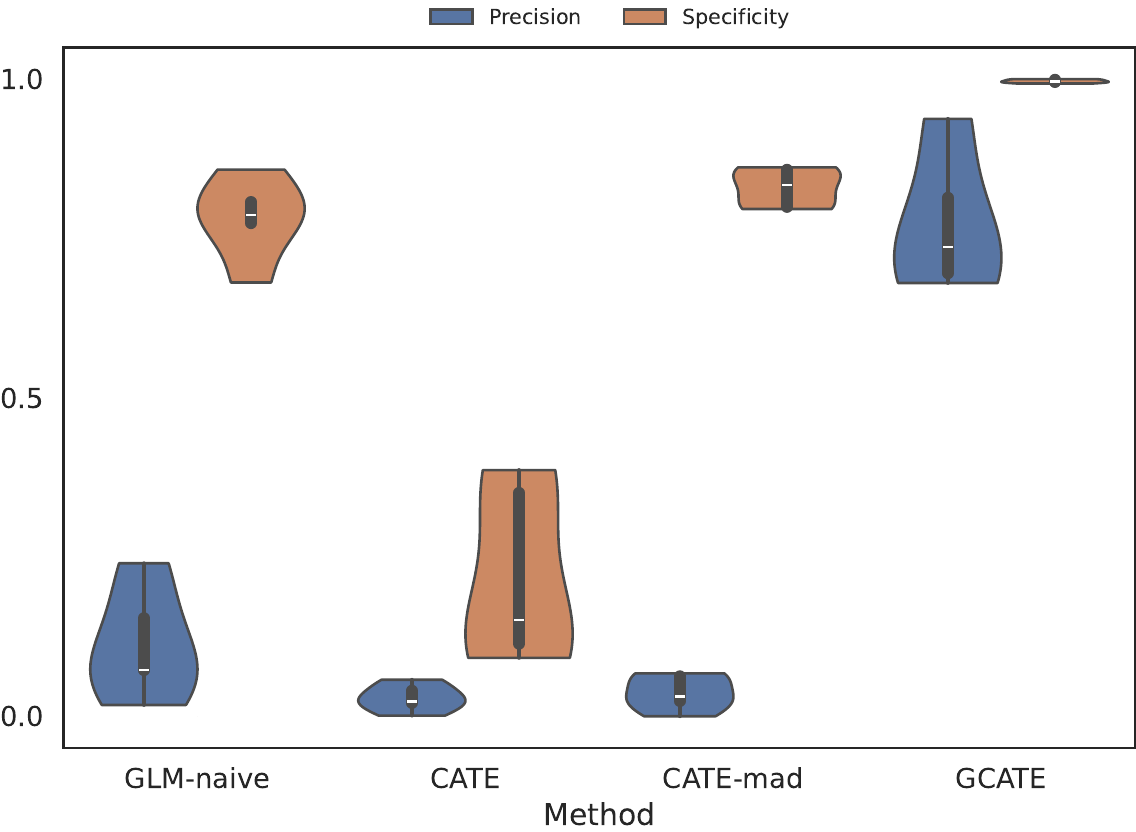}
        \caption{
            The precision and specificity for four methods computed across 5 major cell types on the lupus datasets.}\label{fig:lupus-violin}
    \end{figure}

    \begin{figure}[!ht]
        \centering
        \includegraphics[width=0.6\textwidth]{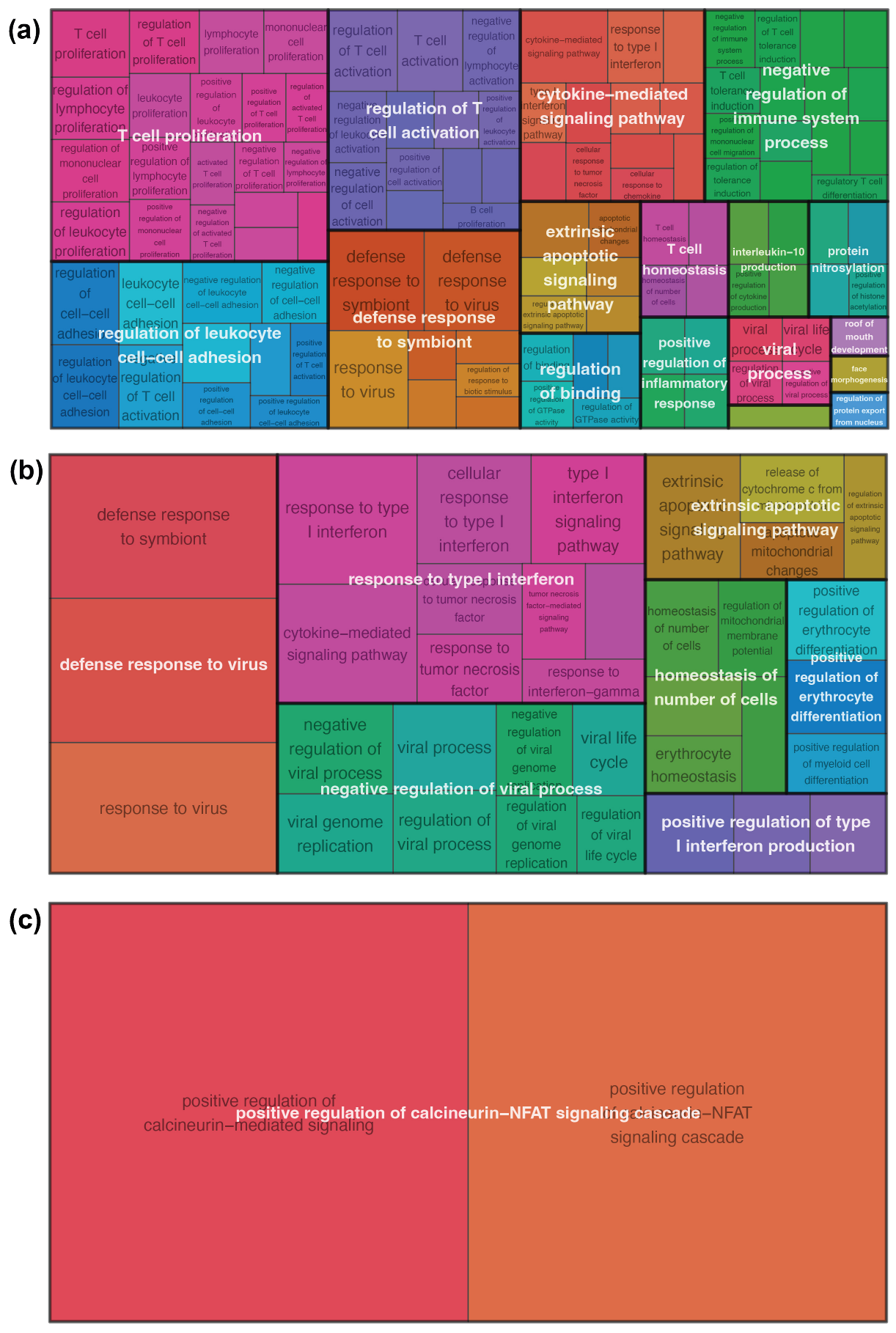}
      \caption{The treemap plot produced by \texttt{rrvgo} \citep{sayols2023rrvgo} of GO enrichment analysis results on \textbf{(a)} significant genes by the \textsc{glm}-oracle method; \textbf{(b)} significant genes by both the \textsc{glm}-oracle and \textsc{gcate} methods; and \textbf{(c)} significant genes by the \textsc{cate}-mad method but not the \textsc{glm}-oracle method.
      }\label{fig:treemap}
    \end{figure}

    \begin{figure}
        \centering
        \includegraphics[width=0.8\linewidth]{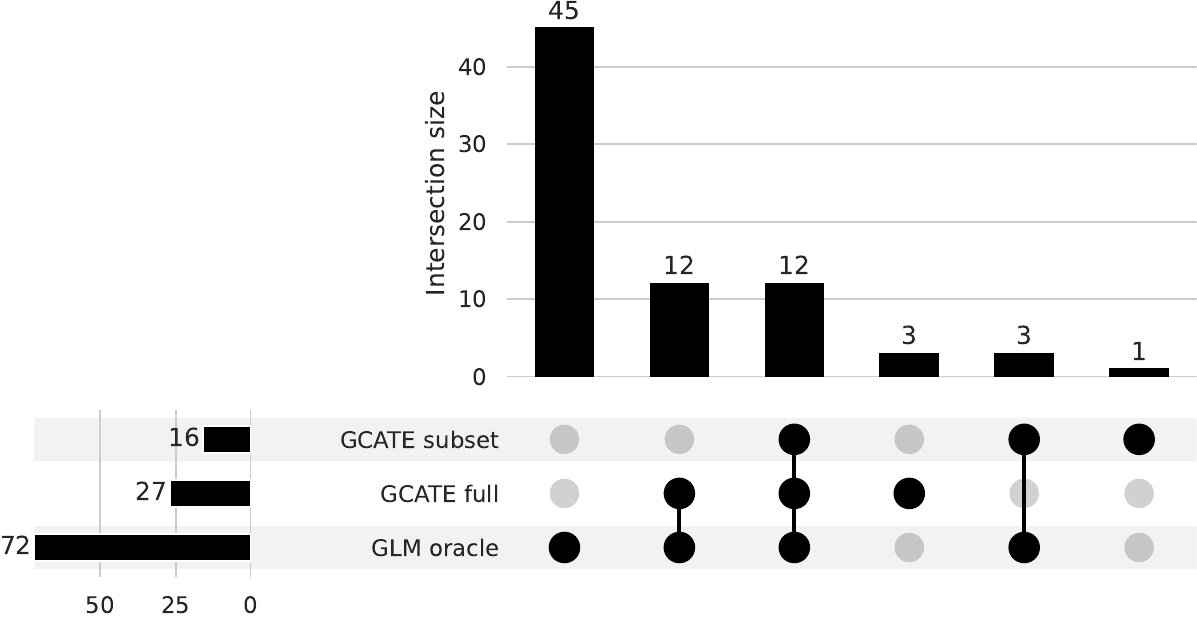}
        \caption{Upset plot of the number of discoveries of \textsc{gcate} (subset), \textsc{gcate} (full) and \textsc{glm}-oracle, with q-value cutoff $0.2$.
        Here, ``subset'' and ``full'' indicate whether all of the measured covariates are used by the corresponding methods.
        }
        \label{fig:lupus-T4-upset}
    \end{figure}

    \begin{figure}[!ht]
        \centering
        \includegraphics[width=0.7\linewidth]{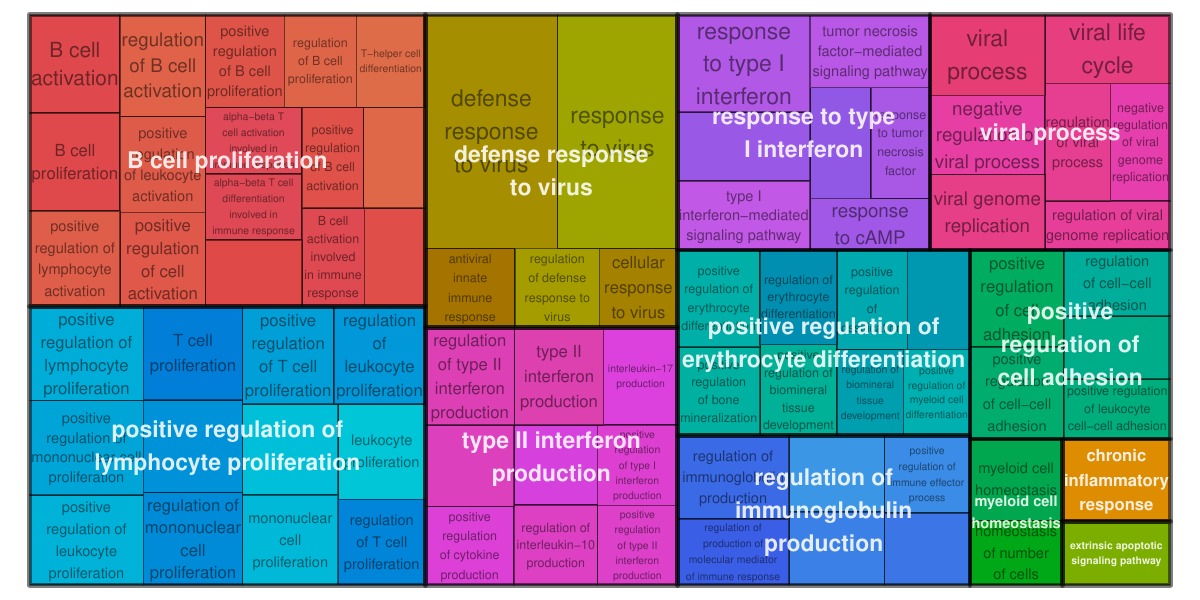}
        \caption{The treemap plot produced by \texttt{rrvgo} \citep{sayols2023rrvgo} of GO enrichment analysis results on 24 significant genes by both the \textsc{glm} and \textsc{gcate} methods with all covariates included.}
        \label{fig:treemap-full}
    \end{figure}

\clearpage

\end{document}